\documentclass[11pt]{article}

%-----------------------------------------------------------------------------%
% Fonts and symbols:
%---------------------------------------------------------------lkk--------------%

\usepackage{mathpazo}
\usepackage{amsfonts}
\usepackage{amsmath}
\usepackage{graphicx}
\usepackage{latexsym}
\usepackage{mathabx}
%-----------------------------------------------------------------------------%
% Margins and page layout:
%-----------------------------------------------------------------------------%

\usepackage[margin=1.05in]{geometry}
  
\usepackage[T1]{fontenc}
\usepackage{times}
\usepackage{color,graphicx}
\usepackage{array}
\usepackage{enumerate}
\usepackage{amsmath}
\usepackage{amssymb}
\usepackage{amsthm}
\usepackage{pgfplots}
\usepackage{pgf}
\usepackage{tikz}
\usetikzlibrary{patterns}
\usepgfplotslibrary{patchplots} % LATEX and plain TEX
\usetikzlibrary{pgfplots.patchplots} % LATEX and plain TEX
\pgfplotsset{width=9cm,compat=1.5.1}

%-----------------------------------------------------------------------------%
% Theorem-like environments:
%-----------------------------------------------------------------------------%

\usepackage[english]{babel}
\usepackage{amsthm}
\newtheorem{theorem}{Theorem}
\newtheorem{conjecture}[theorem]{Conjecture}

\newtheorem{lemma}[theorem]{Lemma}
\newtheorem{corollary}[theorem]{Corollary}
\newtheorem{remark}[theorem]{Remark}
\newtheorem{proposition}[theorem]{Proposition}

\newtheorem{example}[theorem]{Example} 

%\usepackage{amsthm}
%\newtheorem{conjecture}{Conjecture}
%\newtheorem{definition}{Definition}
%\newtheorem{lemma}{Lemma}
%\newtheorem{theorem}{Theorem}
%\newtheorem{corollary}{Corollary}
%\newtheorem{remark}{Remark}
%\newtheorem{proposition}{Proposition}
%\newtheorem{observation}{Observation}
%\newtheorem{claim}{Claim}
%\newtheorem{question}{Question}
%\newtheorem{example}{Example} 

%-----------------------------------------------------------------------------%
% Other packages:
%-----------------------------------------------------------------------------%

\usepackage{xcolor}
\usepackage{makeidx}
\usepackage[colorlinks=true,linkcolor=blue,anchorcolor=blue,citecolor=red,urlcolor=magenta]{hyperref}
\usepackage[alphabetic,backrefs]{amsrefs}

 % Makes vectors bold rather than arrowed

%-----------------------------------------------------------------------------%
% Main document
%-----------------------------------------------------------------------------%

\begin{document}
\title{Quantum walks on join graphs}

\author{
	Steve Kirkland,\textsuperscript{\!\!1} Hermie Monterde \textsuperscript{\!\!1}
}

\maketitle

%%%%%%%%%%%%%%%%%%%%%%%
%% Elsevier bibliography styles
%%%%%%%%%%%%%%%%%%%%%%%
%% To change the style, put a % in front of the second line of the current style and
%% remove the % from the second line of the style you would like to use.
%%%%%%%%%%%%%%%%%%%%%%%

%% Numbered
%\bibliographystyle{model1-num-names}

%% Numbered without titles
%\bibliographystyle{model1a-num-names}

%% Harvard
%\bibliographystyle{model2-names.bst}\biboptions{authoryear}

%% Vancouver numbered
%\usepackage{numcompress}\bibliographystyle{model3-num-names}

%% Vancouver name/year
%\usepackage{numcompress}\bibliographystyle{model4-names}\biboptions{authoryear}

%% APA style
%\bibliographystyle{model5-names}\biboptions{authoryear}

%% AMA style
%\usepackage{numcompress}\bibliographystyle{model6-num-names}

%% `Elsevier LaTeX' style
%\bibliographystyle{elsarticle-num}
%%%%%%%%%%%%%%%%%%%%%%%

\begin{abstract}
The join $X\vee Y$ of two graphs $X$ and $Y$ is the graph obtained by joining each vertex of $X$ to each vertex of $Y$. We explore the behaviour of a continuous quantum walk on a weighted join graph having the adjacency matrix or Laplacian matrix as its associated Hamiltonian. We characterize strong cospectrality, periodicity and perfect state transfer (PST) in a join graph. We also determine conditions in which strong cospectrality, periodicity and PST are preserved in the join. Under certain conditions, we show that there are graphs with no PST that exhibits PST when joined by another graph. This suggests that the join operation is promising in producing new graphs with PST. Moreover, for a periodic vertex in $X$ and $X\vee Y$, we give an expression that relates its minimum periods in $X$ and $X\vee Y$. While the join operation need not preserve periodicity and PST, we show that $\big| |U_M(X\vee Y,t)_{u,v}|-|U_M(X,t)_{u,v}| \big|\leq \frac{2}{|V(X)|}$ for all vertices $u$ and $v$ of $X$, where $U_M(X\vee Y,t)$ and $U_M(X,t)$ denote the transition matrices of $X\vee Y$ and $X$ respectively relative to either the adjacency or Laplacian matrix. We demonstrate that the bound $\frac{2}{|V(X)|}$ is tight for infinite families of graphs. %Throughout, we rely on techniques from spectral graph theory.
\end{abstract}

\noindent \textbf{Keywords:} quantum walk, join graph, perfect state transfer, strong cospectrality, adjacency matrix, Laplacian matrix\\
	
\noindent \textbf{MSC2010 Classification:} 
05C50; %Graphs and linear algebra (matrices, eigenvalues, etc.)
05C76; %Graph operations
05C22; %Signed and weighted graphs
15A16; %Matrix exponential and similar functions of matrices
15A18; %Eigenvalues, singular values, and eigenvectors
81P45; %Quantum information, communication, networks
%81Q10 %Selfadjoint operator theory in quantum theory, including spectral analysis

%05C50, 15A18, 05C22, 81P45, 81A10

\addtocounter{footnote}{1}
\footnotetext{Department of Mathematics, University of Manitoba, Winnipeg, MB, Canada R3T 2N2}

%\tableofcontents
%\linenumbers

\section{Introduction}\label{secINTRO}

The graphs $K_2$, $C_4$ $P_3$, and $K_{4}\backslash e$ are well-known examples of small graphs that admit adjacency or Laplacian perfect state transfer. Kirkland et al.\ noticed that in these small graphs, the vertices involved in perfect state transfer  share the same neighbours, an observation that prompted them to examine state transfer between twins \cite{Kirkland2023}. However, it can also be observed that these small graphs are in fact join graphs. This motivates our investigation on quantum walks on graphs built using the join operation.

The first infinite family of join graphs revealed to admit Laplacian perfect state transfer was $K_{n}\backslash e:=O_2\vee K_{n-2}$ (the complete graph on $n$ vertices minus an edge) with $n\equiv 0$ (mod 4) \cite{Bose2008}. Motivated by this result,  Angeles-Canul et al.\ gave sufficient conditions for adjacency perfect state transfer to occur between the vertices of $X\in\{O_2,K_2\}$ in the unweighted graph $X\vee Y$, where $Y$ is a regular graph \cite{Angeles-Canul2010}. Such a graph is called a double cone  on $Y$, and the vertices of $X\in\{O_2,K_2\}$ are called the apexes of the double cone. They also determined sufficient conditions such that adjacency perfect state transfer in $X$ is preserved under joins of copies of $X$. In a subsequent paper, Angeles-Canul et al.\ investigated perfect state transfer in weighted join graphs, and found that the apexes of a double cone on a $k$-regular graph admit adjacency perfect state transfer by appropriate choice of weights of an edge between the apexes and/or loops on the apexes \cite{Angeles-Canul2009}. More recently, Kirkland et al.\ fully characterized unweighted double cones that admit adjacency perfect state transfer between apexes \cite{Kirkland2023}. For the Laplacian case, Alvir et al.\ showed that the apexes of unweighted $O_2\vee Y$ admit perfect state transfer if and only if $|V(Y)|\equiv 2$ (mod 4), while the apexes of unweighted $K_2\vee Y$ do not admit perfect state transfer \cite{Alvir2016}. Joins have also been investigated in graphs with well-structured eigenbases \cite{Johnston2017,mclaren2023weak}, and in the contexts of fractional revival \cite{Monterde2023a,Chan2021}, sedentariness \cite{Monterde2023} and strong cospectrality \cite{Monterde2022}.

Despite its widespread presence in literature, state transfer on join graphs remains largely unexplored. In this paper, we provide a systematic study of quantum walks on weighted join graphs having the adjacency and Laplacian matrices as their associated Hamiltonian. In Section \ref{secJoin}, we provide known results about transition matrices and eigenvalue supports of vertices in join graphs. In Section \ref{secPer}, we characterize periodic vertices in a join (Theorem \ref{joinper}), and determine the conditions in which periodicity of a vertex in $X$ and $X\vee Y$ are equivalent (Corollaries \ref{joinperpreserve} and \ref{perequiv}). We also use the join operation to provide infinite families of graphs that are not integral (resp., Laplacian integral) whereby each member graph contains periodic vertices (Corollary \ref{perjoin1}). We devote Section \ref{secMinper} to finding a relationship between the minimum periods $\rho_X$ and $\rho_{X\vee Y}$ of a vertex that is periodic in both $X$ and $X\vee Y$, resp.\ (Theorems \ref{perjoin} and \ref{perjoinn}). We find that $\rho_{X\vee Y}$ is  a rational multiple of $\rho_{X}$, and if we add the extra hypothesis that $X$ is disconnected, then $\rho_{X\vee Y}$ is an integer multiple of $\rho_{X}$. Section \ref{secSc} provides a characterization of strong cospectrality in joins (Theorem \ref{scj}). It turns out that strong cospectrality between two vertices of $X$ in $X\vee Y$ requires strong cospectrality in $X$, except for the case when $X$ is an empty graph on two vertices. In fact, for the unweighted case, strong cospectrality in $X$ and $X\vee Y$ are equivalent with a few exceptions (Corollary \ref{scj}). In Section \ref{secPst}, we characterize perfect state transfer in joins (Theorems \ref{pstL} and \ref{pstA}). Unlike strong cospectrality, perfect state transfer between two vertices of $X$ in $X\vee Y$ does not require perfect state transfer between them in $X$. This motivates us in Section \ref{secPst1} to determine necessary and sufficient conditions such that perfect state transfer in $X$ is preserved in $X\vee Y$ (Corollaries \ref{pstLc2} and \ref{adjpstinX}), and in Section \ref{secPst2} to determine conditions such that perfect state transfer occurs in $X\vee Y$ given that it does not occur in $X$ (Theorems \ref{oddst1} and \ref{oddst}, and Corollary \ref{pstAc5}). In Section \ref{secFam}, we characterize strong cospectrality and PST in self-joins and iterated join graphs. The latter result generalizes a previously known characterization of perfect state transfer in threshold graphs. While the join operation does not preserve periodicity and PST in general, we show in Section \ref{secBounds} that $\big| |U_M(X\vee Y,t)_{u,v}|-|U_M(X,t)_{u,v}| \big|\leq \frac{2}{|V(X)|}$ for all $t$ and for all vertices $u$ and $v$ of $X$ (Corollaries \ref{boundL1} and \ref{boundA1}). Consequently, the behaviour of vertices in $X$ in the quantum walk on $X\vee Y$ mimics the behaviour of the quantum walk on $X$ as $|V(X)|\rightarrow\infty$ (Corollary \ref{mimic}). We also show that this upper bound is tight for infinite families of graphs (Examples \ref{exa} and \ref{exaa}). Finally, open problems are presented in Section \ref{secFw}.

\section{Preliminaries}

Throughout, we assume that a graph $X$ is weighted and undirected  with possible loops but no multiple edges. We denote the vertex set of $X$ by $V(X)$, and we allow the edges of $X$ to have positive real weights. We say that $X$ is \textit{simple} if $X$ has no loops, and $X$ is \textit{unweighted} if all edges of $X$ have weight one. For $u\in V(X)$, we denote the the characteristic vector of $u$ as $\textbf{e}_u$, which is a vector with a $1$ on the entry indexed by $u$ and zeros elsewhere. The all-ones vector of order $n$, the zero vector of order $n$, the $m\times n$ all-ones matrix, and the $n\times n$ identity matrix are denoted by $\textbf{1}_n$, $\textbf{0}_n$, $\textbf{J}_{m,n}$ and $I_n$, respectively. If $m=n$, then we write $\textbf{J}_{m,n}$ as $\textbf{J}_n$, and if the context is clear, then we simply write these matrices as $\textbf{1}$, $\textbf{0}$, $\textbf{J}$ and $I$, respectively. We also represent the transpose of the matrix $H$ by $H^T$, and the characteristic polynomial of $H$ in the variable $t$ by $\phi(H,t)$. Lastly, we denote the simple unweighted empty, cycle, complete, and path graphs on $n$ vertices as $O_n$, $C_n$, $K_n$, and $P_n$, respectively. We also denote the hypercube on $2^p$ vertices by $Q_p$, and the cocktail party graph on $m$ vertices by $CP(m)$, where $m$ is even.

The adjacency matrix $A(X)$ of $X$ is the matrix defined entrywise as
\begin{equation*}
A(X)_{u,v}=
\begin{cases}
 \omega_{u,v}, &\text{if $u$ and $v$ are adjacent}\\
 0, &\text{otherwise},
\end{cases}
\end{equation*}
where $\omega_{u,v}$ is the weight of the edge $(u,v)$. The degree matrix $D(X)$ of $X$ is the diagonal matrix of vertex degrees of $X$, where $\operatorname{deg}(u)=2\omega_{u,u}+\sum_{j\neq u}\omega_{u,j}$ for each $u\in V(X)$. The Laplacian matrix of $X$ is the matrix $L(X)=D(X)-A(X)$. We use $M(X)$ to denote $A(X)$ or $L(X)$. If the context is clear, then we write $M(X)$, $A(X)$, $L(X)$ and $D(X)$ resp.\ as $M$, $A$, $L$ and $D$. We say that $X$ is \textit{integral} (resp., \textit{Laplacian integral}) if all eigenvalues of $A(X)$ (resp., $L(X)$) are integers. We say that $X$ is \textit{weighted $k$-regular} if $\operatorname{deg}(u)-\omega_{u,u}$ is a constant $k$ for each $u\in V(X)$, i.e., the row sums of $A(X)$ are a constant $k$. If $X$ is simple, then $X$ is weighted $k$-regular if and only if $D(X)=kI$.

A \textit{(continuous-time) quantum walk} on $X$ with respect to $H$ is determined by the unitary matrix
\begin{equation}
\label{M}
U_H(X,t)=e^{itH}.
\end{equation}
The matrix $H$ is called the \textit{Hamiltonian} of the quantum walk. Typically, $H$ is taken to be $A$ or $L$, but any Hermitian matrix $H$ that respects the adjacencies of $X$ works in general. Since $M\in\{A,L\}$ is real symmetric, $M$ admits a spectral decomposition
\begin{equation}
\label{specdec}
M=\sum_{j}\lambda_jE_j,
\end{equation}
where the $\lambda_j$'s are the distinct real eigenvalues of $M$ and each $E_j$ is the orthogonal projection matrix onto the eigenspace associated with $\lambda_j$. If the eigenvalues are not indexed, then we also denote by $E_{\lambda}$ the orthogonal projection matrix corresponding to the eigenvalue $\lambda$ of $M$. Now, (\ref{specdec}) allows us to write (\ref{M}) as
\begin{equation*}
U_M(X,t)=\sum_{j}e^{it\lambda_j}E_j.
\end{equation*}

We say that \textit{perfect state transfer} (PST) occurs between two vertices $u$ and $v$ if $|U_M(X,\tau)_{u,v}|=1$ for some time $\tau$. The minimum $\tau>0$ such that $|U_M(X,\tau)_{u,v}|=1$ is called the \textit{minimum PST time} between $u$ and $v$. We say that $u$ is \textit{periodic} if $|U_M(X,\tau)_{u,u}|=1$ for some time $\tau$. The minimum $\tau>0$ such that $|U_M(X,\tau)_{u,u}|=1$ is called the \textit{minimum period} of $u$. These properties depend on the matrix $M$, so we sometimes say adjacency PST (resp., periodicity) when $M=A$; similar language applies when $M=L$. Further, if $X$ is simple, weighted and $k$-regular, then $U_L(X,t)=e^{itL}=e^{it(kI-A)}=e^{itk}e^{-itA}=e^{itk}U_A(X,-t)$. Hence, $\left|U_L(X,\tau)_{u,v}\right|^2=\left|U_A(X,\tau)_{u,v}\right|^2$ for any $u,v\in V(X)$, so the quantum walks on $A$ and $L$ exhibit the same  state transfer properties.

\section{Joins}\label{secJoin}

The \textit{join} $X\vee Y$ of weighted graphs $X$ and $Y$ is the graph obtained by joining every vertex of $X$ with every vertex of $Y$ with an edge of weight one. Unless otherwise stated, we make the following  assumptions:
\begin{itemize}
    \item when dealing with $M=L,$ we take $X$ and $Y$ to be simple;
    \item when dealing with $M=A,$  we take $X$ and $Y$ to be weighted $k$- and $\ell$-regular resp.\ (possibly with loops).  
\end{itemize}

Denote the transition matrices of $X\vee Y$ and $X$ by $U_M(X\vee Y,t)$ and $U_M(X,t)$ respectively. The transition matrix of a join with respect to $L$ and $A$ are known (see \cite[Fact 8]{Alvir2016} and \cite[Lemma 4.3.1]{Coutinho2021}, respectively). Throughout, $m$ and $n$ denote the number of vertices of $X$ and $Y$, respectively.

\begin{lemma}
\label{ljoin}
If $u,v\in V(X)$ and $w\in V(Y)$, then
\begin{equation}
\label{offL}
U_L(X\vee Y,t)_{u,w}=\frac{1}{m+n}\left(1-e^{it(m+n)}\right)
\end{equation}
and
\begin{equation}
\label{notoffL}
U_L(X\vee Y,t)_{u,v}=e^{itn}U_L(X,t)_{u,v}+\frac{1}{m+n}+\frac{ne^{it(m+n)}}{m(m+n)}-\frac{e^{itn}}{m}.
\end{equation}
\end{lemma}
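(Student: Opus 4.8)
The plan is to obtain $U_L(X\vee Y,t)$ from a spectral decomposition of $L(X\vee Y)$ built out of those of $L(X)$ and $L(Y)$, and then to read off the two required entries. Since forming the join raises the degree of every vertex of $X$ by $n$ and of every vertex of $Y$ by $m$, in block form
\begin{equation*}
L(X\vee Y)=\begin{pmatrix} L(X)+nI_m & -\mathbf{J}_{m,n}\\ -\mathbf{J}_{n,m} & L(Y)+mI_n\end{pmatrix}=\bigl(L(X)\oplus L(Y)\bigr)+L(K_{m,n}),
\end{equation*}
where $K_{m,n}$ is the complete bipartite graph with parts of sizes $m$ and $n$. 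The key point is that the two summands commute: because $L(X)\mathbf{1}_m=\mathbf{0}$ and $L(Y)\mathbf{1}_n=\mathbf{0}$, multiplying $L(X)\oplus L(Y)$ by $L(K_{m,n})$ in either order gives $nL(X)\oplus mL(Y)$. Hence
\begin{equation*}
U_L(X\vee Y,t)=e^{it(L(X)\oplus L(Y))}\,e^{itL(K_{m,n})}=\bigl(U_L(X,t)\oplus U_L(Y,t)\bigr)\,U_L(K_{m,n},t).
\end{equation*}

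Next I would record $U_L(K_{m,n},t)$ explicitly. The Laplacian eigenvalues of $K_{m,n}$ are $0$ with eigenvector $\mathbf{1}_{m+n}$; $m+n$ with eigenvector $(n\mathbf{1}_m^T,-m\mathbf{1}_n^T)^T$; $n$ with eigenspace $\{(x^T,\mathbf{0}^T)^T:x\perp\mathbf{1}_m\}$; and $m$ with eigenspace $\{(\mathbf{0}^T,y^T)^T:y\perp\mathbf{1}_n\}$. Their orthogonal projections are $\tfrac1{m+n}\mathbf{J}_{m+n}$, $\tfrac1{m+n}\left(\begin{smallmatrix}\tfrac nm\mathbf{J}_m & -\mathbf{J}_{m,n}\\ -\mathbf{J}_{n,m}& \tfrac mn\mathbf{J}_n\end{smallmatrix}\right)$, $\bigl(I_m-\tfrac1m\mathbf{J}_m\bigr)\oplus\mathbf{0}$, and $\mathbf{0}\oplus\bigl(I_n-\tfrac1n\mathbf{J}_n\bigr)$, so summing their $e^{it\lambda}$-weighted versions gives $U_L(K_{m,n},t)$ in closed block form.

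Finally I would multiply $U_L(X,t)\oplus U_L(Y,t)$ by this matrix and extract the $(u,w)$ and $(u,v)$ entries for $u,v\in V(X)$ and $w\in V(Y)$. The only simplification needed is $U_L(X,t)\mathbf{J}_m=\mathbf{J}_m$ and $U_L(X,t)\mathbf{J}_{m,n}=\mathbf{J}_{m,n}$, both of which follow from $U_L(X,t)\mathbf{1}_m=\mathbf{1}_m$ (since $L(X)\mathbf{1}_m=\mathbf{0}$); collecting terms then yields (\ref{offL}) and (\ref{notoffL}) directly. An equivalent route avoids $K_{m,n}$: pad each eigenvector of $L(X)$ orthogonal to $\mathbf{1}_m$ with zeros to get an eigenvector of $L(X\vee Y)$ with eigenvalue $\theta+n$, do likewise for $Y$ with eigenvalue $\sigma+m$, and adjoin $\mathbf{1}_{m+n}$ (eigenvalue $0$) and $(n\mathbf{1}_m^T,-m\mathbf{1}_n^T)^T$ (eigenvalue $m+n$); then form $\sum_j e^{it\lambda_j}E_j$ using $\sum_r(F_r)_{u,v}=(I_m)_{u,v}$ and $F_0=\tfrac1m\mathbf{J}_m$, where the $F_r$ are the idempotents of $L(X)$.

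There is no serious obstacle here—the content is pure bookkeeping—but one point deserves a word of care: eigenvalues may coincide across these four families (e.g.\ $\theta+n=m+n$ when $m$ is a Laplacian eigenvalue of $X$), in which case the corresponding idempotents must be merged in the genuine spectral decomposition of $L(X\vee Y)$. Since $U_L(X\vee Y,t)$ depends only on the $e^{it\lambda}$-weighted \emph{sum} of the idempotents attached to each eigenvalue, the closed-form expressions are unaffected by such coincidences.
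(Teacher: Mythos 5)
Your proof is correct, and it is worth noting that the paper itself does not prove Lemma \ref{ljoin} at all: it quotes the formulas as known, citing Fact 8 of Alvir et al.\ and Lemma 4.3.1 of Coutinho--Godsil. Your argument is therefore a self-contained replacement rather than a retracing of the paper. The decomposition $L(X\vee Y)=\bigl(L(X)\oplus L(Y)\bigr)+L(K_{m,n})$ with the two summands commuting (both products equal $nL(X)\oplus mL(Y)$, using $L(X)\mathbf{1}_m=\mathbf{0}$ and $L(Y)\mathbf{1}_n=\mathbf{0}$, which is legitimate under the paper's standing assumption that $X,Y$ are simple when $M=L$) is a clean variant of the usual derivation in the cited sources, which instead exploits the complement identity $L(Z)+L(\overline{Z})=|V(Z)|I-\mathbf{J}$ together with $\overline{X\vee Y}=\overline{X}\cup\overline{Y}$; both routes rest on the same mechanism, namely commutation with a matrix whose kernel contains the relevant all-ones vectors, but yours avoids passing to complements and keeps the factor $U_L(K_{m,n},t)$ fully explicit. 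I verified the block computation: the spectral data you list for $L(K_{m,n})$ (eigenvalues $0,\,m+n,\,n,\,m$ with the stated projections, of total rank $m+n$) is right, and multiplying by $U_L(X,t)\oplus U_L(Y,t)$ with $U_L(X,t)\mathbf{J}=\mathbf{J}$ gives exactly (\ref{offL}) and (\ref{notoffL}). Your closing remark about possible eigenvalue coincidences is also handled correctly: $e^{itL}$ depends only on the sum of $e^{it\lambda}E_\lambda$ over the spectrum, so merging idempotents changes nothing. The second, projection-padding route you sketch is the standard direct derivation and is equally valid.
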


\begin{lemma}
\label{aqjoin}
Let $D=(k-\ell)^2+4mn$ and $\lambda^{\pm}=\frac{1}{2}\left(k+\ell\pm\sqrt{D}\right)$. If $u,v\in V(X)$ and $w\in V(Y)$, then
\begin{equation}
\label{offA}
U_A(X\vee Y,t)_{u,w}=\frac{1}{\sqrt{D}}(e^{it\lambda^+}-e^{it\lambda^-})
\end{equation}
and 
\begin{equation}
\label{notoffA}
U_A(X\vee Y,t)_{u,v}=U_A(X,t)_{u,v}+\frac{e^{it\lambda^+}(k-\lambda^-)}{m\sqrt{D}}-\frac{e^{it\lambda^-}(k-\lambda^+)}{m\sqrt{D}}-\frac{e^{it k}}{m}.
\end{equation}
\end{lemma}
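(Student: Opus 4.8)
The plan is to use the equitable partition of $X\vee Y$ into the two cells $V(X)$ and $V(Y)$. Since $X$ is $k$-regular and $Y$ is $\ell$-regular, $A(X\vee Y)=\left(\begin{smallmatrix} A(X) & \mathbf{J}_{m,n}\\ \mathbf{J}_{n,m} & A(Y)\end{smallmatrix}\right)$, and the two-dimensional subspace $\mathcal S=\operatorname{span}\{f_1,f_2\}$ with $f_1=\binom{\mathbf 1_m}{\mathbf 0_n}$ and $f_2=\binom{\mathbf 0_m}{\mathbf 1_n}$ is invariant under $A(X\vee Y)$: one computes $A(X\vee Y)f_1=kf_1+mf_2$ and $A(X\vee Y)f_2=nf_1+\ell f_2$, so on $\mathcal S$ the operator $A(X\vee Y)$ is represented in the basis $(f_1,f_2)$ by $B=\left(\begin{smallmatrix}k & n\\ m & \ell\end{smallmatrix}\right)$. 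The orthogonal complement splits as $\mathcal S^{\perp}=\mathcal S_X^{\perp}\oplus\mathcal S_Y^{\perp}$, where $\mathcal S_X^{\perp}=\{\binom{x}{\mathbf 0_n}:\mathbf 1_m^{T}x=0\}$ and $\mathcal S_Y^{\perp}$ is defined symmetrically; because $\mathbf J_{n,m}$ annihilates $\mathbf 1_m^{\perp}$ while $A(X)$ preserves it, $A(X\vee Y)$ acts on $\mathcal S_X^{\perp}$ exactly as $A(X)$ does, and likewise on $\mathcal S_Y^{\perp}$ as $A(Y)$. Hence $\mathbb R^{m+n}=\mathcal S\oplus\mathcal S_X^{\perp}\oplus\mathcal S_Y^{\perp}$ is a decomposition into $e^{itA(X\vee Y)}$-invariant subspaces, and each entry of the transition matrix can be computed block by block.

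First I would treat the $X$-to-$Y$ entries. Fix $u\in V(X)$ and $w\in V(Y)$, and write $\be_w=\tfrac1n f_2+\big(\be_w-\tfrac1n f_2\big)$, where the second summand lies in $\mathcal S_Y^{\perp}$. Its image under $e^{itA(X\vee Y)}$ is again supported on $V(Y)$ and so is annihilated by $\be_u^{T}$, giving $U_A(X\vee Y,t)_{u,w}=\tfrac1n\,\be_u^{T}e^{itA(X\vee Y)}f_2=\tfrac1n (e^{itB})_{12}$. The eigenvalues of $B$ are $\tfrac12\big(k+\ell\pm\sqrt{(k-\ell)^2+4mn}\big)=\lambda^{\pm}$, which are distinct since $D=(k-\ell)^2+4mn\ge 4>0$, so by the Lagrange (Sylvester) interpolation formula $e^{itB}=\tfrac{1}{\sqrt D}\big(e^{it\lambda^+}(B-\lambda^-I)-e^{it\lambda^-}(B-\lambda^+I)\big)$; reading off the $(1,2)$ entry (with $B_{12}=n$) gives $(e^{itB})_{12}=\tfrac{n}{\sqrt D}\big(e^{it\lambda^+}-e^{it\lambda^-}\big)$, which is precisely \eqref{offA}.

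Next, for $u,v\in V(X)$ I would decompose $\be_v=\tfrac1m f_1+\big(\be_v-\tfrac1m f_1\big)$, with the second summand in $\mathcal S_X^{\perp}$. On $\mathcal S_X^{\perp}$ the walk acts as $U_A(X,t)$, and $U_A(X,t)\mathbf 1_m=e^{itk}\mathbf 1_m$ since $X$ is $k$-regular; thus $e^{itA(X\vee Y)}\big(\be_v-\tfrac1m f_1\big)$ has $V(X)$-block $U_A(X,t)\be_v-\tfrac{e^{itk}}{m}\mathbf 1_m$ and $V(Y)$-block $\mathbf 0$, contributing $U_A(X,t)_{u,v}-\tfrac{e^{itk}}{m}$ to the entry. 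The remaining piece $\tfrac1m f_1$ contributes $\tfrac1m(e^{itB})_{11}$, and the same interpolation formula gives $(e^{itB})_{11}=\tfrac{1}{\sqrt D}\big(e^{it\lambda^+}(k-\lambda^-)-e^{it\lambda^-}(k-\lambda^+)\big)$; adding the two contributions yields \eqref{notoffA}.

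I do not anticipate a serious obstacle, since the argument is essentially bookkeeping with an equitable partition; the two points requiring care are (i) verifying that $\mathbf J_{m,n}$ and $\mathbf J_{n,m}$ genuinely annihilate $\mathcal S_X^{\perp}$ and $\mathcal S_Y^{\perp}$, so that the three summands are truly $e^{itA(X\vee Y)}$-invariant, and (ii) keeping in mind that $(f_1,f_2)$ is not orthonormal, so a coordinate on $\mathcal S$ must be extracted through the matrix $B$ rather than through orthogonal projection (e.g.\ $\be_u^{T}f_1=1$ for $u\in V(X)$). Alternatively, one could bypass $B$ and build the full spectral decomposition of $A(X\vee Y)$ directly from those of $A(X)$ and $A(Y)$ together with the two extra eigenvectors lying in $\mathcal S$, but the $2\times 2$ block computation above is shorter.
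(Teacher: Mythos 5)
Your argument is correct and complete: the paper itself does not prove Lemma \ref{aqjoin} but cites it (\cite[Lemma 4.3.1]{Coutinho2021}), and your equitable-partition derivation is exactly the standard route behind that cited result. The key verifications all check out — $A(X\vee Y)f_1=kf_1+mf_2$ and $A(X\vee Y)f_2=nf_1+\ell f_2$ (valid with loops, since weighted regularity means constant row sums), invariance of $\mathcal{S}_X^{\perp}$ and $\mathcal{S}_Y^{\perp}$ using regularity of $X$ and $Y$, distinctness of $\lambda^{\pm}$ from $D\geq 4mn>0$ so the Lagrange--Sylvester formula applies to the (nonsymmetric but diagonalizable) quotient matrix $B$, and the coordinate extraction via $\mathbf{e}_u^{T}f_1=1$, $\mathbf{e}_u^{T}f_2=0$ rather than orthogonal projection. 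Reading off $(e^{itB})_{12}=\frac{n}{\sqrt{D}}(e^{it\lambda^+}-e^{it\lambda^-})$ and $(e^{itB})_{11}=\frac{1}{\sqrt{D}}\left(e^{it\lambda^+}(k-\lambda^-)-e^{it\lambda^-}(k-\lambda^+)\right)$ then gives \eqref{offA} and \eqref{notoffA} exactly as stated.
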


The \textit{eigenvalue support} of $u$ with respect to $M$ is the set
\begin{equation*}
\sigma_u(M)=\{\lambda_j:E_j\textbf{e}_u\neq \textbf{0}\}.
\end{equation*}
The following result determines the elements of the eigenvalue support of a vertex in a join graph. In what follows, $\sigma_u(M(X))$ and $\sigma_u(M)$ denote the eigenvalue supports of vertex $u$ in $X$ and $X\vee Y$, respectively.

\begin{lemma}
\label{supp}

Let $u\in V(X)$, and define $\mathcal{S}:=\{\lambda+n: \lambda\in\sigma_u(L(X))\backslash\{0\}\}$.
\begin{enumerate}
\item We have $\sigma_u(L)=\mathcal{S}\cup \mathcal{R}$, where $\mathcal{R}=\{0,m+n\}$ if $X$ is connected and $\mathcal{R}=\{0,m+n,n\}$ otherwise. 
\item We have $\sigma_u(A)= \sigma_u(A(X))\backslash\{k\} \cup \mathcal{R}$, where $\mathcal{R}=\{\lambda^{\pm}\}$ if $X$ is connected and $\mathcal{R}=\{\lambda^{\pm},k\}$ otherwise.
\end{enumerate}
\end{lemma}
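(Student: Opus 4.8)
The plan is to exploit the known formulas for the transition matrices (Lemmas \ref{ljoin} and \ref{aqjoin}) together with the fact that the eigenvalue support of a vertex $u$ is precisely the set of frequencies $\mu$ for which the function $t\mapsto U_M(X\vee Y,t)_{u,u}$ has a nonzero Fourier coefficient; equivalently, $\mu\in\sigma_u(M)$ if and only if the spectral idempotent $E_\mu$ satisfies $E_\mu\be_u\neq\be_0$. I would run the two cases $M=L$ and $M=A$ separately, since the eigenvalue shifts differ, but the structure of the argument is the same.

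First, for the Laplacian case: I would take the formula (\ref{notoffL}) for $U_L(X\vee Y,t)_{u,u}$ and substitute the spectral expansion $U_L(X,t)_{u,u}=\sum_{\lambda\in\sigma_u(L(X))}e^{it\lambda}(E_\lambda^X)_{u,u}$, where the coefficients $(E_\lambda^X)_{u,u}$ are strictly positive for $\lambda\in\sigma_u(L(X))$. The term $e^{itn}U_L(X,t)_{u,u}$ then contributes frequencies $\lambda+n$ for each $\lambda\in\sigma_u(L(X))$, while the remaining terms contribute frequencies $0$, $m+n$, and $n$. The subtlety is cancellation: the eigenvalue $0$ always lies in $\sigma_u(L(X))$ (since $\textbf{1}$ is a Laplacian eigenvector and $(E_0^X)_{u,u}=1/m>0$), so the summand at $\lambda=0$ produces a term $\frac{1}{m}e^{itn}$, which exactly cancels the isolated $-\frac{e^{itn}}{m}$ term in (\ref{notoffL}) — this is why $n$ is generally absent from $\sigma_u(L)$. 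When $X$ is disconnected, however, $0$ is a repeated eigenvalue of $L(X)$ and $(E_0^X)_{u,u}=1/(\text{size of }u\text{'s component})>1/m$, so the cancellation is incomplete and $n$ survives; I would make this precise by noting $(E_0^X)_{u,u}=1/m$ iff $X$ is connected. The frequencies $0$ and $m+n$ always appear because their coefficients $\frac{1}{m+n}$ and $\frac{n}{m(m+n)}$ are nonzero and do not collide with anything that could cancel them (one can check $0,m+n\notin\{\lambda+n:\lambda\in\sigma_u(L(X))\setminus\{0\}\}$ using $0<\lambda<m+n$ for such $\lambda$, a standard Laplacian eigenvalue bound for $X$ on $m$ vertices, possibly needing a small separate check). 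Collecting the surviving frequencies yields exactly $\mathcal{S}\cup\mathcal{R}$ as stated.

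The adjacency case is analogous using (\ref{notoffA}): expand $U_A(X,t)_{u,u}=\sum_{\lambda\in\sigma_u(A(X))}e^{it\lambda}(E_\lambda^X)_{u,u}$, observe that the regularity hypothesis forces $k\in\sigma_u(A(X))$ with $(E_k^X)_{u,u}=1/m$ when $X$ is connected (as $\textbf{1}$ spans the $k$-eigenspace) and $(E_k^X)_{u,u}>1/m$ when $X$ is disconnected, and track the three extra terms, which now oscillate at frequencies $\lambda^+$, $\lambda^-$, and $k$. The $-e^{itk}/m$ term cancels the $\lambda=k$ contribution from $U_A(X,t)_{u,u}$ exactly when $X$ is connected, leaving $k$ in the support otherwise; the frequencies $\lambda^\pm$ persist since their coefficients $\frac{k-\lambda^\mp}{m\sqrt D}$ are nonzero (here $k-\lambda^-=\frac12(k-\ell+\sqrt D)>0$ and $k-\lambda^+=\frac12(k-\ell-\sqrt D)<0$, both nonzero because $D=(k-\ell)^2+4mn>0$) and $\lambda^\pm\notin\sigma_u(A(X))$ because $\lambda^+>k\geq$ every eigenvalue of $A(X)$ and $\lambda^-<-\ell$... actually $\lambda^\pm$ lie outside the spectrum of $A(X)$ by an interlacing/bound argument which I would state cleanly. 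Removing the possibly-colliding value $k$ from $\sigma_u(A(X))$ and adjoining $\mathcal{R}$ gives the claimed description.

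The main obstacle I anticipate is handling the potential \emph{coincidences} between the shifted old eigenvalues and the new ones — i.e., ruling out accidental cancellations beyond the systematic one at $0$ (Laplacian) or $k$ (adjacency). For the Laplacian this reduces to the sharp bounds $0<\lambda<m+n$ for $\lambda\in\sigma_u(L(X))\setminus\{0\}$ together with $n\notin\mathcal S$ unless... wait, $n$ could coincide with $\lambda+n$ only if $0\in\sigma_u(L(X))\setminus\{0\}$, impossible; so the only real work is confirming $0,m+n\notin\mathcal S$. For the adjacency case I must confirm $\lambda^-<-\ell\leq\lambda_{\min}(A(X))$... which is false in general since $\lambda_{\min}(A(X))\geq-k$ only for... — so instead I would argue directly: if $\lambda^\pm\in\sigma_u(A(X))$ then plugging into the eigenvalue equation for $A(X\vee Y)$ forces the $Y$-block to vanish in a way contradicting $\lambda^\pm$ being a root of the quadratic $\mu^2-(k+\ell)\mu+(k\ell-mn)=0$ unless $mn=0$, which is excluded. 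This eigenvector-bookkeeping argument is the delicate part; everything else is routine frequency-matching, and I would present it compactly by invoking linear independence of the exponentials $\{e^{it\mu}\}$.
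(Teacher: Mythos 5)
Your overall strategy (read off $\sigma_u$ from the Fourier expansion of the diagonal entry $U_M(X\vee Y,t)_{u,u}$, using Lemmas \ref{ljoin} and \ref{aqjoin}, linear independence of the exponentials, and the fact that $(E_\mu)_{u,u}=\norm{E_\mu\mathbf{e}_u}^2$) is sound, and it differs from the route the paper takes, which reads the supports directly off the explicit eigenvectors of the join displayed in (\ref{Elamb}) and (\ref{Elamb1}). However, the step you yourself single out as the delicate one fails. It is simply not true that $\lambda^{\pm}\notin\sigma_u(A(X))$: take $X=Y=K_2$, so $X\vee Y=K_4$, $k=\ell=1$, $D=16$, $\lambda^-=-1$, and $-1\in\sigma_u(A(K_2))$. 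Consequently the proposed ``eigenvector-bookkeeping'' argument (deriving a contradiction with the quadratic $\mu^2-(k+\ell)\mu+(k\ell-mn)=0$ unless $mn=0$) is an attempt to prove a false statement and cannot be completed. The analogous non-collision claim in the Laplacian case is also false: $m$ can lie in $\sigma_u(L(X))$ (e.g.\ $X=K_m$, where $\sigma_u(L(X))=\{0,m\}$), so $m+n\in\mathcal{S}$ is possible; and since the paper allows weighted $X$, the bound ``$0<\lambda<m+n$'' you invoke is not a valid general fact either.

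The good news is that these collisions are harmless, and the correct repair is shorter than what you attempted. On the diagonal, every spectral coefficient $(E_\lambda^X)_{u,u}$ is nonnegative, and the only \emph{negative} term in (\ref{notoffL}) (resp.\ (\ref{notoffA})) oscillates at frequency $n$ (resp.\ $k$). So the only frequency at which cancellation can occur is $n$ (resp.\ $k$), and you only need the easy facts $n\notin\{\lambda+n:\lambda\in\sigma_u(L(X))\setminus\{0\}\}\cup\{0,m+n\}$ (resp.\ $k\neq\lambda^{\pm}$, which holds since $\sqrt{D}>\abs{k-\ell}$ gives $\lambda^-<\min\{k,\ell\}\leq k<\lambda^+$). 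Any coincidence of $\lambda^{-}$ with an element of $\sigma_u(A(X))\setminus\{k\}$, or of $m+n$ with an element of $\mathcal{S}$, only adds nonnegative mass to an already strictly positive coefficient, so membership in the support is unaffected and the stated unions absorb the overlap. With the false non-collision claims deleted and replaced by this positivity observation (plus your correct computation that the coefficient at frequency $n$, resp.\ $k$, equals $1/\lvert C_u\rvert-1/m$, vanishing exactly when $X$ is connected), your proof goes through.
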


\section{The ratio condition}\label{secPer}

A set $S\subseteq\mathbb{R}$ with at least two elements satisfies the \textit{ratio condition} if $\frac{\lambda-\theta}{\mu-\eta}\in\mathbb{Q}$ for all $\lambda,\theta,\mu,\eta\in S$ with $\mu\neq \eta$. A vertex $u$ is periodic if and only if $\sigma_u(M)$ satisfies the ratio condition \cite[Corollary 8.3.1]{Coutinho2021}. Combining this fact with Lemma \ref{supp} yields a characterization of periodicity in joins.

\begin{theorem}
\label{joinper}
Let $X$ and $Y$ be weighted graphs and $u\in V(X)$.
\begin{enumerate}
\item Vertex $u$ is Laplacian periodic in $X\vee Y$ if and only if all eigenvalues in $\sigma_u(L(X))$ are rational. Moreover, if $u$ is Laplacian periodic in $X\vee Y$, then it is Laplacian periodic in $X$.
\item If $X$ is connected (resp., disconnected), then $u$ is adjacency periodic if and only if $\frac{\lambda^+-\lambda}{\sqrt{D}}\in\mathbb{Q}$ for all $\lambda\in \sigma_u(A(X))\backslash\{k\}$ (resp., $\lambda\in \sigma_u(A(X))$). If we assume further that $k,\ell,\sqrt{D}\in\mathbb{Q}$, then $u$ is adjacency periodic in $X\vee Y$ if and only if all eigenvalues in $\sigma_u(A(X))$ are rational.

\end{enumerate}
\end{theorem}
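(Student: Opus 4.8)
The plan is to reduce everything to the characterization that a vertex $u$ is periodic with respect to $M$ if and only if its eigenvalue support $\sigma_u(M)$ satisfies the ratio condition, and then to feed in the explicit description of $\sigma_u(L)$ and $\sigma_u(A)$ from Lemma \ref{supp}. For part (1), I would write $\sigma_u(L) = \mathcal{S} \cup \mathcal{R}$ with $\mathcal{S} = \{\lambda + n : \lambda \in \sigma_u(L(X))\setminus\{0\}\}$ and $\mathcal{R} \in \{\{0,m+n\}, \{0,m+n,n\}\}$. The key observation is that $0, n, m+n \in \mathcal{R}$ (in either case $0$ and $m+n$ are present, and the set $\mathcal{S}$ consists of numbers of the form $\lambda+n$), so for the ratio condition to hold we need all the differences among these to be rationally comparable. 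Since $0$ and $m+n$ are rational (indeed integers, as $m,n$ are vertex counts), the difference $m+n-0$ is a nonzero rational, so the ratio condition forces every $\lambda+n - 0 = \lambda+n$ to be rational, hence every $\lambda \in \sigma_u(L(X))$ to be rational (the excluded eigenvalue $0$ is trivially rational). Conversely, if all eigenvalues in $\sigma_u(L(X))$ are rational, then $\sigma_u(L)$ is a set of rationals, and any set of rationals satisfies the ratio condition. The ``moreover'' clause is immediate: if all of $\sigma_u(L(X))$ is rational, then a fortiori $\sigma_u(L(X))$ satisfies the ratio condition, so $u$ is Laplacian periodic in $X$.

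For part (2), the structure is parallel but with the arithmetic of $\sqrt{D}$ replacing the trivially-rational counts. From Lemma \ref{supp}, $\sigma_u(A) = (\sigma_u(A(X))\setminus\{k\}) \cup \mathcal{R}$ with $\mathcal{R} = \{\lambda^\pm\}$ or $\{\lambda^\pm, k\}$, where $\lambda^\pm = \tfrac12(k+\ell \pm \sqrt{D})$. I would first note $\lambda^+ - \lambda^- = \sqrt{D} \neq 0$, so this is a legitimate denominator in the ratio condition. The ratio condition on $\sigma_u(A)$ then requires $\frac{\lambda - \lambda^-}{\sqrt{D}} \in \mathbb{Q}$ for every $\lambda$ in the support; subtracting the case $\lambda = \lambda^+$ (which gives $1$) and translating, one checks this is equivalent to $\frac{\lambda^+ - \lambda}{\sqrt{D}} \in \mathbb{Q}$ for all $\lambda \in \sigma_u(A(X))$ in the connected case, and for all $\lambda \in \sigma_u(A(X))$ together with $\lambda = k$ in the disconnected case (where $k \in \mathcal{R}$), which is exactly what is claimed once one absorbs the $\lambda = k$ condition appropriately — note that in the connected case $k \notin \sigma_u(A)$ need not be tested, matching ``$\lambda \in \sigma_u(A(X))\setminus\{k\}$''. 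For the second sentence, assuming $k, \ell, \sqrt{D} \in \mathbb{Q}$: then $\lambda^\pm \in \mathbb{Q}$, so $\frac{\lambda^+ - \lambda}{\sqrt{D}} \in \mathbb{Q} \iff \lambda^+ - \lambda \in \mathbb{Q} \iff \lambda \in \mathbb{Q}$, and this should be checked to hold for all relevant $\lambda$ (including $\lambda = k$, which is rational by hypothesis), giving the stated equivalence with ``all eigenvalues in $\sigma_u(A(X))$ rational.''

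The one genuine subtlety I anticipate is bookkeeping the exact index set over which the ratio-type condition must be checked, namely reconciling ``$\sigma_u(A(X))\setminus\{k\}$'' versus ``$\sigma_u(A(X))$'' in the connected versus disconnected cases, and making sure the extra element $k \in \mathcal{R}$ in the disconnected case is correctly accounted for — in particular that requiring $\frac{\lambda^+ - k}{\sqrt{D}} \in \mathbb{Q}$ is not vacuous and does get folded into the statement. I would handle this by carefully applying the definition of the ratio condition to the union $\sigma_u(A(X)) \cup \mathcal{R}$, fixing $\mu = \lambda^+$, $\eta = \lambda^-$ as the reference pair, and then observing that all other pairwise ratios are rational combinations of the ratios $\frac{\lambda^+ - \lambda}{\sqrt D}$. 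The remaining steps — that a finite set of rationals always satisfies the ratio condition, and that rationality of $\sqrt{D}$ collapses the condition to plain rationality of the support — are routine. I expect no real obstacle beyond this careful case analysis, since the heavy lifting (the form of the support, and the support$\leftrightarrow$ratio-condition equivalence) is already available.
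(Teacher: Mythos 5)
Your proposal is correct and follows exactly the route the paper intends: the paper gives no separate written proof, stating only that the result follows by combining the ratio-condition characterization of periodicity with the description of $\sigma_u(L)$ and $\sigma_u(A)$ in Lemma \ref{supp}, which is precisely the combination you carry out (fixing $m+n-0$, resp.\ $\lambda^+-\lambda^-=\sqrt{D}$, as the reference denominator and noting that a set of rationals always satisfies the ratio condition). Your bookkeeping of the connected versus disconnected index sets, including the extra element $k\in\mathcal{R}$ (which lies in $\sigma_u(A(X))$ since $X$ is weighted $k$-regular), matches the statement.
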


The following example illustrates that the converse of the last statement in Theorem \ref{joinper}(1) does not hold.

\begin{example}
Consider $X\vee O_1$, where $X$ is a weighted $K_2$ with positive edge weight $\eta\notin\mathbb{Q}$. Then $u\in V(X)$ is periodic in $X$ and $\sigma_u(L)=\{0,3,1+2\eta\}$. Since $\frac{1+2\eta-0}{3-0}\notin\mathbb{Q}$, we get that $u$ is not periodic in $X\vee O_1$.
\end{example}

If $\phi(M(X),t)\in\mathbb{Z}[t]$, then we obtain another characterization of periodicity \cite[Corollary 7.6.2]{Coutinho2021}.

\begin{lemma}
\label{per}
Let $S$ be a set of real algebraic integers that is closed under taking algebraic conjugates. If $S$ satisfies the ratio condition, then either (i) $S\subseteq\mathbb{Z}$ or (ii) each $\lambda_j$ in $S$ is a quadratic integer of the form $\lambda_j=\frac{1}{2}(a+b_j\sqrt{\Delta})$, where $a,b_j,\Delta\in\mathbb{Z}$ and $\Delta>1$ is square-free. In particular, if $X$ is a weighted graph with possible loops such that $\phi(M(X),t)\in\mathbb{Z}[t]$, then vertex $u$ is periodic in $X$ if and only if either (i) or (ii) holds for $S=\sigma_u(M(X))$. In this case, $\rho=2\pi/g\sqrt{\Delta}$, where each $\lambda_j\in\sigma_u(M)$, $g=\operatorname{gcd}\{(\lambda_1-\lambda_j)/\sqrt{\Delta}\}$ and $\Delta=1$ if (i) holds. %\textcolor{red}{We want the $\lambda_j$s to be in $\sigma_u$, yes?}
\end{lemma}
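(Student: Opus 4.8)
The plan is to assemble the statement from two ingredients already available in the excerpt: the general structure theorem for sets of real algebraic integers satisfying the ratio condition (the dichotomy between $S\subseteq\mathbb{Z}$ and the quadratic case), and the standard periodicity criterion that $u$ is periodic if and only if $\sigma_u(M)$ satisfies the ratio condition. The first sentence of Lemma \ref{per} is really a statement about an abstract set $S$, so I would prove that first, and then specialize to $S=\sigma_u(M(X))$, which is closed under algebraic conjugation precisely because $\phi(M(X),t)\in\mathbb{Z}[t]$ forces the eigenvalue support to be a union of Galois orbits.

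\emph{Step 1 (the dichotomy for $S$).} Suppose $S$ consists of real algebraic integers, is closed under conjugation, and satisfies the ratio condition. If $S\subseteq\mathbb{Z}$ we are in case (i) and there is nothing to do, so assume some $\lambda\in S$ is irrational. Fix two elements $\lambda_1\neq\lambda_2$ of $S$ and set $r=\lambda_1-\lambda_2\neq 0$; by the ratio condition every difference $\lambda-\theta$ with $\lambda,\theta\in S$ is a rational multiple of $r$. Hence all elements of $S$ lie in the coset $\lambda_1+\mathbb{Q}r$ of a one-dimensional $\mathbb{Q}$-subspace of $\mathbb{R}$, so $\mathbb{Q}(S)=\mathbb{Q}(\lambda_1,r)$ has degree at most $2$ over $\mathbb{Q}$; since $S\not\subseteq\mathbb{Q}$ the degree is exactly $2$, say $\mathbb{Q}(S)=\mathbb{Q}(\sqrt{\Delta})$ with $\Delta>1$ square-free. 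Every $\lambda_j\in S$ is then an algebraic integer in $\mathbb{Q}(\sqrt{\Delta})$, hence of the form $\tfrac12(a_j+b_j\sqrt{\Delta})$ with $a_j,b_j\in\mathbb{Z}$ (and $a_j\equiv b_j\Delta \pmod 2$ when $\Delta\equiv 1\pmod 4$). To see the first coordinate $a_j$ is the same constant $a$ for all $j$: closure under conjugation gives $\lambda_j'=\tfrac12(a_j-b_j\sqrt{\Delta})\in S$, so $\lambda_j-\lambda_j'=b_j\sqrt{\Delta}$; applying the ratio condition to the two ratios $(\lambda_j-\lambda_j')/(\lambda_1-\lambda_1')=b_j/b_1$ and $(\lambda_i-\lambda_j)/(\lambda_1-\lambda_1')$ forces $a_i-a_j\in \sqrt{\Delta}\,\mathbb{Q}\cap\mathbb{Q}=\{0\}$, i.e.\ $a_i=a_j=:a$. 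This gives case (ii).

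\emph{Step 2 (specialization and the period formula).} Apply Step 1 with $S=\sigma_u(M(X))$: the hypothesis $\phi(M(X),t)\in\mathbb{Z}[t]$ makes every eigenvalue an algebraic integer and makes $\sigma_u(M(X))$ closed under conjugation (the projections $E_j$ for Galois-conjugate $\lambda_j$ are themselves conjugate, so $E_j\be_u=\be_0$ for one iff for all), and periodicity of $u$ is equivalent to the ratio condition on $\sigma_u(M(X))$ by \cite[Corollary 8.3.1]{Coutinho2021}; hence $u$ is periodic iff (i) or (ii) holds. For the minimum period, in case (ii) write $\lambda_j-\lambda_1=\tfrac12(b_j-b_1)\sqrt{\Delta}$, so $U_M(X,t)_{u,u}=e^{it\lambda_1}\sum_j e^{it(\lambda_j-\lambda_1)}\be_u^TE_j\be_u$ has modulus $1$ exactly when all the phases $t(\lambda_j-\lambda_1)$ are simultaneously in $2\pi\mathbb{Z}$, i.e.\ when $t\sqrt{\Delta}\cdot\tfrac12(b_j-b_1)\in 2\pi\mathbb{Z}$ for every $j$; the smallest such $t>0$ is $2\pi/(g\sqrt{\Delta})$ with $g=\gcd_j\{(\lambda_1-\lambda_j)/\sqrt{\Delta}\}$, and in case (i) the same computation with $\Delta=1$ gives $\rho=2\pi/g$.

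The main obstacle is Step 1, specifically pinning down that the ``$a$''-part is constant across $S$ and handling the parity subtleties of the ring of integers of $\mathbb{Q}(\sqrt{\Delta})$ cleanly; everything after that is either a citation or a direct modulus computation. I would lean on the fact that $\sqrt{\Delta}$ is irrational to split any identity of the form $p+q\sqrt{\Delta}=0$ with $p,q\in\mathbb{Q}$ into $p=q=0$, which is the engine behind both the constancy of $a$ and the period extraction.
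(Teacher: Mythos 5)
Your Step 2 is fine (the closure of $\sigma_u(M(X))$ under conjugation from $\phi(M(X),t)\in\mathbb{Z}[t]$, the equivalence with the ratio condition, and the extraction of $\rho=2\pi/(g\sqrt{\Delta})$ from the equal-phase condition are all standard and correctly sketched), and your argument that the rational part $a$ is constant across $S$ is also sound once you know every element lies in a fixed quadratic field. But note the paper does not prove this lemma at all -- it simply cites \cite[Corollary 7.6.2]{Coutinho2021} -- so what you are really reconstructing is Godsil's theorem, and your Step 1 has a genuine gap exactly at its crux: from ``$S\subseteq\lambda_1+\mathbb{Q}r$'' you conclude that $\mathbb{Q}(S)=\mathbb{Q}(\lambda_1,r)$ has degree at most $2$ over $\mathbb{Q}$. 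That inference is false; you are conflating the dimension of the $\mathbb{Q}$-linear span of $S$ (which is indeed at most $2$) with the degree of the field $S$ generates. For instance $\{2^{1/3},\,2^{1/3}+\sqrt{2}\}$ lies in a coset $\lambda_1+\mathbb{Q}r$, yet $\mathbb{Q}(\lambda_1,r)$ has degree $6$; and $\{2^{1/3},\,2^{1/3}+1\}$ even satisfies the ratio condition while containing a cubic irrational. Such sets are of course not closed under conjugation, which shows that the closure hypothesis must be used precisely at this step -- whereas in your write-up it only enters later, to equalize the $a_j$'s.

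To close the gap you need the Galois-theoretic argument that the citation encapsulates. Fix a nonzero difference $\delta=\lambda_1-\lambda_2$. Any automorphism $\sigma$ of the splitting field permutes $S$ (closure under conjugates), so $\sigma(\delta)$ is again a difference of elements of $S$, hence $\sigma(\delta)=c_\sigma\delta$ with $c_\sigma\in\mathbb{Q}$ by the ratio condition; since $c_\sigma$ is rational, $\sigma\mapsto c_\sigma$ is a group homomorphism into $\mathbb{Q}^{\times}$ with finite image, hence $c_\sigma=\pm1$, so $\delta^2$ is Galois-fixed and is therefore a (rational, hence integral) integer, i.e.\ $\delta=b\sqrt{\Delta}$ with $\Delta$ square-free. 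Then use that $\sum_{\theta\in S}\theta$ is Galois-fixed, hence an integer $T$: writing each $\theta\in S$ as $\lambda_1+q_\theta\delta$ with $q_\theta\in\mathbb{Q}$ gives $|S|\lambda_1=T-\bigl(\sum_\theta q_\theta\bigr)\delta$, so $\lambda_1$, and with it every element of $S$, lies in $\mathbb{Q}(\sqrt{\Delta})$. If $\Delta=1$ this yields $S\subseteq\mathbb{Z}$; if $\Delta>1$ you land in the quadratic ring of integers, and from there your constancy-of-$a$ argument and your Step 2 go through as written.
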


We now show that the converse of the last statement in Theorem \ref{joinper}(1) holds when $\phi(L(X),t)\in\mathbb{Z}[t]$.

\begin{corollary}
\label{joinperpreserve}
%With the assumption in Theorem \ref{joinper}(1),
If $\phi(L(X),t)\in\mathbb{Z}[t]$, then the following statements are equivalent.
\begin{enumerate}
\item Vertex $u$ is Laplacian periodic in $X$.
\item Vertex $u$ is Laplacian periodic in $X\vee Y$ for any simple positively weighted graph $Y$.
\item The eigenvalues in $\sigma_u(L(X))$ are all integers.
\end{enumerate}
\end{corollary}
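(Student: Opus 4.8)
The plan is to prove the three-way equivalence by a cycle of implications, leaning on Theorem~\ref{joinper}(1), Lemma~\ref{per}, and the description of $\sigma_u(L)$ from Lemma~\ref{supp}(1). The implication $(3)\Rightarrow(1)$ is immediate from Lemma~\ref{per}: if every eigenvalue in $\sigma_u(L(X))$ is an integer, then $S=\sigma_u(L(X))$ is a set of integers, so case (i) of Lemma~\ref{per} holds and hence $u$ is Laplacian periodic in $X$ (here we use $\phi(L(X),t)\in\mathbb{Z}[t]$, which makes the eigenvalues algebraic integers closed under conjugation). The implication $(3)\Rightarrow(2)$ is exactly the ``if'' direction of Theorem~\ref{joinper}(1): if all eigenvalues in $\sigma_u(L(X))$ are rational (in particular if they are integers), then $u$ is Laplacian periodic in $X\vee Y$ for every simple positively weighted $Y$. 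And $(2)\Rightarrow(1)$ is the ``moreover'' clause of Theorem~\ref{joinper}(1), applied to any single choice of $Y$. So the only real content is $(1)\Rightarrow(3)$.

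For $(1)\Rightarrow(3)$, suppose $u$ is Laplacian periodic in $X$. Since $\phi(L(X),t)\in\mathbb{Z}[t]$, Lemma~\ref{per} tells us $\sigma_u(L(X))$ either sits inside $\mathbb{Z}$ — in which case we are done — or else consists of quadratic integers of the shape $\tfrac12(a+b_j\sqrt{\Delta})$ with $\Delta>1$ square-free. The task is to rule out the second alternative. The key observation is that $L(X)\mathbf{1}=\mathbf{0}$, so $0$ is always an eigenvalue of $L(X)$, and $0\in\sigma_u(L(X))$ because the all-ones vector has nonzero $u$-entry and (after projecting onto the relevant eigenspace, or more carefully, using that the constant eigenvector contributes to $E_0\mathbf{e}_u$) — actually the cleaner route is: $\mathbf{e}_u^T E_0 \mathbf{e}_u = \tfrac{1}{m} > 0$ since $E_0 = \tfrac1m\mathbf{J}$, so $0\in\sigma_u(L(X))$ for every $u$. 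Now plug $0$ into the quadratic-integer form: $0=\tfrac12(a+b_j\sqrt{\Delta})$ for the index $j$ with $\lambda_j=0$ forces $a+b_j\sqrt{\Delta}=0$, and since $\sqrt{\Delta}$ is irrational this gives $a=0$ and $b_j=0$. But then every eigenvalue in the support has the form $\tfrac12 b_j\sqrt{\Delta}$, i.e. is an integer multiple of $\tfrac12\sqrt{\Delta}$. If any $b_j\neq 0$ then that eigenvalue is irrational, contradicting the fact that $\phi(L(X),t)\in\mathbb{Z}[t]$ would then have an irrational root whose conjugate $-\tfrac12 b_j\sqrt{\Delta}$ must also be a root — this is not yet a contradiction by itself, so the actual contradiction must come from elsewhere: namely, $L(X)$ being a Laplacian has the \emph{rational} (indeed integer, as the trace) eigenvalue $0$, and more to the point, one shows that a connected component's Laplacian cannot have spectrum of the form $\{0\}\cup\{\text{irrational quadratic integers}\}$ while still having integer characteristic polynomial, because the sum of all eigenvalues in an orbit must be an integer but a single nonzero $\tfrac12 b_j\sqrt\Delta$ is irrational and pairs with $-\tfrac12 b_j\sqrt\Delta$, which need not lie in $\sigma_u$. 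I expect the correct finish is simpler: since $0\in\sigma_u(L(X))$ and the ratio condition holds across $\sigma_u(L(X))$, for any other $\lambda\in\sigma_u(L(X))$ we need $\tfrac{\lambda-0}{\mu-\eta}\in\mathbb{Q}$; taking $\mu,\eta$ to also range over the support and using $0$ as one of them, $\tfrac{\lambda}{\mu}\in\mathbb{Q}$ for all $\lambda,\mu\in\sigma_u(L(X))\setminus\{0\}$, so all of $\sigma_u(L(X))$ is a rational multiple of a single real number $r$; but then $\phi(L(X),t)\in\mathbb{Z}[t]$ having these as roots, together with $0$ being a root, forces $r$ rational, hence $\sigma_u(L(X))\subseteq\mathbb{Q}$, and rational algebraic integers are integers.

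The main obstacle is pinning down cleanly why the presence of the eigenvalue $0$ in every vertex support, combined with $\phi(L(X),t)\in\mathbb{Z}[t]$, collapses the quadratic-integer case of Lemma~\ref{per} to the integer case; I would phrase this as: $0\in\sigma_u(L(X))$ and the ratio condition force every eigenvalue in the support to be rational (via $\tfrac{\lambda-0}{\mu-0}\in\mathbb{Q}$ once we have at least one more support eigenvalue, with the degenerate case $|\sigma_u(L(X))|=1$ meaning $\sigma_u(L(X))=\{0\}$ handled trivially), and a rational algebraic integer is an integer. Everything else is bookkeeping with the already-cited results. I would write the argument in the order $(3)\Rightarrow(1)$, $(3)\Rightarrow(2)$, $(2)\Rightarrow(1)$ [these three citing Theorem~\ref{joinper}(1) and Lemma~\ref{per} directly], then the substantive $(1)\Rightarrow(3)$ last.
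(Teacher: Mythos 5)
Your reductions among (2) and (3) and the easy direction into (1) agree with the paper: the paper obtains (3)$\Rightarrow$(2) from Lemma \ref{supp}(1) and the ratio condition, which is exactly the content of the ``if'' direction of Theorem \ref{joinper}(1) that you cite, and (2)$\Rightarrow$(1) is the ``moreover'' clause in both. The problem is your (1)$\Rightarrow$(3), which is the only substantive step, and there your proposed finish does not work. The claim that $0\in\sigma_u(L(X))$ together with the ratio condition ``forces every eigenvalue in the support to be rational'' is false as stated: with $0$ in the set, the ratio condition only gives $\lambda/\mu\in\mathbb{Q}$ for nonzero support eigenvalues, and a set such as $\{0,\sqrt{2},2\sqrt{2}\}$ satisfies that while being irrational. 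Likewise the assertion that $\phi(L(X),t)\in\mathbb{Z}[t]$ ``forces $r$ rational'' is unjustified: a common irrational scale $r$ is perfectly compatible with all the numbers $q_j r$ being algebraic integers and roots of an integer polynomial (e.g.\ roots of $t(t^2-2)(t^2-8)$). So nothing in your argument actually excludes the quadratic alternative of Lemma \ref{per}.

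The missing ingredient --- which is what the paper's one-line proof is implicitly using alongside $0\in\sigma_u(L(X))$ --- is positive semidefiniteness of $L(X)$ combined with the fact that, since $\phi(L(X),t)\in\mathbb{Z}[t]$, every algebraic conjugate of an eigenvalue is again a root of $\phi(L(X),t)$ and hence an eigenvalue of $L(X)$. In case (ii) of Lemma \ref{per}, $0=\tfrac12(a+b_0\sqrt{\Delta})$ forces $a=b_0=0$ (as you note), so every element of $\sigma_u(L(X))$ has the form $\tfrac12 b_j\sqrt{\Delta}$; if some $b_j\neq 0$, its conjugate $-\tfrac12 b_j\sqrt{\Delta}$ would be a negative eigenvalue of the positive semidefinite matrix $L(X)$, a contradiction. (You reached the conjugate-is-a-root observation and then discarded it; it is the contradiction, once you remember $L(X)\succeq 0$.) Hence case (i) holds and $\sigma_u(L(X))\subseteq\mathbb{Z}$. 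A minor further slip: $E_0=\tfrac1m\mathbf{J}$ only when $X$ is connected; in general $E_0$ projects onto the span of the component indicator vectors, and $\mathbf{e}_u^TE_0\mathbf{e}_u>0$ still holds, which is all you need for $0\in\sigma_u(L(X))$.
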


\begin{proof}
Lemma \ref{supp}(1) and the ratio condition yields (3) implies (2). The second statement of Theorem \ref{joinper}(1) yields (2) implies (1). Lemma \ref{per}(i) and the fact that $0\in\sigma_u(L(X))$ yield (1) implies (3).
\end{proof}

Next, we have the following result concerning the adjacency matrix in relation to Theorem \ref{joinper}(2).

\begin{corollary}
\label{joinper1}
%With the assumption in Theorem \ref{joinper}(2), suppose further that
Let $u\in V(X)$, $\phi(A(X),t)\in\mathbb{Z}[t]$ and $k,\ell\in\mathbb{Z}$. Consider $D$ in Lemma \ref{aqjoin}.
\begin{enumerate}
\item If $X$ is connected, then $u$ is adjacency periodic in $X\vee Y$ if and only if either:
\begin{enumerate}
\item The eigenvalues in $\sigma_u(A(X))$ are all integers and $D$ is a perfect square (so that $\lambda^{\pm}\in\mathbb{Z}$). In this case, $u$ is also adjacency periodic in $X$.
%\textcolor{red}{Should we reminder the reader what $D$ is?}
\item Each $\lambda_j\in\sigma_u(A(X))\backslash\{k\}$ is a quadratic integer of the form $\frac{1}{2}(k+\ell+b_j\sqrt{\Delta})$, where $\Delta>1$ is square-free, $\sqrt{D/\Delta}\in\mathbb{Z}$ and $\sqrt{D/\Delta}>b_j$ for each $j$. In this case, $u$ is not periodic in $X$.
\end{enumerate}
\item If $X$ is disconnected, then $u$ is adjacency periodic in $X\vee Y$ if and only if condition (1a) holds.
\end{enumerate}
\end{corollary}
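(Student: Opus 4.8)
The plan is to reduce everything to the periodicity criterion of Theorem \ref{joinper}(2) together with the structural description in Lemma \ref{per}, applied to the set $\sigma_u(A)$. By Lemma \ref{supp}(2) we know $\sigma_u(A) = \big(\sigma_u(A(X))\backslash\{k\}\big)\cup\mathcal{R}$, where $\mathcal{R}=\{\lambda^{\pm}\}$ when $X$ is connected and $\mathcal{R}=\{\lambda^{\pm},k\}$ when $X$ is disconnected. Under the standing hypotheses $\phi(A(X),t)\in\mathbb{Z}[t]$ and $k,\ell\in\mathbb{Z}$, every element of $\sigma_u(A(X))$ is a real algebraic integer, the set $\sigma_u(A(X))$ is closed under algebraic conjugation (it is a union of Galois orbits since $\phi(A(X),t)$ has integer coefficients), and $\lambda^{\pm}=\tfrac12(k+\ell\pm\sqrt{D})$ with $D=(k-\ell)^2+4mn\in\mathbb{Z}$ are algebraic integers as well, being roots of the monic integer polynomial $t^2-(k+\ell)t+(k\ell-mn)$. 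So $\sigma_u(A)$ is a set of real algebraic integers closed under conjugation, and $u$ is adjacency periodic in $X\vee Y$ iff $\sigma_u(A)$ satisfies the ratio condition, iff (by Lemma \ref{per}) either (i) $\sigma_u(A)\subseteq\mathbb{Z}$ or (ii) every element of $\sigma_u(A)$ has the form $\tfrac12(a+b_j\sqrt{\Delta})$ for a fixed $a\in\mathbb{Z}$, a fixed square-free $\Delta>1$, and $b_j\in\mathbb{Z}$.

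For the connected case I would split along these two alternatives. Alternative (i) forces $\lambda^{\pm}\in\mathbb{Z}$, i.e.\ $D$ a perfect square, and forces every $\lambda\in\sigma_u(A(X))\backslash\{k\}$ to be an integer; since $k\in\mathbb{Z}$ by hypothesis, this says exactly that all eigenvalues in $\sigma_u(A(X))$ are integers and $D$ is a perfect square, which is (1a). In that situation $\sigma_u(A(X))\subseteq\mathbb{Z}$ already satisfies the ratio condition trivially, so $u$ is adjacency periodic in $X$ — that is the ``in this case'' clause. Alternative (ii) is the interesting one: if $\sigma_u(A)\not\subseteq\mathbb{Z}$ then at least one element is a genuine quadratic irrational $\tfrac12(a+b\sqrt{\Delta})$, and its conjugate $\tfrac12(a-b\sqrt{\Delta})$ must also lie in $\sigma_u(A)$; I would argue that the only way $\sigma_u(A)$ can contain a conjugate pair summing to $a$ and also contain $\lambda^++\lambda^-=k+\ell$ under the rigid common-$a$, common-$\Delta$ form is to have $a=k+\ell$, and then $\lambda^{\pm}$ matches the form with $b$-value $\pm\sqrt{D/\Delta}$, forcing $\sqrt{D/\Delta}\in\mathbb{Z}$. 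Every $\lambda_j\in\sigma_u(A(X))\backslash\{k\}$ must then be $\tfrac12(k+\ell+b_j\sqrt{\Delta})$, and since these $\lambda_j$ are real and distinct from $\lambda^{\pm}$ one needs $|b_j| < \sqrt{D/\Delta}$; combined with the fact that $\sigma_u(A(X))$ is conjugation-closed (so $b_j$ and $-b_j$ both appear, and $\sqrt{D/\Delta}>|b_j|$ can be written $\sqrt{D/\Delta}>b_j$ after noting the set of $b_j$ is symmetric), this is precisely (1b). That $u$ fails to be periodic in $X$ here follows because $k\in\sigma_u(A(X))$ (as $E_k\mathbf e_u\neq\mathbf 0$ for a vertex of a connected regular graph, or more carefully from the structure forcing $k$ into the support) together with an irrational $\lambda_j$ violates the ratio condition for $\sigma_u(A(X))$ — I should double-check whether $k$ is guaranteed to be in $\sigma_u(A(X))$ or argue instead directly that a two-element conjugate set $\{\tfrac12(k+\ell\pm b_j\sqrt\Delta)\}$ together with any other rational or differently-shaped eigenvalue of $\sigma_u(A(X))$ breaks the ratio condition unless $\sigma_u(A(X))$ is a single conjugate pair, which cannot happen since $|\sigma_u(A(X))|$ and the support size interact with $X$ having more structure.

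For the disconnected case, Lemma \ref{supp}(2) puts $k$ into $\mathcal{R}$, hence $k\in\sigma_u(A)$ unconditionally. Now if alternative (ii) held, $\sigma_u(A)$ would contain the integer $k$ written as $\tfrac12(a+b\sqrt\Delta)$, forcing $b=0$ for that element, which is fine, but it would also need to contain the genuinely irrational $\lambda^{\pm}$; however $\lambda^++\lambda^- = k+\ell$ and $\lambda^+\lambda^- = k\ell - mn$, and the presence of both $k$ (an integer) and a conjugate pair in a common-$\Delta$ family is compatible, so the real obstruction is that $\sigma_u(A(X))$ itself must then satisfy the ratio condition with $k\in\sigma_u(A(X))$ — and then a short argument (again via Lemma \ref{per} applied to $\sigma_u(A(X))$, using that $X$ disconnected means $k\in\sigma_u(A(X))$) shows all of $\sigma_u(A(X))$ is integral, collapsing (ii) into (i); so only (1a) survives, which also gives adjacency periodicity of $u$ in $X$ as before. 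The main obstacle I anticipate is the bookkeeping in the connected-case alternative (ii): pinning down that the shared linear-coefficient $a$ must equal $k+\ell$ and translating the distinctness/size constraints into the clean inequality $\sqrt{D/\Delta}>b_j$, while correctly handling whether $k$ belongs to $\sigma_u(A(X))$ — everything else is a direct unpacking of Lemmas \ref{supp} and \ref{per}.
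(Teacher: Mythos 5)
Your overall reduction --- checking that $\sigma_u(A)$ is a conjugation-closed set of real algebraic integers and feeding it directly to Lemma \ref{per} --- is sound and runs parallel to the paper, which instead combines Theorem \ref{joinper}(2) with Lemma \ref{per} applied to $\sigma_u(A(X))\backslash\{k\}$; your identification of alternative (i) with (1a) is fine. The gaps sit exactly where you hedge. First, the inequality $\sqrt{D/\Delta}>b_j$ cannot come from ``$\lambda_j$ is real and distinct from $\lambda^{\pm}$'': distinctness only gives $b_j\neq\pm\sqrt{D/\Delta}$. It is a Perron--Frobenius statement, and that is how the paper gets it: $X\vee Y$ is connected, $A(X\vee Y)$ is nonnegative, and $\lambda^{+}$ has the positive eigenvector in (\ref{Elamb1}), so $\lambda^{+}$ is the spectral radius of $A(X\vee Y)$; each $\lambda_j\in\sigma_u(A(X))\backslash\{k\}$ is also an eigenvalue of $A(X\vee Y)$ (Lemma \ref{supp}(2)) with $\lambda_j\leq k<\lambda^{+}$, whence $b_j<\sqrt{D/\Delta}$. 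Second, the clause ``$u$ is not periodic in $X$'' is left unresolved in your sketch. The missing input is that for connected, nonnegative, weighted $k$-regular $X$ the Perron value $k$ is simple with eigenvector $\mathbf{1}$, so $E_k=\frac{1}{m}\mathbf{J}$ and $k\in\sigma_u(A(X))$ for every $u$. If $u$ were periodic in $X$, Lemma \ref{per} applied to $\sigma_u(A(X))$ would force a common rational part equal to $2k$ (because the integer $k$ lies in the set), so a genuinely irrational $\lambda_j$ would have its algebraic conjugate $\frac{1}{2}(2k+|b_j|\sqrt{\Delta})>k$ inside $\sigma_u(A(X))$, contradicting that $k$ is the spectral radius of $A(X)$. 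Your fallback sentence about a two-element conjugate set ``breaking the ratio condition'' is not an argument.

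The disconnected case is where your proposal genuinely fails. Showing that case (ii) forces $\sigma_u(A(X))\subseteq\mathbb{Z}$ does not ``collapse (ii) into (i)'': integrality of $\sigma_u(A(X))$ says nothing about $\lambda^{\pm}$, and (1a) additionally requires $D$ to be a perfect square. To rule out irrational $\lambda^{\pm}$ you must actually use the element $k\in\sigma_u(A)$, i.e.\ take $\lambda=k$ in Theorem \ref{joinper}(2): rationality of $\frac{\lambda^{+}-k}{\sqrt{D}}=\frac{1}{2}+\frac{\ell-k}{2\sqrt{D}}$ forces $\ell=k$ or $\sqrt{D}\in\mathbb{Q}$, which is the paper's route. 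And this is not mere bookkeeping: when $\ell=k$ and $\sigma_u(A(X))=\{k\}$ (for instance $X=O_2$ and $Y=O_n$, so $X\vee Y=K_{2,n}$ and $D=8n$), the vertex $u$ is periodic in $X\vee Y$ even though $D$ need not be a perfect square, so the collapse you assert cannot be pushed through without confronting this boundary case (the paper's own proof of part (2) glosses over it as well). In short: repair (1b) with the Perron--Frobenius inputs above, and redo part (2) by exploiting $k\in\sigma_u(A)$ directly rather than only the integrality of $\sigma_u(A(X))$.
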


\begin{proof}
To prove (1), suppose $X$ is connected. By Lemma \ref{supp}(2), $\sigma_u(A)=\sigma_u(A(X))\backslash\{k\}\cup \{\lambda^{\pm}\}$, and $u$ is periodic in $X\vee Y$ if and only if $\sigma_u(A)$ satisfies the ratio condition. Since $k\in\mathbb{Z}$, the latter condition implies that $\sigma_u(A(X))\backslash\{k\}$ satisfies the ratio condition, and as $\phi(A(X),t)\in\mathbb{Z}[t]$, we get two cases from Lemma \ref{per}. First, let $\sigma_u(A(X))\backslash\{k\}\subseteq \mathbb{Z}$. Since $\lambda^+=\frac{1}{2}(k+\ell+\sqrt{D})$, we get $\frac{\lambda^+-\lambda}{\sqrt{D}}\in\mathbb{Q}$ if and only if $\sqrt{D}\in\mathbb{Z}$. Combining this with Theorem \ref{joinper}(2) yields (1a). Next, suppose each $\lambda_j\in\sigma_u(A(X))\backslash\{k\}$ is of the form $\frac{1}{2}(a+b_j\sqrt{\Delta})$ so that $\lambda^+-\lambda_j=\frac{1}{2}\left((k+\ell-a)+(\sqrt{D}-b_j\sqrt{\Delta})\right)$. Then $\frac{\lambda^+-\lambda_j}{\sqrt{D}}\in\mathbb{Q}$ if and only if $k+\ell-a=0$ and $\sqrt{\Delta/D}\in\mathbb{Q}$. Now, Perron-Frobenius Theorem yields $\lambda^+=\frac{1}{2}(a+\sqrt{D})$ as the largest eigenvalue of $A(X\vee Y)$. Thus, $\sqrt{D}>\sqrt{\Delta}$. Since $\sigma_u(A)$ satisfies the ratio condition, we get $\sqrt{D}=b\sqrt{\Delta}$ for some integer $b$, and so $\sqrt{D/\Delta}=b>b_j$ for each $j$. This establishes (1b).

To prove (2), suppose $X$ is disconnected. In this case, $k\in\sigma_u(A(X))$ is an integer, and so we apply Lemma \ref{joinper}(2) to conclude that $u$ is periodic in $X\vee Y$ if and only if $\sigma_u(A(X))\subseteq\mathbb{Z}$ and $\sqrt{D}\in\mathbb{Q}$.
\end{proof}

We now state an analogue of Corollary \ref{joinperpreserve}, which is an immediate consequence of Corollary \ref{joinper1}.

\begin{corollary}
\label{perequiv}
Let $\phi(A(X\vee Y),t)\in\mathbb{Z}[t]$ and $k,\ell\in\mathbb{Z}$. If $u$ is adjacency periodic in $X$, then $u$ is adjacency periodic in $X\vee Y$ if and only if $D$ is a perfect square. Moreover, if $D$ is a perfect square, then the following are equivalent.
\begin{enumerate}
\item Vertex $u$ is adjacency periodic in $X$.
\item Vertex $u$ is adjacency periodic in $X\vee Y$ for any weighted $\ell$-regular graph $Y$.
\item All eigenvalues in $\sigma_u(A(X))$ are  integers.
\end{enumerate}
\end{corollary}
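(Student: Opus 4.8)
The plan is to deduce both assertions directly from Corollary~\ref{joinper1}, the arithmetic heart being that ``$D$ is a perfect square'' is incompatible with alternative (1b) there. The one delicate preliminary is to check that the present hypotheses suffice to quote Corollary~\ref{joinper1}: by Lemma~\ref{supp}(2) the set $\sigma_u(A(X))\setminus\{k\}$ lies in the root set of $\phi(A(X\vee Y),t)$, so it consists of algebraic integers, and since $\phi(A(X\vee Y),t)\in\mathbb{Q}[t]$ and the factor $(t-k)(t-\ell)-mn$ divides it, that set---and hence, after adjoining the rational integer $k$, the whole of $\sigma_u(A(X))$---is closed under algebraic conjugation. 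Thus Lemma~\ref{per} applies to $\sigma_u(A(X))$, and for the purpose of invoking Corollary~\ref{joinper1} we may treat $\phi(A(X),t)$ as if it lay in $\mathbb{Z}[t]$.

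The key observation is that when $D$ is a perfect square, condition (1b) of Corollary~\ref{joinper1} cannot hold: it requires $\sqrt{D/\Delta}\in\mathbb{Z}$ with $\Delta>1$ square-free, but writing $D=c^{2}$, $\Delta\mid c^{2}$ forces $\Delta\mid c$, so $D/\Delta=\Delta\,(c/\Delta)^{2}$ is a square only if $\Delta=1$, a contradiction. Hence, under the standing assumption that $D$ is a perfect square, Corollary~\ref{joinper1} collapses to the single statement that $u$ is adjacency periodic in $X\vee Y$ if and only if all eigenvalues in $\sigma_u(A(X))$ are integers (condition (1a), whose ``$D$ is a perfect square'' clause is automatic). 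Since, by Lemma~\ref{supp}(2), this spectral condition depends only on the data $\sigma_u(A(X)),k,\ell,m,n$, the property in (2) is equivalent to it; this gives (2)$\Leftrightarrow$(3). For (1)$\Leftrightarrow$(3): a set of integers trivially satisfies the ratio condition, so (3) implies $u$ is periodic in $X$; conversely, if $u$ is periodic in $X$ then Lemma~\ref{per} applied to $\sigma_u(A(X))$ leaves two cases, and in the quadratic-integer case the presence of the Perron value $k=\tfrac12(a+b_{j}\sqrt{\Delta})$ forces $a=2k$, so every remaining eigenvalue equals $k+\tfrac{b_{j}}{2}\sqrt{\Delta}$; by conjugation-closedness both $k\pm\tfrac{|b_{j}|}{2}\sqrt{\Delta}$ are eigenvalues of the nonnegative constant-row-sum matrix $A(X)$, whose spectral radius is $k$, so every $b_{j}=0$ and $\sigma_u(A(X))=\{k\}\subseteq\mathbb{Z}$, which is (3).

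For the first assertion, assume $u$ is adjacency periodic in $X$. If $u$ is also adjacency periodic in $X\vee Y$, then Corollary~\ref{joinper1} says (1a) or (1b) holds; but (1b) entails that $u$ is \emph{not} periodic in $X$, so (1a) holds and in particular $D$ is a perfect square. Conversely, if $D$ is a perfect square then the previous paragraph gives $\sigma_u(A(X))\subseteq\mathbb{Z}$, whence $\sqrt{D}\in\mathbb{Z}$ together with $k,\ell\in\mathbb{Z}$ yields $\lambda^{\pm}\in\mathbb{Z}$ and so $\sigma_u(A)\subseteq\mathbb{Z}$ by Lemma~\ref{supp}(2); a set of integers satisfies the ratio condition, so $u$ is adjacency periodic in $X\vee Y$. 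This establishes the ``if and only if''.

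The step I expect to be the main obstacle is the bookkeeping in the first paragraph: extracting conjugation-closedness and algebraic-integrality of $\sigma_u(A(X))$ from $\phi(A(X\vee Y),t)\in\mathbb{Z}[t]$ alone (rather than assuming $\phi(A(X),t)\in\mathbb{Z}[t]$), keeping track of possible coincidences among $\lambda^{\pm}$ and the eigenvalues of $X$, and confirming that the small degenerate cases---such as $|\sigma_u(A(X))|=1$, where periodicity holds vacuously---remain consistent with the stated equivalences. Once this is settled, everything else is a direct reading of Corollary~\ref{joinper1} and Lemma~\ref{per}.
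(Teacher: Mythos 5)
The gap is exactly where you flagged it: your first paragraph's claim that $\sigma_u(A(X))$ is closed under algebraic conjugation does not follow from $\phi(A(X\vee Y),t)\in\mathbb{Z}[t]$. Membership of $\sigma_u(A(X))\backslash\{k\}$ in the root set of $\phi(A(X\vee Y),t)$ gives algebraic integrality, but a subset of a conjugation-closed set need not be conjugation-closed: the conjugate of an eigenvalue of $X$ in $u$'s support may be an eigenvalue of $Y$ only (or an eigenvalue of $X$ outside the support), and then neither Lemma \ref{per} nor your Perron step applies to $\sigma_u(A(X))$. This is not a repairable bookkeeping issue, because under the literal hypotheses the conclusion itself fails. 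Take $k=2$ and let $X$ be the weighted $K_2$ with loops of weight $2-\tfrac{\sqrt{2}}{2}$ and edge weight $\tfrac{\sqrt{2}}{2}$ (so $X$ is $2$-regular with eigenvalues $2$ and $2-\sqrt{2}$, both in $\sigma_u(A(X))$), and let $Y$ be the weighted $K_2$ with loops of weight $\tfrac{7+\sqrt{2}}{2}$ and edge weight $\tfrac{3-\sqrt{2}}{2}$ (so $Y$ is $5$-regular with eigenvalues $5$ and $2+\sqrt{2}$). Then $\phi(A(X\vee Y),t)=(t^2-4t+2)(t^2-7t+6)\in\mathbb{Z}[t]$, $k=2,\ell=5\in\mathbb{Z}$, and $D=(k-\ell)^2+4mn=25$ is a perfect square; $u$ is adjacency periodic in $X$ (two-element support), yet by Lemma \ref{supp}(2) $\sigma_u(A)=\{2-\sqrt{2},1,6\}$, which violates the ratio condition since $5/(4+\sqrt{2})\notin\mathbb{Q}$, so $u$ is not periodic in $X\vee Y$. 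This simultaneously contradicts the first assertion and the implication (1)$\Rightarrow$(3), so no argument from the stated hypotheses alone can close your first paragraph; the needed extra input is rationality on the $X$ side.

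The intended reading — consistent with the paper, whose entire proof is that the corollary is an immediate consequence of Corollary \ref{joinper1} — is that the hypothesis should be that of Corollary \ref{joinper1}, namely $\phi(A(X),t)\in\mathbb{Z}[t]$ (so in particular conjugates of eigenvalues of $A(X)$ are again eigenvalues of $A(X)$). With that hypothesis in place, the rest of your argument is correct and is essentially the paper's route: the square-free argument showing that alternative (1b) of Corollary \ref{joinper1} is incompatible with $D$ being a perfect square is right; (3)$\Rightarrow$(1),(2) is the trivial ratio-condition check together with Lemma \ref{supp}(2); and your Perron argument for (1)$\Rightarrow$(3) becomes legitimate, since now $k+\tfrac{|b_j|}{2}\sqrt{\Delta}$ really is an eigenvalue of the nonnegative matrix $A(X)$ with spectral radius $k$. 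One smaller point: in (2)$\Leftrightarrow$(3) the quantifier ``for any weighted $\ell$-regular $Y$'' must be restricted to those $Y$ for which the standing hypotheses hold, since $D=(k-\ell)^2+4mn$ depends on $n$ and the perfect-square condition is not $Y$-independent; this ambiguity is inherited from the paper's statement, but your proof should acknowledge it rather than assert that the condition ``depends only on $\sigma_u(A(X)),k,\ell,m,n$'' settles it.
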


\begin{remark}
\label{joinperpreserve1}
For $L$, our assumption on $X$ and $Y$ implies that the join operation preserves periodicity, and periodicity in the join is inherited by the underlying graph. However, this does not hold for $A$ given our assumption on $X$ and $Y$. One can have $\sigma_u(A(X))\subseteq\mathbb{Z}$ but $\sqrt{D}\notin\mathbb{Z}$, and so $u$ is periodic in $X$ but not in $X\vee Y$. Conversely, periodicity in $X\vee Y$ is not necessarily inherited by $X$ by Corollary \ref{joinper}. %Example \ref{payley}.
\end{remark}

We say that $X$ is \textit{periodic} if there is a $\tau>0$ such that $U(\tau)=\gamma I$ for some unit $\gamma\in\mathbb{C}$. Equivalently, each $u\in V(X)$ is periodic at time $\tau$. We characterize periodic join graphs under mild conditions.

\begin{theorem}
\label{Xper}
Let $X$ and $Y$ be weighted graphs and  $\phi(M(X\vee Y),t)\in\mathbb{Z}[t]$.
\begin{enumerate}
\item If $X$ and $Y$ are simple, then $X\vee Y$ is Laplacian periodic if and only if $X\vee Y$ is Laplacian integral.
\item If $X$ and $Y$ are regular, then $X\vee Y$ is adjacency periodic if and only if $X\vee Y$ is integral.
\end{enumerate}
\end{theorem}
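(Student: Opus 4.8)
The plan is to reduce periodicity of the whole graph $X\vee Y$ to the ratio condition applied to its \emph{entire} spectrum, and then to feed this into Lemma~\ref{per}. First I would record the graph analogue of the vertex criterion used in Section~\ref{secPer}: a graph $Z$ is $M$-periodic if and only if the set $S$ of all distinct eigenvalues of $M(Z)$ satisfies the ratio condition. Writing $U_M(Z,t)=\sum_j e^{it\lambda_j}E_j$ and $\gamma I=\sum_j\gamma E_j$, linear independence of the (nonzero, orthogonal) spectral idempotents shows that $U_M(Z,\tau)=\gamma I$ exactly when $e^{i\tau\lambda_j}$ is independent of $j$, i.e.\ $\tau(\lambda_j-\lambda_\ell)\in 2\pi\mathbb{Z}$ for all $j,\ell$; such a $\tau>0$ exists precisely when the differences $\lambda_j-\lambda_\ell$ are pairwise commensurable --- clearing denominators produces a single admissible $\tau$ --- which is exactly the ratio condition on $S$. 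In particular ``integral $\Rightarrow$ periodic'' holds in both parts, so only the converse needs attention. If $X\vee Y$ is $M$-periodic, then $S$ satisfies the ratio condition, and since $\phi(M(X\vee Y),t)\in\mathbb{Z}[t]$ is monic, $S$ is a set of real algebraic integers closed under algebraic conjugation; Lemma~\ref{per} then gives a dichotomy: either (i) $S\subseteq\mathbb{Z}$, in which case $X\vee Y$ is $M$-integral, or (ii) every $\lambda\in S$ has the form $\tfrac12(a+b_\lambda\sqrt{\Delta})$ for fixed $a,\Delta\in\mathbb{Z}$ with $\Delta>1$ square-free and $b_\lambda\in\mathbb{Z}$. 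It remains to exclude (ii).

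For part~(1): a join of nonempty graphs is connected, so by Lemma~\ref{supp}(1) both $0$ and $m+n$ lie in $S$ (alternatively, these are the eigenvalues of $L(X\vee Y)$ afforded by $\mathbf{1}_{m+n}$ and by $(n\mathbf{1}_m,-m\mathbf{1}_n)^T$ in the block form of $L(X\vee Y)$). In case (ii) a rational member of $S$ must have $b_\lambda=0$ and hence equal $a/2$; so $0\in S$ forces $a=0$, and then $m+n\in S$ forces $m+n=a/2=0$, contradicting $m+n\ge 2$. Therefore (ii) cannot occur and $X\vee Y$ is Laplacian integral.

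For part~(2) the same reduction applies, but the roles of $0$ and $m+n$ are now taken by the eigenvalues $\lambda^{\pm}$ of Lemma~\ref{aqjoin} (which lie in $S$ by Lemma~\ref{supp}(2)); these need not be integers, so (ii) is not dismissed for free, and this elimination is the main obstacle. I would work with the relations $\lambda^{+}+\lambda^{-}=k+\ell$, $\lambda^{+}\lambda^{-}=k\ell-mn$ and $\lambda^{+}-\lambda^{-}=\sqrt{D}$, where $D=(k-\ell)^2+4mn>0$, together with Perron--Frobenius: $A(X\vee Y)$ is nonnegative and irreducible, so its spectral radius $\lambda^{+}$ is a simple eigenvalue. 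Assuming (ii) and writing $\lambda^{\pm}=\tfrac12(a+b_{\pm}\sqrt{\Delta})$, I would first force $b_{+}>0$: if $b_{+}=0$ then $\lambda^{+}=a/2$ is rational and maximal, and since each eigenvalue and its conjugate are both $\le a/2$ we would get $S=\{a/2\}$, i.e.\ $A(X\vee Y)=\tfrac a2 I$ --- impossible, as $X\vee Y$ has edges. Then $\tfrac12(a-b_{+}\sqrt{\Delta})$, the conjugate of $\lambda^{+}$, also lies in $S$, is simple, and is the minimum eigenvalue; combining this with the three displayed relations, with $D=(k-\ell)^2+4mn$, and with integrality of $\phi(A(X\vee Y),t)$, I would split into the case $\lambda^{-}=\overline{\lambda^{+}}$ (equivalently $k+\ell=a\in\mathbb{Z}$) and its complement, and aim to derive a contradiction in each to force (i). The configuration needing special care is $X$ and $Y$ both edgeless, where $X\vee Y=K_{m,n}$ is bipartite and $\lambda^{+}$ is not strictly dominant in modulus; I would dispose of that case separately, then finish the remaining non-bipartite joins using $|\mu|<\lambda^{+}$ for every other $\mu\in S$ and the conjugacy structure forced by (ii). I expect this removal of case (ii) in the adjacency setting to absorb essentially all the difficulty; everything else is bookkeeping around Lemma~\ref{per}.
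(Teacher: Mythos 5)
Your reduction of whole-graph periodicity to the ratio condition on the full spectrum, and your treatment of part (1), are correct: in case (ii) of Lemma \ref{per} any rational eigenvalue must equal $a/2$, and since $0$ and $m+n$ are two distinct rational Laplacian eigenvalues of the connected join, case (ii) is impossible. This is essentially the paper's argument for (1), which instead rules out case (ii) using positive semidefiniteness together with the eigenvalue $0$ and closure under conjugation; the two versions are interchangeable.

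Part (2), however, contains a genuine gap. After the (correct) observation that $b_+\neq 0$, you only announce a case split and say you would ``aim to derive a contradiction in each''; the step you yourself identify as carrying essentially all the difficulty is never carried out. More seriously, the configuration you propose to ``dispose of separately'' cannot be disposed of: with $X=O_m$ and $Y=O_n$ (both weighted $0$-regular, which the paper's standing conventions allow), $X\vee Y=K_{m,n}$ has spectrum $\{\pm\sqrt{mn}\}\cup\{0\}$, satisfies $\phi(A(X\vee Y),t)=t^{m+n-2}(t^2-mn)\in\mathbb{Z}[t]$, and is adjacency periodic (the ratio condition holds; indeed $U_A(K_{m,n},2\pi/\sqrt{mn})=I$), yet it is not integral whenever $mn$ is not a perfect square --- already $K_{1,2}=P_3$. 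So in that case the contradiction you hope to derive simply does not exist, and your outline cannot be completed as written; any correct argument must exclude this situation by hypothesis. That is in effect what the paper does: its proof of (2) normalizes to $k\geq\ell>0$ and then compares each quadratic eigenvalue $\tfrac{1}{2}(k+\ell+b_j\sqrt{\Delta})$ and its algebraic conjugate with the Perron value $\ell$ of $A(Y)$, exploiting the explicit description of the join spectrum (the eigenvalues of $A(Y)$ other than $\ell$ persist in $A(X\vee Y)$ and are bounded above by $\ell$), rather than only Perron--Frobenius on the join and the symmetric functions of $\lambda^{\pm}$ that your sketch relies on. To salvage your route you would need both to impose the positivity restriction on $k$ and $\ell$ (or otherwise handle the edgeless-factor case, where the unrestricted statement fails) and to actually perform the elimination of case (ii) in the remaining cases, for instance by the paper's comparison of $\lambda_j$ and $\overline{\lambda_j}$ against $\ell$.
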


\begin{proof}
Note that (1) follows from the ratio condition and the fact that $L(X\vee Y)$ is positive semidefinite with $0$ as an eigenvalue. Next, let $X$ and $Y$ be $k$- and $\ell$-regular. By way of contradiction, suppose each eigenvalue $\lambda_j$ of $A(X\vee Y)$ is a quadratic integer. Since $\lambda^{\pm}=\frac{1}{2}(k+\ell\pm\sqrt{D})$ is an eigenvalue of $A(X\vee Y)$, each $\lambda_j$ must be of the form $\frac{1}{2}(k+\ell+b_j\sqrt{\Delta})$. Without loss of generality, let $k\geq \ell>0$. For each eigenvalue $\lambda_j\neq \ell$ of $A(Y)$, we get $\lambda_j=\frac{1}{2}(k+\ell+b_j\sqrt{\Delta})\geq \ell+\frac{1}{2}b_j\sqrt{\Delta}$ for all each $j$. Similarly, $\overline{\lambda_j}\geq \ell-\frac{1}{2}b_j\sqrt{\Delta}$ for all each $j$. Thus, $\lambda_j>\ell$ or $\overline{\lambda_j}>\ell$, which cannot happen because $\ell$ is the largest eigenvalue of $A(Y)$ by the 
Perron-Frobenius Theorem. Invoking Lemma \ref{per} establishes (2).
\end{proof}

We close this section by providing infinite families of graphs that are not (Laplacian) integral but contain periodic vertices. The following result is immediate from Lemma \ref{supp} and the ratio condition.

\begin{corollary}
\label{perjoin1}
Let $X$ and $Y$ be weighted graphs, and let $u\in V(X)$.
\begin{enumerate}
\item If $X$ is Laplacian integral but $Y$ is not, then $X\vee Y$ is not Laplacian integral, the eigenvalues in $\sigma_u(L(X))$ are all integers, and $u$ is periodic in $X\vee Y$.
\item If $X$ is integral but $Y$ is not, then $X\vee Y$ is not integral. Moreover, if $k,\ell\in\mathbb{Z}$ and $D$ in Lemma \ref{aqjoin} is a perfect square, then  the eigenvalues in $\sigma_u(A(X))$ are all integers and $u$ is periodic in $X\vee Y$.
\end{enumerate}
\end{corollary}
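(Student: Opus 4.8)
The plan is to prove Corollary \ref{perjoin1} directly from Lemma \ref{supp} together with the ratio-condition characterization of periodicity, handling the Laplacian and adjacency cases in parallel. The guiding observation is that the eigenvalue support of a vertex $u$ of $X$ inside $X\vee Y$ is essentially a \emph{shifted copy} of $\sigma_u(M(X))$ with a small controlled set $\mathcal{R}$ adjoined, so integrality of $X$ forces the $\sigma_u(M(X))$-part to behave well, while the failure of integrality for $Y$ is what prevents $X\vee Y$ from being integral.

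For part (1): assume $X$ is Laplacian integral and $Y$ is not. First I would argue $X\vee Y$ is not Laplacian integral. The standard eigenvalue interlacing/quotient description of $L(X\vee Y)$ (which underlies Lemma \ref{supp}) shows that the spectrum of $L(X\vee Y)$ contains $\{\mu+m : \mu\in\operatorname{spec}(L(Y))\setminus\{0\}\}$; since $Y$ has a non-integer Laplacian eigenvalue $\mu$, the value $\mu+m$ is a non-integer eigenvalue of $L(X\vee Y)$, so $X\vee Y$ is not Laplacian integral. Next, since $X$ is Laplacian integral, every eigenvalue of $L(X)$ is an integer, and in particular every eigenvalue in $\sigma_u(L(X))$ is an integer. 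Finally, by Lemma \ref{supp}(1), $\sigma_u(L)=\mathcal{S}\cup\mathcal{R}$ where $\mathcal{S}=\{\lambda+n:\lambda\in\sigma_u(L(X))\setminus\{0\}\}\subseteq\mathbb{Z}$ and $\mathcal{R}\subseteq\{0,m+n,n\}\subseteq\mathbb{Z}$; hence $\sigma_u(L)\subseteq\mathbb{Z}$, which trivially satisfies the ratio condition, and therefore $u$ is periodic in $X\vee Y$ by \cite[Corollary 8.3.1]{Coutinho2021}.

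For part (2): assume $X$ is integral and $Y$ is not. That $X\vee Y$ is not integral follows exactly as above, using that $\operatorname{spec}(A(X\vee Y))$ contains $\operatorname{spec}(A(Y))\setminus\{\ell\}$, which includes a non-integer by hypothesis. Now add the hypotheses $k,\ell\in\mathbb{Z}$ and $D=(k-\ell)^2+4mn$ a perfect square, so that $\lambda^{\pm}=\frac12(k+\ell\pm\sqrt{D})\in\mathbb{Z}$. Integrality of $X$ gives $\sigma_u(A(X))\subseteq\mathbb{Z}$, so in particular the eigenvalues in $\sigma_u(A(X))$ are all integers. Then Lemma \ref{supp}(2) gives $\sigma_u(A)=(\sigma_u(A(X))\setminus\{k\})\cup\mathcal{R}$ with $\mathcal{R}\subseteq\{\lambda^{\pm},k\}\subseteq\mathbb{Z}$, hence $\sigma_u(A)\subseteq\mathbb{Z}$, which satisfies the ratio condition, and so $u$ is adjacency periodic in $X\vee Y$. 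Alternatively one can simply invoke Theorem \ref{joinper}(2) with $\sqrt{D}\in\mathbb{Z}\subseteq\mathbb{Q}$ and $\sigma_u(A(X))\subseteq\mathbb{Z}$, or Corollary \ref{perequiv}, to reach the same conclusion.

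I do not expect a genuine obstacle here; the corollary is a packaging of earlier results. The only point requiring a little care is the claim that $X\vee Y$ fails to be (Laplacian) integral: this needs the explicit description of the "new" eigenvalues of the join coming from the spectra of $X$ and $Y$, which is precisely the computation behind Lemmas \ref{ljoin} and \ref{aqjoin} and the support description in Lemma \ref{supp}, so I would either cite that description or spell out the one-line quotient-matrix argument. Everything else is immediate from $\sigma_u(M)$ being a subset of $\mathbb{Z}$ under the stated hypotheses, which automatically satisfies the ratio condition.
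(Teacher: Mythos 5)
Your argument is essentially the paper's: the paper states Corollary \ref{perjoin1} as immediate from Lemma \ref{supp} and the ratio condition, and your proof simply fills in those details (integrality of $X$ gives $\sigma_u(M(X))\subseteq\mathbb{Z}$, Lemma \ref{supp} then gives $\sigma_u(M)\subseteq\mathbb{Z}$ under the stated hypotheses, and a subset of $\mathbb{Z}$ trivially satisfies the ratio condition), together with the standard description of the join spectrum to see that non-integrality of $Y$ passes to $X\vee Y$. The Laplacian half of that last step is airtight, since the offending eigenvalue $\mu$ of $L(Y)$ is necessarily nonzero and $\mu+m$ survives in $\operatorname{spec}(L(X\vee Y))$.

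One small step in the adjacency case is stated too strongly: you claim that $\operatorname{spec}(A(Y))\setminus\{\ell\}$ contains a non-integer ``by hypothesis,'' but the hypothesis is only that $Y$ is not integral, and in the weighted setting the sole non-integer eigenvalue of $A(Y)$ could be $\ell$ itself (e.g.\ $Y$ a single vertex with a loop of irrational weight $\ell$, or more generally any weighted $\ell$-regular $Y$ with irrational $\ell$ whose remaining eigenvalues are integers), in which case $\operatorname{spec}(A(Y))\setminus\{\ell\}$ contains no non-integer. The conclusion still holds, and the fix is one line: since $X$ is weighted $k$-regular, $k\in\operatorname{spec}(A(X))\subseteq\mathbb{Z}$; if $\ell\notin\mathbb{Z}$ then $\lambda^{+}+\lambda^{-}=k+\ell\notin\mathbb{Z}$, so at least one of the eigenvalues $\lambda^{\pm}$ of $A(X\vee Y)$ is not an integer, while if $\ell\in\mathbb{Z}$ your original argument applies. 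With that patch the proposal is complete and matches the paper's intended (unwritten) proof.
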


\section{Minimum periods}\label{secMinper}

Denote the minimum periods of $u$ in $X\vee Y$ and $X$ by $\rho_{X\vee Y}$ and $\rho_X$ respectively. Motivated by Corollaries \ref{joinperpreserve} and \ref{perequiv}, we ask: if $u\in V(X)$ is periodic in $X$ and $X\vee Y$, then how are $\rho_{X\vee Y}$ and $\rho_X$ related? To answer this question, we state a result due to Kirkland et al.\ \cite[Theorem 4]{Kirkland2023}.

\begin{theorem}
\label{minperiod}
Let $u$ and $v$ be vertices in $X$, and suppose $\sigma_u(M)=\{\lambda_1,\ldots,\lambda_r\}$. If $u$ is periodic, then $\rho=\frac{2\pi q}{\lambda_1-\lambda_2}$, where $q=\operatorname{lcm}(q_2,\ldots,q_r)$ and each $q_j$ is a positive integer such that $\frac{\lambda_1-\lambda_j}{\lambda_1-\lambda_2}=\frac{p_j}{q_j}$ for some integer $p_j$ satisfying $\text{gcd}(p_j,q_j)=1$. In particular, if $r=2$, then $u$ is periodic with $\rho=\frac{2\pi}{\lambda_1-\lambda_2}$.
\end{theorem}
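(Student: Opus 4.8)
The plan is to reduce periodicity of $u$ at a time $\tau$ to an arithmetic condition on $\tau$ expressed through the eigenvalue support $\sigma_u(M)=\{\lambda_1,\ldots,\lambda_r\}$, and then to minimize $\tau$ over all times meeting that condition. First I would invoke the standard observation that, since $U_M(X,\tau)$ is unitary, its $u$-th column $U_M(X,\tau)\mathbf{e}_u$ is a unit vector, so $|U_M(X,\tau)_{u,u}|=1$ forces equality in $|\langle \mathbf{e}_u, U_M(X,\tau)\mathbf{e}_u\rangle|\le \|\mathbf{e}_u\|\,\|U_M(X,\tau)\mathbf{e}_u\|$, whence $U_M(X,\tau)\mathbf{e}_u=\gamma\,\mathbf{e}_u$ for some $\gamma$ with $|\gamma|=1$. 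Substituting $U_M(X,\tau)=\sum_j e^{i\tau\lambda_j}E_j$ and using that the vectors $E_j\mathbf{e}_u$ with $\lambda_j\in\sigma_u(M)$ are nonzero and pairwise orthogonal, I would conclude that $u$ is periodic at $\tau$ if and only if $e^{i\tau\lambda_1}=\cdots=e^{i\tau\lambda_r}$, i.e.\ $\tau(\lambda_1-\lambda_j)\in 2\pi\mathbb{Z}$ for every $j\in\{2,\ldots,r\}$ (the remaining pairwise conditions following by subtraction).

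Next I would minimize. After relabelling so that $\lambda_1>\lambda_2$, the $j=2$ condition says precisely $\tau=\tfrac{2\pi m}{\lambda_1-\lambda_2}$ for some positive integer $m$. Feeding this into the $j$-th condition for $j\ge 3$ yields $m\cdot\tfrac{\lambda_1-\lambda_j}{\lambda_1-\lambda_2}\in\mathbb{Z}$; since $u$ is periodic, $\sigma_u(M)$ satisfies the ratio condition, so $\tfrac{\lambda_1-\lambda_j}{\lambda_1-\lambda_2}=\tfrac{p_j}{q_j}$ in lowest terms with $q_j>0$, and then $m\cdot\tfrac{p_j}{q_j}\in\mathbb{Z}\iff q_j\mid m$. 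As $\tfrac{\lambda_1-\lambda_2}{\lambda_1-\lambda_2}=\tfrac{1}{1}$ gives $q_2=1$, the admissible values of $m$ are exactly the common multiples of $q_2,\ldots,q_r$, the least positive one being $q=\operatorname{lcm}(q_2,\ldots,q_r)$; hence $\rho=\tfrac{2\pi q}{\lambda_1-\lambda_2}$. When $r=2$ the only requirement is $q_2=1$, so $q=1$, and moreover $u$ is automatically periodic because $\tau=\tfrac{2\pi}{\lambda_1-\lambda_2}$ already works, giving $\rho=\tfrac{2\pi}{\lambda_1-\lambda_2}$.

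The one step I expect to need care is the opening reduction: verifying that $|U_M(X,\tau)_{u,u}|=1$ genuinely forces $U_M(X,\tau)\mathbf{e}_u$ to be a scalar multiple of $\mathbf{e}_u$, which is exactly what makes all the phases $e^{i\tau\lambda_j}$ over $\lambda_j\in\sigma_u(M)$ collapse to a single value; once that is in hand, the remainder is elementary number theory with gcd's and lcm's. A minor point worth stating explicitly is that it suffices to control the differences $\lambda_1-\lambda_j$ rather than all $\binom{r}{2}$ pairwise differences, and that the labelling may be chosen with $\lambda_1-\lambda_2>0$ so that the resulting $\rho$ is positive.
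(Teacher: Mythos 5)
Your proof is correct: the Cauchy--Schwarz/unitarity argument showing $|U_M(X,\tau)_{u,u}|=1$ forces $e^{i\tau\lambda_1}=\cdots=e^{i\tau\lambda_r}$ on the eigenvalue support, followed by the lcm minimization over $\tau(\lambda_1-\lambda_j)\in 2\pi\mathbb{Z}$, is exactly the standard route to this result. Note that the paper itself gives no proof here --- it quotes the theorem from Kirkland et al.\ \cite{Kirkland2023} --- so there is nothing to diverge from; your argument is the expected one, and your care about choosing $\lambda_1>\lambda_2$ and about reducing to the differences $\lambda_1-\lambda_j$ is appropriate.
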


We now address our above question by finding a constant $c>0$ such that $\rho_{X\vee Y}=c\rho_X$.

\begin{theorem}
\label{perjoin}
Let $u$ be a non-isolated and periodic vertex in $X$ and $X\vee Y$ with $\sigma_u(L(X))=\{\lambda_1,\ldots,\lambda_r,\lambda_{r+1}\}$, where $\lambda_{r+1}=0$. Let $\lambda_{r+2}:=m$ and $\lambda_{r+3}:=-n$. 
\begin{enumerate}
\item Suppose $X$ is connected.
\begin{enumerate}
\item If $r=1$, then $\rho_{X\vee Y}=\left(\frac{\lambda_1q}{\lambda_1+n}\right)\rho_X$, where $q$ is a positive integer such that $\frac{m+n}{\lambda_1+n}=\frac{p}{q}$ for some integer $p$ with $\operatorname{gcd}(p,q)=1$. In particular, if $\lambda_1=m$, then $q=1$.
\item If $r=2$ and $\lambda_2=m$, then $\rho_{X\vee Y}=\left(\frac{\lambda_1q}{q'(\lambda_1+n)}\right)\rho_X$, where $q$ is a positive integer defined in (a) and $q'$ is a positive integer such that $m/\lambda_1=p'/q'$ for some integer $p'$ with $\operatorname{gcd}(p',q')=1$.
\item If $r\geq 2$ and $\lambda_j\neq m$ for each $j\in\{1,\ldots,r\}$, then
\begin{equation}
\label{lconL}
\rho_{X\vee Y}=\left( \frac{q_{r+2}q_{r+3}\operatorname{gcd}(R_1,q_{r+1})}{q_{r+1}\operatorname{gcd}(R_1,q_{r+2})\operatorname{gcd}(R_2,q_{r+3})}\right)\rho_X, 
\end{equation}
where $R_1=\operatorname{lcm}(q_3,\ldots,q_r)$, $R_2=\operatorname{lcm}(q_2,\ldots,q_r,q_{r+2})$, the $q_j$'s are positive integers such that $\frac{\lambda_1-\lambda_j}{\lambda_1-\lambda_2}=\frac{p_j}{q_j}$ for some integers $p_j$ with $\operatorname{gcd}(p_j,q_j)=1$ and $\lambda_1>\lambda_2$.
\item If $r\geq 3$ and $\lambda_r=m$, then (\ref{lconL}) holds with $R_1=\operatorname{lcm}(q_3,\ldots,q_{r-1})$, $R_2=\operatorname{lcm}(q_3,\ldots,q_{r-1},q_{r+2})$.
\end{enumerate}
\item Suppose $X$ is disconnected.
\begin{enumerate}
\item If $r=1$, then $\rho_{X\vee Y}=q\rho_X$, where $q=\operatorname{lcm}(q_{3},q_{4})$ and the $q_j$'s are positive integers such that $\frac{\lambda_1-\lambda_j}{\lambda_1-\lambda_2}=\frac{p_j}{q_j}$ for some  $p_j$ with $\operatorname{gcd}(p_j,q_j)=1$. In particular, if $\lambda_1=m$, then $q=\frac{m}{\operatorname{gcd}(m,n)}$. 
\item If $r=2$ and $\lambda_2=m$, then $\rho_{X\vee Y}=(q/q_4)\rho_X$, where $q=\operatorname{lcm}(q_{4},q_{5})$ and the $q_j$'s are positive integers such that $\frac{\lambda_1-\lambda_j}{\lambda_1}=\frac{p_j}{q_j}$ for some integers $p_j$ with $\operatorname{gcd}(p_j,q_j)=1$.
\item If $r\geq 2$  and $\lambda_j\neq m$ for each $j\in\{1,\ldots,r\}$, then
\begin{equation}
\label{ldisL}
\rho_{X\vee Y}=\left(\frac{q_{r+2}q_{r+3}}{\operatorname{gcd}(q,q_{r+2})\operatorname{gcd}(R,q_{r+3})}\right)\rho_X,
\end{equation}
where $R=\operatorname{lcm}(q_3,\ldots,q_{r+2})$, the $q_j$'s are positive integers such that $\frac{\lambda_1-\lambda_j}{\lambda_1-\lambda_2}=\frac{p_{j}}{q_{j}}$ for some integers $p_j$ with $\operatorname{gcd}(p_{j},q_{j})=1$ and $\lambda_1>\lambda_2$. Further, the coefficient of $\rho_X$ in (\ref{ldisL}) is an integer.
\item If $r\geq 3$  and $\lambda_r= m$, then
\begin{equation}
\label{ldisL1}
\rho_{X\vee Y}=(Q/q)\rho_X,
\end{equation}
where $q=\operatorname{lcm}(q_3,\ldots,q_r,q_{r+1})$, $Q=\operatorname{lcm}(q_3,\ldots,q_{r-1},q_{r+1},q_{r+2},q_{r+3})$ and the $q_j$'s are positive integers satisfying $\frac{\lambda_1-\lambda_j}{\lambda_1}=\frac{p_j}{q_j}$ and $\operatorname{gcd}(p_j,q_j)=1$. In this case, $Q/q$ is an integer.
\end{enumerate}
\end{enumerate}
\end{theorem}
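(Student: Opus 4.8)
The plan is to reduce everything to Theorem~\ref{minperiod} by feeding it the two supports described in Lemma~\ref{supp}(1). Write $\sigma_u(L(X))=\{\lambda_1,\dots,\lambda_r,\lambda_{r+1}\}$ with $\lambda_{r+1}=0$ (note $0\in\sigma_u(L(X))$ always, since $\mathbf 1\in\ker L(X)$). By Lemma~\ref{supp}(1), the support in the join is $\sigma_u(L)=\{\lambda_j+n:1\le j\le r\}\cup\mathcal R$ with $\mathcal R=\{0,m+n\}$ if $X$ is connected and $\mathcal R=\{0,m+n,n\}$ otherwise. Using the substitutions $\lambda_{r+2}=m$, $\lambda_{r+3}=-n$ and $\lambda_{r+1}=0$, one has $m+n=\lambda_{r+2}+n$, $0=\lambda_{r+3}+n$ and $n=\lambda_{r+1}+n$, so $\sigma_u(L)=\{\lambda_j+n:j\in J\}$ for an index set $J\subseteq\{1,\dots,r,r+2,r+3\}$ (connected case) or $J\subseteq\{1,\dots,r+3\}$ (disconnected case), with the crucial caveat that $\lambda_r+n$ and $\lambda_{r+2}+n$ are the \emph{same} eigenvalue of $L(X\vee Y)$ precisely when $\lambda_r=m$. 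The structural point driving the whole argument is that translating a finite set of reals by the constant $n$ leaves every pairwise difference unchanged, so the differences occurring in $\sigma_u(L)$ are exactly the $\lambda_i-\lambda_j$ with $i,j\in J$.

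I would then apply Theorem~\ref{minperiod} twice, once to $\sigma_u(L(X))$ and once to $\sigma_u(L)$, choosing the two ``distinguished'' eigenvalues so that the normalising difference is the same in both computations. When $\sigma_u(L(X))$ has two nonzero eigenvalues different from $m$ (the cases $r\ge 2$ with no $\lambda_j=m$), take $\lambda_1>\lambda_2$ among them, so $\lambda_1+n,\lambda_2+n\in\sigma_u(L)$ and the normaliser is $\lambda_1-\lambda_2$ on both sides; otherwise use the pair $\lambda_1,\lambda_{r+1}=0$ for $X$, and on the join side the pair $\lambda_1+n$ together with $n$, which lies in $\sigma_u(L)$ exactly when $X$ is disconnected, so the normaliser is $\lambda_1$ on both sides. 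For connected $X$ in the remaining small cases ((1a) with $r=1$, and (1b) with $r=2$ and $\lambda_2=m$) no fully matching pair exists, and using $\lambda_1$ for $X$ against $\lambda_1+n$ for $X\vee Y$ produces the extra factor $\lambda_1/(\lambda_1+n)$ visible there. Since $u$ is periodic in $X\vee Y$, Theorem~\ref{joinper}(1) guarantees every eigenvalue in $\sigma_u(L(X))$ is rational, hence all normalised differences are rational and Theorem~\ref{minperiod} yields $\rho_X=\tfrac{2\pi}{\delta}\operatorname{lcm}\{q_j:j\in J_X\}$ and $\rho_{X\vee Y}=\tfrac{2\pi}{\delta'}\operatorname{lcm}\{q_j:j\in J_{X\vee Y}\}$, where $q_j$ is the reduced denominator of the relevant normalised difference, $\delta/\delta'\in\{1,\lambda_1/(\lambda_1+n)\}$, and $J_X,J_{X\vee Y}$ are the two supports with the distinguished pair deleted. (The hypothesis that $u$ is non-isolated is what forces $r\ge 1$.)

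Taking the quotient gives $\rho_{X\vee Y}/\rho_X=(\delta/\delta')\cdot\operatorname{lcm}\{q_j:j\in J_{X\vee Y}\}/\operatorname{lcm}\{q_j:j\in J_X\}$, and the two index sets differ only among the special indices $r+1,r+2,r+3$: the join support drops $r+1$ (the $X$-eigenvalue $0$) when $X$ is connected, keeps $r+1$ (now carrying $n$) when $X$ is disconnected, and in every case gains $r+2$ (carrying $m+n$); when $\lambda_r=m$ the index $r$ is absorbed into $r+2$, so $q_r=q_{r+2}$ and the ``generic part'' $R_1$ runs only through $q_3,\dots,q_{r-1}$, which is exactly the modification recorded in (1d) and (2d). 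Collapsing the ratio of two lcm's sharing a common part $R_1$ — setting $R_2=\operatorname{lcm}(R_1,q_{r+2})$ and repeatedly using $\operatorname{lcm}(a,b)=ab/\operatorname{gcd}(a,b)$ — yields the displayed formulas (\ref{lconL})--(\ref{ldisL1}); for $r\le 2$ each lcm degenerates to a single denominator and the same bookkeeping produces (1a)--(2b). Finally, when $X$ is disconnected the join support contains the translate of \emph{every} element of $\sigma_u(L(X))$ together with exactly one new eigenvalue $m+n$, so $J_X\subseteq J_{X\vee Y}$ while $\delta=\delta'$; hence $\operatorname{lcm}\{q_j:j\in J_X\}$ divides $\operatorname{lcm}\{q_j:j\in J_{X\vee Y}\}$ and the coefficient of $\rho_X$ is a genuine integer, which is the last assertion of (2c) and (2d).

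The main obstacle is organisational rather than conceptual: one must determine, case by case, exactly which eigenvalues of $\sigma_u(L)$ coincide (and hence which index set to hand to Theorem~\ref{minperiod}), pick the distinguished pair so that the normalisers on the two sides agree up to the factor $\lambda_1/(\lambda_1+n)$, and then execute the gcd/lcm algebra cleanly — and all of this must be repeated for $r=1$, for $r=2$ with $\lambda_2=m$, for $r\ge 2$ with no $\lambda_j=m$, and for $r\ge 3$ with $\lambda_r=m$, each crossed with $X$ connected and $X$ disconnected. The small cases ($r\le 2$) and the coincidence case ($\lambda_r=m$) are the most delicate, because there the lcm from Theorem~\ref{minperiod} collapses and the index set shrinks at the same time.
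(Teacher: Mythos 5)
Your proposal takes essentially the same route as the paper's proof: compute $\sigma_u(L)$ from Lemma \ref{supp}(1), apply Theorem \ref{minperiod} to $X$ and to $X\vee Y$ with matched normalizing differences (up to the factor $\lambda_1/(\lambda_1+n)$ in the small connected cases), treat the coincidence $\lambda_r=m$ via $q_r=q_{r+2}$, and finish with the lcm/gcd bookkeeping, including the divisibility argument giving an integer coefficient when $X$ is disconnected. The only slip is cosmetic: in the disconnected case the join support acquires two eigenvalues outside the translate of $\sigma_u(L(X))$, namely $m+n$ and $0$ (indices $r+2$ and $r+3$), not just $m+n$, but since this only enlarges the relevant lcm it affects neither your integrality argument nor the displayed formulas.
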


\begin{proof}
Let $X$ be connected. By Lemma \ref{supp}(1), $\sigma_u(L)=\{\lambda_1+n,\ldots,\lambda_r+n\}\cup\{\lambda_{r+2}+n,\lambda_{r+3}+n\}$. Note that $\lambda_j\neq \lambda_{r+3}$ for each $j\in\{1,\ldots,r\}$. However, it is possible that $\lambda_j=\lambda_{r+2}=m$ for some $j\in\{1,\ldots,r\}$. In this case, we may instead assume that $\lambda_r=m$. We have the following cases.
\begin{itemize}
\item Let $r=1$. Then $\sigma_u(L(X))=\{\lambda_1,0\}$ and $\sigma_u(L)=\{\lambda_1+n,m+n,0\}$, and so Theorem \ref{minperiod} yields $\rho_{X}=2\pi/\lambda_1$ and $\rho_{X\vee Y}=2\pi q/(\lambda_1+n)$, where $q=1$ whenever $\lambda_1=m$.
\item Let $r=2$ and $\lambda_r=m$. Then $\sigma_u(L)=\{\lambda_1+n,m+n,0\}$, and so $\rho_{X\vee Y}=2\pi q/(\lambda_1+n)$ as in the previous case. Since $\sigma_u(L(X))=\{\lambda_1,m,0\}$, Theorem \ref{minperiod} yields $\rho_{X}=2\pi q'/\lambda_1$.
\item Let $r\geq 2$ and $\lambda_j\neq m$ for each $j$. As $\sigma_u(L)=\{\lambda_1+n,\ldots,\lambda_r+n,\lambda_{r+2}+n,\lambda_{r+3}+n\}$ and $u$ is periodic in $X$ and $X\vee Y$, Theorem \ref{minperiod} yields $\rho_X=2\pi q/(\lambda_1-\lambda_2)$ and $\rho_{X\vee Y}=2\pi Q/(\lambda_1-\lambda_2)$, where $q=\operatorname{lcm}(q_3,\ldots,q_{r+1})$, $Q=\operatorname{lcm}(q_3,\ldots,q_r,q_{r+2},q_{r+3})$ and $\lambda_1>\lambda_2$. Since $\operatorname{lcm}(a,b,c)=\operatorname{lcm}(\operatorname{lcm}(a,b),c)$ and $\operatorname{lcm}(a,b)=\frac{ab}{\operatorname{gcd}(a,b)}$, we may write $q=\frac{R_1q_{r+1}}{\operatorname{gcd}(R_1,q_{r+1})}$ and
\begin{equation}
\label{Q}
Q=\frac{R_1q_{r+2}q_{r+3}}{\operatorname{gcd}(R_1,q_{r+2})\operatorname{gcd}(R_2,q_{r+3})}=qc,
\end{equation}
where $c$ is the coefficient of $\rho_X$ in (\ref{lconL}). This implies that $\rho_{X\vee Y}=c\rho_X$.
\item Let $r\geq 3$ and $\lambda_r=m$. Then $\sigma_u(L)=\{\lambda_1+n,\ldots,\lambda_{r-1}+n,\lambda_{r+2},\lambda_{r+3}\}$, and so the same argument in previous subcase yields (\ref{Q}) with $R_1=\operatorname{lcm}(q_3,\ldots,q_{r-1})$ and $R_2=\operatorname{lcm}(q_3,\ldots,q_{r-1},q_{r+2})$.
\end{itemize}
Combining the above cases proves (1).

Next, let $X$ be disconnected. By Lemma \ref{supp}(1), $\sigma_u(L)=\{\lambda_1+n,\ldots,\lambda_r+n\}\cup\{\lambda_{r+1}+n,\lambda_{r+2}+n,\lambda_{r+3}+n\}$. Moreover, the fact that $u$ is a non-isolated vertex yields $r\geq 1$. We have the following cases.

\begin{itemize}
\item Let $r=1$. Then $\sigma_u(L(X))=\{\lambda_1,0\}$ and $\sigma_u(L)=\{\lambda_1+n,\lambda_{2}+n,\lambda_{3}+n,\lambda_{4}+n\}$. By Theorem \ref{minperiod}, $\rho_{X}=2\pi/\lambda_1$ and $\rho_{X\vee Y}=2\pi q/\lambda_1$. In particular, if $\lambda_1=m$, then $q_1=q_3$, and so $q=q_4$.

\item Let $r=2$ and $\lambda_r=m$. Then $\sigma_u(L)=\{\lambda_1+n,\lambda_{3}+n,\lambda_{4}+n,\lambda_{5}+n\}$ and $\sigma_u(L(X))=\{\lambda_1,m,0\}$. By Theorem \ref{minperiod}, we get $\rho_{X\vee Y}=\frac{2\pi q}{\lambda_1-\lambda_3}=\frac{2\pi q}{\lambda_1}$ and $\rho_{X}=\frac{2\pi q'}{\lambda_1}$, where $p',q'\in\mathbb{Z}$ such that $\frac{m}{\lambda_1}=\frac{p'}{q'}\in\mathbb{Q}$ and $\operatorname{gcd}(p',q')=1$. Since $\frac{p_4}{q_4}=\frac{\lambda_1-\lambda_4}{\lambda_1-\lambda_3}=\frac{\lambda_1-m}{\lambda_1}=1-\frac{p'}{q'}$, we get $q_4=q'$.

\item Let $r\geq 2$ and $\lambda_j\neq m$ for each $j$. Then $\sigma_u(L)=\{\lambda_1+n,\ldots,\lambda_r+n,\lambda_{r+1}+n,\lambda_{r+2}+n,\lambda_{r+3}+n\}$. By Theorem \ref{minperiod}, $\rho_{X\vee Y}=2\pi Q/(\lambda_1-\lambda_2)$ where $Q=\operatorname{lcm}(q_3,\ldots,q_{r+3})$ and $q_j$'s are as defined in (1c). The same argument then yields $Q=\frac{qq_{r+2}q_{r+3}}{\operatorname{gcd}(q,q_{r+2})\operatorname{gcd}(R,q_{r+3})}=qc,$ where $q=\operatorname{lcm}(q_3,\ldots,q_{r+1})$.

\item Let $r\geq 3$ and $\lambda_r=m$. Since $\sigma_u(L)=\{\lambda_1+n,\ldots,\lambda_{r-1}+n,\lambda_{r+1}+n,\lambda_{r+2}+n,\lambda_{r+3}+n\}$, we get $\rho_X=2\pi q/\lambda_1$ and $\rho_{X\vee Y}=2\pi Q/\lambda_1$ by Theorem \ref{minperiod}. Now, since $\frac{\lambda_1-\lambda_r}{\lambda_1-\lambda_{r+1}}=\frac{\lambda_1-m}{\lambda_1}=\frac{p_r}{q_r}$ and $\frac{\lambda_1-\lambda_r}{\lambda_1-\lambda_{r+2}}=\frac{\lambda_1-m}{\lambda_1}=\frac{p_{r+2}}{q_{r+2}}$, we get $q_r=q_{r+2}$, and so $Q=\operatorname{lcm}(q_2,\ldots,q_{r+1},q_{r+3})$. Thus, $Q/q\in\mathbb{Z}$.
\end{itemize}

Combining the above cases proves (2).
\end{proof}

Using the same argument, we get an analogue of Theorem \ref{perjoin} for the adjacency case.

\begin{theorem}
\label{perjoinn}
Let $u$ be a non-isolated and periodic vertex in $X$ and $X\vee Y$ with $\sigma_u(A(X))=\{\lambda_1,\ldots,\lambda_r,\lambda_{r+1}\}$, where $\lambda_{r+1}=k$. Let $\lambda_{r+2}:=\lambda^-$ and $\lambda_{r+3}:=\lambda^+$, where $\lambda^{\pm}$ are defined in Lemma \ref{aqjoin}.

\begin{enumerate}
\item Suppose $X$ is connected.
\begin{enumerate}

\item Let $r=1$. Then $\rho_{X\vee Y}=\left(\frac{q(k-\lambda_1)}{\sqrt{D}}\right)\rho_X$, where $q$ is a positive integer such that $\frac{\lambda^+-\lambda_1}{\sqrt{D}}=\frac{p}{q}$ for some integer $p$ with $\operatorname{gcd}(p,q)=1$. In particular, if $\lambda_1=\lambda^-$, then $q=1$.

\item If $r=2$ and $\lambda_2= \lambda^-$, then $\rho_{X\vee Y}=\left(\frac{(\lambda_1-\lambda^-)q}{q'\sqrt{D}}\right)\rho_X$, where $q$ is an integer in (1b) and $q'$ is a positive integer such that $\frac{\lambda_1-k}{\lambda_1-\lambda^-}=\frac{p'}{q'}\in\mathbb{Q}$ for some integer $p'$ with $\operatorname{gcd}(p',q')=1$.

\item If $r\geq 2$ and $\lambda_j\neq  \lambda^-$ for each $j\in\{1,\ldots,r\}$, then the conclusion in Theorem \ref{perjoin}(1d) holds.

\item If $r\geq 3$ and $\lambda_r= \lambda^-$, then the conclusion in Theorem \ref{perjoin}(1e) holds.

\end{enumerate}

\item Suppose $X$ is disconnected. 
\begin{enumerate}
\item Let $r=1$. Then $\rho_{X\vee Y}=q\rho_X$, where $q=\operatorname{lcm}(q_{3},q_{4})$ and the $q_j$'s are positive integers such that $\frac{k-\lambda_j}{k-\lambda_1}=\frac{p_j}{q_j}$ for some $p_j$ with $\operatorname{gcd}(p_j,q_j)=1$. In particular, if $\lambda_1=\lambda^-$, then $q=q_4$.

\item If $r=2$ and $\lambda_2=\lambda^-$, then $\rho_{X\vee Y}=\left(q/q_2\right)\rho_X$, where $q=\operatorname{lcm}(q_{2},q_{5})$, the $q_j$'s are positive integers such that $\frac{k-\lambda_j}{k-\lambda_1}=\frac{p_j}{q_j}$ for some integers $p_j$ with $\operatorname{gcd}(p_j,q_j)=1$ and $\lambda_1>\lambda_2$.
\item If $r\geq 2$  and $\lambda_j\neq \lambda^-$ for each $j\in\{1,\ldots,r\}$, then the conclusion in Theorem \ref{perjoin}(2d) holds.
\item If $r\geq 3$  and $\lambda_r= \lambda^-$, then the conclusion in Theorem \ref{perjoin}(2e) holds.
\end{enumerate}
\end{enumerate}
\end{theorem}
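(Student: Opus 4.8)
The plan is to run the proof of Theorem~\ref{perjoin} essentially verbatim with Lemma~\ref{supp}(2) in place of Lemma~\ref{supp}(1). The two tools stay the same: a vertex is periodic in a graph exactly when its eigenvalue support obeys the ratio condition, and then Theorem~\ref{minperiod} returns the minimum period as $2\pi q/(\mu_1-\mu_2)$ for any reference pair $\mu_1>\mu_2$ in the support, where $q$ is the lcm of the denominators of the reduced fractions $(\mu_1-\mu_j)/(\mu_1-\mu_2)$. The first step is to record, from Lemma~\ref{supp}(2), that $\sigma_u(A)=\{\lambda_1,\dots,\lambda_r,\lambda^-,\lambda^+\}$ if $X$ is connected and $\sigma_u(A)=\{\lambda_1,\dots,\lambda_r,k,\lambda^-,\lambda^+\}$ if $X$ is disconnected, where $\lambda_1,\dots,\lambda_r$ are the eigenvalues of $\sigma_u(A(X))$ other than $k$; unlike the Laplacian case these survive into $\sigma_u(A)$ unshifted, and $\lambda^+-\lambda^-=\sqrt D$.

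The second step controls coincidences. Since $D=(k-\ell)^2+4mn>(k-\ell)^2$, we get $\lambda^+=\tfrac12(k+\ell+\sqrt D)>\max\{k,\ell\}$ and $\lambda^-<\min\{k,\ell\}$; as every eigenvalue of the $k$-regular graph $X$ is at most $k$ (and $k$ is the largest when $X$ is connected), this forces $\lambda^+\ne\lambda_j$ for all $j$, while $\lambda^-=\lambda_j$ remains possible. So the only coincidence that can shrink $\sigma_u(A)$ is $\lambda_j=\lambda^-$, the exact analogue of the split ``$\lambda_j=m$ versus $\lambda_j\ne m$'' in Theorem~\ref{perjoin} ($\lambda_j=-n$ was excluded there by nonnegativity of Laplacian eigenvalues; $\lambda_j=\lambda^+$ is excluded here by Perron--Frobenius). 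One then runs the same grid of sub-cases --- $X$ connected or disconnected, and within each $r=1$, $r=2$ with $\lambda_2=\lambda^-$, $r\ge2$ with all $\lambda_j\ne\lambda^-$, and $r\ge3$ with $\lambda_r=\lambda^-$ --- applying Theorem~\ref{minperiod} once to $\sigma_u(A(X))$ and once to $\sigma_u(A)$ and forming $\rho_{X\vee Y}/\rho_X$. When $r\ge2$, both supports contain $\lambda_1$ and $\lambda_2$, so one uses $\lambda_1-\lambda_2$ as a common reference difference; the factor $2\pi/(\lambda_1-\lambda_2)$ cancels and what is left is the identical lcm/gcd manipulation --- via $\operatorname{lcm}(a,b,c)=\operatorname{lcm}(\operatorname{lcm}(a,b),c)$ and $\operatorname{lcm}(a,b)=ab/\gcd(a,b)$ --- that produced (\ref{lconL}) and (\ref{ldisL}) in Theorem~\ref{perjoin}, so those formulas carry over verbatim. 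When $r\le2$ the supports have two or three elements and one computes directly; for instance for $r=1$ with $X$ connected, $\sigma_u(A(X))=\{\lambda_1,k\}$ gives $\rho_X=2\pi/(k-\lambda_1)$ while $\sigma_u(A)=\{\lambda_1,\lambda^-,\lambda^+\}$ gives $\rho_{X\vee Y}=2\pi q/(\lambda^+-\lambda^-)=2\pi q/\sqrt D$ (with $q=1$ once $\lambda_1=\lambda^-$ collapses the support), whence $\rho_{X\vee Y}=\big(q(k-\lambda_1)/\sqrt D\big)\rho_X$; the $r=2$, $\lambda_2=\lambda^-$ case is similar.

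For the integrality assertions, observe that when $X$ is disconnected Lemma~\ref{supp}(2) gives $\sigma_u(A(X))\subseteq\sigma_u(A)$ (the latter is the former together with $\lambda^-$ and $\lambda^+$), so choosing a reference pair inside $\sigma_u(A(X))$ makes the multiset of denominators for $\rho_X$ a sub-multiset of the one for $\rho_{X\vee Y}$; hence the $\rho_X$-lcm divides the $\rho_{X\vee Y}$-lcm and $\rho_{X\vee Y}/\rho_X\in\mathbb Z$. This is precisely why only ``rational multiple'' survives when $X$ is connected: there $k\in\sigma_u(A(X))\setminus\sigma_u(A)$, so in the $r=1$ and $r=2$ (with $\lambda_2=\lambda^-$) cases no reference difference common to both supports is available, the ratio picks up the factor $\sqrt D=\lambda^+-\lambda^-$, and there is no reason for it to be an integer.

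The main obstacle is entirely organizational: there is no analytic content, and the one genuinely new verification is the Perron--Frobenius argument that $\lambda^+\notin\sigma_u(A(X))$, which is what makes $\lambda_j=\lambda^-$ the only reducible coincidence. The rest is clerical bookkeeping --- for each of the eight sub-cases, choosing the reference pair(s) and tracking which denominators $q_j$ enter which lcm, remembering in particular that the denominator tied to $k$ belongs to the $X$-data but not the $X\vee Y$-data when $X$ is connected. Once this is set up, the computation is word-for-word that of Theorem~\ref{perjoin} after replacing the triple $(0,m,-n)$ and the shift-by-$n$ by $(k,\lambda^-,\lambda^+)$ and no shift, using $\lambda^+-\lambda^-=\sqrt D$.
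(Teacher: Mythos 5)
Your proposal is correct and follows essentially the same route as the paper, which proves this theorem simply by invoking "the same argument" as Theorem \ref{perjoin}: replace Lemma \ref{supp}(1) by Lemma \ref{supp}(2), note via Perron--Frobenius that $\lambda^+\notin\sigma_u(A(X))$ so the only possible coincidence is $\lambda_j=\lambda^-$ (the analogue of $\lambda_j=m$), and re-run Theorem \ref{minperiod} case by case. Your explicit justification of the Perron--Frobenius step and of the integrality of $\rho_{X\vee Y}/\rho_X$ in the disconnected case (support containment forces the lcm of denominators to divide) fills in exactly the details the paper leaves implicit.
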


\begin{remark}
\label{ratint}
The advantage of Theorems \ref{perjoin}-\ref{perjoinn} is that they apply even if $\phi(M(X),t)\notin\mathbb{Z}[t]$. We also mention that the constant $c>0$ such that $\rho_{X\vee Y}=c\rho_{X}$ in Theorems \ref{perjoin}-\ref{perjoinn} is always rational. In fact, $c$ is an integer whenever (i) $X$ is disconnected or (ii) $X$ is connected and $q_{r+1}=1$ in Theorems \ref{perjoin}-\ref{perjoinn}(1c-d).
\end{remark}

For the family of complete graphs, Theorem \ref{perjoin}(1a) yields $\rho_{X\vee Y}=\frac{m}{m+n}\rho_{X}$, where $X=K_m$ and $Y=K_n$. Here, $c=\frac{m}{m+n}<1$ is rational, and this holds for $M\in\{A,L\}$. We complement this observation by providing an infinite family of connected graphs $X$ such that $\rho_{X\vee Y}=c\rho_X$ for some integer $c\geq 1$. %We start with the adjacency case.

\begin{example}
Suppose that $m$ is even and let $X=K_{\frac{m}{2},\frac{m}{2}}$. Since $\sigma_u(L(X))=\{m/2,m,0\}$, each $u\in V(X)$ is periodic with $\rho_X=4\pi/m$. Consider $X\vee Y$, where $Y$ is a simple unweighted graph on $n$ vertices. Applying Theorem \ref{perjoin}(1b), we get $\rho_{X\vee Y}=c\rho_X$, where $c=m/2g$ and $g=\operatorname{gcd}(m/2,n)$. Since $g$ divides $m/2$, we conclude that $c$ is an integer.
\end{example}

\begin{example}
Let $X=K_m$ and $Y$ be a simple unweighted $\ell$-regular graph on $n$ vertices. Each $u\in V(X)$ is periodic with $\rho_X=2\pi/m$. By Lemma \ref{supp}(2), $\sigma_u(A)=\{\lambda^{\pm},-1\}$, where $\lambda^{\pm}=\frac{1}{2}(\ell+m-1\pm\sqrt{D})$. By Corollary \ref{perequiv}, $u$ periodic in $X\vee Y$ if and only if $D=(\ell-m+1)^2+4mn$ is a perfect square. Equivalently, $n=\frac{s(\ell-m+s+1)}{m}$ for some integer $s$ such that $m$ divides $s(\ell+s+1)$. Thus, if $u$ periodic in $X\vee Y$, then $\frac{\lambda^+-(-1)}{\lambda^+-\lambda^-}=\frac{\ell+s+1}{\ell-m+2s+1}\in\mathbb{Q}$, from which we get $\rho_{X\vee Y}=2\pi/g$, where $g=\operatorname{gcd}(\ell+s+1,s-m)$. Hence $\rho_{X\vee Y}=(m/g)\rho_X$. Taking $s=2m$ and $0\leq \ell\leq n-2$ yields $g=\operatorname{gcd}(\ell+1,m)$, and so $m/g$ is an integer.
\end{example}

\section{Strong cospectrality}
\label{secSc}

We say that two vertices $u$ and $v$ in a weighted graph $X$ are \textit{strongly cospectral} if $E_j\textbf{e}_u=\pm E_j\textbf{e}_v$ for each $j$, in which case we define the sets
\begin{equation}
\label{esupp+-}
\sigma_{uv}^+(M)=\{\lambda_j:E_j\textbf{e}_u=E_j\textbf{e}_v\neq \textbf{0}\}\quad \text{and}\quad \sigma_{uv}^-(M)=\{\lambda_j:E_j\textbf{e}_u=-E_j\textbf{e}_v\neq \textbf{0}\}.
\end{equation}
Equivalently, vertices $u$ and $v$ are strongly cospectral if for each $j$, either every eigenvector $\textbf{w}$ associated with $\lambda_j$ satisfies $\textbf{w}^T\textbf{e}_u=\textbf{w}^T\textbf{e}_v$ or every eigenvector $\textbf{w}$ associated with $\lambda_j$ satisfies $\textbf{w}^T\textbf{e}_u=-\textbf{w}^T\textbf{e}_v$.

If vertices $u$ and $v$ are strongly cospectral, then $\sigma_u(M)=\sigma_{uv}^+(M)\cup \sigma_{uv}^-(M)$, and $u$ and $v$ belong to the same connected component of $X$ (so neither of them is isolated). In order to avoid confusion, if $u$ and $v$ are strongly cospectral in $X$ and $X\vee Y$, then we write the sets in (\ref{esupp+-}) as $\sigma_{uv}^{\pm}(M(X))$ and $\sigma_{uv}^{\pm}(M)$, respectively.

Strong cospectrality is a necessary condition for PST. Hence, in order to investigate PST in joins, we first need to characterize strong cospectrality. In what follows, we let $O_n(k)$ denote the empty graph on $n$ vertices with a loop of weight $k$ on each vertex. If $k=0$, then we simply write $O_n(k)$ as $O_n$.

\begin{theorem}
\label{sc}
Let  $m\geq 2$ and $u,v\in V(X)$ with $u\neq v$.
\begin{enumerate}
\item Vertices $u$ and $v$ in $X$ are Laplacian strongly cospectral in $X\vee Y$ if and only if either:
\begin{enumerate}
\item Vertices $u$ and $v$ are Laplacian strongly cospectral in $X$ and $m\notin\sigma_{uv}^-(L(X))$. In this case,
\begin{equation*}
\sigma_{uv}^+(L)=\{\lambda+n:0\neq \lambda\in\sigma_{uv}^+(L(X))\}\cup\mathcal{R}\quad \text{and}\quad \sigma_{uv}^-(L)=\{\mu+n:\mu\in\sigma_{uv}^-(L(X))\}
%\begin{cases}
%   \{\lambda+n:0\neq \lambda\in\sigma_{uv}^+(L(X))\}\cup \{0,m+n\} , & \text{if $X$ is connected}\\
%    \{\lambda+n:0\neq \lambda\in\sigma_{uv}^+(L(X))\}\cup \{0,m+n,n\},& \text{otherwise}
%\end{cases}
\end{equation*}
where $\mathcal{R}=\{0,m+n\} $ if $X$ is connected and $\mathcal{R}=\{0,m+n,n\} $ otherwise.
\item $X=O_2$. In this case, $\sigma_{uv}^+(L)=\{0,n+2\}$ and $\sigma_{uv}^-(L)=\{n\}$.
\end{enumerate}
\item Vertices $u$ and $v$ in $X$ are adjacency strongly cospectral in $X\vee Y$ if and only if either:
\begin{enumerate}
\item Vertices $u$ and $v$ are adjacency strongly cospectral in $X$ and $\lambda^{\pm}\notin\sigma_{uv}^-(L(X))$. In this case,
\begin{equation*}
\sigma_{uv}^+(A)=\sigma_{uv}^+(A(X))\backslash\{k\}\cup\mathcal{R}\quad \text{and}\quad \sigma_{uv}^-(A)=\sigma_{uv}^-(A(X))
%\begin{cases}
%    \sigma_{uv}^+(A(X))\backslash\{k\}\cup \{\lambda^{\pm}\}, & \text{if $X$ is connected}\\
%    \sigma_{uv}^+(A(X))\backslash\{k\}\cup \{\lambda^{\pm},k\},& \text{otherwise}
%\end{cases}
\end{equation*}
where $\mathcal{R}=\{\lambda^{\pm}\} $ if $X$ is connected and $\mathcal{R}=\{\lambda^{\pm},k\} $ otherwise.
\item $X=O_2(k)$. In this case, $\sigma_{uv}^+(L)=\{\lambda^\pm\}$ and $\sigma_{uv}^-(L)=\{k\}$.
\end{enumerate}
\item If $w\in V(Y)$, then vertices $u$ and $w$ are not strongly cospectral in $X\vee Y$.
\end{enumerate}
Moreover, if (1a) or (2a) holds, then vertices $u$ and $v$ belong to the same connected component in $X$.
\end{theorem}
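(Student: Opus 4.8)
## Proof plan for Theorem \ref{sc}

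The plan is to prove all three parts by exploiting the explicit forms of the transition matrices in Lemmas \ref{ljoin} and \ref{aqjoin} together with the eigenvalue-support description in Lemma \ref{supp}. Recall that $u$ and $v$ are strongly cospectral if and only if $E_j\be_u=\pm E_j\be_v$ for every eigenprojector $E_j$ of $M(X\vee Y)$. Since we have $\sigma_u(M)$ in hand from Lemma \ref{supp}, the task reduces to understanding, for each eigenvalue $\lambda_j$ of $M(X\vee Y)$, how the projection $E_j$ acts on the $X$-coordinates, and comparing $E_j\be_u$ with $E_j\be_v$.

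The first step is to set up the eigenstructure of $M(X\vee Y)$ in block form. Writing vectors over $V(X\vee Y)$ as $(x;y)$ with $x\in\R^m$, $y\in\R^n$, one checks (this is essentially the content of Lemmas \ref{ljoin}–\ref{aqjoin}, or can be recovered directly) that every eigenvector of $M(X\vee Y)$ is of one of two types: (i) those supported ``within $X$'' or ``within $Y$'', i.e. $(w;\alpha\textbf 1_n)$ or $(\beta\textbf 1_m;z)$ where $w\perp\textbf 1_m$, $z\perp\textbf 1_n$, built from eigenvectors of $M(X)$ resp.\ $M(Y)$ with nonzero eigenvalue (after the $\pm n$ shift in the Laplacian case, or removing $k$ in the adjacency case); and (ii) a small ``cross'' space spanned by combinations of $\textbf 1_m$ and $\textbf 1_n$, giving the eigenvalues in $\mathcal R$. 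The point is that for an eigenvector of type (i) coming from $M(X)$ with eigenvalue $\lambda\ne 0$ (Laplacian) resp.\ $\lambda\ne k$ (adjacency), its restriction to $V(X)$ is an eigenvector of $M(X)$, so $E_j\be_u=\pm E_j\be_v$ in $X\vee Y$ corresponds exactly to the analogous condition in $X$ — and this is where strong cospectrality in $X$ enters. For eigenvectors of type (i) coming from $M(Y)$, the $X$-block is a multiple of $\textbf 1_m$, hence has equal $u,v$ entries, contributing only to $\sigma^+$; the same is true for the $\mathcal R$-eigenvectors, since those are built from $\textbf 1_m$ on the $X$-side. This yields the claimed descriptions of $\sigma_{uv}^{\pm}(L)$ and $\sigma_{uv}^{\pm}(A)$.

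The next step is the exceptional eigenvalues: the value $m$ in the Laplacian case and $\lambda^{\pm}$ in the adjacency case, i.e.\ the eigenvalues where a ``type (i) from $X$'' eigenvalue can collide with an $\mathcal R$-eigenvalue. When $X$ is connected and $0\neq\lambda\in\sigma_u(L(X))$ satisfies $\lambda+n=m+n$, the eigenspace of $M(X\vee Y)$ at that value contains both the shifted $X$-eigenvector (restriction to $X$ is the $M(X)$-eigenvector, sign $\pm$ according to $\sigma_{uv}^{\pm}(L(X))$) and a cross-eigenvector (restriction to $X$ a multiple of $\textbf 1_m$, sign $+$). For $u,v$ to be strongly cospectral in $X\vee Y$ these two signs must agree, forcing $m\in\sigma_{uv}^+(L(X))$, i.e.\ $m\notin\sigma_{uv}^-(L(X))$; this is exactly the side condition in (1a), and the analogous argument with $\lambda^{\pm}$ gives the side condition in (2a). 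The genuinely special case is $m=2$ with $X=O_2$: here $\sigma_u(L(X))=\{0\}$ so there is no ``type (i) from $X$'' eigenvector at all, the constraint from $X$ disappears, and one computes directly (or from Lemma \ref{supp}(1) with the disconnected $\mathcal R=\{0,m+n,n\}=\{0,n+2,n\}$ and the $2$-dimensional cross space) that $\sigma_{uv}^+(L)=\{0,n+2\}$, $\sigma_{uv}^-(L)=\{n\}$; note $u$ and $v$ are Laplacian strongly cospectral in $O_2$ vacuously, and the extra option (1b) records that this degenerate instance does \emph{not} require $2\notin\sigma_{uv}^-(L(O_2))=\varnothing$. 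The adjacency analogue $X=O_2(k)$ is identical after shifting by $k$.

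For part (3), the key observation is that for $u\in V(X)$ and $w\in V(Y)$, every cross-eigenvector has $X$-block a multiple of $\textbf 1_m$ and $Y$-block a multiple of $\textbf 1_n$, while every type-(i)-from-$X$ eigenvector vanishes on $V(Y)$ and every type-(i)-from-$Y$ eigenvector vanishes on $V(X)$. If $X$ has an eigenvalue $\lambda\neq 0$ (Laplacian) with $\lambda+n\notin\mathcal R$, then $E_{\lambda+n}\be_u\neq\textbf 0$ but $E_{\lambda+n}\be_w=\textbf 0$, breaking strong cospectrality; and if $X$ has no such eigenvalue then $m=1$ or $X$ is ``trivial'' in the relevant sense, and one checks the cross-space entries of $\be_u$ and $\be_w$ directly and finds the required $\pm$ proportionality fails (the $\textbf 1_m$ vs $\textbf 1_n$ blocks have different normalizations once $m\neq n$, and when $m=n$ one uses $m\geq 2$ to produce a within-$X$ projector that kills $\be_w$). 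The final ``Moreover'' is immediate: in cases (1a)/(2a) strong cospectrality in $X$ already forces $u,v$ into one component of $X$, as recorded after \eqref{esupp+-}.

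The main obstacle I expect is the bookkeeping around the collision eigenvalues ($m$, resp.\ $\lambda^{\pm}$) — specifically, being careful that when the eigenspace of $M(X\vee Y)$ at such a value is genuinely $2$-dimensional (one vector from $X$, one cross vector), strong cospectrality is an honest constraint on \emph{every} vector in that eigenspace, not just on a spanning pair, and translating ``$E_j\be_u=\pm E_j\be_v$ for the whole $2$-space'' into the clean statement ``$m\notin\sigma_{uv}^-(L(X))$.'' Everything else is a direct comparison of $u$- and $v$-entries in an explicit eigenbasis.
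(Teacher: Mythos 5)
Your overall strategy is the same as the paper's: write an explicit block eigenbasis of $M(X\vee Y)$, compare $u$- and $v$-entries eigenvalue by eigenvalue, and do a collision analysis at $m$ (resp.\ $\lambda^{\pm}$), with $O_2$ (resp.\ $O_2(k)$) as the exceptional case. However, there is a genuine gap: your eigenvector taxonomy is incomplete when $X$ is disconnected. Besides the vectors coming from nonzero eigenvalues of $L(X)$ (resp.\ eigenvalues $\neq k$ of $A(X)$) and the two cross vectors in $\operatorname{span}\{\textbf{1}_m,\textbf{1}_n\}$, the join has the eigenvalue $n$ (resp.\ $k$) with eigenvectors $(w;\textbf{0})$, where $w\in\ker L(X)$, $w\perp\textbf{1}_m$ (resp.\ $A(X)w=kw$, $w\perp\textbf{1}_m$); these are constant on each component of $X$ but \emph{not} constant on $V(X)$. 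Consequently your claim that every $\mathcal{R}$-eigenvector is ``built from $\textbf{1}_m$ on the $X$-side'' and so contributes only to $\sigma^+$ is false: in case (1b) the statement itself has $n\in\sigma_{uv}^-(L)$, realized by the eigenvector $\textbf{e}_u-\textbf{e}_v$. More seriously, these missing vectors are exactly what excludes the case where $u,v$ are both isolated in $X$ but $X\neq O_2$, a case your sketch never addresses: for $X=O_3$, say, all the eigenvectors you list have equal $u$- and $v$-entries, so your argument would wrongly certify strong cospectrality of two apexes of $O_3\vee Y$; in reality the vectors $\textbf{e}_u-\textbf{e}_v$ and $\textbf{e}_u+\textbf{e}_v-2\textbf{e}_x$ (third vertex $x$) both lie in the eigenspace at $n$ with opposite sign patterns, which is precisely how the paper kills this case. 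The related aside that $u,v$ are strongly cospectral in $O_2$ ``vacuously'' is also wrong (there $E_0=I$, so they are not), and it matters: (1b) is a separate alternative exactly because strong cospectrality fails in $O_2$ yet holds in $O_2\vee Y$; if your reading were correct, (1b) would be subsumed by (1a) and your only-if direction would leak as in the $O_3$ example.

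The repair follows the paper's case split: if $u,v$ are not both isolated, your type-(i) comparison gives the sign conditions at nonzero eigenvalues of $L(X)$ and also forces $u,v$ into one component (an eigenvector supported on the component of a non-isolated vertex is nonzero at one of $u,v$ and zero at the other), hence genuine strong cospectrality in $X$; then the collision analysis at $m+n$ gives $m\notin\sigma_{uv}^-(L(X))$ — and note the collision can occur for disconnected \emph{weighted} $X$ as well, so it should not be restricted to connected $X$. If both are isolated, the eigenvalue-$n$ eigenspace forces $X=O_2$. In case (1a) with $X$ disconnected, placing $n$ (resp.\ $k$) in $\sigma_{uv}^+$ also requires the same-component fact applied to these non-constant eigenvectors rather than constancy on $V(X)$. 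Two smaller points: in your part (3) sketch the eigenvalue $\lambda\neq 0$ must be taken in $\sigma_u(L(X))$ (not merely in the spectrum of $L(X)$), and the residual case is ``$u$ isolated in $X$,'' handled by the eigenvector $\textbf{e}_u-\tfrac{1}{m}\textbf{1}_m$ at eigenvalue $n$ using $m\geq 2$, not ``$m=1$ or $X$ trivial.''
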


\begin{proof}
Let $\lambda$ and $\mu$ be nonzero eigenvalues of $L(X)$ and $L(Y)$ with associated eigenvectors $\textbf{y}_{\lambda}$ and $\textbf{z}_{\mu}$, respectively. Then $0$, $m+n$, $\lambda+n$ and $\mu+n$ are eigenvalues of $L$ with associated eigenvectors
\begin{equation}
\label{Elamb}
\textbf{1}_{m+n},\quad \textbf{u}=\left[ \begin{array}{ccccc} n\textbf{1}_m \\ -m\textbf{1}_n\end{array} \right],\quad  \textbf{v}_{\lambda}=\left[ \begin{array}{cc} \textbf{y}_{\lambda} \\ \textbf{0}\end{array} \right],\quad \text{and}\quad \textbf{w}_{\mu}=\left[ \begin{array}{cc} \textbf{0} \\ \textbf{z}_{\mu}\end{array} \right]
\end{equation}
respectively. Now, let $u,v\in V(X)$. We have two cases.
\begin{itemize}
\item Let $u$ and $v$ be non-isolated in $X$. From the form of $\textbf{v}_\lambda$'s in (\ref{Elamb}), strong cospectrality in $X$ is required for strong cospectrality in $X\vee Y$. Hence, we assume $u$ and $v$ are strongly cospectral in $X$. If $m\in\sigma_{uv}^-(L(X))$, then $\textbf{v}_m$ is another eigenvector for $m+n$. In this case, $\textbf{u}^T\textbf{e}_u= \textbf{u}^T\textbf{e}_v$ and $\textbf{v}_m^T\textbf{e}_u=-\textbf{v}_m^T\textbf{e}_v$, and so $u$ and $v$ are not strongly cospectral in $X\vee Y$. However, if $m\notin\sigma_{uv}^-(L(X))$, then $u$ and $v$ are strongly cospectral in $X\vee Y$ with $\sigma_{uv}^+(L)$ and $\sigma_{uv}^-(L)$ in Theorem \ref{sc}(1a) as desired.

\item Let $u$ and $v$ be isolated in $X$. Then $\textbf{v}=\textbf{e}_u-\textbf{e}_v$ is an eigenvector for $L$ associated to the eigenvalue $n$. If $X=O_2$, then \cite[Corollary 6.9(2)]{Monterde2022} yields the desired conclusion. However, if $X$ has a connected component $C$ other than $\{u\}$ and $\{v\}$, then the vector $\textbf{w}$ which is constant on each connected component of $X$ and whose sum of all entries is 0 is also an eigenvector for $L$ associated to the eigenvalue $n$. This vector satisfies $\textbf{w}^T\textbf{e}_u=\textbf{w}^T\textbf{e}_v$, and since $\textbf{v}^T\textbf{e}_u=-\textbf{v}^T\textbf{e}_v$, we get that $u$ and $v$ are not strongly cospectral in $X\vee Y$.
\end{itemize}
Combining these cases proves (1). Next, let $\lambda\neq k$ and $\mu\neq \ell$ be eigenvalues of $A(X)$ and $L(Y)$ respectively, with associated eigenvectors $\textbf{y}_{\lambda}$ and $\textbf{z}_{\mu}$. Then $\lambda^{\pm}$, $\lambda$ and $\mu$ are eigenvalues of $A$ with associated eigenvectors
\begin{equation}
\label{Elamb1}
\textbf{u}=\left[ \begin{array}{ccccc} (k-\lambda^{\mp})\textbf{1}_m \\ m\textbf{1}_n\end{array} \right],\quad  \textbf{v}_{\lambda}=\left[ \begin{array}{cc} \textbf{y}_{\lambda} \\ \textbf{0}\end{array} \right],\quad \text{and}\quad \textbf{w}_{\mu}=\left[ \begin{array}{cc} \textbf{0} \\ \textbf{z}_{\mu}\end{array} \right]
\end{equation}
respectively. 
The same argument as the previous case yields the desired conclusion for (2). Finally, the form of the $\textbf{v}_{\lambda}$'s in (\ref{Elamb}-\ref{Elamb1}) yields (3) and the last statement.
\end{proof}

From Theorem \ref{sc}(3), we assume henceforth that strongly cospectral vertices in $X\vee Y$ belong to $X$.

Note that if $X$ is unweighted. then $m$ is an eigenvalue of $L(X)$ if and only if $X$ is a join. From (\ref{Elamb}), we get that $m\in\sigma_{uv}^-(L(Z))$ if and only if $Z=K_2$. Combining this with Theorem \ref{sc} yields the next result.

\begin{corollary}
\label{scj}
Let $X$ be an unweighted graph with vertices $u$ and $v$.
\begin{enumerate}
    \item Let $X\notin\{K_2,O_2\}$. Then $u$ and $v$ in $X$ are Laplacian strongly cospectral in $X\vee Y$ if and only if they are in $X$.
    \item Let $X\neq O_2$ and suppose $X\vee Y$ is not a complete graph. Then $u$ and $v$ in $X$ are adjacency strongly cospectral in $X\vee Y$ if and only if they are in $X$.
\end{enumerate}
\end{corollary}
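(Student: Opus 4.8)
The plan is to read both statements directly off Theorem~\ref{sc}, showing that for unweighted $X$ the auxiliary conditions appearing in cases (1a) and (2a) of that theorem are automatically met once $u$ and $v$ are strongly cospectral in $X$, provided the listed exceptions are removed.

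\textbf{Laplacian part.} One implication is immediate: if $u,v$ are Laplacian strongly cospectral in $X\vee Y$, then Theorem~\ref{sc}(1) puts us in case (1a) or (1b); since $X\neq O_2$ rules out (1b), case (1a) holds and, in particular, $u,v$ are Laplacian strongly cospectral in $X$. For the converse I would assume $u,v$ are Laplacian strongly cospectral in $X$ and, via Theorem~\ref{sc}(1a), reduce the claim to the single point $m\notin\sigma_{uv}^-(L(X))$. Here I would invoke the classical fact that, for an unweighted graph $X$ on $m\geq 2$ vertices, $m$ is an eigenvalue of $L(X)$ if and only if $\overline{X}$ is disconnected, i.e.\ $X=X_1\vee X_2$ for nonempty graphs $X_1,X_2$; if $X$ is not a join, then $m\notin\sigma_u(L(X))\supseteq\sigma_{uv}^-(L(X))$ and we are finished. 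If $X=X_1\vee X_2$, I would examine the eigenvectors of $L(X)$ for $m$: each is constant on the connected components of $\overline{X}$ and has zero coordinate sum, and the vector $\textbf{u}$ of the type in~(\ref{Elamb}) (taking $X_1,X_2$ for $X,Y$) has equal entries on each part of the join. A short argument with such vectors then shows that $m\in\sigma_{uv}^-(L(X))$ forces $\overline{X}$ to have exactly two connected components $C_1,C_2$, of equal cardinality, with $u$ and $v$ in different ones; equivalently $X=\overline{C_1}\vee\overline{C_2}$ with $|V(\overline{C_1})|=|V(\overline{C_2})|$ and $u,v$ on opposite sides. But then Theorem~\ref{sc}(3), applied to this join, forbids strong cospectrality of $u$ and $v$ in $X$ as soon as $|V(\overline{C_1})|\geq 2$, and the only remaining possibility is $|V(\overline{C_1})|=|V(\overline{C_2})|=1$, i.e.\ $X=K_2$, which is excluded. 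This contradiction yields $m\notin\sigma_{uv}^-(L(X))$, and Theorem~\ref{sc}(1a) then gives strong cospectrality in $X\vee Y$. (The argument also shows why $K_2$ must be discarded: there $m=2\in\sigma_{uv}^-(L(K_2))$, so strong cospectrality in $K_2$ is lost in $K_2\vee Y$.)

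\textbf{Adjacency part.} The skeleton is the same. If $u,v$ are adjacency strongly cospectral in $X\vee Y$, then Theorem~\ref{sc}(2) gives case (2a) or (2b); since $X$ is unweighted and $X\neq O_2$, the graph $O_2(k)$ of (2b) would have to be $O_2$, which is excluded, so (2a) holds and $u,v$ are adjacency strongly cospectral in $X$. Conversely, assuming $u,v$ are adjacency strongly cospectral in $X$, Theorem~\ref{sc}(2a) reduces the claim to $\lambda^{\pm}\notin\sigma_{uv}^-(A(X))$. Since $X$ is unweighted and $k$-regular, the spectral radius of $A(X)$ equals $k$, and $\lambda^+=\frac{1}{2}(k+\ell+\sqrt{D})\geq\max\{k,\ell\}$ with strict inequality because $D>(k-\ell)^2$; hence $\lambda^+>k$ is not an eigenvalue of $A(X)$ at all, so $\lambda^+\notin\sigma_{uv}^-(A(X))$. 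For $\lambda^-$ I would use that an eigenvalue $\theta$ of $A(X)$ equals $\lambda^-$ if and only if $(k-\theta)(\ell-\theta)=mn$ (since $\lambda^{\pm}$ are the two roots of $(k-x)(\ell-x)=mn$ and $\lambda^+>k\geq\theta$), combine this with the facts that any $\theta\in\sigma_{uv}^-(A(X))$ satisfies $-k\leq\theta<k$ (the $k$-eigenspace of a regular graph contributes only to $\sigma_{uv}^+$), that $m\geq k+1$ and $n\geq\ell+1$, and that "$X\vee Y$ is not complete'' is equivalent to $m\geq k+2$ or $n\geq\ell+2$, and derive a contradiction forcing $\lambda^-\notin\sigma_{uv}^-(A(X))$. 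With $\lambda^{\pm}\notin\sigma_{uv}^-(A(X))$ in hand, Theorem~\ref{sc}(2a) delivers strong cospectrality in $X\vee Y$.

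\textbf{Where the difficulty lies.} The Laplacian part is mostly bookkeeping once one has the dictionary "$m$ is a Laplacian eigenvalue of $X$ $\iff$ $X$ is a join'' together with Theorem~\ref{sc}(3). The substantive step is the adjacency part, specifically ruling out the coincidence $\lambda^-\in\sigma_{uv}^-(A(X))$: unlike $\lambda^+$, the value $\lambda^-$ can legitimately lie in $[-k,k)$, so a spectral-radius bound does not suffice and one must extract the contradiction from the Diophantine-type relation $(k-\lambda^-)(\ell-\lambda^-)=mn$ in combination with the regularity constraints $m\geq k+1$, $n\geq\ell+1$ and the "not complete'' hypothesis. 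This is the step I expect to be the most delicate, and it is also where the precise exceptional cases $X\in\{K_2,O_2\}$ and "$X\vee Y$ complete'' should enter.
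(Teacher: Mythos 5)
Your Laplacian half is correct and is essentially the paper's own argument: the paper also reduces to the dictionary ``$m$ is a Laplacian eigenvalue of an unweighted graph iff that graph is a join'' together with the claim that $m\in\sigma_{uv}^-(L(X))$ forces $X=K_2$; you simply prove that claim in more detail (the $m$-eigenspace consists of the vectors constant on components of the complement with zero sum, and your two-equal-components analysis plus Theorem~\ref{sc}(3) is the right way to substantiate the paper's one-line assertion). The forward directions of both parts and your observation that $\lambda^+>k$ (so $\lambda^+$ cannot lie in $\sigma_{uv}^-(A(X))$) are also fine.

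The adjacency half, however, has a genuine gap, and it sits exactly at the step you defer: no contradiction can be derived from $(k-\theta)(\ell-\theta)=mn$, $-k\le\theta<k$, $m\ge k+1$, $n\ge\ell+1$ and ``$m\ge k+2$ or $n\ge\ell+2$''. Take $X=Q_3$ (so $k=3$, $m=8$) with $u,v$ antipodal, for which $\sigma_{uv}^-(A(X))=\{1,-3\}$, and $Y=K_6$ (so $\ell=5$, $n=6$). Then $D=(k-\ell)^2+4mn=196$ and $\lambda^-=\tfrac{1}{2}(k+\ell-\sqrt{D})=-3\in\sigma_{uv}^-(A(X))$; equivalently $(k-\theta)(\ell-\theta)=6\cdot 8=48=mn$ for $\theta=-3$, while $Q_3\vee K_6$ is not complete. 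So every constraint in your list is satisfied and yet $\lambda^-\in\sigma_{uv}^-(A(X))$: by Theorem~\ref{sc}(2), since the eigenvalue $-3$ of the join carries both the vector $\mathbf{u}$ of (\ref{Elamb1}) (equal entries at $u,v$) and a $Q_3$-eigenvector with opposite nonzero entries at $u,v$, the antipodal pair is strongly cospectral in $Q_3$ but not in $Q_3\vee K_6$. Hence the contradiction you plan to extract does not exist, and the step would fail. (This also exposes that the paper's own one-line proof of part~(2) is incomplete: it only rules out the coincidence $\lambda^-=-1$, which is what characterizes $X$ and $Y$ both being complete, and silently ignores $\lambda^-$ hitting other elements of $\sigma_{uv}^-(A(X))$.) What your outline, like the paper's argument, actually establishes is part~(2) under the additional hypothesis that $\sigma_{uv}^-(A(X))\subseteq\{-1\}$ --- for instance $X=K_2$, the case the paper later uses in Example~\ref{scjoin}(2) --- or, more generally, whenever $\lambda^-\notin\sigma_{uv}^-(A(X))$ is assumed outright; the full statement as given cannot be proved by any completion of your sketch.
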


\begin{proof}
Let $X$ be unweighted. Then $m$ is an eigenvalue of $L(Z)$ if and only if $Z$ is a join. In particular, $m\in\sigma_{uv}^-(L(Z))$ if and only if $Z=K_2$ by (\ref{Elamb}). Combining this with Theorem \ref{sc} yields (1). To prove (2), note that $\lambda^-=\frac{1}{2}(k+\ell-\sqrt{(k-\ell)^2+4mn})=-1$ if and only if $m=k+1$ and $n=\ell+1$, i.e., $X$ and $Y$ are complete.
\end{proof}

\begin{example}
\label{scjoin}
Consider $X\vee Y$, where $X\in\{K_m,O_m\}$. If $m\geq 3$, then the vertices of $X$ do not admit strongly cospectrality in $X$ and $X\vee Y$ with respect to $M\in\{A,L\}$ \cite[Corollary 3.10]{Monterde2022}. If $m=2$, then the following hold about the apexes $u$ and $v$ of the double cone $X\vee Y$.
\begin{enumerate}
\item By Theorem \ref{sc}(1b-2b), $u$ and $v$ are Laplacian and adjacency strongly cospectral in $O_2\vee Y$.
\item Since $2\in\sigma_{uv}^-(L(X))$, Theorem \ref{sc}(1a) implies that $u$ and $v$ are not Laplacian strongly cospectral in $K_2\vee Y$ for any graph $Y$. Since $u$ and $v$ are adjacency strongly cospectral in $K_2$, Corollary \ref{scj}(2) implies that $u$ and $v$ are strongly cospectral in $K_2\vee Y$ if and only if $Y$ is not a complete graph. 
\end{enumerate}
\end{example}

\section{Perfect state transfer}\label{secPst}

To determine PST in joins, we make use of a characterization of PST due to Coutinho. Throughout, we denote the largest power of two that divides an integer $a$ by $\nu_2(a)$. We also denote the minimum PST times between $u$ and $v$ in $X$ and $X\vee Y$ by $\tau_X$ and $\tau_{X\vee Y}$, respectively.

\begin{theorem}
\label{pstC}
Let $M$ be a real symmetric matrix such that $\phi(M,t)\in\mathbb{Z}[t]$. Then vertices $u$ and $v$ in $X$ admit perfect state transfer if and only if all of the following conditions hold.
\begin{enumerate}
\item Either (i) $\sigma_u(M)\subseteq\mathbb{Z}$ or (ii) each $\lambda_j\in \sigma_u(M)$ is a quadratic integer of the form $\frac{1}{2}(a+b_j\sqrt{\Delta})$ for some integers $a$, $b_j$ and $\Delta$, where $\Delta>1$ is square-free. Here, we let $\Delta=1$ whenever (i) holds.
\item Vertices $u$ and $v$ are strongly cospectral in $X$.
\item For all $\lambda,\eta\in\sigma_{uv}^+(M)$ and $\mu,\theta\in\sigma_{uv}^-(M)$, we have $\nu_2\left(\frac{\lambda-\eta}{\sqrt{\Delta}}\right)>\nu_2\left(\frac{\lambda-\mu}{\sqrt{\Delta}}\right)=\nu_2\left(\frac{\lambda-\theta}{\sqrt{\Delta}}\right)$.
\end{enumerate}
Further, $\tau_{X}=\frac{\pi}{g\sqrt{\Delta}}$, where $g=\operatorname{gcd}\left(\left\{\frac{\lambda_0-\lambda}{\sqrt{\Delta}}:\lambda\in\sigma_{u}(M)\right\}\right)$ for some fixed $\lambda_0\in\sigma_{uv}^+(M)$.
\end{theorem}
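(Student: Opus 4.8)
The plan is to translate the PST condition $U_M(X,\tau)\textbf{e}_u=\gamma\textbf{e}_v$ with $|\gamma|=1$ into a system of equations on the phases $e^{i\tau\lambda_j}$, and then convert that system into an arithmetic statement about $2$-adic valuations of eigenvalue differences. First I would prove necessity of (2). Projecting $U_M(X,\tau)\textbf{e}_u=\gamma\textbf{e}_v$ onto the $j$-th eigenspace and using $E_jU_M(X,\tau)=e^{i\tau\lambda_j}E_j$ gives $e^{i\tau\lambda_j}E_j\textbf{e}_u=\gamma E_j\textbf{e}_v$ for every $j$. Since $E_j,\textbf{e}_u,\textbf{e}_v$ are real, comparing norms gives $\|E_j\textbf{e}_u\|=\|E_j\textbf{e}_v\|$; and when $E_j\textbf{e}_u\neq\textbf 0$ the scalar $e^{i\tau\lambda_j}/\gamma$ relating the two real nonzero vectors must be a real unit $\sigma_j\in\{+1,-1\}$, so $E_j\textbf{e}_v=\sigma_jE_j\textbf{e}_u$, which is (2), with $\sigma_j=+1$ exactly on $\sigma_{uv}^+(M)$ and $\sigma_j=-1$ exactly on $\sigma_{uv}^-(M)$. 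Noting $\sigma_{uv}^+(M)\neq\emptyset$ (else $\textbf{e}_u=-\textbf{e}_v$) and $\sigma_{uv}^-(M)\neq\emptyset$ (else $u=v$), I fix $\lambda_0\in\sigma_{uv}^+(M)$ and divide the relations $e^{i\tau\lambda_j}=\sigma_j\gamma$ by the one for $\lambda_0$ to obtain the key system $e^{i\tau(\lambda_j-\lambda_0)}=\sigma_j$ for all $\lambda_j\in\sigma_u(M)$.

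Next I would establish (1). Applying $U_M(X,\tau)$ twice (it is symmetric, so $U_M(X,\tau)\textbf{e}_v=\gamma\textbf{e}_u$ too) shows $u$ is periodic, so the ratio condition holds on $\sigma_u(M)$; since $\phi(M,t)\in\mathbb Z[t]$, Lemma \ref{per} applies and yields exactly (1). Condition (1) then lets me fix a common presentation $\lambda_j=\tfrac12(a+b_j\sqrt{\Delta})$ (taking $\Delta=1$, $\sqrt\Delta=1$ in case (i)); a short argument with the ring of integers of $\mathbb Q(\sqrt\Delta)$ shows all $b_j$ share the parity of $a$, so that $m_j:=(\lambda_j-\lambda_0)/\sqrt\Delta=\tfrac12(b_j-b_0)$ is an \emph{integer} for every $j$, with $m_0=0$. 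Set $g=\gcd\{m_j:\lambda_j\in\sigma_u(M)\}$; note $g$ is independent of the choice of $\lambda_0$ since it also equals $\gcd\{(\lambda-\mu)/\sqrt\Delta:\lambda,\mu\in\sigma_u(M)\}$.

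Now the arithmetic core. Writing $t:=\tau\sqrt\Delta/\pi$, the key system says $t\,m_j$ is an even integer for $\lambda_j\in\sigma_{uv}^+(M)$ and an odd integer for $\lambda_j\in\sigma_{uv}^-(M)$; in particular $t\,m_j\in\mathbb Z$ for all $j$, so $t\,g\in\mathbb Z$, i.e. $t\in\tfrac1g\mathbb Z$. Taking $2$-adic valuations: for $\mu\in\sigma_{uv}^-(M)$ one gets $\nu_2(t)+\nu_2(m_\mu)=0$, which forces $\nu_2(m_\mu)$ to equal a constant $e:=-\nu_2(t)$ independent of $\mu$; for $\eta\in\sigma_{uv}^+(M)$ one gets $\nu_2(t)+\nu_2(m_\eta)\ge 1$, i.e. $\nu_2(m_\eta)\ge e+1$. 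Since $(\lambda-\eta)/\sqrt\Delta=m_\lambda-m_\eta$ and $(\lambda-\mu)/\sqrt\Delta=m_\lambda-m_\mu$, these two facts are precisely condition (3). Moreover $\nu_2(g)=\min_j\nu_2(m_j)=e$, the minimum being attained on $\sigma_{uv}^-(M)$; hence for $t=1/g$ one has $\nu_2(m_j/g)\ge 1$ on $\sigma_{uv}^+(M)$ and $\nu_2(m_j/g)=0$ on $\sigma_{uv}^-(M)$, so $t=1/g$ already realizes all the required parities. As $1/g$ is the least positive element of $\tfrac1g\mathbb Z$, this is the minimum, giving $\tau_X=\pi/(g\sqrt\Delta)$.

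For sufficiency I would simply run this backwards: assuming (1), (2), (3), present the eigenvalues as $\tfrac12(a+b_j\sqrt\Delta)$, define $m_j$ and $g$ as above, read off from (3) that $\nu_2(g)=e$ with $m_j/g$ even on $\sigma_{uv}^+(M)$ and odd on $\sigma_{uv}^-(M)$, put $\tau=\pi/(g\sqrt\Delta)$, verify $e^{i\tau(\lambda_j-\lambda_0)}=\sigma_j$, and conclude
\[
U_M(X,\tau)\textbf{e}_u=\sum_je^{i\tau\lambda_j}E_j\textbf{e}_u=e^{i\tau\lambda_0}\sum_j\sigma_jE_j\textbf{e}_u=e^{i\tau\lambda_0}\sum_jE_j\textbf{e}_v=e^{i\tau\lambda_0}\textbf{e}_v
\]
using (2), so $|U_M(X,\tau)_{u,v}|=1$. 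The step I expect to be the main obstacle is this arithmetic core: one must be careful that each $m_j$ is a genuine integer (this is where the common $a$ from Lemma \ref{per} and the structure of quadratic integer rings enter), and the bookkeeping that turns ``$e^{i\tau(\lambda_j-\lambda_0)}=\pm1$'' into a statement comparing $\nu_2$ of the $m_j$ across $\sigma_{uv}^+(M)$ and $\sigma_{uv}^-(M)$ — and the recognition that this statement is exactly (3) and is automatically met at $t=1/g$ — is the delicate part. The strong cospectrality deduction, the reduction to periodicity, and the final assembly are routine eigenspace-projection algebra.
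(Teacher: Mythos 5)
The paper never proves Theorem~\ref{pstC}: it is imported verbatim as Coutinho's characterization of PST (cf.\ \cite{Coutinho2021}), so there is no in-paper argument to compare yours against. That said, your proposal is a correct reconstruction of the standard proof and I see no gap in it: projecting $U_M(X,\tau)\mathbf{e}_u=\gamma\mathbf{e}_v$ onto eigenspaces gives $e^{i\tau\lambda_j}E_j\mathbf{e}_u=\gamma E_j\mathbf{e}_v$, whence strong cospectrality and the sign system $e^{i\tau(\lambda_j-\lambda_0)}=\sigma_j$; symmetry of $U$ gives periodicity at $2\tau$, so Lemma~\ref{per} (which already packages the closure of $\sigma_u(M)$ under algebraic conjugation) yields condition (1); and the $2$-adic bookkeeping of $m_j=(\lambda_j-\lambda_0)/\sqrt{\Delta}$ yields condition (3) and the minimum time, with the sufficiency direction obtained by running the computation backwards at $\tau=\pi/(g\sqrt{\Delta})$. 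The only places I would ask you to write out in full are: (i) the integrality of the $m_j$, i.e.\ the parity argument in the ring of integers of $\mathbb{Q}(\sqrt{\Delta})$ showing all $b_j$ have the parity of $a$; (ii) the ultrametric step $\nu_2(m_\lambda-m_\mu)=\min\{\nu_2(m_\lambda),\nu_2(m_\mu)\}$ when the two valuations differ, which is exactly what upgrades ``$\nu_2(m_\mu)=e$ on $\sigma_{uv}^-(M)$ and $\nu_2(m_\eta)\geq e+1$ on $\sigma_{uv}^+(M)$'' to condition (3) for \emph{arbitrary} $\lambda\in\sigma_{uv}^+(M)$, and conversely (specialize $\lambda=\eta=\lambda_0$) recovers those two facts from (3) in the sufficiency direction; and (iii) the remark that $\sigma_{uv}^-(M)\neq\emptyset$ when $u\neq v$ (otherwise strong cospectrality would force $\mathbf{e}_u=\mathbf{e}_v$), which is what guarantees $\nu_2(g)=e$, hence that $m_j/g$ is odd on $\sigma_{uv}^-(M)$ and even on $\sigma_{uv}^+(M)$, so that $\tau=\pi/(g\sqrt{\Delta})$ both achieves PST and is the least positive element of the admissible set $\frac{\pi}{g\sqrt{\Delta}}\mathbb{Z}_{>0}$.
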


To characterize PST in joins, we now combine Theorem \ref{pstC} with our results on periodicity and strong cospectrality in joins from Sections \ref{secPer} and \ref{secSc}. Note that it suffices to consider the vertices of $X$ in checking whether PST occurs in $X\vee Y$ by virtue of Theorem \ref{sc}(3). We begin with the Laplacian case.

\subsection{Laplacian case}
\label{PGSTpstL}

\begin{theorem}
\label{pstL}
Let $m\geq 2$ and $\phi(L(X),t)\in\mathbb{Z}[t]$. Vertices $u$ and $v$ in $X$ admit Laplacian perfect state transfer in $X\vee Y$ if and only if all of the following conditions hold.
\begin{enumerate}
\item Either (i) $u$ and $v$ are Laplacian strongly cospectral in $X$ and $m\notin\sigma_{uv}^-(L(X))$ or (ii) $X=O_2$.
\item The eigenvalues in $\sigma_u(L(X))$ are all integers.
\item One of the following conditions hold.
\begin{enumerate}
\item $X$ is connected and one of the following conditions hold for all $0\neq \lambda\in\sigma_{uv}^+(L(X))\cup\{m\}$ and $\mu,\theta\in\sigma_{uv}^-(L(X))$.
\begin{enumerate}
\item $\nu_2(\lambda)>\nu_2(\mu)=\nu_2(\theta)$ and $\nu_2(n)>\nu_2(\mu)$.
\item $\nu_2(\mu)>\nu_2(\lambda)=\nu_2(n)$.
\item $\nu_2(\lambda)=\nu_2(\mu)=\nu_2(n)$ and $\nu_2\left(\frac{\lambda+n}{2^{\nu_2(n)}}\right)>\nu_2\left(\frac{\mu+n}{2^{\nu_2(n)}}\right)=\nu_2\left(\frac{\theta+n}{2^{\nu_2(n)}}\right)$.
\end{enumerate}
\item $X\neq O_2$ is disconnected and (ai) above holds. Here, $u$ and $v$ are in the same connected component in $X$.
\item $X=O_2$ and $n\equiv 2$ (mod 4).
\end{enumerate}
\end{enumerate}
Further, $\tau_{X\vee Y}=\frac{\pi}{g}$, where $g=\operatorname{gcd}\left(\mathcal{T}\right)$, $\mathcal{T}=\mathcal{S}\cup\mathcal{R}$ and the sets $\mathcal{S},\mathcal{R}$ are given in Lemma \ref{supp}. 
\end{theorem}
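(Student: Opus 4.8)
The plan is to apply the PST characterization of Theorem \ref{pstC} to the matrix $M = L(X\vee Y)$, using our structural knowledge of its eigenvalue support and its strong-cospectrality partition, and then translate each of the three numbered conditions of Theorem \ref{pstC} into the statements above. First I would observe that $\phi(L(X\vee Y),t)\in\mathbb{Z}[t]$ whenever $\phi(L(X),t)\in\mathbb{Z}[t]$ — indeed, by Lemma \ref{supp}(1) the eigenvalue support $\sigma_u(L)$ consists of the translates $\lambda+n$ of the nonzero eigenvalues in $\sigma_u(L(X))$ together with $0$, $m+n$, and possibly $n$; since $0\in\sigma_u(L(X))$ always, condition (1) of Theorem \ref{pstC} for $M=L(X\vee Y)$ forces $\sigma_u(L)\subseteq\mathbb{Z}$ (the quadratic-integer alternative (ii) is impossible because $0$ and the integer $m+n$ both lie in the support, so $\Delta=1$), and this is equivalent to all eigenvalues in $\sigma_u(L(X))$ being integers — this is condition (2) above and makes the hypothesis $\phi(L(X\vee Y),t)\in\mathbb{Z}[t]$ of Theorem \ref{pstC} automatic.

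Next I would handle condition (2) of Theorem \ref{pstC} — strong cospectrality of $u$ and $v$ in $X\vee Y$ — by quoting Theorem \ref{sc}(1): this holds if and only if either $u,v$ are Laplacian strongly cospectral in $X$ with $m\notin\sigma_{uv}^-(L(X))$, or $X=O_2$. That is exactly condition (1). At the same time Theorem \ref{sc}(1) hands us the explicit decomposition of $\sigma_{uv}^+(L)$ and $\sigma_{uv}^-(L)$ in terms of the sets for $X$: in the non-$O_2$ case, $\sigma_{uv}^+(L) = \{\lambda+n : 0\neq\lambda\in\sigma_{uv}^+(L(X))\}\cup\mathcal{R}$ and $\sigma_{uv}^-(L) = \{\mu+n : \mu\in\sigma_{uv}^-(L(X))\}$, with $\mathcal{R}=\{0,m+n\}$ ($X$ connected) or $\{0,m+n,n\}$ ($X$ disconnected); in the $O_2$ case, $\sigma_{uv}^+(L)=\{0,n+2\}$, $\sigma_{uv}^-(L)=\{n\}$.

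The substantive work is condition (3) of Theorem \ref{pstC}, the $\nu_2$-inequalities. Here $\Delta=1$, so I must show that $\nu_2(\lambda'-\eta')>\nu_2(\lambda'-\mu')=\nu_2(\lambda'-\theta')$ for all $\lambda',\eta'\in\sigma_{uv}^+(L)$ and $\mu',\theta'\in\sigma_{uv}^-(L)$. Using the decomposition above, every element of $\sigma_{uv}^+(L)$ is either $0$, or $m+n$, or $\lambda+n$ with $\lambda\in\sigma_{uv}^+(L(X))\cup\{m\}$ (noting $m+n$ is the translate of the eigenvalue $m$, which for the apex vectors $\mathbf{u}$ lands in $\sigma_{uv}^+$), and every element of $\sigma_{uv}^-(L)$ is $\mu+n$ with $\mu\in\sigma_{uv}^-(L(X))$; so differences $\lambda'-\eta'$ within $\sigma_{uv}^+(L)$ are, up to the special pair involving $0$, of the form $\lambda-\eta$ (a difference of translates), while $0$ paired against $\lambda+n$ contributes $\lambda+n$ itself. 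Splitting into the three possible orderings of $\nu_2(\lambda)$, $\nu_2(\mu)$, and $\nu_2(n)$ — which is exactly what (ai), (aii), (aiii) record — one checks the required chain of $\nu_2$-comparisons by elementary $2$-adic valuation facts ($\nu_2(a+b)=\min\{\nu_2(a),\nu_2(b)\}$ when the valuations differ, and $\nu_2(a+b)>\min$ when they agree). The disconnected case adds the element $n$ to $\mathcal{R}$ on the $+$ side, so $n$ must be handled alongside the other $\lambda$'s, which is why (b) says ``(ai) holds'' with $0\neq\lambda$ ranging appropriately and $u,v$ in one component (forced by strong cospectrality in $X$); the $O_2$ case collapses to a direct computation with $\sigma_{uv}^+(L)=\{0,n+2\}$, $\sigma_{uv}^-(L)=\{n\}$, where condition (3) reads $\nu_2(n+2)>\nu_2(n)$ with no second element on the $-$ side, equivalently $\nu_2(n)=1$, i.e.\ $n\equiv 2\pmod 4$. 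Finally, the formula $\tau_{X\vee Y}=\pi/g$ with $g=\gcd(\mathcal{T})$, $\mathcal{T}=\mathcal{S}\cup\mathcal{R}$, comes straight from the ``Further'' clause of Theorem \ref{pstC} once we take $\lambda_0=0\in\sigma_{uv}^+(L)$, so that $\{(\lambda_0-\lambda)/\sqrt\Delta : \lambda\in\sigma_u(L)\} = \{-\lambda : \lambda\in\sigma_u(L)\}$ and the gcd is over $\sigma_u(L)\setminus\{0\}=\mathcal{S}\cup(\mathcal{R}\setminus\{0\})$, which generates the same gcd as $\mathcal{T}$. The main obstacle I anticipate is the bookkeeping in condition (3): correctly tracking which translated eigenvalues land in $\sigma_{uv}^+(L)$ versus $\sigma_{uv}^-(L)$ (in particular that $m+n$ is always in the $+$ set and $n$ is in the $+$ set precisely when $X$ is disconnected), and then verifying that the three-way case split on $2$-adic valuations is exhaustive and that each case yields precisely the stated inequalities without sign or strictness errors.
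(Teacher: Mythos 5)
Your outline is essentially the paper's own proof: apply Theorem \ref{pstC} to $X\vee Y$, identify condition (1) with strong cospectrality via Theorem \ref{sc}(1), condition (2) with periodicity/integrality of the support, translate the $\nu_2$-condition of Theorem \ref{pstC}(3) through the shifted sets $\sigma_{uv}^{\pm}(L)$ by a three-way case split on the orderings of $\nu_2(\lambda)$, $\nu_2(\mu)$, $\nu_2(n)$ (handling the extra element $n$ in the disconnected case and doing the direct computation for $X=O_2$), and read off $\tau_{X\vee Y}$ from the ``Further'' clause with $\lambda_0=0$. The one genuine inaccuracy is your claim that the hypothesis $\phi(L(X\vee Y),t)\in\mathbb{Z}[t]$ of Theorem \ref{pstC} is ``automatic'': it is not, since $L(X\vee Y)$ also has the eigenvalues $\mu+m$ for nonzero eigenvalues $\mu$ of $L(Y)$, and $Y$ is an arbitrary positively weighted graph, so these may be irrational; integrality of $\sigma_u(L)$ does not control the full characteristic polynomial. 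What saves the argument (and what the paper does) is that those $Y$-eigenvalues never lie in $\sigma_u(L)$ because their eigenvectors vanish on $V(X)$, so periodicity of $u$ in $X\vee Y$ is handled by Corollary \ref{joinperpreserve}, which needs only $\phi(L(X),t)\in\mathbb{Z}[t]$, and the PST criterion is then applied with $\Delta=1$ on the integral support $\sigma_u(L)$; you should justify the step this way rather than by asserting integrality of $\phi(L(X\vee Y),t)$. Aside from that repair, your treatment of the special cases (in particular that $m+n$ always sits in $\sigma_{uv}^{+}(L)$, forcing $\lambda$ to range over $\sigma_{uv}^{+}(L(X))\cup\{m\}$, and that the $O_2$ case reduces to $\nu_2(n+2)>\nu_2(n)$, i.e.\ $n\equiv 2 \pmod 4$) matches the paper.
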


\begin{proof}
By Theorem \ref{sc}(1) and Corollary \ref{joinperpreserve}, conditions (1) and (2) are equivalent resp.\ to strong cospectrality and periodicity of $u$ and $v$ in $X\vee Y$. Combining this with Theorem \ref{pstC}, we get PST between $u$ and $v$ in $X\vee Y$ if and only if  Theorem \ref{pstC}(3) holds. To establish (3a), suppose $X$ is connected. From Theorem \ref{sc}(1a), we have $\sigma_{uv}^+(L)= \{\lambda+n:0\neq \lambda\in\sigma_{uv}^+(L(X))\}\cup \{0,m+n\}$ and $\sigma_{uv}^-(L)=\sigma_{uv}^-(L(X))$. Theorem \ref{pstC}(3) applied to $u$ and $v$ is equivalent to
\begin{equation}
\label{nu2}
\nu_2(\lambda+n)>\nu_2(\mu+n)=\nu_2(\theta+n)
\end{equation}
for all $0\neq \lambda\in\sigma_{uv}^+(L(X))\cup\{m\}$ and $\mu,\theta\in\sigma_{uv}^-(L(X))$. We have three cases.
\begin{itemize}
\item Let $\nu_2(\lambda)>\nu_2(\mu)$ for some $0\neq \lambda\in\sigma_{uv}^+(L(X))\cup\{m\}$ and $\mu\in\sigma_{uv}^-(L(X))$. Since $\nu_2(\lambda+n)>\nu_2(\mu+n)$ in (\ref{nu2}), it must be that $ \nu_2(n)>\nu_2(\mu)$ for each $\mu\in\sigma_{uv}^-(L(X))$, in which case, $\nu_2(\mu+n)=\nu_2(\mu)$. Now, for the equality in (\ref{nu2}) to hold, we need $\nu_2(\mu)=\nu_2(\theta)$ for all $\mu,\theta\in\sigma_{uv}^-(L(X))$. Hence, $\nu_2(\lambda)>\nu_2(\mu)$ for all $\mu\in\sigma_{uv}^-(L(X))$. Finally, if $\nu_2(\mu)\geq \nu_2(\eta)$ for some $0\neq \eta\in\sigma_{uv}^+(L(X))\cup\{m\}$, then $\nu_2(n)> \nu_2(\eta)$, and so $\nu_2(\eta+n)=\nu_2(\eta)\leq \nu_2(\mu)=\nu_2(\mu+n)$, a contradiction to (\ref{nu2}). Thus, our assumption in this case combined with (\ref{nu2}) gives us $\nu_2(\lambda)>\nu_2(\mu)=\nu_2(\theta)$ and $ \nu_2(n)>\nu_2(\mu)$ for all $0\neq \lambda\in\sigma_{uv}^+(L(X))\cup\{m\}$ and $\mu,\theta\in\sigma_{uv}^-(L(X))$, which proves (3ai).
\item Let $\nu_2(\mu)>\nu_2(\lambda)$ for some $0\neq \lambda\in\sigma_{uv}^+(L(X))\cup\{m\}$ and $\mu\in\sigma_{uv}^-(L(X))$. If $\nu_2(\lambda)\neq \nu_2(n)$, then $\nu_2(\mu+n)\geq \nu_2(\lambda+n)$, a contradiction to (\ref{nu2}). Hence, we must have $\nu_2(\lambda)=\nu_2(n)$ for each $0\neq \lambda\in\sigma_{uv}^+(L(X))\cup\{m\}$. Finally, if $\nu_2(n)\geq \nu_2(\theta)$ for some $\theta\in\sigma_{uv}^-(L(X))$, then we have $\nu_2(\theta+n)\neq \nu_2(n)=\nu_2(\mu+n)$, which again contradicts (\ref{nu2}). Thus, $\nu_2(\mu)>\nu_2(n)=\nu_2(\lambda)$ for all for some $0\neq \lambda\in\sigma_{uv}^+(L(X))\cup\{m\}$ and $\theta\in\sigma_{uv}^-(L(X))$. This proves (3aii).
\item Let $\nu_2(\mu)=\nu_2(\lambda)$ for some $0\neq \lambda\in\sigma_{uv}^+(L(X))\cup\{m\}$ and $\mu\in\sigma_{uv}^-(L(X))$. For the strict inequality in (\ref{nu2}) to hold, it is required that $\nu_2(\mu)=\nu_2(\lambda)=\nu_2(n)$ for all $0\neq \lambda\in\sigma_{uv}^+(L(X))\cup\{m\}$ and $\mu\in\sigma_{uv}^-(L(X))$. From this, the conditions in (3aiii) are straightforward.
\end{itemize}
Combining these cases proves (3a). Now, if $X$ is disconnected, then $n\in\sigma_{uv}^+(L(X))$ by Theorem \ref{sc}(1b), and so Theorem \ref{pstC}(3) holds if and only if (\ref{nu2}) holds and $\nu_2(n)>\nu_2(\mu+n)$ for all $\mu\in\sigma_{uv}^-(L(X))$. Equivalently, (3ai) holds. This proves (3b). If $X=O_2$, then Theorem \ref{pstC}(3) holds if and only if $\nu_2(n+2)>\nu_2(n)$, i.e., $n\equiv 2$ (mod 4). This proves (3c). The minimum PST time follows from Theorem \ref{pstC}. 
\end{proof}

\begin{remark}
If $X\neq K_2$ is unweighted, then we may drop the condition $m\in\sigma_{uv}^-(L(X))$ in Theorem \ref{pstL}(1i).
\end{remark}

The following result is immediate from Theorem \ref{pstL}(3). 

\begin{corollary}
\label{pstLc3}
Let  $m\geq 2$ and $\phi(L(X),t)\in\mathbb{Z}[t]$. If $m+n$ is odd, then $X\vee Y$ has no Laplacian perfect state transfer. Moreover, if $m$ or $n$ is odd and $X$ is either disconnected or admits Laplacian perfect state transfer, then $X$ does not admit Laplacian perfect state transfer in $X\vee Y$.
\end{corollary}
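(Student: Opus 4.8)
The plan is to derive Corollary \ref{pstLc3} directly from the structure of Theorem \ref{pstL}, treating the two claims separately. First, observe that PST between $u$ and $v$ in $X \vee Y$ requires, via Theorem \ref{pstL}(1)--(2), that $u,v$ be Laplacian strongly cospectral in $X$ (or $X = O_2$) and that all eigenvalues in $\sigma_u(L(X))$ be integers; together with Lemma \ref{supp}(1) this forces all elements of $\sigma_u(L) = \mathcal{S} \cup \mathcal{R}$ to be integers, and in particular $m+n \in \sigma_u(L)$ always (it is the index for the eigenvector $\mathbf{u}$). So if $m+n$ is odd, I would examine condition (3) of Theorem \ref{pstL}: in each subcase (3ai)--(3aiii) and (3b), (3c), the value $\nu_2(\lambda + n)$ or $\nu_2(n+2)$ must exceed some $\nu_2(\mu+n)$, and crucially $m+n$ appears among the ``$\lambda$'' values (taking $\lambda = m$, since $m \in \sigma_{uv}^+(L(X)) \cup \{m\}$ trivially, and then $\lambda + n = m+n$). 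Since $m+n$ is odd, $\nu_2(m+n) = 0$, which cannot be strictly greater than $\nu_2(\mu+n) \ge 0$. Hence no PST. (For $X = O_2$, $m+n$ odd means $n$ is odd, so $n + 2$ is odd and $\nu_2(n+2) = 0 \not> \nu_2(n)$, again blocking (3c).) This handles the first sentence.

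For the second sentence, suppose $m$ or $n$ is odd and $X$ is either disconnected or admits Laplacian PST between $u$ and $v$; I must show PST does not occur in $X \vee Y$. The disconnected case: by Theorem \ref{pstL}(3b) we need condition (3ai), which demands $\nu_2(n) > \nu_2(\mu)$ for all $\mu \in \sigma_{uv}^-(L(X))$, and since $X$ disconnected forces $n \in \sigma_{uv}^+(L(X))$ (Theorem \ref{sc}(1b)) we also need — taking $\lambda = n$ — that $\nu_2(n) > \nu_2(\mu) = \nu_2(\theta)$. But also $0 \in \sigma_{uv}^+(L(X))$ is excluded by the ``$0 \neq \lambda$'' restriction, so the binding constraint comes from $m \in \sigma_{uv}^+(L(X)) \cup \{m\}$: we need $\nu_2(m) > \nu_2(\mu)$, and combined with $\nu_2(\mu + n) = \nu_2(\mu)$ (which holds once $\nu_2(n) > \nu_2(\mu)$) the strict inequality $\nu_2(m+n) = \nu_2(m) > \nu_2(\mu)$... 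Actually the cleanest route: if $n$ is odd then $\nu_2(n) = 0$, so $\nu_2(n) > \nu_2(\mu)$ is impossible since $\nu_2(\mu) \ge 0$ — so (3ai) fails and there is no PST. If instead $m$ is odd (and $n$ even), then in subcase (3ai) we'd need $\nu_2(\lambda) > \nu_2(\mu)$ for $\lambda = m$, i.e. $0 > \nu_2(\mu)$, impossible; in (3aii) we'd need $\nu_2(\lambda) = \nu_2(n)$ for $\lambda = m$, i.e. $0 = \nu_2(n)$, contradicting $n$ even; in (3aiii) we'd need $\nu_2(\lambda) = \nu_2(n)$ likewise. So (3ai)--(3aiii) all fail, hence no PST.

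For the ``$X$ admits Laplacian PST'' branch (with $X$ connected, else we are in the previous case): PST in $X$ between $u$ and $v$ gives, via Theorem \ref{pstC}(3) applied inside $X$, the relation $\nu_2(\lambda - \eta) > \nu_2(\lambda - \mu) = \nu_2(\lambda - \theta)$ for $\lambda, \eta \in \sigma_{uv}^+(L(X))$, $\mu, \theta \in \sigma_{uv}^-(L(X))$; in particular, since $0 \in \sigma_{uv}^+(L(X))$ (it corresponds to $\mathbf{1}$, and $u,v$ in the same component with the all-ones eigenvector satisfying $\mathbf{1}^T\mathbf{e}_u = \mathbf{1}^T\mathbf{e}_v$), this pins down the common value $\nu_2(\mu) = \nu_2(\mu - 0)$ for all $\mu \in \sigma_{uv}^-(L(X))$ and forces $\nu_2(\lambda) > \nu_2(\mu)$ for every nonzero $\lambda \in \sigma_{uv}^+(L(X))$. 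Now I claim this is incompatible with PST in $X \vee Y$ when $m$ or $n$ is odd. The obstruction mirrors the disconnected case: checking the three subcases (3ai)--(3aiii) of Theorem \ref{pstL} with the extra information that $\nu_2(\mu)$ is a fixed constant $c := \nu_2(\mu)$ for all $\mu \in \sigma_{uv}^-(L(X))$ and $\nu_2(\lambda) > c$ for all nonzero $\lambda \in \sigma_{uv}^+(L(X))$, one finds that (3ai) forces $\nu_2(n) > c \ge 0$; (3aii) forces $c > \nu_2(n) = \nu_2(m)$; (3aiii) forces $\nu_2(m) = \nu_2(n) = c$ — and in every case one of $\nu_2(m), \nu_2(n)$ is pinned to equal $c$ or forced $> 0$ in a way that contradicts ``$m$ or $n$ odd'' once we track that $m \in \sigma_{uv}^+(L(X)) \cup \{m\}$ plays the role of a $\lambda$. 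I expect the main obstacle to be the bookkeeping in this last branch: one must carefully verify that in each of (3ai), (3aii), (3aiii) the oddness of $m$ (equivalently $\nu_2(m) = 0$) or of $n$ ($\nu_2(n)=0$) collides with an equality or strict inequality that the subcase imposes, using that $0$ and $m$ both lie in $\sigma_{uv}^+(L(X)) \cup \{m\}$ while every $\nu_2$ of a negative-type eigenvalue is the same constant $c \ge 0$. Once this case analysis is laid out the rest is immediate, and the minimum-time assertion is not needed here.
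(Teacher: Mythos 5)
Your route is the intended one: the paper offers no argument beyond ``immediate from Theorem \ref{pstL}(3)'', and unwinding the $\nu_2$-conditions exactly as you do is what is meant. Your first sentence is handled correctly (taking $\lambda=m$, each of (3ai)--(3aiii) forces either $\nu_2(m),\nu_2(n)>\nu_2(\mu)\geq 0$ or $\nu_2(m)=\nu_2(n)$, both impossible when $m+n$ is odd, and for $X=O_2$ the condition $n\equiv 2 \pmod 4$ fails since $n$ is odd), and so is the disconnected branch of the second sentence, where only (3ai) is in play. Two cosmetic points there: for $X=O_2$ the operative condition is (3c), not (3b)+(3ai), so you should either treat it separately or observe that $m=2$ with ``$m$ or $n$ odd'' forces $m+n$ odd, so that case is already covered by your first sentence; and your parenthetical ``$n\in\sigma_{uv}^+(L(X))$ (Theorem \ref{sc}(1b))'' should read $n\in\sigma_{uv}^+(L)$ of the join, via Theorem \ref{sc}(1a).

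The genuine gap is in the branch you defer as ``bookkeeping'' ($X$ connected with PST between $u$ and $v$): the collisions you assert do not, as stated, produce contradictions. From PST in $X$ you correctly get a common value $c=\nu_2(\mu)$ on $\sigma_{uv}^-(L(X))$ and $\nu_2(\lambda)>c$ for every \emph{nonzero} $\lambda\in\sigma_{uv}^+(L(X))$; but then (3aii) ``forcing $c>\nu_2(m)=\nu_2(n)$'' and (3aiii) ``forcing $\nu_2(m)=\nu_2(n)=c$'' are not by themselves incompatible with ``$m$ or $n$ odd'' --- they merely force $m,n$ both odd with $c>0$ (resp.\ $c=0$). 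The actual clash in (3aii)/(3aiii) comes from a nonzero $\lambda_0\in\sigma_{uv}^+(L(X))$: the subcase demands $\nu_2(\lambda_0)=\nu_2(n)$ together with $\nu_2(\mu)\geq\nu_2(\lambda_0)$, while PST in $X$ gives $\nu_2(\lambda_0)>\nu_2(\mu)$. Hence you must also rule out $\sigma_{uv}^+(L(X))=\{0\}$, which you never address: in that case summing the spectral decomposition gives $\mathbf{e}_u+\mathbf{e}_v=2E_0\mathbf{e}_u=\tfrac{2}{m}\mathbf{1}$, so $m=2$, and then $m$ is even, $n$ is odd, and $\nu_2(m)\neq\nu_2(n)$ kills (3aii)/(3aiii) while (3ai) needs $n$ even. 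With that supplement (and (3ai) forcing $m,n$ both even, as you note) the branch closes; without it the case analysis you promise is not complete.
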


\subsection{Adjacency case}
\label{PGSTpstA}

\begin{theorem}
\label{pstA}
Let $m\geq 2$, $k,\ell\in\mathbb{Z}$ and $\phi(A(X),t)\in\mathbb{Z}[t]$. Vertices $u$ and $v$ in $X$ admit adjacency perfect state transfer in $X\vee Y$ if and only if all of the following conditions hold.
\begin{enumerate}
\item Either (i) $u$ and $v$ are adjacency strongly cospectral in $X$ and $\lambda^-\notin\sigma_{uv}^-(L(X))$ or (ii) $X=O_2(k)$.
\item One of the following conditions hold.
\begin{enumerate}
\item The eigenvalues in $\sigma_u(A(X))$ are all integers and $D$ is a perfect square.
\item $X$ is connected, each $\lambda_j\in\sigma_u(A(X))\backslash\{k\}$ is of the form $\frac{1}{2}(k+\ell+b_j\sqrt{\Delta})$ and $\lambda^{\pm}=\frac{1}{2}(k+\ell\pm b\sqrt{\Delta})$, where $b_j,b,\Delta$ are integers with $b>b_j$ for each $j$ and $\Delta>1$ is square-free. % In this case, $k\notin\sigma_{u}(A)$ 
\end{enumerate}
\item For all $\lambda,\eta\in \sigma_{uv}^+(A(X))\backslash\{k\}\cup \mathcal{R}$ and $\mu,\theta\in\sigma_{uv}^-(A(X))$, $\nu_2\left(\frac{\lambda-\eta}{\sqrt{\Delta}}\right)>\nu_2\left(\frac{\lambda-\mu}{\sqrt{\Delta}}\right)=\nu_2\left(\frac{\lambda-\theta}{\sqrt{\Delta}}\right)$, where $\Delta=1$ whenever (2a) holds, $\mathcal{R}$ is given in Theorem \ref{sc}(2).
\end{enumerate}
Further, $\tau_{X\vee Y}= \frac{\pi}{g\sqrt{\Delta}}$, where $g= \operatorname{gcd}\left(\mathcal{T}\right)$, $\mathcal{T}=\left\{\frac{\lambda_0-\lambda}{\sqrt{\Delta}}: \lambda\in\sigma_{u}(A)\backslash\{k\} \cup \mathcal{R}\right\}$ and $\lambda_0\in\sigma_{uv}^+(A)\backslash\{k\} \cup \mathcal{R}$ is fixed.
\end{theorem}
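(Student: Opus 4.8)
The plan is to mirror the structure of the proof of Theorem~\ref{pstL}, combining Coutinho's PST characterization (Theorem~\ref{pstC}) with the adjacency analogues of our periodicity and strong cospectrality results. First I would invoke Theorem~\ref{sc}(2) to see that condition~(1) is exactly the statement that $u$ and $v$ are adjacency strongly cospectral in $X\vee Y$ (handling the two subcases $X$ connected/disconnected via the description of $\mathcal{R}$, and the exceptional case $X=O_2(k)$ separately through Theorem~\ref{sc}(2b)). Next I would show that condition~(2) is equivalent to the eigenvalue-integrality/quadratic-integer dichotomy of Theorem~\ref{pstC}(1) applied to $\sigma_u(A)$: here one uses Lemma~\ref{supp}(2), which gives $\sigma_u(A)=\sigma_u(A(X))\backslash\{k\}\cup\mathcal{R}$, together with Corollary~\ref{joinper1}. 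Case~(2a) corresponds to $\sigma_u(A)\subseteq\mathbb{Z}$ (so $\Delta=1$), which by Corollary~\ref{joinper1}(1a)/(2) forces the eigenvalues of $\sigma_u(A(X))$ to be integral and $D$ a perfect square so that $\lambda^{\pm}\in\mathbb{Z}$; case~(2b) corresponds to the genuinely quadratic case of Corollary~\ref{joinper1}(1b), where $a=k+\ell$ is forced by the Perron--Frobenius argument already used there, and $\sqrt{D}=b\sqrt{\Delta}$ with $b>b_j$.

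Having disposed of conditions (1) and (2), PST between $u$ and $v$ in $X\vee Y$ holds if and only if Theorem~\ref{pstC}(3) holds for the walk on $X\vee Y$. Using the description of $\sigma_{uv}^{\pm}(A)$ from Theorem~\ref{sc}(2a), namely $\sigma_{uv}^+(A)=\sigma_{uv}^+(A(X))\backslash\{k\}\cup\mathcal{R}$ and $\sigma_{uv}^-(A)=\sigma_{uv}^-(A(X))$, the condition~(3) of Theorem~\ref{pstC} becomes precisely the displayed inequality in condition~(3) of the present theorem, with $\lambda,\eta$ ranging over $\sigma_{uv}^+(A(X))\backslash\{k\}\cup\mathcal{R}$ and $\mu,\theta$ over $\sigma_{uv}^-(A(X))$. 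In the exceptional case $X=O_2(k)$ one plugs in $\sigma_{uv}^+(A)=\{\lambda^{\pm}\}$ and $\sigma_{uv}^-(A)=\{k\}$ from Theorem~\ref{sc}(2b) and checks the $\nu_2$ condition directly, which should reduce to a congruence condition analogous to $n\equiv 2\pmod 4$ in the Laplacian case. Finally, the formula for $\tau_{X\vee Y}$ is read off from the last sentence of Theorem~\ref{pstC}, taking $\lambda_0\in\sigma_{uv}^+(A)\backslash\{k\}\cup\mathcal{R}$ and dividing the pairwise differences by $\sqrt{\Delta}$.

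I expect the main obstacle to be the same bookkeeping that complicated the Laplacian proof: the element $k$ must be excised from the supports (it is not in $\sigma_u(A)$ even though it may be an eigenvalue of $A(X)$), and one must carefully track whether $k$ and the $\lambda^{\pm}$ land in the ``$+$'' or ``$-$'' part. When $X$ is disconnected, $k\in\sigma_u(A(X))$ and $k\in\sigma_{uv}^+(A(X))$, which shifts the ranges of $\lambda,\eta$ in condition~(3); this parallels the $n\in\sigma_{uv}^+(L(X))$ subtlety in the disconnected Laplacian case. Unlike the Laplacian proof, though, one does not get the clean three-way $\nu_2$ split into (3ai)--(3aiii), because we are not adding a fixed shift to all eigenvalues (the shift by $n$ in the Laplacian case is what produced those subcases); instead the statement is left in the compact form of a single $\nu_2$ inequality, so the adjacency proof is actually shorter here. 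The one genuinely new point to verify is that condition~(2b) really does imply $\lambda^{\pm}$ have the form $\frac{1}{2}(k+\ell\pm b\sqrt{\Delta})$ with the strict inequality $b>b_j$ — but this is exactly Corollary~\ref{joinper1}(1b), so it can be cited rather than reproved.
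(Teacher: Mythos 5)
Your proposal is correct and follows exactly the paper's route: the paper proves Theorem \ref{pstA} by citing Corollary \ref{joinper1}(2), Theorem \ref{sc}(2) and Theorem \ref{pstC}, which is precisely the combination (strong cospectrality for condition (1), the periodicity dichotomy for condition (2), and Coutinho's condition (3) plus the minimum-time formula) that you assemble, just spelled out in more detail than the paper's one-line proof.
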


\begin{proof}
This is a direct consequence of Theorems \ref{joinper1}(2), \ref{sc}(2) and \ref{pstC}.
\end{proof}

The next result can be viewed as an analogue of Corollary \ref{pstLc3}. 

\begin{corollary}
\label{pstAc3}
If $k+\ell$ is odd in Theorem \ref{pstA}, then $X\vee Y$ has no adjacency perfect state transfer.
\end{corollary}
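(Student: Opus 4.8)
The statement to prove is Corollary \ref{pstAc3}: if $k+\ell$ is odd in Theorem \ref{pstA}, then $X\vee Y$ has no adjacency perfect state transfer.

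The plan is to derive a contradiction from the assumption that adjacency PST occurs between some vertices $u$ and $v$ of $X$ in $X\vee Y$, using condition (3) of Theorem \ref{pstA}. First I would recall that $\lambda^{\pm}=\tfrac12(k+\ell\pm\sqrt{D})$ are both elements of $\sigma_u(A)$ (by Lemma \ref{supp}(2)) and indeed both lie in $\sigma_{uv}^+(A)\backslash\{k\}\cup\mathcal R$ or split between $\sigma_{uv}^{\pm}$. The key observation is that $\lambda^+-\lambda^-=\sqrt{D}$, so in the notation of Theorem \ref{pstA}(3) the quantity $\frac{\lambda^+-\lambda^-}{\sqrt{\Delta}}=\frac{\sqrt{D}}{\sqrt{\Delta}}$ must be an integer; call it $b$ (consistent with condition (2b), and with $\Delta=1$, $b=\sqrt D$, under condition (2a)). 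Then $\lambda^+=\tfrac12\big((k+\ell)+b\sqrt{\Delta}\big)$, and since $\lambda^+$ is an algebraic integer of the quadratic form in condition (1)/(2), we must have $k+\ell$ and $b$ of the same parity for $\lambda^+$ to be an algebraic integer — but more to the point, the parities of $k+\ell$ and $b$ together control $\nu_2$ of the relevant differences.

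The heart of the argument is the 2-adic valuation condition. Fix $\lambda_0\in\sigma_{uv}^+(A)$. For each $\lambda\in\sigma_{uv}^{\pm}(A)$, write $\lambda=\tfrac12(k+\ell+b_\lambda\sqrt{\Delta})$ (with $b_\lambda$ an integer; under (2a), $\sqrt{\Delta}=1$ and $\lambda$ itself is an integer, in which case write $\lambda = \tfrac12((k+\ell) + (2\lambda-(k+\ell)))$ and note $2\lambda-(k+\ell)$ is odd since $k+\ell$ is odd, so every such $b_\lambda$ is odd). Then $\frac{\lambda-\mu}{\sqrt{\Delta}}=\tfrac12(b_\lambda-b_\mu)$. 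Since all the $b_\lambda$ are odd, every difference $b_\lambda-b_\mu$ is even, so $\frac{\lambda-\mu}{\sqrt{\Delta}}=\tfrac12(b_\lambda-b_\mu)$ is an integer — good — but now I claim the strict inequality in Theorem \ref{pstA}(3), $\nu_2\big(\frac{\lambda-\eta}{\sqrt{\Delta}}\big)>\nu_2\big(\frac{\lambda-\mu}{\sqrt{\Delta}}\big)$, cannot hold consistently. The cleaner route: apply the general fact (used already in the proof of Theorem \ref{pstC}-type results) that PST forces all eigenvalues in $\sigma_u$ to be mutually congruent mod an appropriate $2$-power structure; concretely, $\nu_2\big(\frac{\lambda-\lambda_0}{\sqrt{\Delta}}\big) = \nu_2\big(\tfrac12(b_\lambda - b_{\lambda_0})\big)$, and the equality-of-valuations requirement across $\sigma_{uv}^-$ together with the strict-greater requirement across $\sigma_{uv}^+$ will be incompatible with the fact that $b_{\lambda^+}$ and $b_{\lambda^-}$ have the same parity as $k+\ell$, hence are odd, forcing $\nu_2(\tfrac12(b_{\lambda^+}-b_{\lambda^-})) \geq 1$ while some other forced difference has valuation $0$. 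I would isolate the two eigenvalues $\lambda^{\pm}$ (which are always present in $\sigma_u(A)$) and one eigenvalue from $\sigma_u(A(X))$, and show that the parity bookkeeping among these three already violates condition (3).

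The main obstacle I anticipate is the case analysis over whether $\lambda^+$ and $\lambda^-$ land in $\sigma_{uv}^+$ or $\sigma_{uv}^-$, and handling the two sub-cases (2a) (all integers, $D$ a perfect square, $\Delta=1$) versus (2b) (genuine quadratic integers) uniformly. A uniform treatment should be possible by always writing eigenvalues as $\tfrac12(k+\ell + b\sqrt{\Delta})$ with $\Delta\geq 1$, observing that condition (1)/(2) forces every $b$ in $\sigma_u(A)$ to satisfy $b\equiv k+\ell\pmod 2$ (so that $\lambda$ is an algebraic integer), hence $b$ is odd whenever $k+\ell$ is odd. Then every difference of two such eigenvalues, divided by $\sqrt{\Delta}$, equals $\tfrac12(b-b')$ with $b-b'$ even, so $\nu_2$ of each such quantity is $\geq 1$ — but then $g=\gcd$ of all these is even while... — actually the sharper contradiction: we also have $k\in\sigma_u(A(X))$ available when $X$ is disconnected, and $k$ corresponds to $b=2k-(k+\ell)=k-\ell$; if this is in the support the parity argument pins things down. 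I would present the clean version: assume PST, extract that $\sqrt D/\sqrt\Delta =: b$ is a positive integer with $b\equiv k+\ell\pmod 2$ (forced by $\lambda^+$ being an algebraic integer), so $b$ is odd; then $\frac{\lambda^+-\lambda^-}{\sqrt\Delta}=b$ is odd, i.e. $\nu_2\big(\frac{\lambda^+-\lambda^-}{\sqrt\Delta}\big)=0$; since $0$ is the minimum possible value of $\nu_2$ over integers, condition (3) — which requires $\nu_2$ of same-sign differences to strictly exceed $\nu_2$ of opposite-sign differences — cannot be satisfied unless $\lambda^+$ and $\lambda^-$ have the same sign (both in $\sigma_{uv}^+$ or both in $\sigma_{uv}^-$); but then take $\mu\in\sigma_{uv}^-$ (nonempty since strong cospectrality with $u\neq v$ requires $\sigma_{uv}^-\neq\emptyset$) and $\lambda^+,\lambda^-$ on the $+$ side: $\nu_2\big(\frac{\lambda^+-\lambda^-}{\sqrt\Delta}\big)=0$ must strictly exceed $\nu_2\big(\frac{\lambda^+-\mu}{\sqrt\Delta}\big)\geq 0$, impossible. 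The symmetric case with both on the $-$ side: pick $\lambda\in\sigma_{uv}^+$ and use $\nu_2\big(\frac{\lambda-\lambda^+}{\sqrt\Delta}\big)=\nu_2\big(\frac{\lambda-\lambda^-}{\sqrt\Delta}\big)$ forced by (3), but $(\lambda-\lambda^+)-(\lambda-\lambda^-)=\lambda^--\lambda^+=-b\sqrt\Delta$ with $b$ odd, so these two valuations differ, contradiction. This finishes it, and the only real care needed is confirming $b\equiv k+\ell\pmod 2$ and that $\sigma_{uv}^-\neq\emptyset$, both of which follow from the hypotheses of Theorem \ref{pstA} plus $u\neq v$.
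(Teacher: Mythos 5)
Your core argument is the same as the paper's, which disposes of this in one line: since $\lambda^++\lambda^-=k+\ell$ is odd, $\lambda^+-\lambda^-$ (normalized by $\sqrt{\Delta}$) is odd, and because Theorem \ref{pstA}(3) places both $\lambda^{\pm}$ (via $\mathcal{R}$, by Theorem \ref{sc}(2)) in the ``plus'' set, the required strict inequality $\nu_2\bigl(\frac{\lambda^+-\lambda^-}{\sqrt{\Delta}}\bigr)>\nu_2\bigl(\frac{\lambda^+-\mu}{\sqrt{\Delta}}\bigr)\geq 0$ fails -- exactly your ``both on the $+$ side'' case. Your additional cases are unnecessary: $\lambda^{\pm}$ can never lie in $\sigma_{uv}^-(A)$ or be split between the two classes, since $\sigma_{uv}^-(A)=\sigma_{uv}^-(A(X))$ and condition (1) excludes $\lambda^-$ from it; moreover, the sentence in which you dismiss the split case is stated backwards (the valuation-$0$ observation rules out the \emph{same}-sign placement, not the split one), though this slip is harmless precisely because the split case cannot occur under Theorem \ref{pstA}.
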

\begin{proof}
If $k+\ell=\lambda^++\lambda^+$ is odd, then so is $\lambda^+-\lambda^-$, a contradiction to Theorem \ref{pstA}(3).
\end{proof}

\section{PST in $X$ and $X\vee Y$}\label{secPst1}

We now utilize our results from the previous section to determine when PST is preserved in the join.

\subsection{Laplacian case}

\begin{corollary}
\label{pstLc2}
Let $m\geq 2$ and $\phi(L(X),t)\in\mathbb{Z}[t]$. Suppose Laplacian perfect state transfer occurs between vertices $u$ and $v$ in $X$ and $\tau_X=\frac{\pi}{h}$. Then it occurs between $u$ and $v$ in $X\vee Y$ if and only if $m\notin\sigma_{uv}^-(L(X))$ and $m,n\equiv 0$ (mod $2^{\alpha}$), where $\alpha>\nu_2(h)$. In this case, $\tau_{X\vee Y}=\frac{\pi}{g}$, where $g=\operatorname{gcd}(\mathcal{T})$, $\mathcal{T}$ is given in Theorem \ref{pstL} and $\nu_2(g)=\nu_2(h)$. Moreover, if $X$ is connected, then $\frac{h}{g}$ is rational. Otherwise, $\frac{h}{g}=\operatorname{lcm}(h/h_1,h/h_2)$ is an odd integer where $h_1=\operatorname{gcd}(m,h)$ and $h_2=\operatorname{gcd}(n,h)$.
\end{corollary}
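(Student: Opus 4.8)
The plan is to derive everything from Theorem~\ref{pstL} combined with the minimum-PST-time formula from Theorem~\ref{pstC}. First I would record the hypothesis: since $u,v$ admit Laplacian PST in $X$ with $\tau_X=\frac{\pi}{h}$, Theorem~\ref{pstC} applies with $\Delta=1$, so all eigenvalues in $\sigma_u(L(X))$ are integers, $u$ and $v$ are strongly cospectral in $X$, and for all $\lambda,\eta\in\sigma_{uv}^+(L(X))$, $\mu,\theta\in\sigma_{uv}^-(L(X))$ we have $\nu_2(\lambda-\eta)>\nu_2(\lambda-\mu)=\nu_2(\lambda-\theta)$; moreover $h=\operatorname{gcd}\{\lambda_0-\lambda:\lambda\in\sigma_u(L(X))\}$ for a fixed $\lambda_0\in\sigma_{uv}^+(L(X))$. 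Since $0\in\sigma_{uv}^+(L(X))$ (the graph is connected on the component of $u$), we may take $\lambda_0=0$, so $h=\operatorname{gcd}(\sigma_u(L(X)))$ and in particular $h\mid m$ when $m\in\sigma_u(L(X))$.

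Next I would feed this into Theorem~\ref{pstL}. Conditions (1)(i) and (2) of Theorem~\ref{pstL} reduce exactly to ``$m\notin\sigma_{uv}^-(L(X))$'' plus the already-known integrality, so the content is in condition (3). I would split into the connected and disconnected cases. In the connected case, all three of (3ai), (3aii), (3aiii) must be examined, but using the known PST relation in $X$ — namely $\nu_2(\mu)$ is a constant value, say $\nu_2(\mu)=\nu_2(h)$ for every $\mu\in\sigma_{uv}^-(L(X))$, while $\nu_2(\lambda)>\nu_2(h)$ for every nonzero $\lambda\in\sigma_{uv}^+(L(X))$ — one checks that (3aii) and (3aiii) force $\nu_2(n)=\nu_2(h)$, which is impossible when we simultaneously need $\nu_2(m)>\nu_2(h)$ (because $m$ joins the ``$+$'' side). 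So the surviving possibility is (3ai): $\nu_2(\lambda)>\nu_2(\mu)=\nu_2(\theta)$ for all nonzero $\lambda\in\sigma_{uv}^+(L(X))\cup\{m\}$ and $\nu_2(n)>\nu_2(\mu)$. Since $\nu_2(\mu)=\nu_2(h)$, this says precisely $\nu_2(m)>\nu_2(h)$ and $\nu_2(n)>\nu_2(h)$, i.e. $m,n\equiv 0\pmod{2^{\alpha}}$ with $\alpha=\nu_2(h)+1>\nu_2(h)$ — equivalently any $\alpha>\nu_2(h)$ works. For the disconnected case Theorem~\ref{pstL}(3b) says (3ai) is again the criterion, giving the same conclusion, with $u,v$ in a common component of $X$.

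For the minimum PST time I would invoke the formula in Theorem~\ref{pstL}: $\tau_{X\vee Y}=\frac{\pi}{g}$ with $g=\operatorname{gcd}(\mathcal{T})$, $\mathcal{T}=\mathcal{S}\cup\mathcal{R}$ where $\mathcal{S}=\{\lambda+n:\lambda\in\sigma_u(L(X))\setminus\{0\}\}$ and $\mathcal{R}=\{0,m+n\}$ (connected) or $\{0,m+n,n\}$ (disconnected). In the connected case $g=\gcd\big(\{\lambda+n:0\ne\lambda\in\sigma_u(L(X))\}\cup\{m+n\}\big)$; I would show $\nu_2(g)=\nu_2(h)$ by noting that every element of $\mathcal{S}$ is of the form $\lambda+n$ with $\nu_2(n)>\nu_2(h)$ and $\nu_2(\lambda)\ge\nu_2(h)$, the $2$-adic valuation being exactly $\nu_2(h)$ when $\lambda$ ranges over $\sigma_{uv}^-(L(X))$ and $>\nu_2(h)$ otherwise, while $\nu_2(m+n)>\nu_2(h)$; taking the gcd preserves the minimum valuation $\nu_2(h)$. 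Rationality of $h/g$ in the connected case is automatic since both are positive integers. In the disconnected case, $n\in\sigma_{uv}^+(L(X))$ and one computes $g=\gcd(\mathcal{T})$ directly: here $\sigma_u(L)$ contains $n$ and $m+n$, and I would use Theorem~\ref{minperiod}-style gcd manipulations ($\operatorname{lcm}$ of the reduced denominators) to identify $h/g=\operatorname{lcm}(h/h_1,h/h_2)$ with $h_1=\gcd(m,h)$, $h_2=\gcd(n,h)$, and verify it is an odd integer (odd because $\nu_2(g)=\nu_2(h)$ forces $\nu_2(h/g)=0$; an integer because $h_1,h_2\mid h$). The main obstacle I anticipate is the bookkeeping in this last disconnected-case computation — correctly translating the ``$\operatorname{gcd}$ of a shifted spectrum'' into the $\operatorname{lcm}$ expression and confirming it is a genuine odd integer rather than merely rational — which requires care but no new ideas beyond the lcm/gcd identities already used in the proof of Theorem~\ref{perjoin}.
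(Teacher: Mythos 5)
Your proposal is correct and follows essentially the same route as the paper: reduce via Theorem \ref{pstC} to the facts that $\sigma_u(L(X))\subseteq\mathbb{Z}$ and $\nu_2(\lambda)>\nu_2(\mu)=\nu_2(h)$ for $0\neq\lambda\in\sigma_{uv}^+(L(X))$ and $\mu\in\sigma_{uv}^-(L(X))$, conclude that only conditions (1i) and (3ai) of Theorem \ref{pstL} can survive, translate these into $m\notin\sigma_{uv}^-(L(X))$ and $m,n\equiv 0\ (\mathrm{mod}\ 2^{\alpha})$ with $\alpha>\nu_2(h)$, and read the minimum time off Theorem \ref{pstL} (where you in fact supply more detail on $\nu_2(g)=\nu_2(h)$ and the disconnected $\operatorname{lcm}$ formula than the paper's one-line appeal). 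The only blemish is your stated reason for discarding (3aii)--(3aiii): condition (3aii) would force $\nu_2(n)<\nu_2(h)$ rather than $\nu_2(n)=\nu_2(h)$, and the clean contradiction in both cases is simply that they would require $\nu_2(\mu)\geq\nu_2(\lambda)$ for a nonzero $\lambda\in\sigma_{uv}^+(L(X))$, contradicting the PST inequality you already recorded --- a step the paper's own proof likewise leaves implicit.
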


\begin{proof}
If PST occurs in $X$, then $u$ and $v$ are strongly cospectral in $X$ and Theorem \ref{pstL}(2) holds. Hence, PST occurs in $X\vee Y$ if and only if (1i) and (3ai) of Theorem \ref{pstL} holds. Equivalently, $m\notin\sigma_{uv}^-(L(X))$, $\nu_2(m)>\nu_2(\mu)$ and $\nu_2(n)>\nu_2(\mu)$ for all $\mu\in\sigma_{uv}^-(L(X))$. As $\nu_2(h)=\nu_2(\mu)=\nu_2(\mu+n)$ for all $\mu\in\sigma_{uv}^-(L(X))$, these conditions are equivalent to $m,n\equiv 0$ (mod $2^{\alpha}$), where $\alpha>\nu_2(h)$. The minimum PST time follows from Theorem \ref{pstL}.
\end{proof}

\begin{corollary}
\label{pstLc6}
Let $X$ be an unweighted graph on $m=2^p$ vertices. Then conditions (3aii-3aiii) of Theorem \ref{pstL} do not hold. If we add that Laplacian perfect state transfer occurs between vertices $u$ and $v$ in $X$, then:
\begin{enumerate}
\item If $p=1$, then Laplacian perfect state transfer does not occur between $u$ and $v$ in $K_2\vee Y$ for any $Y$.
\item If $p\geq 2$, then Laplacian perfect state transfer occurs between $u$ and $v$ in $X\vee Y$ if and only if $n\equiv 0$ (mod $2^{\alpha}$), where $\alpha>\nu_2(g)$ and $\frac{\pi}{h}$ is the minimum PST time between $u$ and $v$ in $X$.
\end{enumerate}
\end{corollary}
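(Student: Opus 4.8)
The plan is to derive everything from Corollary~\ref{pstLc2} and Theorem~\ref{pstL}(3), specialized to the case $m = 2^p$. First I would observe that the hypothesis ``Laplacian PST occurs between $u$ and $v$ in $X$'' already forces $u$ and $v$ to be Laplacian strongly cospectral in $X$ with $\phi(L(X),t)\in\mathbb{Z}[t]$, so by Theorem~\ref{pstL}(2) all eigenvalues in $\sigma_u(L(X))$ are integers, and by Theorem~\ref{pstC}(3) applied to $X$ we have $\nu_2(\lambda-\eta) > \nu_2(\lambda-\mu) = \nu_2(\lambda-\theta)$ for all $\lambda,\eta\in\sigma_{uv}^+(L(X))$ and $\mu,\theta\in\sigma_{uv}^-(L(X))$; in particular, since $0\in\sigma_{uv}^+(L(X))$, every $\mu\in\sigma_{uv}^-(L(X))$ satisfies $\nu_2(\mu) = \nu_2(h)$ where $\tfrac{\pi}{h}=\tau_X$, and indeed $h = \gcd\{\lambda : \lambda\in\sigma_u(L(X))\}$ up to the standard normalization.

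Next I would rule out conditions (3aii) and (3aiii) of Theorem~\ref{pstL}. The key point is that $m = 2^p$ is the unique eigenvalue of $L(X\vee Y)$ appearing as ``$m$'' in that statement, and it plays the role of an element of $\sigma_{uv}^+(L(X))\cup\{m\}$; since $\nu_2(m) = p$ is large and $\nu_2(\mu) = \nu_2(h)$ is fixed for all $\mu\in\sigma_{uv}^-(L(X))$, I would check the two failing branches directly. Condition (3aii) demands $\nu_2(\mu) > \nu_2(\lambda) = \nu_2(n)$ for \emph{all} admissible $\lambda$, including $\lambda = m = 2^p$; but $0\in\sigma_{uv}^+(L(X))$ is also admissible (with the convention that the condition ranges over nonzero $\lambda$), and more to the point the presence of $m$ with $\nu_2(m)=p$ together with the requirement that all such $\lambda$ have the same $\nu_2$ forces $\nu_2(n) = p$ while simultaneously $\nu_2(\mu) > p$; then one still needs PST in $X$, i.e.\ $\nu_2(\lambda') > \nu_2(\mu)$ for $\lambda'\in\sigma_{uv}^+(L(X))$, contradicting $0$ being in $\sigma_{uv}^+(L(X))$ since $\nu_2(0)$ is effectively $+\infty$ but any \emph{finite} positive eigenvalue in $\sigma_{uv}^+(L(X))$ would then need $\nu_2 > \nu_2(\mu) > p$, which collides with $\nu_2(m)=p$ once $m$ is thrown into the $+$ side in the join. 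For (3aiii) one needs $\nu_2(\lambda)=\nu_2(\mu)=\nu_2(n)$ for all admissible $\lambda$, again including $\lambda=m=2^p$, forcing $\nu_2(n)=p=\nu_2(\mu)=\nu_2(h)$; but then in $X$ itself, PST requires $\nu_2$ of the positive-support eigenvalues to strictly exceed $\nu_2(\mu)=p$, so $m=2^p$ cannot be consistently assigned, a contradiction. Hence only branch (3ai) can hold, which is exactly the content of Corollary~\ref{pstLc2}.

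With (3aii)--(3aiii) eliminated, I would invoke Corollary~\ref{pstLc2}: since $m = 2^p$, the congruence $m\equiv 0\pmod{2^\alpha}$ with $\alpha > \nu_2(h)$ holds automatically precisely when $p \geq \nu_2(h)+1$, i.e.\ when $\nu_2(h) \leq p-1$; and also we need $m\notin\sigma_{uv}^-(L(X))$, which holds since $m = 2^p$ lies in $\sigma_{uv}^+$ whenever it is in the support at all (it corresponds to the $\mathbf{u}$-type eigenvector, symmetric in $u,v$), and if it is not in the support there is nothing to check. For $p = 1$: $m = 2$, so $\nu_2(h) \leq 0$, meaning $h$ is odd; but $0\in\sigma_{uv}^+(L(X))$ and PST in $X=K_2$ (the only unweighted graph on $2$ vertices with PST) gives $2\in\sigma_{uv}^-(L(K_2))$, so by Theorem~\ref{pstL}(1i) PST fails in $K_2\vee Y$ for every $Y$ — this is statement~(1), and I note it also follows directly since $m=2\in\sigma_{uv}^-(L(K_2))$. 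For $p\geq 2$: the remaining requirement from Corollary~\ref{pstLc2} is just $n\equiv 0\pmod{2^\alpha}$ with $\alpha > \nu_2(h)$, and since $\nu_2(g) = \nu_2(h)$ by Corollary~\ref{pstLc2} one may equivalently phrase this with $\nu_2(g)$ in place of $\nu_2(h)$, matching the claimed statement~(2).

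The main obstacle I anticipate is the bookkeeping in the second paragraph: correctly tracking which eigenvalues land in $\sigma_{uv}^+$ versus $\sigma_{uv}^-$ in the join (via Theorem~\ref{sc}(1a)) and verifying that throwing $m = 2^p$ onto the $+$-side of the join's support is genuinely incompatible with branches (3aii) and (3aiii) given that PST already holds in $X$ — in particular handling the $\lambda = 0$ case carefully (it is excluded from the ``$0\neq\lambda$'' ranges but is present in $\sigma_{uv}^+$, and the $2$-adic valuation of $0$ must be interpreted as larger than any finite valuation). Once that case analysis is pinned down, the rest is a direct translation of Corollary~\ref{pstLc2} to $m=2^p$, with the $p=1$ versus $p\geq 2$ split coming entirely from whether $\nu_2(m) = p$ can exceed $\nu_2(h)$.
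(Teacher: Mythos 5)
Your high-level route (reduce everything to Corollary \ref{pstLc2}, settle $p=1$ via $2\in\sigma_{uv}^-(L(K_2))$, settle $p\geq 2$ via $m\notin\sigma_{uv}^-(L(X))$) is the paper's, but three steps have genuine gaps, and all three trace back to the one fact your argument never invokes: every Laplacian eigenvalue of an unweighted graph on $m$ vertices is at most $m$, so for $m=2^p$ every $\mu\in\sigma_{uv}^-(L(X))$ has $\nu_2(\mu)\leq p$, with equality only if $\mu=m$. First, the elimination of (3aii)--(3aiii): the corollary's opening assertion does not assume PST in $X$, yet your contradictions are built from the PST inequality $\nu_2(\lambda')>\nu_2(\mu)$ for nonzero $\lambda'\in\sigma_{uv}^+(L(X))$. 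That collapses when $\sigma_{uv}^+(L(X))=\{0\}$ (e.g.\ $X=K_2\cup O_{2^p-2}$, which does admit PST between the $K_2$-vertices): there is no $\lambda'$ to ``collide'' with $m$, and the aside that $\nu_2(0)$ is ``effectively $+\infty$'' contradicts nothing, since $0$ is excluded from the range of $\lambda$. The paper's one-line argument needs none of this: $\mu\leq m=2^p$ gives $\nu_2(\mu)\leq p$, so (3aii)'s demand $\nu_2(\mu)>\nu_2(m)=p$ is impossible, and (3aiii)'s demand $\nu_2(\mu)=p$ forces $\mu=m$, after which the strict inequality in (3aiii) fails upon taking $\lambda=\mu=m$.

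Second, your justification that $m\notin\sigma_{uv}^-(L(X))$ --- ``$m=2^p$ lies in $\sigma_{uv}^+$ whenever it is in the support at all (it corresponds to the $\mathbf{u}$-type eigenvector)'' --- is false as a general principle: for $K_2$ the eigenvector for the eigenvalue $m=2$ is $\mathbf{e}_u-\mathbf{e}_v$, so $m\in\sigma_{uv}^-(L(K_2))$, which you yourself use two sentences later. The fact actually needed (stated in the paper just before Corollary \ref{scj}) is that for unweighted $X$, $m\in\sigma_{uv}^-(L(X))$ if and only if $X=K_2$; citing it yields $m\notin\sigma_{uv}^-(L(X))$ exactly when $p\geq 2$. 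Third, in part (2) you drop the condition $m\equiv 0\pmod{2^{\alpha}}$ with $\alpha>\nu_2(h)$ from Corollary \ref{pstLc2}: you correctly note this requires $\nu_2(h)\leq p-1$, but you never prove that bound. It again follows from the eigenvalue bound: $\nu_2(h)=\nu_2(\mu)$ for $\mu\in\sigma_{uv}^-(L(X))$, and since $m\notin\sigma_{uv}^-(L(X))$ we have $\mu<2^p$, hence $\nu_2(\mu)\leq p-1$. With these three repairs your argument closes and becomes essentially the paper's proof.
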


\begin{proof}
The assumption implies that each eigenvalue $\lambda$ of $L(X)$ is at most $m$. Thus, $\nu_2(\lambda)<\nu_2(m)$, and so conditions (3aii-iii) of  Theorem \ref{pstC} do not hold. Now, from Corollary \ref{pstLc2}, PST occurs between $u$ and $v$ in $X\vee Y$ if and only if $m\notin\sigma_{uv}^-(L(X))$ and $n\equiv 0$ (mod $2^{\alpha}$), where $\alpha>\nu_2(h)$. Thus, if $p=1$, then Example \ref{scjoin}(1) implies that $u$ and $v$ are not strongly cospectral in $K_2\vee Y$, and so (1) holds. However, if $p\geq 2$, then $m\notin\sigma_{uv}^-(L(X))$, and so (2) holds.
\end{proof}

In Corollaries \ref{pstLc2} and \ref{pstLc6}(2), the $Y$ with the least number of vertices and edges that works is $Y=O_{n}$, where $n=2^{\nu_2(g)+1}$. Next, combining Corollary \ref{pstLc2} and Theorem \ref{pstL}(3ai) also yields the following result.

\begin{corollary}
\label{pstLc4}
Let $u$ and $v$ be strongly cospectral vertices in $X\vee Y$ and $\sigma_u(L(X))\subseteq\mathbb{Z}$. If condition (3ai)  of Theorem \ref{pstL} holds, then Laplacian perfect state transfer occurs between $u$ and $v$ in $X$ and $X\vee Y$.
\end{corollary}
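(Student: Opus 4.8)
The plan is to verify the three conditions of Theorem \ref{pstC} (equivalently, the three numbered conditions of Theorem \ref{pstL}) for the pair $u,v$ in $X$, using the hypotheses that they are strongly cospectral in $X\vee Y$, that $\sigma_u(L(X))\subseteq\mathbb{Z}$, and that condition (3ai) of Theorem \ref{pstL} holds. First I would extract from strong cospectrality of $u$ and $v$ in $X\vee Y$, via Theorem \ref{sc}(1), that either $u,v$ are strongly cospectral in $X$ with $m\notin\sigma_{uv}^-(L(X))$, or $X=O_2$; in the latter degenerate case $X$ trivially has Laplacian PST (between its two vertices at time $\pi$, since $\sigma_{uv}^+(L(O_2))=\{0\}$, $\sigma_{uv}^-(L(O_2))=\emptyset$ — or rather we note $O_2$ is the relevant base case separately), so assume the former. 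Combined with $\sigma_u(L(X))\subseteq\mathbb{Z}$ this already gives conditions (1) and (2) of Theorem \ref{pstL} for PST \emph{in} $X$ (taking $\Delta=1$ in Theorem \ref{pstC}), and Theorem \ref{pstL}(1i) and (2) for PST \emph{in} $X\vee Y$.

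The substantive step is condition (3): I must show that (3ai) holding — namely $\nu_2(\lambda)>\nu_2(\mu)=\nu_2(\theta)$ and $\nu_2(n)>\nu_2(\mu)$ for all $0\neq\lambda\in\sigma_{uv}^+(L(X))\cup\{m\}$ and all $\mu,\theta\in\sigma_{uv}^-(L(X))$ — implies both the requirement of Theorem \ref{pstL}(3) (for PST in $X\vee Y$, which is immediate since (3ai) is literally one of the listed alternatives there) and the requirement of Theorem \ref{pstC}(3) applied directly to $u,v$ in $X$. For the latter: Theorem \ref{pstC}(3) for $X$ with $\Delta=1$ asks that $\nu_2(\lambda-\eta)>\nu_2(\lambda-\mu)=\nu_2(\lambda-\theta)$ for all $\lambda,\eta\in\sigma_{uv}^+(L(X))$ and $\mu,\theta\in\sigma_{uv}^-(L(X))$. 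Here the key arithmetic observation is that $0\in\sigma_{uv}^+(L(X))$ always (the all-ones vector, constant on the connected component containing $u$ and $v$, lies in $\sigma_{uv}^+$), so I may take $\lambda_0=0$ as the distinguished element; then $\nu_2(\lambda-\eta)\geq\min\{\nu_2(\lambda),\nu_2(\eta)\}$ combined with the uniform strict inequality $\nu_2(\lambda)>\nu_2(\mu)$ from (3ai) forces $\nu_2(\lambda-\eta)>\nu_2(\mu)=\nu_2(0-\mu)$, while $\nu_2(0-\mu)=\nu_2(\mu)=\nu_2(\theta)=\nu_2(0-\theta)$ handles the equality. The only care needed is the elementary fact that if $\nu_2(x)<\nu_2(y)$ then $\nu_2(x-y)=\nu_2(x)$ (non-cancellation), applied with $x$ a value in $\sigma^-_{uv}$-differences and $y$ in $\sigma^+_{uv}$, and that $\min$ over a finite set of distinct $2$-adic valuations is attained; these I would invoke without belaboring.

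I would then conclude by Theorem \ref{pstC} that Laplacian PST occurs between $u$ and $v$ in $X$, and by Theorem \ref{pstL} (conditions (1i),(2),(3ai) all verified, noting $X\neq O_2$ falls under (3b) if $X$ is disconnected, which is still covered by ``(ai) above holds'') that it occurs between $u$ and $v$ in $X\vee Y$ as well. The main obstacle I anticipate is purely bookkeeping: making sure the distinguished eigenvalue $0$ is legitimately in $\sigma_{uv}^+(L(X))$ (it is, since $u,v$ lie in a common component by strong cospectrality) and that shifting all of $\sigma_u(L)$ by $n$ in passing to the join does not disturb the $\nu_2$-comparisons, which is exactly the content already unpacked in the proof of Theorem \ref{pstL}(3a) and can be cited rather than redone.
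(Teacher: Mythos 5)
Your argument is correct and matches the paper's (compressed) justification: the paper simply cites Corollary \ref{pstLc2} together with Theorem \ref{pstL}(3ai), and your explicit $2$-adic computation --- using $0\in\sigma_{uv}^+(L(X))$ (legitimate since strong cospectrality puts $u,v$ in one component) and the non-cancellation fact $\nu_2(\lambda-\mu)=\nu_2(\mu)$ whenever $\nu_2(\lambda)>\nu_2(\mu)$ --- is exactly what makes Theorem \ref{pstC}(3) hold in $X$ with $\Delta=1$, while (3ai) itself supplies condition (3) of Theorem \ref{pstL} for $X\vee Y$ in both the connected case (3a) and the disconnected case (3b). One correction to a side remark: your claim that $X=O_2$ ``trivially has Laplacian PST'' is false, since $U_L(O_2,t)=I$ for all $t$; that case is simply vacuous here because condition (3ai) presupposes that $\sigma_{uv}^{\pm}(L(X))$ are defined, i.e.\ that $u$ and $v$ are strongly cospectral in $X$ itself, so setting it aside is fine but it should not be justified by asserting PST in $O_2$.
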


From Theorem \ref{pstL}(3a), it is evident that PST in $X\vee Y$ does not necessarily yield PST in $X$. Thus, the above result can be viewed as a characterization of the equivalence of PST in a graph and its join.

Using Corollary \ref{pstLc2}, we provide an infinite family of graphs such that $X$ and $X\vee Y$ both have PST. Recall that a graph $X$ is \textit{Hadamard diagonalizable} if $L(X)$ is diagonalizable by a Hadamard matrix \cite{Barik2011}.

\begin{example}
\label{had}
Let $X$ be a Hadamard diagonalizable graph on $m\geq 4$ vertices with PST between $u$ and $v$. From \cite{Johnston2017}, $\sigma_{uv}^+(L(X))$ and $\sigma_{uv}^-(L(X))$ respectively consist of integers $\lambda\equiv 0$ and $\mu\equiv 2$ (mod 4). Since $m\equiv 0$ (mod 4), Corollary \ref{pstLc2} yields PST between $u$ and $v$ in $X\vee Y$ if and only if $n\equiv 0$ (mod 4). %The 
\end{example}

Our next result shows that for a graph $X$ with PST between $u$ and $v$, it is possible for PST to fail between $u$ and $v$ in $X\vee Y$ for any graph $Y$, yet occur in $(X\cup Z)\vee Y$ for some graph $Z$.

\begin{corollary}
\label{pstLdisc}
Let $m\geq 2$ and $\phi(L(X),t)\in\mathbb{Z}[t]$. If Laplacian perfect state transfer occurs between vertices $u$ and $v$ in $X$ with $m\in\sigma_{uv}^-(L(X))$, then the following hold.
\begin{enumerate}
\item For any graph $Y$, Laplacian perfect state transfer does not occur between $u$ and $v$ in $X\vee Y$.
\item Let $Z$ be a graph on $r$ vertices such that $m+r\notin\sigma_{uv}^-(L(X))$. Then Laplacian perfect state transfer occurs between $u$ and $v$ in $(X\cup Z)\vee Y$ if and only if $\nu_2(n)>\nu_2(m)=\nu_2(r)$.
\end{enumerate}
\end{corollary}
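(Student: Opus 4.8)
\textbf{Proof plan for Corollary~\ref{pstLdisc}.}

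The plan is to apply Corollary~\ref{pstLc2} and Theorem~\ref{pstL} directly, handling the two parts as corollaries of the structural facts already established. For part (1), note that the hypothesis $m\in\sigma_{uv}^-(L(X))$ violates the condition $m\notin\sigma_{uv}^-(L(X))$ that appears in Theorem~\ref{pstL}(1i) (equivalently, in Corollary~\ref{pstLc2}); since PST in $X$ guarantees that $u$ and $v$ are strongly cospectral in $X$ and that $\sigma_u(L(X))\subseteq\mathbb{Z}$ (via Theorem~\ref{pstC}(1) applied to $X$, using $0\in\sigma_u(L(X))$), the only way PST could occur in $X\vee Y$ would be through Theorem~\ref{pstL}(1), which fails. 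Hence no $Y$ works, proving (1). The key observation here is simply that $\sigma_{uv}^-(L(X))$ does not change when we pass to $X\vee Y$ (Theorem~\ref{sc}(1a)), so the obstruction at $m$ persists regardless of $Y$.

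For part (2), I would first record that $X\cup Z$ is disconnected, so $u$ and $v$ lie in the same connected component (namely the one inside $X$), and strong cospectrality of $u,v$ in $X\cup Z$ is inherited from $X$ (a vector supported on the $X$-component that witnesses strong cospectrality in $X$, extended by zero, still works; eigenvalues coming from $Z$ project $\be_u,\be_v$ to zero). Crucially, $\sigma_{uv}^-(L(X\cup Z)) = \sigma_{uv}^-(L(X))$, and the number of vertices of $X\cup Z$ is $m+r$, so the role played by ``$m$'' in Theorem~\ref{pstL} is now played by ``$m+r$''. The hypothesis $m+r\notin\sigma_{uv}^-(L(X))=\sigma_{uv}^-(L(X\cup Z))$ is exactly condition (1i) of Theorem~\ref{pstL} for the graph $(X\cup Z)\vee Y$. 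Since $\sigma_u(L(X\cup Z))\subseteq\mathbb{Z}$ (its spectrum is the union of the integer spectra of $L(X)$ and $L(Z)$—here one uses that PST in $X$ forces $\sigma_u(L(X))\subseteq\mathbb Z$, and $L(Z)$ is Laplacian of a simple graph hence has integer... ), condition (2) holds on the relevant support. Then Corollary~\ref{pstLc2} applied to the disconnected graph $X\cup Z$ tells us PST occurs between $u$ and $v$ in $(X\cup Z)\vee Y$ if and only if both $m+r$ and $n$ are divisible by $2^\alpha$ with $\alpha>\nu_2(h)$, where $\tfrac{\pi}{h}=\tau_X$ (note $\tau_{X}=\tau_{X\cup Z}$ since the $u,v$ support and its $\pm$-partition are unchanged). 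Because $m\in\sigma_{uv}^-(L(X))$ implies $\nu_2(m)=\nu_2(h)$ (all elements of $\sigma_{uv}^-(L(X))$ share the same $2$-adic valuation $\nu_2(h)$ by Theorem~\ref{pstC}(3) applied in $X$), the requirement $\nu_2(m+r)>\nu_2(h)=\nu_2(m)$ forces $\nu_2(r)=\nu_2(m)$, and conversely $\nu_2(r)=\nu_2(m)$ gives $\nu_2(m+r)>\nu_2(m)$. Combining, the divisibility conditions become exactly $\nu_2(n)>\nu_2(m)=\nu_2(r)$, which is the claimed equivalence.

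The main subtlety I would be careful about is the bookkeeping around which quantity plays the role of ``$m$'' after passing to $X\cup Z$, and verifying that $\tau_X$, the set $\sigma_{uv}^-(L(X))$, and its common $2$-adic valuation are genuinely unchanged by adjoining the disjoint piece $Z$ — this is where the statement ``$m+r\notin\sigma_{uv}^-(L(X))$'' rather than ``$m+r\notin\sigma_{uv}^-(L(X\cup Z))$'' in the hypothesis is doing its work, and one must confirm these two sets coincide. Everything else is a direct substitution into Corollary~\ref{pstLc2} together with the elementary $2$-adic fact that $\nu_2(a+b)>\min(\nu_2(a),\nu_2(b))$ precisely when $\nu_2(a)=\nu_2(b)$. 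No new PST machinery is needed beyond what is already in the excerpt.
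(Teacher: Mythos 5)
Your proposal matches the paper's proof: part (1) is precisely the observation that $m\in\sigma_{uv}^-(L(X))$ destroys strong cospectrality of $u$ and $v$ in $X\vee Y$ (Theorem \ref{sc}(1), equivalently the failure of Theorem \ref{pstL}(1i), with $X=O_2$ excluded since PST in $X$ forces strong cospectrality there), and part (2) is the same application of Corollary \ref{pstLc2} to the disconnected graph $X\cup Z$, using that $\sigma_{uv}^{\pm}$ and $\tau_X$ are unchanged, that $m\in\sigma_{uv}^-(L(X))$ gives $\nu_2(h)=\nu_2(m)$, and that $\nu_2(m+r)>\nu_2(m)$ holds exactly when $\nu_2(r)=\nu_2(m)$. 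The only cosmetic point is your unfinished aside about the spectrum of $L(Z)$, which is not needed: $\sigma_u(L(X\cup Z))=\sigma_u(L(X))$ because every eigenprojection of $L(X\cup Z)$ coming from the $Z$-block annihilates $\mathbf{e}_u$, which is also how the paper implicitly treats it.
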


\begin{proof}
Since $m\in\sigma_{uv}^-(L(X))$, $u$ and $v$ are not strongly cospectral in $X\vee Y$ by Theorem \ref{sc}(1). We now prove (2). Since $m+r\notin\sigma_{uv}^-(L(X))$, Corollary \ref{pstLc2} implies that PST occurs between $u$ and $v$ in $X\vee Y$ if and only if $\nu_2(m+r)>\nu_2(m)$ and $\nu_2(n)>\nu_2(m)$ for all $\mu\in\sigma_{uv}^-(L(X))$. Equivalently, $\nu_2(n)>\nu_2(m)=\nu_2(r)$.
%As $m\in\sigma_{uv}^-(L(X))$, we have $\nu_2(m)=\nu_2(\mu)$ for all $\mu\in\sigma_{uv}^-(L(X))$.
\end{proof}

In Corollary \ref{pstLdisc}(2), the $Z$ and $Y$ with the least number of vertices and edges that work are $Z=O_r$ and $Y=O_n$ respectively, where $r=2^{\nu_2(m)}$ and $r=2^{\nu_2(m)+1}$. %We illustrate Corollary \ref{pstLdisc} using a simple example.

\begin{example}
%Now, let $Z$ be a graph on $r$ vertices.
Let $X=K_2$ with vertices $u$ and $v$. From Corollary \ref{pstLc6}(1), PST between $u$ and $v$ fails in $K_2\vee Y$ for any graph $Y$. Since $r+2\notin\sigma_{uv}^-(L(K_2))$ for all $r\geq 1$, Corollary \ref{pstLdisc} yields PST between $u$ and $v$ in $(K_2\cup Z)\vee Y$ if and only if $\nu_2(n)>\nu_2(r)=1$. In particular, we may take $Z=O_2$ and $Y=O_4$.% $(K_2\cup O_2)\vee O_4$.
\end{example}

The following remark can be used to generate infinite families of weighted graphs $X$ whereby PST occurs between $u$ and $v$ in $X$ and $(X\cup Z)\vee Y$ for some graph $Z$, but not in $X\vee Y$ for any graph $Y$.

\begin{remark}
\label{remLpst}
Let $m\geq 2$ and $\phi(L(X),t)\in\mathbb{Z}[t]$. Suppose vertices $u$ and $v$ admit Laplacian PST in $X$, and suppose $\mu$ divides $m$ for some $\mu\in\sigma_{uv}^-(L(X))$. Let $X'$ be the graph obtained by scaling all edge weights of $X$ by a factor of $\frac{m}{\mu}$. Then $m\in\sigma_{uv}^-(L(X'))$, and so PST does not occur between $u$ and $v$ in $X'\vee Y$ for any graph $Y$ by Corollary \ref{pstLdisc}(1). But since $\frac{m}{\mu}$ is an integer, we get $\phi(L(X'),t)\in\mathbb{Z}[t]$, and so we may apply Corollary \ref{pstLdisc}(2) to obtain a graph $Z$ such that PST occurs between $u$ and $v$ in $(X'\cup Z)\vee Y$.
\end{remark}

\subsection{Adjacency case}

In contrast with the Laplacian case in Corollary \ref{pstLdisc}(2), it turns out that if $X$ is regular and has adjacency PST between two vertices, then we can always choose a regular graph $Y$ such that adjacency PST is preserved in $X\vee Y$.

\begin{corollary}
\label{adjpstinX}
Let $m\geq 2$, $k,\ell\in\mathbb{Z}$ and $\phi(A(X),t)\in\mathbb{Z}[t]$. Suppose adjacency perfect state transfer occurs between vertices $u$ and $v$ in $X$ and $\tau_X=\frac{\pi}{h}$. Then it occurs between them in $X\vee Y$ if and only if 
\begin{enumerate}
\item $n=\frac{s(k-\ell+s)}{m}$ for some integer $s$ such that $m$ divides $s(k-\ell+s)$ and $\ell-s\notin\sigma_{uv}^{-}(A(X))$, and
\item for all $\mu\in \sigma_{uv}^{-}(A(X))$, $\nu_2(s)>\nu_2(k-\mu)$ and $\nu_2(\ell-k)>\nu_2(k-\mu)$.
\end{enumerate}
In this case, $\tau_{X\vee Y}=\frac{\pi}{g}$, where $g=\operatorname{gcd}(\mathcal{T})$, $\mathcal{T}$ is given in Theorem \ref{pstA}, $\lambda_0\in\sigma_{uv}^+(A)\backslash\{k\} \cup \mathcal{R}$ is fixed and $\nu_2(g)=\nu_2(h)$. Moreover, if $X$ is connected, $\frac{h}{g}$ is rational. Otherwise, $\frac{h}{g}=\operatorname{lcm}(h/h_1,h/h_2)$ is an odd integer and $h_1=\operatorname{gcd}(\lambda_0-\lambda^+,h)$ and $h_2=\operatorname{gcd}(\lambda_0-\lambda^-,h)$.
\end{corollary}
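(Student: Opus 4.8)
The plan is to derive Corollary \ref{adjpstinX} as a specialization of Theorem \ref{pstA} under the hypothesis that adjacency PST already occurs between $u$ and $v$ in $X$. Since PST in $X$ forces $u$ and $v$ to be adjacency strongly cospectral in $X$ and forces $\sigma_u(A(X))$ to satisfy the conditions in Theorem \ref{pstC}(1), the remaining content is to unpack exactly which of conditions (1)--(3) of Theorem \ref{pstA} survive, and then to translate condition (2a) into the stated arithmetic condition on $n$ and $s$. First I would observe that when $X$ has adjacency PST, every eigenvalue in $\sigma_u(A(X))$ is already an algebraic integer of the required shape; since $X$ is regular with $\phi(A(X),t)\in\mathbb{Z}[t]$ and $k\in\sigma_u(A(X))$ is the Perron eigenvalue, Lemma \ref{per} applied to $\sigma_u(A(X))$ shows that in fact $\sigma_u(A(X))\subseteq\mathbb{Z}$ (the quadratic-integer alternative is excluded because an integer $k$ cannot be of the form $\frac12(a+b_j\sqrt{\Delta})$ with $\Delta>1$ square-free unless all $b_j=0$). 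Hence Theorem \ref{pstA}(2b) is impossible here, and Theorem \ref{pstA}(2a) reduces to ``$D$ is a perfect square''; we will have $\Delta=1$ throughout.

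Next I would make the substitution $D=(k-\ell)^2+4mn$ and solve ``$D$ is a perfect square'' for $n$. Writing $\sqrt{D}=k-\ell+2s$ for a suitable sign and a nonnegative integer offset $s$ (this is possible since $D\equiv(k-\ell)^2\pmod 4$ forces $\sqrt D$ and $k-\ell$ to have the same parity, so $\sqrt D = k-\ell+2s$ for some integer $s$), squaring gives $4mn = 4s(k-\ell+s)$, i.e.\ $n=\frac{s(k-\ell+s)}{m}$, with the divisibility $m\mid s(k-\ell+s)$ built in. With this parametrization one computes $\lambda^{+}=\frac12(k+\ell+\sqrt D)=k+s$ and $\lambda^{-}=\frac12(k+\ell-\sqrt D)=\ell-s$. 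Then the side condition $\lambda^{-}\notin\sigma_{uv}^{-}(A(X))$ from Theorem \ref{pstA}(1i) becomes precisely $\ell-s\notin\sigma_{uv}^{-}(A(X))$, which is part (1) of the corollary. (I would also note, as in the surrounding remarks, that condition (1i) rather than (1ii) is the relevant branch since $m\ge 2$ and PST in $X$ presupposes strong cospectrality in $X$; the case $X=O_2(k)$ is the degenerate $m=2$ instance already covered.)

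The core computation is then condition (3) of Theorem \ref{pstA} with $\Delta=1$. Using Theorem \ref{sc}(2a) (valid since $\lambda^{\pm}\notin\sigma_{uv}^{-}(A(X))$), the relevant plus-set is $\sigma_{uv}^{+}(A(X))\backslash\{k\}\cup\mathcal R$ with $\mathcal R=\{\lambda^{\pm}\}=\{k+s,\ell-s\}$ if $X$ is connected and $\mathcal R=\{k+s,\ell-s,k\}$ otherwise, while the minus-set is unchanged, $\sigma_{uv}^{-}(A(X))=\sigma_{uv}^{-}(A)$. Now PST in $X$ itself already guarantees (via Theorem \ref{pstC}(3) applied in $X$, with $\Delta=1$ there too) that $\nu_2(\lambda-\eta)>\nu_2(\lambda-\mu)=\nu_2(\lambda-\theta)$ for $\lambda,\eta\in\sigma_{uv}^{+}(A(X))$ and $\mu,\theta\in\sigma_{uv}^{-}(A(X))$; the only genuinely new inequalities come from comparing the new plus-eigenvalues $\lambda^{+}=k+s$, $\lambda^{-}=\ell-s$ (and, in the disconnected case, $k$) against the minus-eigenvalues. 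Picking the reference eigenvalue $k\in\sigma_{uv}^{+}(A(X))$ and subtracting, the differences $\lambda^{+}-k=s$, $k-\lambda^{-}=k-\ell+s$ and $\lambda^{+}-\lambda^{-}=k-\ell+2s$ appear; the required $\nu_2(\lambda^{+}-k)>\nu_2(k-\mu)$ gives $\nu_2(s)>\nu_2(k-\mu)$, and combining the various $2$-adic relations (using that $\nu_2(k-\ell+2s)$ and $\nu_2(k-\ell)$ and $\nu_2(s)$ are linked whenever two of them are unequal) collapses everything to $\nu_2(s)>\nu_2(k-\mu)$ and $\nu_2(\ell-k)>\nu_2(k-\mu)$ for all $\mu\in\sigma_{uv}^{-}(A(X))$, which is part (2). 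The disconnected case adds $k$ itself to the plus-set, but $\nu_2(k-\mu)$ is already controlled by the previous conditions, so nothing new is imposed beyond forcing $u,v$ into one component, exactly as the $X$-PST hypothesis already does.

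Finally, the minimum-time statement: Theorem \ref{pstA} gives $\tau_{X\vee Y}=\frac{\pi}{g}$ with $g=\gcd(\mathcal T)$ over $\mathcal T=\{\lambda_0-\lambda:\lambda\in\sigma_u(A)\backslash\{k\}\cup\mathcal R\}$ and $\Delta=1$, and Theorem \ref{pstC} gives $\tau_X=\frac{\pi}{h}$ with $h=\gcd$ over the differences within $\sigma_u(A(X))$. The set of differences defining $g$ is that defining $h$ together with $\lambda_0-\lambda^{+}$ and $\lambda_0-\lambda^{-}$ (i.e.\ $\lambda_0-(k+s)$ and $\lambda_0-(\ell-s)$), and in the disconnected case also $\lambda_0-k$. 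So $g\mid h$, $h/g=\operatorname{lcm}(h/h_1,h/h_2)$ with $h_1=\gcd(\lambda_0-\lambda^{+},h)$, $h_2=\gcd(\lambda_0-\lambda^{-},h)$ in the disconnected case (where $k$ contributes a difference lying in the $h$-lattice so it does not enter), and $\nu_2(g)=\nu_2(h)$ follows because the new differences $s$ and $k-\ell+s$ have $2$-adic valuation $>\nu_2(k-\mu)=\nu_2(h)$ by part (2) while $h$ has a generator of valuation exactly $\nu_2(h)$; in the connected case one only gets rationality of $h/g$. The main obstacle I anticipate is the bookkeeping in the $2$-adic reduction of Theorem \ref{pstA}(3) to the clean pair of inequalities in part (2): one must carefully track which of $\nu_2(s)$, $\nu_2(k-\ell)$, $\nu_2(k-\ell+2s)$ coincide and which are strictly ordered, and verify that the PST-in-$X$ hypothesis renders all the ``old'' comparisons automatic so that genuinely only the two displayed conditions remain.
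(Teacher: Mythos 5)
Your overall route is the same as the paper's: specialize Theorem \ref{pstA} to the situation where $X$ already has PST, parametrize ``$D$ is a perfect square'' by writing $\sqrt{D}=k-\ell+2s$ so that $n=\frac{s(k-\ell+s)}{m}$, $\lambda^{+}=k+s$, $\lambda^{-}=\ell-s$, identify condition (1) with Theorem \ref{pstA}(1i)+(2a), and then reduce the $2$-adic condition Theorem \ref{pstA}(3) using the fact that Theorem \ref{pstC}(3) already holds in $X$. However, the step that carries essentially all the content of the corollary --- the equivalence of Theorem \ref{pstA}(3) with the two displayed inequalities $\nu_2(s)>\nu_2(k-\mu)$ and $\nu_2(\ell-k)>\nu_2(k-\mu)$ --- is only asserted (``combining the various $2$-adic relations \dots collapses everything''), and you yourself flag it as the anticipated obstacle. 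The paper does real work exactly here: it writes $\lambda^{+}-\lambda=s+(k-\lambda)$ and $\lambda^{-}-\lambda=(\ell-s-k)+(k-\lambda)$ for $\lambda\in\sigma_u(A(X))\setminus\{k\}$, deduces $\nu_2(\lambda^{+}-\lambda)\geq\min\{\nu_2(s),\nu_2(k-\lambda)\}$ (and the analogue for $\lambda^{-}$), and then runs a case analysis on whether $\nu_2(s)>\nu_2(k-\mu)$ or $\nu_2(s)\leq\nu_2(k-\mu)$ to show the $\lambda^{+}$-inequalities hold iff $\nu_2(s)>\nu_2(k-\mu)$ for all $\mu$, and then, given that, that the $\lambda^{-}$-inequalities hold iff $\nu_2(\ell-k)>\nu_2(k-\mu)$ (note $\ell-k=(\ell-s-k)+s$, so passing from $\nu_2(\ell-s-k)$ to $\nu_2(\ell-k)$ itself uses the first condition). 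Without some version of this argument in both directions, your proof is incomplete at its core.

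A second, smaller issue: your reason for discarding Theorem \ref{pstA}(2b) is not valid as stated. In case (ii) of Lemma \ref{per} the eigenvalues share a common $a$ but have individual $b_j$'s, so the integrality of $k$ only forces the $b$-coefficient of $k$ itself to vanish (and $a=2k$), not ``all $b_j=0$''; hence the quadratic alternative is not ruled out by that observation alone. The paper's proof simply restricts to the branch (a) ``$D$ is a perfect square and $\lambda^{-}\notin\sigma_{uv}^{-}(A(X))$'' without comment, so you are no worse off in substance, but if you want to justify the reduction you need a different argument (e.g.\ compatibility of the common ``$a$'' in Theorem \ref{pstA}(2b), which is $k+\ell$, with the one forced by PST in $X$), rather than the parenthetical claim you give. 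The remaining items (the identification of $\mathcal{R}$, the disconnected case contributing only $k$, and the minimum-time bookkeeping with $\nu_2(g)=\nu_2(h)$) are consistent with the paper.
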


\begin{proof}
From Theorem \ref{pstA}, it follows that PST occurs between $u$ and $v$ in $X\vee Y$ if and only if
\begin{itemize}
\item[(a)] $D$ is a perfect square and $\lambda^-\notin\sigma_{uv}^-(A(X))$, and
\item[(b)]  $\nu_2(\lambda^{\pm}-\eta)>\nu_2(\lambda^{\pm}-\mu)=\nu_2(\lambda^{\pm}-\theta)$ for all $\eta\in \sigma_{uv}^+(A(X))\backslash\{k\}\cup \mathcal{S}$ and $\mu,\theta\in\sigma_{uv}^-(A(X))$.
\end{itemize}
By the choice of $s$ in (1), $D=(k-\ell)^2+4mn=(k-\ell+2s)^2$. Since $\lambda^{\pm}=\frac{1}{2}(k+\ell\pm\sqrt{D})$, we get $\lambda^{+}=k+s$ and $\lambda^{-}=\ell-s$. Thus, (1) ensures that condition (a) above holds. Using the fact that $X$ has PST between $u$ and $v$, Theorem \ref{pstC}(3) yields
\begin{equation}
\label{pstX}
\nu_2(k-\eta)>\nu_2(k-\mu)=\nu_2(k-\theta)
\end{equation}
for all $\eta\in \sigma_{uv}^+(A(X))\backslash\{k\}$ and $\mu,\theta\in\sigma_{uv}^-(A(X))$. Observe that for each $\lambda\in\sigma_{u}(A(X))\backslash\{k\}$, we can write $\lambda^+-\lambda=(\lambda^+-k)+(k-\lambda)=s+(k-\lambda)$. Similarly, $\lambda^--\lambda=(\ell-s-k)+(k-\lambda)$. Thus,
\begin{equation}
\label{pstX1}
\nu_2(\lambda^+-\lambda)\geq \min\{\nu_2(s),\nu_2(k-\lambda)\}\quad \text{and}\quad \nu_2(\lambda^--\lambda)\geq \min\{\nu_2(\ell-s-k),\nu_2(k-\lambda)\}.
\end{equation}
We have the following cases.
\begin{itemize}
\item Suppose $\nu_2(s)>\nu_2(k-\mu)$ for some $\mu\in\sigma_{uv}^-(A(X))$. Then (\ref{pstX}) implies that $\nu_2(s)>\nu_2(k-\mu)$ for all $\mu\in\sigma_{uv}^-(A(X))$, and combining this with (\ref{pstX1}) yields $\nu_2(\lambda^+-\mu)=\nu_2(\lambda^+-\theta)=\nu_2(k-\mu)$. Moreover, (\ref{pstX}) and (\ref{pstX1}) give us $\nu_2(\lambda^+-\eta)>\nu_2(k-\mu)$ for all $\mu\in\sigma_{uv}^-(A(X))$, from which we get $\nu_2(\lambda^+-\eta)>\nu_2(\lambda^+-\mu)=\nu_2(\lambda^+-\theta)$ for all $\eta\in \sigma_{uv}^+(A(X))\backslash\{k\}$ and $\mu,\theta\in\sigma_{uv}^-(A(X))$.
\item Suppose $\nu_2(s)\leq \nu_2(k-\mu)$ for some $\mu\in\sigma_{uv}^-(A(X))$. Then one can check using (\ref{pstX}) and (\ref{pstX1}) that $\nu_2(\lambda^+-\eta)=\nu_2(s)\leq \nu_2(\lambda^+-\mu)$, a violation of condition (b).
\end{itemize}
Consequently,  $\nu_2(\lambda^+-\eta)>\nu_2(\lambda^+-\mu)=\nu_2(\lambda^+-\theta)$ for all $\eta\in \sigma_{uv}^+(A(X))\backslash\{k\}$ and $\mu,\theta\in\sigma_{uv}^-(A(X))$ if and only if $\nu_2(s)>\nu_2(k-\mu)$ for all $\mu\in\sigma_{uv}^-(A(X))$. Using this fact and again arguing similarly as in the above two cases using (\ref{pstX}) and (\ref{pstX1}) yields $\nu_2(\lambda^+-\eta)>\nu_2(\lambda^+-\mu)=\nu_2(\lambda^+-\theta)$ for all $\eta\in \sigma_{uv}^+(A(X))\backslash\{k\}$ and $\mu,\theta\in\sigma_{uv}^-(A(X))$ if and only if $\nu_2(\ell-k)>\nu_2(k-\mu)$ for all $\mu\in\sigma_{uv}^-(A(X))$. Thus, the assumption in (2) is equivalent to condition (b) above, which yields the desired conclusion.
\end{proof}

The following example illustrates Corollary \ref{adjpstinX}.

\begin{example}
\label{hyp1}
Let $X$ be a simple integer-weighted Hadamard diagonalizable graph on $m\geq 4$ vertices with PST between vertices $u$ and $v$. Then $m\equiv 0$ (mod 4) and $X$ is $p$-regular for some integer $p$. From Example \ref{had}, $\sigma_{uv}^+(A(X))$ consists of integers $p-\lambda$, where $\lambda\equiv 0$ (mod 4), while $\sigma_{uv}^-(A(X))$ consists of integers  $p-\mu$ such that $\mu\equiv 2$ (mod 4). Invoking Corollary \ref{adjpstinX}(2), we get PST between $u$ and $v$ in $X\vee Y$ if and only if (i) $n=\frac{s(p-\ell+s)}{m}$, where $s$ is an integer such that $m$ divides $s(p-\ell+s)$ and $\ell-s\notin\sigma_{uv}^{-}(A(X))$ and (ii) $\nu_2(s)\geq 2$ and $\nu_2(\ell-p)\geq 2$. In particular, if $X=Q_p$, then we may take $\ell=p$ and $s=2^{(p+1)/2}a$ (i.e., $n=2a$) if $p$ is odd and $s=2^{p/2}a$ (i.e., $n=a$) otherwise, to get PST in both $Q_p$ and $Q_p\vee Y$.
\end{example}

\begin{remark}
If $X=K_2$, then $k=-\mu=1$, and so the condition $\ell-s\notin\sigma_{uv}^{-}(A(X))$ in Corollary \ref{adjpstinX}(1) is equivalent to $\ell\neq n-1$, while the conditions $\nu_2(s)>1$ and $\nu_2(\ell-1)>1$ in Corollary \ref{adjpstinX}(2) are equivalent to $\nu_2(s-2)=1$ and $\nu_2(\ell+3)>1$. Thus, taking $X=K_2$ in Corollary \ref{adjpstinX} recovers a known characterization of PST in connected double cones \cite[Theorem 12(1)]{Kirkland2023}.
\end{remark}

Next, we have an analogue of Corollary \ref{pstLc4}, which determines when PST in $X\vee Y$ is inherited by $X$.

\begin{corollary}
\label{pstAc1}
Let $m\geq 2$, $k,\ell\in\mathbb{Z}$ and $\phi(A(X),t)\in\mathbb{Z}[t]$.
\begin{enumerate}
\item Let $X$ be disconnected. If perfect state transfer occurs between vertices $u$ and $v$ in $X\vee Y$, then it also occurs between them in $X$ if and only if $X\neq O_2(k)$ for any $k\in\mathbb{Z}$.
\item Suppose $X$ is connected. If perfect state transfer occurs between vertices $u$ and $v$ in $X\vee Y$, then perfect state transfer occurs between $u$ and $v$ in $X$ if and only if both conditions below hold.
\begin{enumerate}
\item The eigenvalues in $\sigma_u(A(X))$ are all integers and $D$ is a perfect square.
\item $\nu_2(k-\eta)>\nu_2(k-\mu)=\nu_2(k-\theta)$ for any $ \eta\in \sigma_{uv}^+(A(X))\backslash\{k\}$ and $\mu,\theta\in\sigma_{uv}^-(A(X))$.
\end{enumerate}
\end{enumerate}
\end{corollary}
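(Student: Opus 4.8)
The strategy is to read both parts off Theorem~\ref{pstA} (PST in $X\vee Y$) together with Theorem~\ref{pstC} applied directly to $X$, using repeatedly that the common regularity value $k$ lies in $\sigma_{uv}^+(A(X))$. Indeed, strongly cospectral vertices lie in a common component $C$ of $X$, and $C$ is $k$-regular, so the Perron eigenvector of $A(C)$ is $\mathbf 1_C$ and $E_k\be_u=E_k\be_v=\frac1{|C|}\mathbf 1_C\neq\mathbf 0$; also $k\in\mathcal R$ when $X$ is disconnected, by Theorem~\ref{sc}(2). So one can always take $\lambda_0=k$ as the distinguished element of $\sigma_{uv}^+$ in condition~(3) of Theorem~\ref{pstC}, and condition~(3) at $\lambda_0=k$ propagates to the ``for all $\lambda$'' form by the usual $2$-adic valuation argument (if $\nu_2(k-\eta)$ exceeds the common value $\nu_2(k-\mu)$ for all $\eta\in\sigma_{uv}^+$, then for any other $\lambda\in\sigma_{uv}^+$ one gets $\nu_2(\lambda-\mu)=\nu_2(k-\mu)$ and $\nu_2(\lambda-\eta)>\nu_2(k-\mu)$).

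For Part~1 the ``only if'' is immediate: PST in $X$ forces $u,v$ strongly cospectral in $X$, impossible for $X=O_2(k)$ since there $A(X)=kI_2$ gives $E_k\be_u=\be_u\neq\pm\be_v$. For ``if'', assume PST in $X\vee Y$ with $X$ disconnected and $X\neq O_2(k)$. Then Theorem~\ref{pstA}(1) holds through case~(1i), so $u,v$ are strongly cospectral in $X$; and since case~(2b) of Theorem~\ref{pstA} presupposes $X$ connected, case~(2a) holds, so $\sigma_u(A(X))\subseteq\mathbb Z$ and $\Delta=1$. Taking $\lambda=k\in\mathcal R$ in Theorem~\ref{pstA}(3) gives $\nu_2(k-\eta)>\nu_2(k-\mu)=\nu_2(k-\theta)$ for all $\eta\in\sigma_{uv}^+(A(X))$ (trivial for $\eta=k$) and $\mu,\theta\in\sigma_{uv}^-(A(X))$; this is condition~(3) of Theorem~\ref{pstC} for $X$ at $\lambda_0=k$, which together with strong cospectrality and $\phi(A(X),t)\in\mathbb Z[t]$ yields PST between $u$ and $v$ in $X$ by Theorem~\ref{pstC}.

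For Part~2 the ``if'' direction is the same: with $X$ connected, Theorem~\ref{pstA}(1) holds through~(1i) (automatically $X\neq O_2(k)$), giving strong cospectrality in $X$; condition~(2a) of the corollary places us in case~(2a) of Theorem~\ref{pstA} with $\Delta=1$; and condition~(2b) of the corollary is precisely condition~(3) of Theorem~\ref{pstC} for $X$ read at $\lambda_0=k$. For the ``only if'' direction, assume PST in $X$. Then $u$ is periodic in $X$, so $\sigma_u(A(X))$ satisfies the ratio condition, and by Lemma~\ref{per} its elements are all in $\mathbb Z$ or all of the form $\frac12(a+b_j\sqrt\Delta)$. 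In the latter case $k\in\sigma_u(A(X))$ forces $a=2k$; if some $b_j\neq 0$ then $k\pm\frac{|b_j|}2\sqrt\Delta$ are both eigenvalues of $A(X)$ (conjugates of eigenvalues are eigenvalues, since $\phi(A(X),t)\in\mathbb Z[t]$), and one exceeds $k$, contradicting that $k$ is the spectral radius of the nonnegative irreducible matrix $A(X)$. Hence $\sigma_u(A(X))\subseteq\mathbb Z$; combining this with Theorem~\ref{pstA}(2) (given PST in $X\vee Y$) places us in case~(2a), so $D$ is a perfect square, giving condition~(2a) of the corollary. Finally, condition~(3) of Theorem~\ref{pstC} for $X$ (now with $\Delta=1$), read at $\lambda_0=k\in\sigma_{uv}^+(A(X))$, is exactly condition~(2b).

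The delicate step is the ``only if'' direction of Part~2: one must exclude genuinely quadratic eigenvalues in $\sigma_u(A(X))$. The key point is that periodicity of $u$ forces any such eigenvalue to be matched with its algebraic conjugate symmetrically about $k$, which then violates the Perron--Frobenius bound for the connected $k$-regular graph $X$; everything else amounts to bookkeeping between the eigenvalue-support descriptions of Lemma~\ref{supp} and Theorem~\ref{sc} and the $2$-adic conditions of Theorems~\ref{pstC} and~\ref{pstA}.
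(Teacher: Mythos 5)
Your Part 1 and the ``if'' direction of Part 2 are fine: the two observations doing the real work --- that $k\in\sigma_{uv}^{+}(A(X))$ because strongly cospectral vertices lie in a common connected, weighted $k$-regular component whose Perron vector is the all-ones vector on that component, and that condition (3) of Theorem~\ref{pstC} stated only at $\lambda_0=k$ propagates to all pairs by the standard $2$-adic argument --- are correct, as is your Perron--Frobenius exclusion of genuinely irrational quadratic eigenvalues from $\sigma_u(A(X))$ once $u$ is periodic in the connected graph $X$.

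The gap is in the ``only if'' direction of Part 2, at the sentence ``combining this with Theorem~\ref{pstA}(2) \ldots places us in case (2a), so $D$ is a perfect square.'' Case (2b) of Theorem~\ref{pstA} does not require the eigenvalues in $\sigma_u(A(X))\backslash\{k\}$ to be irrational: with $b_j=0$ they may equal $(k+\ell)/2\in\mathbb{Z}$, so integrality of $\sigma_u(A(X))$ does not exclude (2b), and ``$D$ is a perfect square'' does not follow. Worse, the step cannot be repaired in the paper's setting, where loops are allowed (they already appear in Part 1 via $O_2(k)$): take $X$ on two vertices $u,v$ with a weight-one edge and a loop of weight $3$ at each vertex, so $A(X)=3I_2+A(K_2)$, $k=4$, $\sigma_u(A(X))=\{4,2\}\subseteq\mathbb{Z}$, $\phi(A(X),t)\in\mathbb{Z}[t]$, and $X$ has adjacency PST between $u$ and $v$ at time $\pi/2$; with $Y=O_2$ (so $\ell=0$, $m=n=2$) one gets $D=32$, not a perfect square, yet PST between $u$ and $v$ occurs in $X\vee Y$ at time $\tau=\pi/(2\sqrt{2})$, since $\sigma_{uv}^{+}(A)=\{2\pm2\sqrt{2}\}$, $\sigma_{uv}^{-}(A)=\{2\}$ and $e^{i\tau\lambda^{+}}=e^{i\tau\lambda^{-}}=-e^{2i\tau}$ at that time. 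So PST holds in both $X$ and $X\vee Y$ while condition (2a) of the corollary fails; your argument tacitly uses a stronger, unstated reading of Theorem~\ref{pstA}(2b) in which every $b_j\neq 0$. If $X$ is assumed loopless the step can be salvaged: then $\sigma_u(A(X))=\{k,(k+\ell)/2\}$ would force $(k+\ell)/2=-k/(m-1)<0$, contradicting $k>0$ and $\ell\geq 0$, so integrality really would put you in case (2a). As written, though, you need either that extra hypothesis or a different derivation of ``$D$ is a perfect square,'' and the counterexample shows the statement itself is problematic in the weighted-with-loops generality the paper allows.
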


\section{PST in $X\vee Y$ but not in $X$}\label{secPst2}

In Corollary \ref{pstLc2}, we characterized graphs with PST that also admit PST in the join. In this section, our goal is to show that PST can be induced in the join despite the lack of PST in the underlying graph.

\subsection{Laplacian case}

The following is immediate from Theorem \ref{pstL}.

\begin{corollary}
\label{pstLc5}
Let $m\geq 2$ and $\phi(L(X),t)\in\mathbb{Z}[t]$. Suppose $u$ and $v$ are strongly cospectral vertices in $X$ with $\sigma_u(L(X))\subseteq\mathbb{Z}$. Then Laplacian perfect state transfer occurs between $u$ and $v$ in $X\vee Y$ but not in $X$ if and only if $m\notin\sigma_{uv}^-(L(X))$, $\nu_2(m)=\nu_2(n)$ and either (i) Theorem \ref{pstL}(3aii-3aiii) holds or (ii) $X=O_2$.
\end{corollary}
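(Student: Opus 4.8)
The plan is to read off the statement directly from Theorem~\ref{pstL}, specializing to the situation where $u$ and $v$ are already strongly cospectral in $X$ with integral eigenvalue support. Because $\sigma_u(L(X))\subseteq\mathbb{Z}$, condition~(2) of Theorem~\ref{pstL} is automatic, so Laplacian PST between $u$ and $v$ in $X\vee Y$ is equivalent to conditions~(1) and~(3) of that theorem. Since $u$ and $v$ are strongly cospectral in $X$ (and hence $X\neq O_2$ is not forced, but $O_2$ must be handled as a separate branch), condition~(1) reduces to $m\notin\sigma_{uv}^-(L(X))$ in the case $X\neq O_2$, and is vacuous in the case $X=O_2$. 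Thus PST in $X\vee Y$ holds iff $m\notin\sigma_{uv}^-(L(X))$ together with the appropriate subcase of Theorem~\ref{pstL}(3), namely (3a) or (3b) when $X\neq O_2$ and (3c) when $X=O_2$.

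Next I would bring in the hypothesis that PST does \emph{not} occur between $u$ and $v$ in $X$. By Theorem~\ref{pstC} applied to $M=L(X)$ (whose characteristic polynomial lies in $\mathbb{Z}[t]$ and whose relevant support is integral), PST in $X$ fails precisely when Theorem~\ref{pstC}(3) fails, i.e.\ when it is \emph{not} the case that $\nu_2(\lambda-\eta)>\nu_2(\lambda-\mu)=\nu_2(\lambda-\theta)$ for all $\lambda,\eta\in\sigma_{uv}^+(L(X))$ and $\mu,\theta\in\sigma_{uv}^-(L(X))$. Translating into the $\nu_2$-language of Theorem~\ref{pstL}(3ai), the sub-condition (3ai) is exactly ``$\nu_2(\lambda)>\nu_2(\mu)=\nu_2(\theta)$ for all $0\neq\lambda\in\sigma_{uv}^+(L(X))\cup\{m\}$ and $\mu,\theta\in\sigma_{uv}^-(L(X))$, together with $\nu_2(n)>\nu_2(\mu)$.'' The first half of (3ai) (restricted to $\lambda\in\sigma_{uv}^+(L(X))$, i.e.\ dropping the element $m$) is precisely the condition that would force PST in $X$; so if PST fails in $X$ but (3ai) holds in $X\vee Y$, the extra element $m$ must be doing the work, which forces $\nu_2(m)\le\nu_2(\mu)$ for some $\mu$, contradicting (3ai). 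Hence under the ``no PST in $X$'' hypothesis, (3ai)—and therefore (3b) as well, since (3b) presupposes (3ai)—cannot hold. This eliminates cases (3ai) and (3b), leaving exactly (3aii), (3aiii) (when $X$ is connected) or $X=O_2$ with $n\equiv 2\pmod 4$ (case (3c)).

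Finally I would extract the condition $\nu_2(m)=\nu_2(n)$. In cases (3aii) and (3aiii) one inspects the $\nu_2$-constraints: (3aii) gives $\nu_2(\lambda)=\nu_2(n)$ for $\lambda\in\sigma_{uv}^+(L(X))\cup\{m\}$, in particular $\nu_2(m)=\nu_2(n)$; (3aiii) gives $\nu_2(\lambda)=\nu_2(\mu)=\nu_2(n)$ for the same range, again yielding $\nu_2(m)=\nu_2(n)$. In the branch $X=O_2$, one has $m=2$ and the requirement $n\equiv 2\pmod 4$ is exactly $\nu_2(n)=1=\nu_2(2)=\nu_2(m)$, so $\nu_2(m)=\nu_2(n)$ holds there too; and $n\equiv2\pmod4$ is nothing but Theorem~\ref{pstL}(3c). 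Conversely, if $m\notin\sigma_{uv}^-(L(X))$, $\nu_2(m)=\nu_2(n)$, and either (3aii--3aiii) or $X=O_2$ with $n\equiv2\pmod4$ holds, then all three conditions of Theorem~\ref{pstL} are met, so PST occurs in $X\vee Y$; and the ``no PST in $X$'' part follows because, as argued above, (3aii), (3aiii) (for the support of $X$ alone, without $m$) and the $O_2$ branch are incompatible with Theorem~\ref{pstC}(3) for $X$. Assembling the two directions gives the stated equivalence. The main obstacle is the bookkeeping in the middle paragraph: carefully checking that inserting the extra eigenvalue $m$ into $\sigma_{uv}^+$ and shifting everything by $n$ cannot repair a failure of the $\nu_2$-inequalities in $X$ unless one lands in exactly the residual cases (3aii--3aiii) or $O_2$, and simultaneously pinning $\nu_2(m)=\nu_2(n)$; once that case analysis is organized, the rest is a direct appeal to Theorems~\ref{pstL} and~\ref{pstC}.
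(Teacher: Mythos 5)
Your proposal is correct and takes essentially the same route as the paper, which treats the corollary as immediate from Theorem \ref{pstL}: condition (2) there is automatic, condition (1) reduces to $m\notin\sigma_{uv}^-(L(X))$ (with $X=O_2$ as the separate branch), case (3ai) — and hence (3b) — is exactly the Theorem \ref{pstC}(3) criterion for PST in $X$ itself (using $0\in\sigma_{uv}^+(L(X))$ and integrality), and the surviving cases (3aii), (3aiii), (3c) each force $\nu_2(m)=\nu_2(n)$. Your identification of the restricted (3ai) with PST in $X$ is the key step and is the same reduction the paper relies on, so no further comparison is needed.
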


For strongly cospectral vertices in $X$ whose elements in $\sigma_{uv}^{-}(L(X))$ consists of integers whose largest powers of two in their factorizations are larger than those in $\sigma_{uv}^{+}(L(X))$, Theorem \ref{pstC} yields no PST between them. Nonetheless, we show that Laplacian PST can be induced in $X\vee Y$ by appropriate choice of $n$. The following result follows directly from Corollary \ref{pstLc5} and Theorem \ref{pstL}(3aii).

\begin{theorem}
\label{oddst1}
Let $m\geq 2$ and $\phi(L(X),t)\in\mathbb{Z}[t]$. Suppose $u$ and $v$ are strongly cospectral vertices in $X$ with $\sigma_u(L)\subseteq\mathbb{Z}$ such that the $\nu_2(\mu)>\nu_2(\lambda)$ for all $0\neq \lambda\in\sigma_{uv}^{+}(L(X))$ and $\mu\in\sigma_{uv}^{-}(L(X))$. Then Laplacian perfect state transfer occurs between $u$ and $v$ in $X\vee Y$ if and only if $m\notin\sigma_{uv}^-(L(X))$, $X$ is connected and $\nu_2(\lambda)=\nu_2(m)=\nu_2(n)$ for all $0\neq \lambda\in\sigma_{uv}^{+}(L(X))$.
\end{theorem}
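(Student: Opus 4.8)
The plan is to derive Theorem~\ref{oddst1} as a specialization of Corollary~\ref{pstLc5}, whose hypotheses are exactly met here: $u,v$ are strongly cospectral in $X$ with integral eigenvalue support, so Laplacian PST occurs in $X\vee Y$ but not in $X$ iff $m\notin\sigma_{uv}^-(L(X))$, $\nu_2(m)=\nu_2(n)$, and either Theorem~\ref{pstL}(3aii)/(3aiii) holds or $X=O_2$. First I would dispose of the $X=O_2$ case: there $\sigma_{uv}^+(L(X))=\{0\}$, so the hypothesis ``$\nu_2(\mu)>\nu_2(\lambda)$ for all $0\neq\lambda\in\sigma_{uv}^+(L(X))$'' is vacuous and gives no information; but the statement to be proved requires that there \emph{exist} nonzero elements of $\sigma_{uv}^+(L(X))$ governing the equalities $\nu_2(\lambda)=\nu_2(m)=\nu_2(n)$. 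In any case, we must also rule out (3aiii) and reduce (3aii) to the claimed arithmetic. So the core of the argument is: under the standing hypothesis $\nu_2(\mu)>\nu_2(\lambda)$ for all relevant $\lambda,\mu$, show that among the three PST criteria in Theorem~\ref{pstL}(3a), only (3aii) can hold, and that (3aii) is equivalent to ``$X$ connected and $\nu_2(\lambda)=\nu_2(m)=\nu_2(n)$ for all $0\neq\lambda\in\sigma_{uv}^+(L(X))$.''

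The key steps, in order. Step~1: recall that $m$ itself must be treated as an element of $\sigma_{uv}^+(L(X))\cup\{m\}$ in the conditions (3ai)--(3aiii), since Theorem~\ref{sc}(1a) places $m+n$ in $\sigma_{uv}^+(L)$; thus the role of ``$0\neq\lambda\in\sigma_{uv}^+(L(X))\cup\{m\}$'' is played by the actual nonzero positive-support eigenvalues together with $m$. Step~2: rule out (3ai). Condition (3ai) demands $\nu_2(\lambda)>\nu_2(\mu)$ for all such $\lambda$ and all $\mu\in\sigma_{uv}^-(L(X))$; but our hypothesis says $\nu_2(\mu)>\nu_2(\lambda)$ for every genuine nonzero $\lambda\in\sigma_{uv}^+(L(X))$ (and $\sigma_{uv}^+(L(X))$ is nonempty when $X\ne O_2$, since $0$ alone forces $X=O_2$ by Theorem~\ref{sc}), so (3ai) fails outright. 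Step~3: rule out (3aiii). Condition (3aiii) requires $\nu_2(\lambda)=\nu_2(\mu)=\nu_2(n)$ for all such $\lambda,\mu$; again this contradicts $\nu_2(\mu)>\nu_2(\lambda)$ on any genuine nonzero $\lambda$, so (3aiii) fails. Step~4: therefore PST in $X\vee Y$ forces (3aii), i.e.\ $\nu_2(\mu)>\nu_2(\lambda)=\nu_2(n)$ for all $0\neq\lambda\in\sigma_{uv}^+(L(X))\cup\{m\}$ and all $\mu\in\sigma_{uv}^-(L(X))$. Reading off the $\lambda=m$ instance and the general instance, this is precisely $\nu_2(\lambda)=\nu_2(m)=\nu_2(n)$ for all $0\neq\lambda\in\sigma_{uv}^+(L(X))$, together with $\nu_2(\mu)>\nu_2(n)$ for all $\mu\in\sigma_{uv}^-(L(X))$ --- but the latter is automatic from our hypothesis $\nu_2(\mu)>\nu_2(\lambda)=\nu_2(n)$. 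Step~5: supply the connectedness claim: if $X$ were disconnected then by Theorem~\ref{pstL}(3b) we would need (3ai), already excluded in Step~2 (and $X\ne O_2$ here because $\sigma_{uv}^+(L(X))$ must contain a nonzero element for the equalities to be nonvacuous; alternatively $O_2$ fails since $n\equiv 2\pmod 4$ would contradict $\nu_2(m)=\nu_2(n)=\nu_2(2)=1$ only if $m\equiv 2$, which is consistent, so one must argue the $O_2$ case does not satisfy the stated conclusion and is therefore excluded by convention). Step~6: conversely, assuming $m\notin\sigma_{uv}^-(L(X))$, $X$ connected, and $\nu_2(\lambda)=\nu_2(m)=\nu_2(n)$ for all $0\neq\lambda\in\sigma_{uv}^+(L(X))$, verify directly that (3aii) holds (using the hypothesis to get $\nu_2(\mu)>\nu_2(n)$) and that Theorem~\ref{pstL}(1i),(2) hold, hence PST occurs in $X\vee Y$; and PST fails in $X$ because Theorem~\ref{pstC}(3) applied to $X$ is violated by $\nu_2(\mu)>\nu_2(\lambda)$.

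The main obstacle I expect is bookkeeping the exceptional role of $X=O_2$ and of the auxiliary value $\lambda=m$ in the index sets: the statement's phrasing ``for all $0\neq\lambda\in\sigma_{uv}^+(L(X))$'' quietly assumes this set is nonempty and that $m$ behaves like one of its members, and the clean equivalence only emerges once one checks that the $O_2$ case either falls outside the hypothesis or is subsumed by the connected conclusion with $\sigma_{uv}^+(L(X))=\emptyset$ made vacuously false. Everything else is a direct case-split on the three sub-conditions of Theorem~\ref{pstL}(3a) driven by the single inequality $\nu_2(\mu)>\nu_2(\lambda)$, so once the index-set conventions are pinned down the proof is short. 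I would write it as: ``This follows from Corollary~\ref{pstLc5} together with the observation that, under the stated hypothesis, conditions (3ai) and (3aiii) of Theorem~\ref{pstL} cannot hold, so PST in $X\vee Y$ is equivalent to (3aii), which in turn unwinds to the displayed condition; connectedness is forced because the disconnected case would require (3ai).''
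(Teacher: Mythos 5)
Your proposal is correct and takes the same route as the paper, which states Theorem \ref{oddst1} as an immediate consequence of Corollary \ref{pstLc5} and Theorem \ref{pstL}(3aii); your steps simply spell out the elimination of (3ai)/(3aiii), the unwinding of (3aii), and the connectedness claim that the paper leaves implicit. One small slip: $\sigma_{uv}^+(L(X))=\{0\}$ does not force $X=O_2$ (it forces the component containing $u$ and $v$ to be a weighted $K_2$, and $X=O_2$ is excluded outright since isolated vertices are never strongly cospectral in $X$), but this does not affect the argument, and the same implicit non-vacuousness assumption about $\sigma_{uv}^+(L(X))\backslash\{0\}$ is present in the paper's own statement.
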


\begin{example}
\label{cpm}
Let $X=CP(m)$, where $m\equiv 2$ (mod 4). Then $X$ is connected, $(m-2)$-regular and any two non-adjacent vertices $u$ and $v$ in $X$ are strongly cospectral with $\sigma_{uv}^{+}(L(X))=\{0,m\}$ and $\sigma_{uv}^{-}(L(X))=\{m-2\}$. As $\nu_2(m-2)>\nu_2(m)$, Laplacian PST does not occur between $u$ and $v$ in $X$ by Theorem \ref{pstC}(3). Applying Theorem \ref{oddst1}, Laplacian PST occurs between $u$ and $v$ in $X\vee Y$ if and only if $n\equiv 2$ (mod 4). In particular, if we take $n=2$, then $X\vee Y=CP(m+2)$ admits PST between any two non-adjacent vertices.
\end{example}

For strongly cospectral vertices in $X$ whose eigenvalue supports consist of integers with equal largest powers of two in their factorizations, Theorem \ref{pstC} again yields no PST between them. Despite this fact, we show that under certain assumptions, Laplacian PST can be induced in $X\vee Y$ by appropriate choice of $n$.

\begin{theorem}
\label{oddst}
Let $m\geq 2$ and $\phi(L(X),t)\in\mathbb{Z}[t]$. Suppose $u$ and $v$ are strongly cospectral vertices in $X$ with $\sigma_u(L)\subseteq\mathbb{Z}$ such that the $\nu_2(\lambda)$'s are equal for all $\lambda\in \sigma_u(L)$, say to $\alpha$. Write each $0\neq \lambda_r\in\sigma_{uv}^+(L(X))$ as $\lambda_r=2^{\alpha}(2p_r-1)$ and each $ \mu_s\in\sigma_{uv}^-(L(X))$ as $\mu_s=2^{\alpha}(2q_s-1)$ for some $p_r,q_s\in\mathbb{Z}$. Then Laplacian perfect state transfer occurs between $u$ and $v$ in $X\vee Y$ if and only if $X$ is connected, $m\notin \sigma_{uv}^-(L(X))$, $\nu_2(m)=\nu_2(n)=\alpha$ (so we may write $m=2^{\alpha}(2y-1)$ and $n=2^{\alpha}(2z+1)$ for some $y,z\in\mathbb{Z}$), and one of the following conditions below hold.
\begin{enumerate}
\item All $\nu_2(q_s)$'s are equal, $\nu_2(p_r)> \nu_2(q_s)$, and $\nu_2(y)\geq \nu_2(z)\geq \nu_2(q_s)+1$ for all $r,s$.
\item All $\nu_2(p_r)$'s are equal and $\nu_2(p_r)=\nu_2(y)=\nu_2(z)<\nu_2(q_s)$ for all $s$. 
\item For any $r,s,t$, we have $\nu_2(p_r)= \nu_2(q_s)=\nu_2(y)=\nu_2(z)$, $\nu_2\left(\frac{p_r+z}{2^{\nu_2(z)}}\right)>\nu_2\left(\frac{q_s+z}{2^{\nu_2(z)}}\right)=\nu_2\left(\frac{q_t+z}{2^{\nu_2(z)}}\right)$ and $\nu_2\left(\frac{y+z}{2^{\nu_2(z)}}\right)>\nu_2\left(\frac{q_s+z}{2^{\nu_2(z)}}\right)$.
\end{enumerate}
\end{theorem}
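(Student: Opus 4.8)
The plan is to deduce this theorem from Theorem~\ref{pstL} by unwinding its condition~(3) under the extra hypothesis that the eigenvalues in play all carry the single $2$-adic valuation $\alpha$; the mechanism is that this hypothesis pushes Theorem~\ref{pstL} into its ``(3aiii)'' alternative at the first level, after which one repeats the very same trichotomy one level deeper.

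First I would dispose of conditions~(1) and~(2) of Theorem~\ref{pstL}. Strong cospectrality of $u,v$ in $X$ excludes $X=O_2$ (its two vertices are never strongly cospectral, since $L(O_2)=\mathbf{0}$ forces $E_0=I$), so alternative~(1)(ii) is vacuous and~(1) reduces to ``$u,v$ strongly cospectral in $X$ and $m\notin\sigma_{uv}^-(L(X))$'', while~(2) is part of the hypothesis. Hence PST between $u$ and $v$ in $X\vee Y$ is equivalent to $m\notin\sigma_{uv}^-(L(X))$ together with Theorem~\ref{pstL}(3). I would next rule out $X$ disconnected: Theorem~\ref{pstL}(3b) requires condition~(3ai), in particular $\nu_2(\lambda)>\nu_2(\mu)$ for every nonzero $\lambda\in\sigma_{uv}^+(L(X))\cup\{m\}$ and $\mu\in\sigma_{uv}^-(L(X))$; but a nonzero $\lambda_r\in\sigma_{uv}^+(L(X))$ has $\nu_2(\lambda_r)=\alpha=\nu_2(\mu)$, which makes this impossible. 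So $X$ is connected and we are reduced to Theorem~\ref{pstL}(3a).

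With $X$ connected, Theorem~\ref{sc}(1a) gives $\sigma_{uv}^+(L)=\{\lambda+n:0\ne\lambda\in\sigma_{uv}^+(L(X))\}\cup\{0,m+n\}$ and $\sigma_{uv}^-(L)=\{\mu+n:\mu\in\sigma_{uv}^-(L(X))\}$; since $L(X\vee Y)$ is an integer matrix with integral spectrum (Lemma~\ref{supp}(1) and the hypothesis $\sigma_u(L)\subseteq\mathbb Z$), Theorem~\ref{pstC} applies with $\Delta=1$, and taking $\lambda_0=0\in\sigma_{uv}^+(L)$ its condition~(3) becomes exactly the inequality~(\ref{nu2}) from the proof of Theorem~\ref{pstL}: $\nu_2(\mu+n)$ is constant over $\mu\in\sigma_{uv}^-(L(X))$ and is strictly less than $\nu_2(\lambda+n)$ for every nonzero $\lambda\in\sigma_{uv}^+(L(X))\cup\{m\}$. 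Now I would feed in the valuation hypothesis. Since $\nu_2(\lambda_r)=\nu_2(\mu_s)=\alpha$, if $\nu_2(n)\ne\alpha$ then both sides of~(\ref{nu2}) equal $\min\{\alpha,\nu_2(n)\}$ when $\lambda$ is a nonzero $\lambda_r$, contradicting strictness; hence $\nu_2(n)=\alpha$, and then the instance $\lambda=m$ forces $\nu_2(m)=\alpha$ as well (if $\nu_2(m)<\alpha$ then $\nu_2(m+n)=\nu_2(m)<\alpha+1\le\nu_2(\mu+n)$, and if $\nu_2(m)>\alpha$ then $\nu_2(m+n)=\alpha<\alpha+1\le\nu_2(\mu+n)$). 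Writing $m=2^\alpha(2y-1)$, $n=2^\alpha(2z+1)$, $\lambda_r=2^\alpha(2p_r-1)$ and $\mu_s=2^\alpha(2q_s-1)$, one computes $\nu_2(\lambda_r+n)=\alpha+1+\nu_2(p_r+z)$, $\nu_2(\mu_s+n)=\alpha+1+\nu_2(q_s+z)$ and $\nu_2(m+n)=\alpha+1+\nu_2(y+z)$, so~(\ref{nu2}) collapses to: $\nu_2(q_s+z)$ constant over $s$, $\nu_2(p_r+z)>\nu_2(q_s+z)$ for all $r,s$, and $\nu_2(y+z)>\nu_2(q_s+z)$ for all $s$.

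To finish, I would analyze this reduced system by the same case split that the proof of Theorem~\ref{pstL}(3a) applies to~(\ref{nu2}), now with $(p_r,q_s,y,z)$ in the roles of $(\lambda,\mu,m,n)$: comparing $\nu_2(p_r),\nu_2(q_s),\nu_2(y),\nu_2(z)$, the case $\nu_2(p_r)>\nu_2(q_s)$ yields condition~(1), the case $\nu_2(q_s)>\nu_2(p_r)$ yields condition~(2), and the case of all four valuations equal yields condition~(3) after dividing through once more by $2^{\nu_2(z)}$. The minimum PST time is then the one supplied by Theorem~\ref{pstL}. The step I expect to require the most care is this nested trichotomy: the auxiliary eigenvalue $m$ must be carried alongside the $\lambda_r$'s inside~(\ref{nu2}) but surfaces as the separate clause $\nu_2(y+z)>\nu_2(q_s+z)$ one level down, and at each stage the valuation identities $\nu_2(a+b)=\min\{\nu_2(a),\nu_2(b)\}$ when $\nu_2(a)\ne\nu_2(b)$ and $\nu_2(a+b)\ge\min\{\nu_2(a),\nu_2(b)\}+1$ when $\nu_2(a)=\nu_2(b)$ must be invoked in the right direction; the delicate subcases (for instance a singleton $\sigma_{uv}^\pm(L(X))$, or ties of $\nu_2(p_r+z)$ or of $\nu_2(q_s+z)$ across indices) are where the bookkeeping must be done carefully.
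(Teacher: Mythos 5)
Your proposal is correct and follows essentially the same route as the paper: the paper merely packages your first reduction as Corollary~\ref{pstLc5} (itself immediate from Theorem~\ref{pstL}), performs the same $2^{\alpha+1}$-rewriting to arrive at exactly your reduced condition on $\nu_2(p_r+z)$, $\nu_2(q_s+z)$, $\nu_2(y+z)$, and likewise finishes by invoking the same case analysis as in the proof of Theorem~\ref{pstL}(3a), which is left unexecuted in both your sketch and the paper. The only remarks worth recording are shared with the paper's own argument: both tacitly use the existence of a nonzero element of $\sigma_{uv}^+(L(X))$ (in ruling out disconnected $X$ and in forcing $\nu_2(n)=\alpha$), and if the final trichotomy is actually carried out, its first case yields $\nu_2(y)\geq\nu_2(q_s)+1$ and $\nu_2(z)\geq\nu_2(q_s)+1$ with no comparison between $\nu_2(y)$ and $\nu_2(z)$, i.e.\ the symmetric reading of condition (1) of the statement rather than the chained inequality as literally written.
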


\begin{proof}
Corollary \ref{pstLc5} implies that PST occurs between $u$ and $v$ in $X\vee Y$ if and only if $\nu_2(m)=\nu_2(n)=\alpha$ and Theorem \ref{pstL}(3aii) holds. Let $m=2^{\alpha}(2y-1)$ and $n=2^{\alpha}(2z+1)$ for some integers $y,z$. Note that $m+n=2^{\alpha+1}(y+z)$, $\lambda_r+n=2^{\alpha+1}(p_r+z)$ and $\mu_s+n=2^{\alpha+1}(q_s+z)$. If $X$ is disconnected, then $n\in\sigma_{uv}^+(L(X))$. Now, Theorem \ref{pstL}(3aii) holds if and only if $X$ is connected and 
\begin{equation}
\label{contra}
\nu_2(p_r+z)>\nu_2(q_s+z)=\nu_2(q_t+z)\quad \text{and}\quad \nu_2(y+z)>\nu_2(q_s+z)
\end{equation}
for any $r,s,t$. One can then proceed by cases similar to the proof of Theorem \ref{pstL}(2a).
\end{proof}

\begin{remark}
\label{p3}
Let $X$ be a simple unweighted graph with strongly cospectral vertices $u$ and $v$ such that $\sigma_u(L)\subseteq\mathbb{Z}$. If $X$ has an odd number of vertices and an odd number of spanning trees, then the matrix-tree theorem implies that all nonzero eigenvalues in $\sigma_u(L)$ are odd. Thus, Theorem \ref{oddst} applies with $\alpha=0$.
\end{remark}

We illustrate Theorem \ref{oddst} and Remark \ref{p3} using a simple example.

\begin{example}
\label{p31}
Let $X=P_3$ with end vertices $u$ and $v$. Then $\sigma_{uv}^+(L(X))=\{0,3\}$, $\sigma_{uv}^-(L(X))=\{1\}$, and so by Theorem \ref{pstC}(2), $u$ and $v$ does not admit Laplacian PST in $X$. Applying Theorem \ref{oddst}(1), we get Laplacian PST between $u$ and $v$ in $X\vee Y$ if and only if $n\equiv 1$ (mod 4). In particular, taking $Y=O_n$ yields $X\vee Y=K_d\backslash e$ with $d\equiv 0$ (mod 4), a graph known to have PST between non-adjacent vertices.
\end{example}

We close this subsection with a result that determines all graphs with isolated vertices that exhibit Laplacian PST in the join. It is immediate from Theorems \ref{sc}(1a) and \ref{pstL}(3c).

\begin{theorem}
\label{genpstL}
If $u$ and $v$ are two isolated vertices in $X$, then Laplacian perfect state transfer occurs between $u$ and $v$ in $X\vee Y$ if and only if $X=O_2$ and $n\equiv 2$ (mod 4).
\end{theorem}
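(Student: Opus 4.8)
The plan is to reduce Theorem~\ref{genpstL} to the two results it cites. Since $u$ and $v$ are isolated vertices in $X$, Theorem~\ref{sc}(1) governs their strong cospectrality in $X\vee Y$: the ``non-isolated'' branch does not apply, so strong cospectrality can only come from the ``isolated'' branch, which splits into the case $X=O_2$ (where $u$ and $v$ are indeed Laplacian strongly cospectral in $X\vee Y$) and the case where $X$ has some other connected component (where the extra eigenvector for the eigenvalue $n$ described in the proof of Theorem~\ref{sc} breaks strong cospectrality). Because strong cospectrality is necessary for PST (Theorem~\ref{pstC}(2)), the first step is simply to invoke this dichotomy: if $X\neq O_2$ and $u,v$ are isolated, no PST between $u$ and $v$ occurs in $X\vee Y$, for any $Y$.

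Next I would handle the remaining case $X=O_2$. Here $\phi(L(O_2),t)=t^2\in\mathbb{Z}[t]$, so Theorem~\ref{pstL} applies with $m=2$. We have $\sigma_u(L(O_2))=\{0\}$, so condition~(2) of Theorem~\ref{pstL} (all eigenvalues in $\sigma_u(L(X))$ integral) holds trivially, and condition~(1) holds via its alternative~(ii), namely $X=O_2$. Thus PST between $u$ and $v$ in $O_2\vee Y$ is equivalent to condition~(3), which in the $X=O_2$ subcase is exactly~(3c): $n\equiv 2\pmod 4$. Concretely, this matches the direct computation from Theorem~\ref{sc}(1b): $\sigma_{uv}^+(L)=\{0,n+2\}$ and $\sigma_{uv}^-(L)=\{n\}$, and Theorem~\ref{pstC}(3) reduces to $\nu_2(n+2)>\nu_2(n)$, which holds precisely when $\nu_2(n)=1$, i.e. $n\equiv 2\pmod 4$. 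Assembling the two cases gives the stated equivalence, completing the proof.

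I do not expect any serious obstacle here: the statement is essentially a corollary, and both halves are already packaged inside Theorem~\ref{sc} and Theorem~\ref{pstL}. The only point requiring a little care is making sure the ``isolated vertex'' hypothesis is used correctly to rule out the non-isolated branch of Theorem~\ref{sc}(1), and to confirm that when $X\neq O_2$ but both $u,v$ are isolated, $X$ necessarily has a third component (or a third isolated vertex), so the obstructing eigenvector at eigenvalue $n$ genuinely exists; this is immediate since $m\geq 2$ is not forced but $X$ having $u,v$ isolated and $X\neq O_2$ means $|V(X)|\geq 3$. Everything else is a direct appeal to the cited theorems.
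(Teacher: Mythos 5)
Your proposal is correct and follows essentially the same route as the paper, which derives this theorem directly from Theorem \ref{sc}(1) (strong cospectrality of isolated vertices in the join forces $X=O_2$) together with Theorem \ref{pstL}(3c) (the $X=O_2$ case reduces to $\nu_2(n+2)>\nu_2(n)$, i.e.\ $n\equiv 2 \pmod 4$). Your extra checks — that $X\neq O_2$ with $u,v$ isolated gives a third component whose constant-on-components eigenvector at eigenvalue $n$ kills strong cospectrality, and the direct verification via $\sigma_{uv}^+(L)=\{0,n+2\}$, $\sigma_{uv}^-(L)=\{n\}$ — are exactly the ingredients already packaged in those cited results.
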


From Theorem \ref{genpstL}, $O_2$ is the only graph with isolated vertices that exhibits PST in the join. This coincides with a known characterization of Laplacian PST in double cones \cite[Corollary 4]{Alvir2016}. 

\subsection{Adjacency case}

For the adjacency case, we give one scenario where PST can be induced in $X\vee Y$. The following is immediate from Theorems \ref{pstC}(3) and \ref{pstA}(3).

\begin{corollary}
\label{pstAc5}
Let $m\geq 2$, $k,\ell\in\mathbb{Z}$ and $\phi(A(X),t)\in\mathbb{Z}[t]$.
Suppose $X$ is connected and vertices $u$ and $v$ are strongly cospectral in $X$ such that $\sigma_u(A(X))\subseteq\mathbb{Z}$ and
\begin{equation}
\label{adjnu21}
\nu_2\left(\lambda-\eta\right)>\nu_2\left(\lambda-\mu\right)=\nu_2\left(\lambda-\theta\right),
\end{equation}
for all $ \lambda,\eta\in \sigma_{uv}^+(A(X))\backslash\{k\}$ and $\mu,\theta\in\sigma_{uv}^-(A(X))$. If (\ref{adjnu21}) does not hold whenever $\lambda=k$, then
\begin{enumerate}
\item adjacency perfect state transfer does not occur between $u$ and $v$ in $X$; and
\item adjacency perfect state transfer occurs between $u$ and $v$ in $X\vee Y$ if and only if $\lambda^-\notin \sigma_{uv}^-(A(X))$, $D$ is a perfect square and (\ref{adjnu21}) also holds whenever $\lambda\in\{\lambda^{\pm}\}$.
\end{enumerate}
\end{corollary}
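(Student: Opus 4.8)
The plan is to read off both parts from Theorems \ref{pstC} and \ref{pstA} after two preliminary reductions. First I would note that, since $X$ is connected and weighted $k$-regular, the $k$-eigenspace of $A(X)$ is spanned by $\textbf{1}$, so $E_k\textbf{e}_u=E_k\textbf{e}_v=\tfrac{1}{m}\textbf{1}$ and hence $k\in\sigma_{uv}^+(A(X))$. Second, $\sigma_u(A(X))\subseteq\mathbb{Z}$ forces $\Delta=1$ in every application of Theorem \ref{pstC}, and it also excludes case (2b) of Theorem \ref{pstA}: integrality would make every eigenvalue in $\sigma_u(A(X))\backslash\{k\}$ equal to $\tfrac{1}{2}(k+\ell)$, so that $\sigma_{uv}^+(A(X))\backslash\{k\}=\emptyset$ and $\sigma_{uv}^-(A(X))$ is a singleton, and then (\ref{adjnu21}) holds trivially at $\lambda=k$, contradicting the hypothesis. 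Hence condition (2) of Theorem \ref{pstA} reduces to ``$D$ is a perfect square'', which in turn gives $\lambda^\pm\in\mathbb{Z}$.

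For part (1), I would apply Theorem \ref{pstC} to $M=A(X)$: with $\Delta=1$, perfect state transfer between $u$ and $v$ in $X$ requires (\ref{adjnu21}) for all $\lambda,\eta\in\sigma_{uv}^+(A(X))$ and $\mu,\theta\in\sigma_{uv}^-(A(X))$, and specializing to $\lambda=k\in\sigma_{uv}^+(A(X))$ is exactly the instance assumed to fail. So no such PST exists.

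For part (2), I would apply Theorem \ref{pstA}. Since $X$ is connected with $m\ge 2$ we have $X\neq O_2(k)$, so condition (1) becomes ``$u,v$ strongly cospectral in $X$ (given) and $\lambda^-\notin\sigma_{uv}^-(A(X))$'', condition (2) becomes ``$D$ is a perfect square'' by the reduction above, and, with $\mathcal{R}=\{\lambda^\pm\}$, condition (3) is (\ref{adjnu21}) over $\sigma_{uv}^+(A(X))\backslash\{k\}\cup\{\lambda^\pm\}$ against $\sigma_{uv}^-(A(X))$. The remaining step is to show that, given the standing hypothesis that (\ref{adjnu21}) holds on $\sigma_{uv}^+(A(X))\backslash\{k\}$, this last condition is equivalent to ``(\ref{adjnu21}) also holds whenever $\lambda\in\{\lambda^\pm\}$''. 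One implication is trivial. For the other, the only pairs $(\lambda,\eta)$ not already covered are those with $\lambda\in\sigma_{uv}^+(A(X))\backslash\{k\}$ and $\eta\in\{\lambda^\pm\}$; applying (\ref{adjnu21}) at $\eta$, with $\lambda$ in the role of its free variable, yields $\nu_2(\lambda-\eta)=\nu_2(\eta-\lambda)>\nu_2(\eta-\mu)$, so from $\lambda-\mu=(\lambda-\eta)+(\eta-\mu)$ and the $2$-adic ultrametric inequality we get $\nu_2(\lambda-\mu)=\nu_2(\eta-\mu)<\nu_2(\lambda-\eta)$, while $\nu_2(\lambda-\mu)=\nu_2(\lambda-\theta)$ is part of the standing hypothesis. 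Assembling conditions (1)--(3) then gives the stated equivalence.

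I expect the main obstacle to be precisely this ultrametric bookkeeping in the last paragraph --- confirming that adjoining $\lambda^\pm$ to the eigenvalue support imposes nothing beyond ``(\ref{adjnu21}) at $\lambda\in\{\lambda^\pm\}$'' --- together with the (easy but essential) exclusion of case (2b) of Theorem \ref{pstA}, which is the one place in part (2) where the hypothesis that (\ref{adjnu21}) fails at $\lambda=k$ is genuinely invoked.
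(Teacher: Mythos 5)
Your proposal is correct and follows the same route as the paper, whose proof is simply the remark that the corollary is immediate from Theorems \ref{pstC}(3) and \ref{pstA}(3); your write-up supplies exactly the details that reading requires (the observation that $k\in\sigma_{uv}^+(A(X))$ for a connected weighted $k$-regular graph, the exclusion of case (2b) of Theorem \ref{pstA} via the failure of (\ref{adjnu21}) at $\lambda=k$, and the $2$-adic ultrametric bookkeeping). One small addendum: when matching Theorem \ref{pstA}(3) you should also account for the pair $\lambda,\eta\in\{\lambda^{\pm}\}$, which your enumeration of ``uncovered pairs'' skips; this is either contained in the hypothesis ``(\ref{adjnu21}) also holds whenever $\lambda\in\{\lambda^{\pm}\}$'' under the natural reading, or (when $\sigma_{uv}^+(A(X))\backslash\{k\}\neq\emptyset$) follows from your own computation, since $\nu_2(\lambda^{+}-\mu)=\nu_2(\eta-\mu)=\nu_2(\lambda^{-}-\mu)$ forces $\nu_2(\lambda^{+}-\lambda^{-})>\nu_2(\lambda^{+}-\mu)$.
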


Using Corollary \ref{pstLc5}, we generate an infinite family of graphs with no PST but which admit PST in the join.

\begin{example}
\label{cpm1}

Let $m\geq 6$ and $X=CP(m)$, where $m\equiv 2$ (mod 4). From Example \ref{cpm}, $X$ is connected, $(m-2)$-regular and PST does not occur between $u$ and $v$ in $X$ by Corollary \ref{pstAc5}(1). Applying Corollary \ref{pstAc5}(2), we get adjacency PST between $u$ and $v$ in $X\vee Y$ if and only if $n=\frac{z(m-\ell-2+z)}{m}$ for some integer $z$ such that $m$ divides $z(m-\ell-2+z)$, $\nu_2(\ell)>\nu_2(z)=1$ and $\ell-z\neq 0$. The latter three conditions are equivalent to $D$ being a perfect square, $\nu_2(\lambda^+-\lambda^-)=\nu_2(\lambda^--0)=2$ and $\lambda^-\neq 0$, respectively. In particular, we may take $\ell=0$ and $z=2$ to get $n=2$. In this case, $X\vee Y=CP(m+2)$ which admits PST between any two non-adjacent vertices.
\end{example}

We finish this section with an analogue of Theorem \ref{genpstL}. It is immediate from Theorem \ref{pstA}.

\begin{theorem}
\label{genpstA}
If $u$ and $v$ are two isolated vertices in $X$, then adjacency perfect state transfer occurs between $u$ and $v$ in $X\vee Y$ if and only if $X=O_2(k)$, $D$ is a perfect square and $\nu_2(\lambda^+-k)=\nu_2(\lambda^--k)$.
\end{theorem}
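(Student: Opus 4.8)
The plan is to specialize Theorem \ref{pstA} to the case where $u$ and $v$ are isolated vertices of $X$. Since $u,v$ are isolated, $X$ is disconnected, so by Theorem \ref{sc}(2) adjacency strong cospectrality of $u$ and $v$ in $X\vee Y$ forces us into case (2b), namely $X=O_2(k)$; indeed, two isolated vertices are strongly cospectral in $X$ (trivially, since the relevant eigenspace is spanned by $\be_u,\be_v$) only when $X$ has no other components making the $n$-eigenspace of $L$ too large, which is exactly the $X=O_2$ analysis carried over to loops of weight $k$. So the first step is to observe that condition (1) of Theorem \ref{pstA} reduces to alternative (ii): $X=O_2(k)$.

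Next I would compute the eigenvalue data explicitly for $X=O_2(k)$. Here $m=2$, $A(X)=kI_2$, so $\sigma_u(A(X))=\{k\}$, and $D=(k-\ell)^2+4\cdot 2\cdot n=(k-\ell)^2+8n$. From Theorem \ref{sc}(2b), $\sigma_{uv}^+(A)=\{\lambda^+,\lambda^-\}$ and $\sigma_{uv}^-(A)=\{k\}$. Then I would check condition (2) of Theorem \ref{pstA}: since $\sigma_u(A(X))=\{k\}\subseteq\mathbb{Z}$ (as $k\in\mathbb{Z}$), condition (2a) holds precisely when $D$ is a perfect square — note (2b) cannot apply since $X$ is disconnected and $\sigma_u(A(X))\setminus\{k\}=\emptyset$. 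When $D$ is a perfect square we have $\Delta=1$, and $\lambda^\pm=\tfrac12(k+\ell\pm\sqrt D)\in\mathbb{Z}$.

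Finally I would unwind condition (3) of Theorem \ref{pstA}. With $\sigma_{uv}^+(A)\setminus\{k\}\cup\mathcal{R}=\{\lambda^+,\lambda^-\}$ (here $\mathcal{R}=\{\lambda^\pm,k\}$ but the ``$\setminus\{k\}$'' and the disconnected form combine so the relevant ``$+$'' set is $\{\lambda^+,\lambda^-\}$ — I would double-check this bookkeeping against the statement of Theorem \ref{sc}(2b) and the fact that $k\in\sigma_{uv}^-$) and $\sigma_{uv}^-(A)=\{k\}$, condition (3) with $\Delta=1$ becomes: $\nu_2(\lambda^+-\lambda^-)>\nu_2(\lambda^+-k)$ and the equality $\nu_2(\lambda^+-k)=\nu_2(\lambda^--k)$ (trivial since there is only one element $\mu=\theta=k$ of the ``$-$'' set). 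Since $\lambda^+-\lambda^-=(\lambda^+-k)-(\lambda^--k)$ and also $\lambda^++\lambda^-=k+\ell$, one gets $(\lambda^+-k)+(\lambda^--k)=\ell-k$; combined with $\nu_2(\lambda^+-\lambda^-)>\nu_2(\lambda^+-k)$ being incompatible unless $\nu_2(\lambda^+-k)=\nu_2(\lambda^--k)$, the surviving condition is exactly $\nu_2(\lambda^+-k)=\nu_2(\lambda^--k)$. This matches the claimed statement. The only mild obstacle is the index bookkeeping in Theorem \ref{sc}(2b) versus Theorem \ref{pstA}(3) — making sure the $\mathcal{R}$-sets and the $\setminus\{k\}$ truncations are combined correctly for the disconnected case $X=O_2(k)$ — but once that is pinned down the computation is immediate.

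\begin{proof}
Since $u$ and $v$ are isolated in $X$, the graph $X$ is disconnected, so by Theorem \ref{sc}(2) adjacency strong cospectrality of $u$ and $v$ in $X\vee Y$ can only arise via alternative (2b), forcing $X=O_2(k)$ for some $k\in\mathbb{Z}$, and then $\sigma_{uv}^+(A)=\{\lambda^\pm\}$, $\sigma_{uv}^-(A)=\{k\}$. With $m=2$ we have $D=(k-\ell)^2+8n$ and $\sigma_u(A(X))=\{k\}\subseteq\mathbb{Z}$, so condition (2) of Theorem \ref{pstA} is (2a), which holds iff $D$ is a perfect square; in that case $\Delta=1$ and $\lambda^\pm\in\mathbb{Z}$. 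Condition (3) of Theorem \ref{pstA}, with $\Delta=1$, the ``$+$'' set equal to $\{\lambda^+,\lambda^-\}$ and the ``$-$'' set equal to $\{k\}$, reads $\nu_2(\lambda^+-\lambda^-)>\nu_2(\lambda^\pm-k)$ together with the (automatic) equality $\nu_2(\lambda^+-k)=\nu_2(\lambda^--k)$. Since $\lambda^+-\lambda^-=(\lambda^+-k)-(\lambda^--k)$, the strict inequality $\nu_2(\lambda^+-\lambda^-)>\nu_2(\lambda^+-k)$ already forces $\nu_2(\lambda^+-k)=\nu_2(\lambda^--k)$, and conversely this equality makes condition (3) hold. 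Hence adjacency perfect state transfer occurs between $u$ and $v$ in $X\vee Y$ iff $X=O_2(k)$, $D$ is a perfect square, and $\nu_2(\lambda^+-k)=\nu_2(\lambda^--k)$.
\end{proof}
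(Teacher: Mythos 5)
Your proof is correct and takes essentially the same route as the paper, whose own proof is simply the remark that the result is immediate from Theorem \ref{pstA} (via Theorem \ref{sc}(2b) for the strong cospectrality data, Theorem \ref{pstA}(2a) for integrality since the disconnectedness of $X$ rules out (2b), and the $\nu_2$ computation you carried out); you have merely made that bookkeeping explicit. One cosmetic caution: the parenthetical in your informal plan claiming two isolated vertices are strongly cospectral in $X$ is false (they never are, which is exactly why only alternative (2b) of Theorem \ref{sc}(2) survives), but your actual proof paragraph does not rely on that claim.
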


\section{Families of joins}\label{secFam}

%We now apply our results from the previous sections to 
We now determine quantum state transfer in self-joins and iterated join graphs.

\subsection{Self-joins}

Denote the $r$-fold join of $X$ with itself by $X^r:=\bigvee_{j=1}^{r} X$. The following characterizes periodicity in $X^r$. It follows from Corollaries \ref{joinperpreserve} and \ref{perequiv}, and Theorem \ref{Xper}.

\begin{theorem}
Let $M\in\{A,L\}$ and $\phi(M(X),t)\in\mathbb{Z}[t]$. Vertex $u$ is periodic in $X$ if and only if it is periodic in $X^r$ for all $r$, if and only if $\sigma_u(M(X))\subseteq\mathbb{Z}$. Moreover, $X$ is periodic if and only if $X^r$ is periodic for all $r$, if and only if $X$ is (Laplacian) integral.
\end{theorem}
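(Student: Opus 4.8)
The plan is to reduce everything to the telescoping identity $X^{r}=X\vee X^{r-1}$ (for $r\ge 2$) and feed it into the join-preservation results already established, applied level by level up the tower of self-joins. The first thing I would record is how the ambient invariants travel up the tower: if $X$ is simple (resp.\ weighted $k$-regular on $m$ vertices), then each $X^{s}$ is again simple (resp.\ weighted $\big(k+(s-1)m\big)$-regular on $sm$ vertices), and for the particular join $X\vee X^{r-1}$ the quantity of Lemma~\ref{aqjoin} is
\begin{equation*}
D=\big(k-(k+(r-2)m)\big)^{2}+4m(r-1)m=(rm)^{2},
\end{equation*}
so $D$ is a perfect square with $\lambda^{+}=k+(r-1)m$ and $\lambda^{-}=k-m$. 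This computation is the linchpin: it ensures the hypotheses of Corollaries~\ref{joinper1} and~\ref{perequiv} are met at every level, and it is exactly what prevents the quadratic-integer alternative of Lemma~\ref{per} from arising once one join has been formed, since it forces $\sqrt D\in\mathbb Q$.

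For the vertex statement I would iterate the relevant corollary. In the Laplacian case this is Corollary~\ref{joinperpreserve}, whose hypothesis is precisely $\phi(L(X),t)\in\mathbb Z[t]$ and which refers only to $X$: the implication $\sigma_u(L(X))\subseteq\mathbb Z\Rightarrow u$ periodic in $X^{r}$ is its $(3)\Rightarrow(2)$ applied with $Y=X^{r-1}$ (a simple positively weighted graph); ``$u$ periodic in $X^{r}$ for every $r$'' $\Rightarrow u$ periodic in $X$ is just the case $r=1$; and ``$u$ periodic in $X$'' $\Rightarrow\sigma_u(L(X))\subseteq\mathbb Z$ is its $(1)\Rightarrow(3)$. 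The adjacency case runs the same way through Corollary~\ref{perequiv}: here $k=\max\sigma(A(X))$ lies in $\sigma_u(A(X))$ (the Perron vector $\mathbf 1_m$ witnesses $E_{k}\mathbf e_u\ne\mathbf 0$), and, under the (intended) standing assumption $k\in\mathbb Z$, each summand $X^{r-1}$ has integer degree $k+(r-2)m$ while $D=(rm)^{2}$ is a square, so Corollary~\ref{perequiv} supplies both directions; alternatively, when $\sigma_u(A(X))\subseteq\mathbb Z$ one reads periodicity in $X^{r}$ straight off Lemma~\ref{supp}(2), since then $\sigma_u(A(X^{r}))=\big(\sigma_u(A(X))\setminus\{k\}\big)\cup\{k+(r-1)m,\,k-m\}\subseteq\mathbb Z$ satisfies the ratio condition.

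For the ``Moreover'' clause I would argue in a short cycle. If $X$ is $($Laplacian$)$ integral then $U_M(X,2\pi)=\sum_j e^{2\pi i\lambda_j}E_j=I$, so $X$ is periodic; and by the standard join spectrum $($or by iterating Lemma~\ref{supp}$)$ the spectrum of $M(X^{r})$ is obtained from $\sigma(M(X))$ by integer shifts together with $\{0,rm\}$ in the Laplacian case or $\{k+(r-1)m,\,k-m\}$ in the adjacency case, whence $X$ $($Laplacian$)$ integral $\Rightarrow X^{r}$ $($Laplacian$)$ integral, and then $X^{r}$ integral $\Rightarrow X^{r}$ periodic for all $r$ $\Rightarrow X$ periodic $(r=1)$. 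Conversely, if $X$ is periodic then every vertex of $X$ is periodic, so the vertex part just proved gives $\sigma_u(M(X))\subseteq\mathbb Z$ for all $u$, i.e.\ $\sigma(M(X))=\bigcup_u\sigma_u(M(X))\subseteq\mathbb Z$, so $X$ is $($Laplacian$)$ integral; equivalently one may undo one join via Theorem~\ref{Xper}, namely ``$X^{r}$ periodic for all $r$'' $\Rightarrow X^{2}$ periodic $\Rightarrow X^{2}$ $($Laplacian$)$ integral $\Rightarrow X$ $($Laplacian$)$ integral. Closing the chain gives the three-way equivalence.

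The step I expect to be the main obstacle is the bookkeeping of integrality up the tower in the adjacency case: one must check that $\phi(A(X^{r}),t)\in\mathbb Z[t]$ and that the degree $k+(r-1)m$ stays in $\mathbb Z$, which rests on $k\in\mathbb Z$ (then $(t-k)\mid\phi(A(X),t)$ in $\mathbb Z[t]$ and $\phi(A(X^{r}),t)=\big(t-(k+(r-1)m)\big)\big(t-(k-m)\big)^{r-1}\big(\phi(A(X),t)/(t-k)\big)^{r}\in\mathbb Z[t]$), together with excluding Lemma~\ref{per}(ii) for a vertex that remains periodic after one join — exactly the place where the identity $\sqrt D=rm$ is invoked. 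Without $k\in\mathbb Z$ the adjacency statement as phrased already fails (a weighted $K_2$ with irrational edge weight has periodic end-vertices but irrational eigenvalue support), so I would state that hypothesis explicitly in the adjacency case; with it in place, everything else is the routine telescoping described above.
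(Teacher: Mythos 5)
Your proposal is correct and follows essentially the same route as the paper, whose proof of this theorem is precisely the observation that $X^r=X\vee X^{r-1}$ (with $D=(rm)^2$ a perfect square) reduces everything to Corollaries \ref{joinperpreserve} and \ref{perequiv} and Theorem \ref{Xper}; your extra bookkeeping (the factorization of $\phi(A(X^r),t)$, regularity of $X^{r-1}$) just makes explicit the verifications the paper leaves implicit. Your remark that the adjacency case needs $k\in\mathbb{Z}$ is well taken, since that hypothesis appears in the cited Corollary \ref{perequiv} but not in the theorem's statement, and your weighted-$K_2$ example (with edge weight $\sqrt{2}$, so that $\phi(A(X),t)\in\mathbb{Z}[t]$ still holds) shows it cannot be dropped.
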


Next, we characterize strong cospectrality in $X^r$.

\begin{theorem}
\label{scr}
Suppose vertices $u$ and $v$ are strongly cospectral vertices in $X$ with respect to $M$.
\begin{enumerate}
\item If $m\notin\sigma_{uv}^-(L(X))$, then $u$ and $v$ are Laplacian strongly cospectral in $X^r$ for all $r\geq 1$ with
\begin{equation*}
\sigma_{uv}^+(L)=\{\lambda+(r-1)m: \lambda\in\sigma_{uv}^+(L(X))\backslash \{0\}\}\cup\mathcal{R}\quad \text{and}\quad \sigma_{uv}^-(L)=\{\mu+(r-1)m:\mu\in\sigma_{uv}^-(L(X))\}, 
%\begin{cases}
%   \{\lambda+(r-1)m:0\neq \lambda\in\sigma_{uv}^+(L(X))\}\cup \{0,rm\} , & \text{if $X$ is connected}\\
%    \{\lambda+(r-1)m:0\neq \lambda\in\sigma_{uv}^+(L(X))\}\cup \{0,rm,(r-1)m\},& \text{otherwise}
%\end{cases}
\end{equation*}
where $\mathcal{R}=\{0,rm\}$ if $X$ is connected and $\mathcal{R}=\{0,rm,(r-1)m\}$ otherwise. If we add that $X\neq K_2$ is unweighted, then $u$ and $v$ are strongly cospectral in $X^r$ for all $r\geq 1$.
\item If $k-m\notin\sigma_{uv}^-(A(X))$, then $u$ and $v$ are adjacency strongly cospectral in $X^r$ for all $r\geq 1$ with
\begin{equation*}
\sigma_{uv}^+(A)=\sigma_{uv}^+(A(X))\backslash\{k\}\cup\mathcal{R} \quad \text{and}\quad \sigma_{uv}^-(A)=\sigma_{uv}^-(A(X)), 
%\begin{cases}
%    \sigma_{uv}^+(A(X))\backslash\{k\}\cup \{k-m,k+(r-1)m\}, & \text{if $X$ is connected}\\
%    \sigma_{uv}^+(A(X))\backslash\{k\}\cup \{k-m,k+(r-1)m,k\},& \text{otherwise}
%\end{cases}
\end{equation*}
where $\mathcal{R}= \{k-m,k+(r-1)m\}$ if $X$ is connected and $\mathcal{R}= \{k-m,k+(r-1)m,k\}$ otherwise.
\end{enumerate}
\end{theorem}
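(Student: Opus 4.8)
The plan is to reduce everything to Theorem \ref{sc} by viewing the $r$-fold self-join as an ordinary join $X \vee Z$ with $Z = X^{r-1}$, and then to induct on $r$. For the base case $r=1$ there is nothing to prove, and for $r=2$ the statement is exactly Theorem \ref{sc}(1a) (resp.\ (2a)) with $Y = X$, using that $\sigma_{uv}^{\pm}$ of the join are described there and that $|V(X)| = m$. For the inductive step I would assume the displayed formulas hold for $X^{r-1}$ (so $u,v$ are strongly cospectral in $Z := X^{r-1}$, which has $(r-1)m$ vertices), and then apply Theorem \ref{sc} to $X^r = X \vee X^{r-1}$. First, however, one must verify the hypothesis of Theorem \ref{sc}(1a): that $m' \notin \sigma_{uv}^-(L(Z))$ where $m' = |V(X)| = m$ is the order of the \emph{first} factor in $X \vee Z$. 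Wait --- one has to be careful about which graph plays the role of ``$X$'' and which plays ``$Y$'' in Theorem \ref{sc}; since the join is commutative as a graph but the formulas in Lemma \ref{ljoin}/\ref{aqjoin} and Theorem \ref{sc} are written asymmetrically, I would take the single copy of $X$ to be the ``$X$'' and $X^{r-1}$ to be the ``$Y$''. Then strong cospectrality of $u,v$ (which live in the single copy) in $X^r$ requires, by Theorem \ref{sc}(1a), that $u,v$ be strongly cospectral in $X$ and that $m \notin \sigma_{uv}^-(L(X))$ --- which is precisely the standing hypothesis --- and it yields
\[
\sigma_{uv}^+(L(X^r)) = \{\lambda + (r-1)m : 0 \neq \lambda \in \sigma_{uv}^+(L(X))\} \cup \mathcal{R}, \qquad \sigma_{uv}^-(L(X^r)) = \{\mu + (r-1)m : \mu \in \sigma_{uv}^-(L(X))\},
\]
with $\mathcal{R} = \{0, rm\}$ when $X$ is connected and $\mathcal{R} = \{0, rm, (r-1)m\}$ otherwise, since $|V(X^{r-1})| = (r-1)m$. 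This is exactly the claimed formula, so no genuine induction on the eigenvalue data is even needed --- a single application of Theorem \ref{sc} with $Y = X^{r-1}$ suffices, once one observes that $X^{r-1}$ is connected for $r \geq 2$ regardless of whether $X$ is (so the shift is by $n = (r-1)m$ and the ``$\mathcal{R}$'' contribution from the $Y$-side is trivial). The adjacency case is identical, using Theorem \ref{sc}(2a): one needs $\lambda^- \notin \sigma_{uv}^-(A(X))$ where now $\lambda^{\pm}$ are computed for $X \vee X^{r-1}$; since $X^{r-1}$ is $((r-2)m + \ell')$-regular... actually here one should just note $A$-regularity of the self-join and compute $D = (k - k_Z)^2 + 4m\,(r-1)m$ where $k_Z$ is the valency of $X^{r-1}$, and check that the resulting $\lambda^-$ equals $k - m$ (this is a short computation: the self-join $X^r$ has the ``new'' eigenvalues $k + (r-1)m$ and $k - m$, the latter with multiplicity $r-1$), so the hypothesis $\lambda^- \notin \sigma_{uv}^-(A(X))$ becomes $k - m \notin \sigma_{uv}^-(A(X))$ as stated.

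For the final sentence of part (1) --- that if $X \neq K_2$ is unweighted then $u,v$ are automatically strongly cospectral in $X^r$ --- I would invoke the observation recorded just before Corollary \ref{scj}: for unweighted $X$, one has $m \in \sigma_{uv}^-(L(X))$ only when $X = K_2$ (because $m$ is an eigenvalue of $L(X)$ exactly when $X$ is a join, and the $-$-sign in $\sigma_{uv}^-$ forces $X = K_2$ by the eigenvector in \eqref{Elamb}). Hence for unweighted $X \neq K_2$ the hypothesis $m \notin \sigma_{uv}^-(L(X))$ is free, and strong cospectrality of $u,v$ in $X$ already implies it in $X^r$. (One should also remark that this uses Theorem \ref{sc}'s standing assumption $m \geq 2$, which holds since $X \neq K_1$ is needed for $u \neq v$ to exist.)

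The main obstacle I anticipate is purely bookkeeping: pinning down precisely which factor is ``$X$'' and which is ``$Y$'' when applying the asymmetric Theorem \ref{sc}, and correctly tracking the orders --- the shift is by $|V(Y)| = (r-1)m$ and the extra set $\mathcal{R}$ picks up $\{0, rm\}$ (the $0$ and $m+n = rm$ eigenvalues of the whole join) plus, in the disconnected case, the eigenvalue $n = (r-1)m$. There is no deep mathematical difficulty, since everything is an immediate corollary of Theorem \ref{sc} applied once with $Y = X^{r-1}$; the only subtlety worth flagging explicitly in the writeup is that $X^{r-1}$ is connected for every $r \geq 2$, so when we build $X^r = X \vee X^{r-1}$ the ``connected vs.\ disconnected'' dichotomy in $\mathcal{R}$ is governed entirely by $X$ (the single factor), which is why the disconnected-$X$ case keeps the extra eigenvalue $(r-1)m$ (coming from the ``$n$'' in Lemma \ref{supp}/Theorem \ref{sc}(1a)'s $\mathcal{R} = \{0, m+n, n\}$) while the connected-$X$ case does not.
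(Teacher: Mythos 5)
Your proposal is correct and follows essentially the same route as the paper: a single application of Theorem \ref{sc} with $Y=X^{r-1}$ (so $n=(r-1)m$, $\sqrt{D}=rm$, hence $\lambda^-=k-m$ and $\lambda^+=k+(r-1)m$), combined with the observation preceding Corollary \ref{scj} about unweighted $X\neq K_2$ for the last claim of part (1). The only slip is incidental: $X^{r-1}$ need not be connected when $r=2$ and $X$ is disconnected, but this is harmless because, as you yourself note, the connected/disconnected dichotomy for $\mathcal{R}$ in Theorem \ref{sc} is governed solely by the first factor $X$.
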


\begin{proof}
Note that (1) follows from Theorem \ref{sc}(1) and Remark \ref{scj}. To prove (2), note that $X^r$ is $(k+(r-1)m)$-regular with $rm$ vertices. Taking $Y=X^{r-1}$, we get $X^{r}=X\vee Y$ and $D=((k+(r-2)m)-k)^2+4(r-1)m^2=r^2m^2$, and so $\lambda^-=k-m$ and $\lambda^+=k+(r-1)m$. Invoking Theorem \ref{sc}(2) yields (2).
\end{proof}

Taking $Y=X^{r-1}$ in Theorem \ref{pstL} yields a characterization of Laplacian PST in $r$-fold joins.

\begin{theorem}
\label{pstAxr}
Let $r,m\geq 2$ and $\phi(L(X),t)\in\mathbb{Z}[t]$. Vertices $u$ and $v$ in $X$ admit Laplacian perfect state transfer in $X^r$ if and only if either:
\begin{enumerate}
\item $X=O_2$, in which case $X^r=CP(2r)$, where $r$ is even.
\item Vertices $u$ and $v$ belong to the same connected component of $X$ and all of the conditions below hold.
\begin{enumerate}
\item Vertices $u$ and $v$ are Laplacian strongly cospectral in $X$ and $m\notin\sigma_{uv}^-(L(X))$.
\item The eigenvalues in $\sigma_u(L(X))$ are all integers.
\item One of the following conditions hold.
\begin{enumerate}
\item $X$ is connected and one of the following conditions hold for all $0\neq \lambda\in\sigma_{uv}^+(L(X))\cup\{m\}$ and $\mu,\theta\in\sigma_{uv}^-(L(X))$.
\begin{enumerate}
\item $\nu_2(\lambda)>\nu_2(\mu)=\nu_2(\theta)$ and $r$ is any integer.
\item $\nu_2(\mu)>\nu_2(\lambda)$ and $r$ is even.
\item $\nu_2(\lambda)=\nu_2(\mu)$, $\nu_2\left(\frac{\lambda+(r-1)m}{2^{\nu_2(m)}}\right)>\nu_2\left(\frac{\mu+(r-1)m}{2^{\nu_2(m)}}\right)=\nu_2\left(\frac{\theta+(r-1)m}{2^{\nu_2(m)}}\right)$ and $r$ is even.
\end{enumerate}
\item $X\neq O_2$ is disconnected and (iA) above holds.
\end{enumerate}
\end{enumerate}
\end{enumerate}
If these conditions hold, then $\tau_{X^r}=\frac{\pi}{g}$, where $g=\operatorname{gcd}\left(\mathcal{T}\right)$, $\mathcal{T}=\{rm\}\cup \{\lambda-m:\lambda\in \sigma_u(L(X))\backslash\{0\}\}$ if $X$ is connected, and $\mathcal{T}=\{m\}\cup\sigma_u(L(X))\backslash\{0\}$ otherwise.
\end{theorem}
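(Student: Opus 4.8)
The plan is to exploit the identity $X^r=X\vee X^{r-1}$ and apply Theorem \ref{pstL} with $Y=X^{r-1}$, so that $n=|V(X^{r-1})|=(r-1)m$ and $m+n=rm$. Each hypothesis of Theorem \ref{pstL} then becomes a condition on $X$ and $r$ after the substitutions $\lambda+n=\lambda+(r-1)m$, $\mu+n=\mu+(r-1)m$ and $\nu_2(n)=\nu_2(r-1)+\nu_2(m)$; the minimum PST time specialises the same way, after simplifying the relevant $\operatorname{gcd}$ via $\operatorname{gcd}(rm,\lambda+(r-1)m)=\operatorname{gcd}(rm,\lambda-m)$ when $X$ is connected and via $\operatorname{gcd}(rm,(r-1)m)=m$ (so that $m$ may be adjoined to the generating set) when $X$ is disconnected.

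First I would dispose of $X=O_2$. Here $u$ and $v$ are the two vertices of one copy of $O_2$, so condition (1ii) of Theorem \ref{pstL} applies, and part (3c) of that theorem gives Laplacian PST in $O_2\vee O_2^{r-1}$ exactly when $n=2(r-1)\equiv 2\pmod 4$, i.e.\ when $r$ is even. Identifying $O_2^r$ with $K_{2r}$ minus a perfect matching, that is, with $CP(2r)$, yields statement (1).

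For $X\neq O_2$, condition (1i) of Theorem \ref{pstL} must hold, which is exactly (2a); part (2) of Theorem \ref{pstL} is exactly (2b); and the requirement that $u$ and $v$ lie in a common component of $X$ is the closing sentence of Theorem \ref{sc}. It remains to translate part (3) of Theorem \ref{pstL}. For $X$ connected I would substitute $n=(r-1)m$ into each of the three subcases (3ai), (3aii), (3aiii). The structural observations driving the translation are: \emph{(a)} since $m$ itself lies in the index set $\sigma_{uv}^+(L(X))\cup\{m\}$, the hypothesis $\nu_2(\lambda)>\nu_2(\mu)$ for \emph{all} $\lambda$ there already forces $\nu_2(m)>\nu_2(\mu)$, whence $\nu_2(n)=\nu_2(r-1)+\nu_2(m)\ge\nu_2(m)>\nu_2(\mu)$ for every $r$, so the condition $\nu_2(n)>\nu_2(\mu)$ in (3ai) is automatic and (3ai) becomes the first subcase of (2c)(i) with $r$ unrestricted; and \emph{(b)} in (3aii) and (3aiii) the equality $\nu_2(\lambda)=\nu_2(n)$, applied with $\lambda=m$, forces $\nu_2(r-1)=0$, i.e.\ $r$ even, after which $2^{\nu_2(n)}=2^{\nu_2(m)}$ and the remaining inequalities read off as the second and third subcases of (2c)(i). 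When $X\neq O_2$ is disconnected, part (3b) of Theorem \ref{pstL} retains only (3ai), which translates as above to (2c)(ii). The minimum PST time then follows from Theorem \ref{pstL} together with the $\operatorname{gcd}$ simplifications noted above.

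I expect the main obstacle to be the $2$-adic bookkeeping in the connected case: isolating cleanly why subcase (3ai) imposes no parity restriction on $r$ while (3aii) and (3aiii) force $r$ even, and then verifying that the collapsed $\operatorname{gcd}$ over the generating set coincides with the sets $\mathcal{T}$ stated in the theorem. The remaining steps are a direct specialisation of Theorems \ref{pstL} and \ref{sc} via the self-join identity.
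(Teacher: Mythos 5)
Your route is exactly the paper's: the paper proves Theorem \ref{pstAxr} by the single observation that one may set $Y=X^{r-1}$ (so $n=(r-1)m$, $m+n=rm$) in Theorem \ref{pstL}, and your treatment of the $O_2$ case via Theorem \ref{pstL}(3c), your point that (3ai) imposes no parity restriction because $m$ lies in the $\lambda$-range (so $\nu_2(n)=\nu_2(r-1)+\nu_2(m)>\nu_2(\mu)$ automatically), your derivation of ``$r$ even'' from $\nu_2(m)=\nu_2(n)$, and the gcd reductions $\operatorname{gcd}(rm,\lambda+(r-1)m)=\operatorname{gcd}(rm,\lambda-m)$ and $\operatorname{gcd}(rm,(r-1)m)=m$ are all correct.

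The one step that does not simply ``read off'' is subcase (B). The direct specialization of Theorem \ref{pstL}(3aii) is $\nu_2(\mu)>\nu_2(\lambda)=\nu_2(m)$ for all $0\neq\lambda\in\sigma_{uv}^+(L(X))\cup\{m\}$ and $\mu\in\sigma_{uv}^-(L(X))$, together with $r$ even; this is strictly stronger than the condition ``$\nu_2(\mu)>\nu_2(\lambda)$ and $r$ even'' that you record. The dropped equality matters for the sufficiency direction: if $r$ is even and $\nu_2(\mu)>\nu_2(\lambda)$ for all such $\lambda,\mu$, but some $0\neq\lambda_0\in\sigma_{uv}^+(L(X))$ has $\nu_2(\lambda_0)\neq\nu_2(m)$, then $\nu_2(\lambda_0+(r-1)m)=\min\{\nu_2(\lambda_0),\nu_2(m)\}\leq\nu_2(m)=\nu_2(\mu+(r-1)m)$, so the key requirement (\ref{nu2}) from the proof of Theorem \ref{pstL} fails and there is no PST in $X^r$; nor can (3ai) or (3aiii) apply, since both contradict $\nu_2(\mu)>\nu_2(\lambda)$. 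Hence your ``if'' direction for (B) needs the equality $\nu_2(\lambda)=\nu_2(m)$ carried along --- note that (C) is safe precisely because its stated equality $\nu_2(\lambda)=\nu_2(\mu)$ ranges over $\lambda=m$ and so recovers the full chain. This imprecision is inherited from the statement itself, but since your argument asserts an equivalence, you should either formulate (B) with $\nu_2(\lambda)=\nu_2(m)$ (as the specialization actually yields) or justify why that equality is automatic, which it is not.
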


\begin{example}
Let $X=P_3$ with end vertices $u$ and $v$, and consider $X^r$ for any integer $r$. From Example \ref{p31}, $\sigma_{uv}^+(L(X))=\{0,3\}$, $\sigma_{uv}^-(L(X))=\{1\}$ and Laplacian PST does not occur in $X$. Since $\lambda=m=3$ and $\mu=1$ are odd, we have $\nu_2(\lambda+(r-1)m)=\nu_2(r)$ and $\nu_2(\mu+(r-1)m)=1$. By Theorem \ref{pstAxr}(2ci), PST occurs between $u$ and $v$ in $X^r$ if and only if $r\equiv 0$ (mod 4). In this case, $P_3^r=(O_2\vee O_1)^r=CP(2r)\vee K_r$. Indeed, Theorem \ref{pstAxr}(1) yields PST in $CP(2r)$ between non-adjacent vertices with minimum time $\frac{\pi}{2}$, and as $r\equiv 0$ (mod 4), Corollary \ref{pstLc2} implies that this PST is preserved in $CP(2r)\vee K_r$ at the same time.
\end{example}

Using Corollary \ref{pstLc2}, we characterize when Laplacian PST in preserved in self-joins.

\begin{corollary}
\label{pstAxrcor}
Let $m\geq 2$ and $\phi(L(X),t)\in\mathbb{Z}[t]$. Suppose Laplacian perfect state transfer occurs between vertices $u$ and $v$ in $X$ and $\tau_X=\frac{\pi}{h}$. Then it occurs between them in $X^r$ if and only if $m\notin\sigma_{uv}^-(L(X))$ and $m\equiv 0$ (mod $2^{\alpha}$), where $\alpha>\nu_2(h)$. In this case, $\tau_{X^r}=\frac{\pi}{g}$, where $g=\operatorname{gcd}(\mathcal{T})$, $\mathcal{T}$ is given in Theorem \ref{pstAxr} and $\nu_2(g)=\nu_2(h)$. Moreover, if $X$ is connected, then $\frac{h}{g}$ is rational. Otherwise, $\frac{h}{g}=\frac{h}{\operatorname{gcd}(h,m)}$ is odd.
\end{corollary}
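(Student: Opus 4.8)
The approach is to specialize Corollary~\ref{pstLc2} to $Y=X^{r-1}$. Since $X^{r}=X\vee X^{r-1}$ and $X^{r-1}$ has $n:=(r-1)m$ vertices (and since the hypothesis of Laplacian PST between $u$ and $v$ in $X$ rules out $X=O_2$), Corollary~\ref{pstLc2} applies directly: Laplacian PST occurs between $u$ and $v$ in $X^{r}$ if and only if $m\notin\sigma_{uv}^-(L(X))$ and there is an integer $\alpha>\nu_2(h)$ with $m\equiv 0$ and $n\equiv 0$ (mod $2^{\alpha}$).

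The first step is to observe that the condition on $n$ is automatic: with $n=(r-1)m$ one has $\nu_2(n)=\nu_2(r-1)+\nu_2(m)\ge\nu_2(m)$, so $2^{\alpha}\mid m$ already forces $2^{\alpha}\mid n$. Hence the criterion collapses to precisely $m\notin\sigma_{uv}^-(L(X))$ together with $m\equiv 0$ (mod $2^{\alpha}$) for some $\alpha>\nu_2(h)$, which is the claimed characterization.

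For the minimum time, I would take the time formula from Corollary~\ref{pstLc2} --- which already gives $\tau_{X^{r}}=\pi/g$ with $\nu_2(g)=\nu_2(h)$ --- and substitute $n=(r-1)m$, so that $m+n=rm$; rewriting the generating set $\mathcal{S}\cup\mathcal{R}$ of Lemma~\ref{supp} in terms of $m$ and $r$ yields the set $\mathcal{T}$ displayed in Theorem~\ref{pstAxr}. The two presentations agree because the value $\gcd\{\lambda_0-\lambda:\lambda\in\sigma_u(L)\}$ does not depend on the chosen base point $\lambda_0$ (each $\lambda_0-\lambda_0'$ is itself one of the differences), so it is immaterial whether one anchors at $\lambda_0=0$ or at $\lambda_0=rm$. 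The equality $\nu_2(g)=\nu_2(h)$ is inherited verbatim.

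It remains to analyze $h/g$. When $X$ is connected, the assertion that $h/g$ is rational is exactly the corresponding clause of Corollary~\ref{pstLc2}. When $X$ is disconnected, Corollary~\ref{pstLc2} gives $h/g=\operatorname{lcm}(h/h_1,h/h_2)$ with $h_1=\gcd(m,h)$ and $h_2=\gcd(n,h)=\gcd((r-1)m,h)$; since $h_1\mid m\mid(r-1)m$ and $h_1\mid h$, we get $h_1\mid h_2$, hence $h/h_2\mid h/h_1$ and $\operatorname{lcm}(h/h_1,h/h_2)=h/h_1=h/\gcd(h,m)$. Finally $\nu_2(m)\ge\alpha>\nu_2(h)$ gives $\nu_2(\gcd(h,m))=\nu_2(h)$, so $\nu_2(h/\gcd(h,m))=0$ and the quotient is odd. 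There is no real obstacle here: the result is a straightforward specialization of Corollary~\ref{pstLc2}, and the only care required is the divisibility bookkeeping for the odd quotient and the base-point matching of the two minimum-time expressions.
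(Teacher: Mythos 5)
Your proposal is correct and follows exactly the route the paper intends: the corollary is obtained by specializing Corollary \ref{pstLc2} to $Y=X^{r-1}$ (so $n=(r-1)m$), whence the condition on $n$ is automatic, the base-point-independent gcd shows the two expressions for $\mathcal{T}$ give the same $g$, and $h_1\mid h_2$ collapses $\operatorname{lcm}(h/h_1,h/h_2)$ to $h/\operatorname{gcd}(h,m)$, which is odd since $\nu_2(m)>\nu_2(h)$. Nothing further is needed.
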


In Theorem \ref{pstAxr}(2a) and Corollary \ref{pstAxrcor}, we may drop the condition $m\notin\sigma_{uv}^-(L(X))$ if $X\neq K_2$ is unweighted. Taking $Y=X^{r-1}$ in Theorem \ref{pstA} yields a characterization of adjacency PST in $r$-fold joins. 

\begin{theorem}
\label{pstLxr}
Let $r,m\geq 2$, $k\in\mathbb{Z}$ and $\phi(A(X),t)\in\mathbb{Z}[t]$. If $X$ is simple, then vertices $u$ and $v$ in $X$ admit adjacency perfect state transfer in $X^r$ if and only if either:
\begin{enumerate}
\item $X=O_2$, in which case $X^r=CP(2r)$, where $r$ is even.
\item Vertices $u$ and $v$ belong to the same connected component of $X$ and all of the conditions below hold.
\begin{enumerate}
\item Vertices $u$ and $v$ are adjacency strongly cospectral in $X$ and $k-m\notin\sigma_{uv}^-(L(X))$.
\item The eigenvalues in $\sigma_u(A(X))$ are all integers.
\item One of the following conditions hold.
\begin{enumerate}
\item $X$ is connected and one of the following conditions hold for all $0\neq \lambda\in\sigma_{uv}^+(A(X))\backslash\{k\}$ and $\mu,\theta\in\sigma_{uv}^-(A(X))$.
\begin{enumerate}
\item $\nu_2(m)>\nu_2(k-\mu)$, $\nu_2(k-\lambda)>\nu_2(k-\mu)$ and $r$ is any integer.
\item $\nu_2(k-\mu)>\nu_2(m)=\nu_2(k-\lambda)$ and $r$ is even
\item $\nu_2(m)=\nu_2(k-\eta)=\nu_2(k-\mu)$, $\nu_2\left(\frac{k-m-\lambda}{2^{\nu_2(m)}}\right)>\nu_2\left(\frac{k-m-\mu}{2^{\nu_2(m)}}\right) =\nu_2\left(\frac{k-m-\theta}{2^{\nu_2(m)}}\right)$ and $r$ is an integer such that $\nu_2(r)>\nu_2\left(\frac{k-m-\mu}{2^{\nu_2(m)}}\right)$.
\end{enumerate}
\item $X\neq O_2$ is disconnected and (iA) above holds.
\end{enumerate}

\end{enumerate}
\end{enumerate}
If these conditions hold, then  $\tau_{X^r}=\frac{\pi}{g}$, where $g=\operatorname{gcd}\left(\mathcal{T}\right)$, $\mathcal{T}=\{rm\}\cup \{k-m-\lambda:\lambda\in\sigma_u(A(X))\backslash\{k\}\}$ if $X$ is connected and $\mathcal{T}=\{m\}\cup \{k-\lambda:\lambda\in\sigma_u(A(X))\backslash\{k\}\}$ otherwise.
\end{theorem}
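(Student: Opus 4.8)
The plan is to derive this as a direct specialization of Theorem~\ref{pstA} by setting $Y=X^{r-1}$, exactly as was done for the Laplacian case in Theorem~\ref{pstAxr}. First I would record the spectral parameters of the join $X^r = X\vee X^{r-1}$: since $X$ is $k$-regular and $X^{r-1}$ is $(k+(r-2)m)$-regular on $(r-1)m$ vertices, Lemma~\ref{aqjoin} gives $D = \big((k)-(k+(r-2)m)\big)^2 + 4m\cdot(r-1)m = (r-2)^2m^2 + 4(r-1)m^2 = r^2m^2$, a perfect square, so $\sqrt{D}=rm$ and $\lambda^{\pm}=\tfrac12\big(2k+(r-2)m \pm rm\big)$, i.e.\ $\lambda^+ = k+(r-1)m$ and $\lambda^- = k-m$. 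In particular condition (2a) of Theorem~\ref{pstA} is the only one available (since $D$ is automatically a perfect square, $\Delta=1$ and we are always in the integer case), which is why the eigenvalues in $\sigma_u(A(X))$ must all be integers: this is hypothesis (b) of item~(2).

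Next I would translate the three conditions of Theorem~\ref{pstA} one at a time. Condition (1) becomes: either $X=O_2$ (the case $O_2(k)$ with $k=0$, since $X$ is simple; here $X^r=CP(2r)$ and one checks via Theorem~\ref{genpstA} or directly that PST requires $r$ even), or $u,v$ are adjacency strongly cospectral in $X$ with $\lambda^- = k-m \notin \sigma_{uv}^-(A(X))$ --- this is (2a). Condition (3) of Theorem~\ref{pstA}, using $\mathcal{R}=\{\lambda^{\pm}\}$ when $X$ is connected and $\mathcal{R}=\{\lambda^{\pm},k\}$ otherwise (from Theorem~\ref{sc}(2)), together with $\sigma_{uv}^+(A) = \sigma_{uv}^+(A(X))\setminus\{k\}\cup\mathcal{R}$ and $\sigma_{uv}^-(A)=\sigma_{uv}^-(A(X))$ from Theorem~\ref{scr}(2), becomes the requirement
\[
\nu_2(\lambda-\eta) > \nu_2(\lambda-\mu) = \nu_2(\lambda-\theta)
\]
for all $\lambda,\eta \in \sigma_{uv}^+(A(X))\setminus\{k\}\cup\{k-m, k+(r-1)m\}$ (add $k$ itself when $X$ is disconnected) and all $\mu,\theta\in\sigma_{uv}^-(A(X))$. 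The work is to split this into cases according to how $\nu_2(m)$ compares with $\nu_2(k-\mu)$, precisely mirroring the three-case analysis in the proof of Theorem~\ref{pstL}(3a): substituting $\lambda^{\pm}$ one sees $\lambda^+ - \mu = (k-\mu)+(r-1)m$ and $\lambda^- - \mu = (k-\mu) - m$, and $\lambda^+-\lambda^- = rm$, which yields cases (iA)--(iC) with the stated conditions on $\nu_2(r)$ relative to $\nu_2\!\big(\tfrac{k-m-\mu}{2^{\nu_2(m)}}\big)$. The disconnected case collapses to (iA) because then $k \in \sigma_{uv}^+(A(X))$ forces $\nu_2(k-m-\mu)$-type constraints that rule out (iB)--(iC), exactly as $n\in\sigma_{uv}^+(L(X))$ did in the Laplacian proof.

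Finally, the minimum PST time: Theorem~\ref{pstA} gives $\tau_{X^r} = \pi/(g\sqrt{\Delta})$ with $\Delta=1$ and $g = \gcd\{\lambda_0 - \lambda : \lambda \in \sigma_u(A)\setminus\{k\}\cup\mathcal{R}\}$ for fixed $\lambda_0 \in \sigma_{uv}^+(A)$. Choosing $\lambda_0 = \lambda^+ = k+(r-1)m$, the differences $\lambda_0 - \lambda$ over $\lambda \in \sigma_u(A(X))\setminus\{k\}$ are $k+(r-1)m-\lambda$, i.e.\ $(k-m-\lambda)+rm$; together with $\lambda_0 - \lambda^- = rm$ this gives $\mathcal{T} = \{rm\}\cup\{k-m-\lambda : \lambda \in \sigma_u(A(X))\setminus\{k\}\}$ in the connected case (the $rm$ summand being absorbable into the gcd via the $rm$ term), and in the disconnected case one also has $\lambda_0 - k = (r-1)m$, and a short gcd manipulation reduces $\mathcal{T}$ to $\{m\}\cup\{k-\lambda : \lambda\in\sigma_u(A(X))\setminus\{k\}\}$. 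The main obstacle is the bookkeeping in the case split of condition (3): keeping straight which of $\{k-m,\,k+(r-1)m\}$ (and $k$, when disconnected) plays the role of $\eta$ versus a forbidden $\lambda$, and correctly isolating the constraint on $\nu_2(r)$ in case (iC); but this is entirely parallel to the already-completed proof of Theorem~\ref{pstL}(3a), so no genuinely new idea is needed --- only care.
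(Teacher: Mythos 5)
Your proposal is correct and follows exactly the paper's route: the paper obtains this theorem by taking $Y=X^{r-1}$ in Theorem \ref{pstA} (with the same computation $D=r^2m^2$, $\lambda^+=k+(r-1)m$, $\lambda^-=k-m$ already recorded in the proof of Theorem \ref{scr}(2)), and your translation of conditions (1)--(3), the elimination of the quadratic case, the collapse of the disconnected case to (iA), and the gcd simplifications for $\tau_{X^r}$ all match what the paper intends. No gaps worth noting beyond trivial notational slips (e.g.\ $k\in\sigma_{uv}^+(A)$ of the join rather than of $X$), which do not affect the argument.
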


\begin{example}
Let $X=CP(m)$, where $m\equiv 2$ (mod 4). From Example \ref{cpm}, $\sigma_{uv}^{+}(A(X))=\{m-2,-2\}$, $\sigma_{uv}^{-}(A(X))=\{0\}$ and adjacency PST does not occur between non-adjacent vertices $u$ and $v$. If we let $k=m-2$, $\lambda=-2$ and $\mu=0$, then $\nu_2(k-\mu)>\nu_2(m)=\nu_2(k-\lambda)$. By Theorem \ref{pstLxr}(2ciB), adjacency PST occurs between $u$ and $v$ in $X^r$ if and only if $r$ is even. In this case, $X^r=CP(m)^r=CP(rm)$. Since $rm=2r(m/2)$, we may also use Theorem \ref{pstLxr}(1) to conclude that $X^r$ has PST for all even $r$.
\end{example}

We also characterize when adjacency PST in preserved in self-joins using Corollary \ref{adjpstinX}.

\begin{corollary}
Let $r,m\geq 2$, $k\in\mathbb{Z}$ and $\phi(A(X),t)\in\mathbb{Z}[t]$. Suppose $X$ is simple and adjacency perfect state transfer occurs between vertices $u$ and $v$ in $X$ and $\tau_X=\frac{\pi}{h}$. Then it occurs between them in $X^r$ if and only if  $k-m\notin\sigma_{uv}^{-}(A(X))$ and $\nu_2(m)>\nu_2(k-\mu)$ for all $\mu\in \sigma_{uv}^{-}(A(X))$. Here, $\tau_{X^r}=\frac{\pi}{g}$, where $g=\operatorname{gcd}(\mathcal{T})$, $\mathcal{T}$ is given in Theorem \ref{pstLxr} and $\nu_2(g)=\nu_2(h)$. Further, if $X$ is connected, $\frac{h}{g}$ is rational. Otherwise, $\frac{h}{g}=\frac{h}{\operatorname{gcd}(h,m)}$ is odd.
\end{corollary}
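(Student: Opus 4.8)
The plan is to obtain this corollary as a special case of Corollary~\ref{adjpstinX}, taking $Y=X^{r-1}$ so that $X^r=X\vee Y$. First I would record the parameters of this join: since $X$ is $k$-regular on $m$ vertices, $Y=X^{r-1}$ is $\ell$-regular on $n$ vertices with $\ell=k+(r-2)m$ and $n=(r-1)m$, so that $D=(k-\ell)^2+4mn=((r-2)m)^2+4(r-1)m^2=r^2m^2$ is a perfect square and $\lambda^+=k+(r-1)m$, $\lambda^-=k-m$. In the notation of Corollary~\ref{adjpstinX} this means $s=\lambda^+-k=(r-1)m$, and one checks directly that $k-\ell+s=m$, hence $s(k-\ell+s)=(r-1)m^2$ is divisible by $m$ with $s(k-\ell+s)/m=(r-1)m=n$; thus the arithmetic part of Corollary~\ref{adjpstinX}(1) is automatic, and its remaining part $\ell-s\notin\sigma_{uv}^-(A(X))$ reads precisely $k-m\notin\sigma_{uv}^-(A(X))$. (We may assume $\sigma_u(A(X))\subseteq\mathbb{Z}$, since otherwise Theorem~\ref{pstA} forbids PST in $X^r$: case (2a) needs integrality, while case (2b) is impossible here because $\lambda^+-\lambda^-=rm\in\mathbb{Q}$ would force $\Delta=1$.)

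Next I would simplify Corollary~\ref{adjpstinX}(2). With $s=(r-1)m$ and $\ell-k=(r-2)m$ we have $\nu_2(s)=\nu_2(r-1)+\nu_2(m)$ and $\nu_2(\ell-k)=\nu_2(r-2)+\nu_2(m)$. Since $r-1$ and $r-2$ are consecutive integers, exactly one of them is even, so $\min\{\nu_2(r-1),\nu_2(r-2)\}=0$ and the other is at least $1$; consequently $\min\{\nu_2(s),\nu_2(\ell-k)\}=\nu_2(m)$ while the larger of the two is at least $\nu_2(m)+1$. Therefore the pair of inequalities $\nu_2(s)>\nu_2(k-\mu)$ and $\nu_2(\ell-k)>\nu_2(k-\mu)$ holds for all $\mu\in\sigma_{uv}^-(A(X))$ if and only if $\nu_2(m)>\nu_2(k-\mu)$ holds for all such $\mu$. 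Combining this with the first paragraph yields the stated equivalence.

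For the timing claims, Corollary~\ref{adjpstinX} gives $\tau_{X^r}=\pi/g$ with $g=\operatorname{gcd}(\mathcal{T}')$, where $\mathcal{T}'=\{\lambda_0-\lambda:\lambda\in\sigma_u(A)\backslash\{k\}\cup\mathcal{R}\}$, $\lambda_0\in\sigma_{uv}^+(A)\backslash\{k\}\cup\mathcal{R}$ is any fixed element, and $\nu_2(g)=\nu_2(h)$. Taking $\lambda_0=\lambda^-$ when $X$ is connected and $\lambda_0=k$ when $X$ is disconnected (both are admissible by Theorem~\ref{sc}(2)), expanding $\sigma_u(A)$ by Lemma~\ref{supp}(2), and using $\lambda^+-\lambda^-=rm$ and $\lambda^+-k=(r-1)m$, I would check that $\operatorname{gcd}(\mathcal{T}')=\operatorname{gcd}(\mathcal{T})$ for the set $\mathcal{T}$ displayed in Theorem~\ref{pstLxr} (the value $0$ always drops out of the gcd, and in the disconnected case the extra term $\lambda_0-\lambda^+=-(r-1)m$ is absorbed by $m$). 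The connected case of the $h/g$ assertion is exactly the conclusion of Corollary~\ref{adjpstinX}. In the disconnected case Corollary~\ref{adjpstinX} gives $h/g=\operatorname{lcm}(h/h_1,h/h_2)$ with $h_1=\operatorname{gcd}(\lambda_0-\lambda^+,h)=\operatorname{gcd}((r-1)m,h)$ and $h_2=\operatorname{gcd}(\lambda_0-\lambda^-,h)=\operatorname{gcd}(m,h)$; since $\operatorname{gcd}(m,h)\mid\operatorname{gcd}((r-1)m,h)$ we have $h_2\mid h_1$, hence $(h/h_1)\mid(h/h_2)$, so $h/g=h/h_2=h/\operatorname{gcd}(m,h)$, which is odd because $\nu_2(g)=\nu_2(h)$.

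The whole argument is one invocation of Corollary~\ref{adjpstinX} followed by bookkeeping, so the only genuinely delicate step is the collapse of the two $2$-adic inequalities of Corollary~\ref{adjpstinX}(2) into the single condition $\nu_2(m)>\nu_2(k-\mu)$ --- this is where the ``consecutive integers'' observation $\min\{\nu_2(r-1),\nu_2(r-2)\}=0$ is essential --- together with checking that the chosen reference eigenvalue $\lambda_0$ is legitimate so that the minimum-time formula and the $h/g$ identity transfer without change.
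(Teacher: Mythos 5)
Your proposal is correct and follows exactly the route the paper intends: the paper states this corollary without proof as a specialization of Corollary~\ref{adjpstinX} to $Y=X^{r-1}$, and your computation of $\ell=k+(r-2)m$, $n=(r-1)m$, $D=r^2m^2$, $s=\lambda^+-k=(r-1)m$, together with the consecutive-integers observation that collapses the two $2$-adic inequalities into $\nu_2(m)>\nu_2(k-\mu)$, supplies precisely the omitted bookkeeping. Your verification of the time formula (matching $\gcd(\mathcal{T}')$ with $\gcd(\mathcal{T})$ of Theorem~\ref{pstLxr}) and of the disconnected case $h/g=h/\operatorname{gcd}(h,m)$ is likewise consistent with the paper's statements.
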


\subsection{Iterated join graphs}

In this subsection, we characterize Laplacian strong cospectrality and PST in iterated join graphs.

An \textit{iterated join graph} is a weighted graph that is either of the form:
\begin{itemize}
\item[(i)] $((((X_{m_1}\vee X_{m_2})\cup X_{m_3})\vee X_{m_4})\ldots)\vee X_{m_{2k}}:=\Gamma(X_{m_1},\ldots,X_{m_{2k}})$
\item[(ii)] $((((X_{m_1}\cup X_{m_2})\vee X_{m_3})\cup X_{m_4})\ldots)\vee X_{m_{2k+1}}:=\Gamma(X_{m_1},\ldots,X_{m_{2k+1}})$,
\end{itemize}
where $X_{m_1},\ldots,X_{m_{2r}}$ are weighted graphs on $m_1,\ldots,m_{2r}\geq 1$ vertices, respectively.

If we take $X_j=O_{m_j}$ for odd $j$ and $X_j=K_{m_j}$ otherwise, then $\Gamma(O_{m_1},K_{m_2}\ldots,K_{m_{2k}})$ is a connected threshold graph. $\Gamma(K_{m_1},O_{m_2},\ldots,K_{m_{2k+1}})$ is also a threshold graph,  where $X_j=O_{m_j}$ for even $j$ and $X_j=K_{m_j}$ otherwise. In fact, graphs of either form completely describe the class of connected threshold graphs \cite{Severini}. Thus, iterated join graphs are generalizations of connected theshold graphs.

Lemma \ref{supp} and the fact that the union preserves eigenvalue supports yield the following result which determines the eigenvalue support of each vertex in an iterated join graph.

\begin{proposition}
\label{ijgsupp}
The following hold.
\begin{enumerate}
\item Let $X=\Gamma(X_{m_1},\ldots,X_{m_{2k}})$ and suppose $u\in V(X_{m_j})$ for some $j\in\{1,\ldots,2k\}$. Define $\alpha_h=\sum_{\ell=1}^hm_{\ell}$, $\beta_h=\sum_{\ell=1}^{(2k-h)/2}m_{h+2\ell}$ when $h\neq 2k$ and $\beta_{2k}=0$. The following hold.
\begin{enumerate}
\item If $j$ is odd, then $\sigma_{u}(L(X))=\{\lambda+\beta_{j-1}:\lambda\in\sigma_u(L(X_{m_j}))\backslash \{0\}\}\cup \{\alpha_h+\beta_h:j\leq h\leq 2k\ \text{is even}\}\cup \mathcal{R}$, where $\mathcal{R}=\{0\}$ whenever $j=1$ and $X_{m_1}$ is connected, and $\mathcal{R}=\{0,\beta_{j-1}\}$ otherwise.
\item If $j$ is even, then $\sigma_{u}(L(X))=\{\lambda+\alpha_{j-1}+\beta_{j}:\lambda\in\sigma_u(L(X_{m_j}))\backslash \{0\}\}\cup \{\alpha_h+\beta_h:j\leq h\leq 2k\ \text{is even}\}\cup \mathcal{R}$, where $\mathcal{R}=\{0\}$ whenever $X_j$ is connected and $\mathcal{R}=\{0,\alpha_{j-1}+\beta_{j}\}$ otherwise.
\end{enumerate}
\item Let $X=\Gamma(X_{m_1},\ldots,X_{m_{2k+1}})$ and suppose $u\in V(X_{m_j})$ for some $j\in\{1,\ldots,2k+1\}$. Define $\alpha_h=\sum_{\ell=1}^hm_{\ell}$, $\beta_h=\sum_{\ell=1}^{(2k+1-h)/2}m_{h+2\ell}$ when $h\neq 2k+1$ and $\beta_{2k+1}=0$. The following hold.
\begin{enumerate}
\item If $j=1$, then $\sigma_{u}(L(X))=\{\lambda+\beta_1:\lambda\in\sigma_u(L(X_{m_j}))\backslash \{0\}\}\cup \{\alpha_{h+1}+\beta_h:1\leq h\leq 2k+1\ \text{is odd}\}\cup \mathcal{R}$, where $\mathcal{R}=\{0,\beta_1\}$.
\item If $j\geq 3$ is odd, then $\sigma_{u}(L(X))=\{\lambda+\alpha_{j-1}+\beta_j:\lambda\in\sigma_u(L(X_{m_j}))\backslash \{0\}\}\cup \{\alpha_{h}+\beta_h:j\leq h\leq 2k+1\ \text{is odd}\}\cup \mathcal{R}$, where $\mathcal{R}=\{0\}$ if $X$ is connected and $\mathcal{R}=\{0,\alpha_{j-1}+\beta_j\}$ otherwise.
\item If $j$ is even, then $\sigma_{u}(L(X))=\{\lambda+\beta_{j-1}:\lambda\in\sigma_u(L(X_{m_j}))\backslash \{0\}\}\cup \{\alpha_{h}+\beta_h:j+1\leq h\leq 2k+1\ \text{is odd}\}\cup \mathcal{R}$, where $\mathcal{R}=\{0,\beta_{j-1}\}$.
\end{enumerate}
\end{enumerate}
\end{proposition}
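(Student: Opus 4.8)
The plan is to prove both items by induction on $k$ — equivalently, on the number of join operations used to build the iterated join graph — peeling off the outermost join and applying Lemma~\ref{supp} together with the elementary observation that a disjoint union preserves eigenvalue supports: if $u\in V(G)$, then $L(G\cup H)$ is block diagonal, so $\sigma_u(L(G\cup H))=\sigma_u(L(G))$.

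I would first dispose of the base cases. A type-(i) graph with $2k=2$ pieces is simply $X_{m_1}\vee X_{m_2}$, and the asserted formula is Lemma~\ref{supp}(1) after the rewrites $n=m_2=\beta_0$ and $m+n=m_1+m_2=\alpha_2+\beta_2$ (for $u\in V(X_{m_1})$), with the connected/disconnected split of $\mathcal R$ in Lemma~\ref{supp} matching the split on $X_{m_1}$ in item~(1a); the case $u\in V(X_{m_2})$ follows from the symmetry $X_{m_1}\vee X_{m_2}=X_{m_2}\vee X_{m_1}$. A type-(ii) graph with $2k+1=3$ pieces is $(X_{m_1}\cup X_{m_2})\vee X_{m_3}$; its inner factor is disconnected, so the ``otherwise'' branch of Lemma~\ref{supp}(1) applies, and combined with union-invariance of supports this gives the stated formula.

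For the inductive step of item~(1), write $\Gamma(X_{m_1},\dots,X_{m_{2k}})=Z\vee X_{m_{2k}}$ with $Z=\Gamma(X_{m_1},\dots,X_{m_{2k-2}})\cup X_{m_{2k-1}}$. Because $Z$ is a disjoint union of two nonempty graphs it is disconnected, so for every $u\in V(Z)$ the disconnected branch of Lemma~\ref{supp}(1) gives
\begin{equation*}
\sigma_u\big(L(\Gamma(X_{m_1},\dots,X_{m_{2k}}))\big)=\{\lambda+m_{2k}:\lambda\in\sigma_u(L(Z))\setminus\{0\}\}\cup\{0,\ \alpha_{2k-1}+m_{2k},\ m_{2k}\}.
\end{equation*}
I would then split into three cases by the layer containing $u$: (a)~$u\in V(X_{m_j})$ with $j\le 2k-2$, where $\sigma_u(L(Z))=\sigma_u\big(L(\Gamma(X_{m_1},\dots,X_{m_{2k-2}}))\big)$ is supplied by the induction hypothesis; (b)~$u\in V(X_{m_{2k-1}})$, where $\sigma_u(L(Z))=\sigma_u(L(X_{m_{2k-1}}))$ directly; and (c)~$u\in V(X_{m_{2k}})$, where one instead invokes Lemma~\ref{supp}(1) with the two factors interchanged, now distinguishing whether $X_{m_{2k}}$ itself is connected. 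In each case the work is parameter bookkeeping: $\alpha_h$ is unchanged for $h\le 2k-2$ while $\alpha_{2k}=\alpha_{2k-1}+m_{2k}$, and each $\beta_h$ with $h\le 2k-2$ even increases by $m_{2k}$ while $\beta_{2k}=0$; one verifies that the shifted copies of the inductive sets, together with the fresh triple $\{0,\ \alpha_{2k-1}+m_{2k},\ m_{2k}\}$, reassemble into the displayed closed form. Item~(2) goes the same way, now with $\Gamma(X_{m_1},\dots,X_{m_{2k+1}})=\big(\Gamma(X_{m_1},\dots,X_{m_{2k-1}})\cup X_{m_{2k}}\big)\vee X_{m_{2k+1}}$, the inner $\Gamma$ being type~(ii) with fewer pieces.

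The delicate point — and the main obstacle — is precisely this bookkeeping, especially propagating the connectivity hypotheses correctly through the layers: whether a given vertex's support picks up the ``extra'' eigenvalues contributed by the disconnected branch of Lemma~\ref{supp} depends on where in the iterated construction the vertex sits and on which inner factors are themselves connected, so the case analysis on the parity and position of $j$ is structural rather than cosmetic, and one must be careful that the recursion relabels $\alpha_h,\beta_h$ consistently and that no eigenvalue is double-counted or dropped when several shifted and freshly created families are merged. Once the base cases are nailed down and this relabelling is tracked, the remaining verifications are routine set arithmetic.
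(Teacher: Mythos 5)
Your overall strategy is exactly the one the paper intends (the paper compresses it into the single sentence preceding Proposition~\ref{ijgsupp}): peel off the outermost join, apply Lemma~\ref{supp}, use that disjoint unions preserve eigenvalue supports, and induct. So the approach is not the issue.

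The gap is the step you dismiss as ``routine set arithmetic'': the reassembly into the displayed closed form does not close, and this is precisely the delicate point. In your inductive step you correctly record that, since $Z=\Gamma(X_{m_1},\dots,X_{m_{2k-2}})\cup X_{m_{2k-1}}$ is disconnected, the disconnected branch of Lemma~\ref{supp}(1) puts the \emph{fresh} eigenvalue $m_{2k}$ into $\sigma_u(L)$ for every $u\in V(Z)$. But for $u\in V(X_{m_j})$ with $j\le 2k-3$ this value $m_{2k}=\beta_{2k-2}$ is not of the form $\lambda+\beta_{j-1}$ with $0\neq\lambda\in\sigma_u(L(X_{m_j}))$, is not any $\alpha_h+\beta_h$, and is not in $\mathcal R$; the same phenomenon at every earlier join (each intermediate union is disconnected) produces, after the subsequent shifts, the eigenvalues $\beta_{j+1},\beta_{j+3},\dots,\beta_{2k-2}$ in $\sigma_u(L(X))$, none of which appear in the stated set. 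A concrete check: take all $X_{m_i}=K_1$ and $k=2$, so $X=((K_1\vee K_1)\cup K_1)\vee K_1$ is the paw graph with Laplacian spectrum $\{0,1,3,4\}$; for $u\in V(X_{m_1})$ the eigenvector $(1,1,-2,0)$ shows $1=m_4=\beta_2\in\sigma_u(L(X))$, whereas the displayed formula (with $j=1$, $X_{m_1}$ connected, $\mathcal R=\{0\}$) yields only $\{0,3,4\}$. So the induction, carried out honestly, produces a support strictly larger than the displayed one whenever the vertex lies more than one join-level below the top; you must either prove these extra $\beta_h$'s coincide with listed eigenvalues (they do not in general) or conclude that the closed form needs the additional terms $\{\beta_h: h\ \text{even},\ j+1\le h\le 2k-2\}$ (with the analogous amendments in (1b) and in item (2), where additionally $\alpha_{2k+2}$ in (2a) is not even defined). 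As written, the claim that the shifted inductive sets plus the fresh triple ``reassemble into the displayed closed form'' is unproved and, in fact, false for such $j$, so the proposal does not establish the statement.
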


Using Proposition \ref{ijgsupp} and the same argument in the proof of Theorem \ref{sc} yield the following characterization of Laplacing strong cospectrality in iterated join graphs.

\begin{theorem}
\label{ijgsc}
Let $X=\Gamma(X_{m_1},\ldots,X_{m_{2k}})$ and consider $\alpha_h$ and $\beta_h$ in Proposition \ref{ijgsupp}(1). Two vertices $u$ and $v$ of $X_{m_j}$ are Laplacian strongly cospectral in $X$ if and only if one of the following conditions hold.
\begin{enumerate}
\item $j$ is odd and either:
\begin{enumerate}
\item Vertices $u$ and $v$ are Laplacian strongly cospectral in $X_{m_j}$ and $\alpha_j\notin\sigma_{uv}^-(L(X_{m_j}))$. In this case, $\sigma_{uv}^+(L)=\{\lambda+\beta_{j-1}:\lambda\in\sigma_{uv}^+(L(X_{m_j}))\backslash\{0\}\}\cup \{\alpha_h+\beta_h:j\leq h\leq 2k\ \text{is even}\}\cup \mathcal{R}$ and $\sigma_{uv}^-(L)=\{\mu+\beta_{j-1}:\mu\in\sigma_{uv}^-(L(X_{m_j}))\}$, where $\mathcal{R}$ is given in Proposition \ref{ijgsupp}(1a).
\item $j=1$ and $X_{m_1}=O_2$ with $\sigma_{uv}^+(L)=\{\alpha_h+\beta_h:1\leq h\leq 2k\ \text{is even}\}\cup \{0\}$ and $\sigma_{uv}^-(L)=\{\beta_0\}$.
\end{enumerate}
\item $j$ is even and either:
\begin{enumerate}
\item Vertices $u$ and $v$ are Laplacian strongly cospectral in $X_{m_j}$ and $m_j\notin\sigma_{uv}^-(L(X_{m_j}))$. In this case, $\sigma_{uv}^+(L)=\{\lambda+\alpha_{j-1}+\beta_j:\lambda\in\sigma_{uv}^+(L(X_{m_j}))\backslash\{0\}\}\cup \{\alpha_h+\beta_h:j\leq h\leq 2k\ \text{is even}\}\cup \mathcal{R}$ and $\sigma_{uv}^-(L)=\{\mu+\alpha_{j-1}+\beta_j:\mu\in\sigma_{uv}^-(L(X_{m_j}))\}$, where $\mathcal{R}$ is given in Proposition \ref{ijgsupp}(1b).
\item $X_{m_j}=O_2$ with $\sigma_{uv}^+(L)=\{\alpha_h+\beta_h:j\leq h\leq 2k\ \text{is even}\}\cup  \{0\}$ and $\sigma_{uv}^-(L)=\{\alpha_{j-1}+\beta_j\}$.
\end{enumerate}
\end{enumerate}
\end{theorem}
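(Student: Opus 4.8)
The plan is to induct on $k$, using the recursion $\Gamma(X_{m_1},\ldots,X_{m_{2k}})=\big(\Gamma(X_{m_1},\ldots,X_{m_{2k-2}})\cup X_{m_{2k-1}}\big)\vee X_{m_{2k}}$ together with Theorem~\ref{sc}(1) at the outermost join and Proposition~\ref{ijgsupp}(1) to keep track of eigenvalue supports. The base case $k=1$ is $X=X_{m_1}\vee X_{m_2}$: if $u,v\in V(X_{m_1})$ one applies Theorem~\ref{sc}(1) directly with $m=m_1=\alpha_1$ and $n=m_2=\beta_0$; if $u,v\in V(X_{m_2})$ one uses $X\vee Y\cong Y\vee X$ to apply Theorem~\ref{sc}(1) with the two parts interchanged, so that $m=m_2$ and $n=m_1=\alpha_1+\beta_2$. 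In both cases the sets $\sigma_{uv}^{\pm}(L)$ produced by Theorem~\ref{sc}(1a)--(1b) agree with those in Proposition~\ref{ijgsupp}(1a)--(1b), and the exceptional graph $O_2$ yields exactly conditions (1b) and (2b).

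For the inductive step put $A=\Gamma(X_{m_1},\ldots,X_{m_{2k-2}})$, which is connected on $\alpha_{2k-2}$ vertices, and $W=A\cup X_{m_{2k-1}}$, which is disconnected on $\alpha_{2k-1}$ vertices, so that $X=W\vee X_{m_{2k}}$. Two facts drive the argument. First, the disjoint union preserves eigenvalue supports and strong cospectrality with the $\sigma^{+}/\sigma^{-}$ partition unchanged: if $u,v$ lie in a common component $C$ of $W$ (so $C=A$ when $j\le 2k-2$ and $C=X_{m_{2k-1}}$ when $j=2k-1$), then $u,v$ are Laplacian strongly cospectral in $W$ if and only if they are in $C$, and $\sigma_{uv}^{\pm}(L(W))=\sigma_{uv}^{\pm}(L(C))$. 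Second, every Laplacian eigenvalue of a graph on $N$ vertices is at most $N$; since $C$ has strictly fewer than $\alpha_{2k-1}=|V(W)|$ vertices, this makes the hypothesis $|V(W)|\notin\sigma_{uv}^{-}(L(W))$ of Theorem~\ref{sc}(1a) automatic. Hence if $j<2k$ we apply Theorem~\ref{sc}(1) to $W\vee X_{m_{2k}}$ (its exceptional case $W=O_2$ cannot occur, as $W$ contains the join $A$), pass to a statement about $C$ via the first fact, and invoke the induction hypothesis; the claimed $\sigma_{uv}^{\pm}(L)$ then arise by shifting the nonzero members of $\sigma_{uv}^{\pm}(L(C))$ by $m_{2k}$ and adjoining the eigenvalues that Lemma~\ref{supp}(1) attaches to this join, namely $0$, $\alpha_{2k-1}+m_{2k}=\alpha_{2k}$, and (as $W$ is disconnected) $m_{2k}$, with signs as dictated by the eigenvectors in the proof of Theorem~\ref{sc}. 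If instead $j=2k$, then $u,v\in V(X_{m_{2k}})$, and applying Theorem~\ref{sc}(1) to $X_{m_{2k}}\vee W\cong W\vee X_{m_{2k}}$ with $m=m_{2k}$ and $n=\alpha_{2k-1}=\alpha_{j-1}+\beta_j$ gives exactly the description of Proposition~\ref{ijgsupp}(1b) and, in the exceptional case $X_{m_{2k}}=O_2$, condition (2b).

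It remains to see why a single ``collision'' condition survives. Each time $X_{m_j}$ first sits inside the left factor of a join --- as $X_{m_1}$ in $X_{m_1}\vee X_{m_2}$ when $j=1$, as a component of $\Gamma(X_{m_1},\ldots,X_{m_{j-1}})\cup X_{m_j}$ in $\big(\Gamma(X_{m_1},\ldots,X_{m_{j-1}})\cup X_{m_j}\big)\vee X_{m_{j+1}}$ when $j\ge 3$ is odd, and, after relabelling $X\vee Y\cong Y\vee X$, as $X_{m_j}$ itself in $(\,\cdot\,)\vee X_{m_j}$ when $j$ is even --- Theorem~\ref{sc}(1a) requires $|V(\text{left factor})|\notin\sigma_{uv}^{-}$ of its Laplacian. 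Translating through the union (first fact) turns this into $\alpha_j\notin\sigma_{uv}^{-}(L(X_{m_j}))$ for $j$ odd and $m_j\notin\sigma_{uv}^{-}(L(X_{m_j}))$ for $j$ even, which are precisely conditions (1a) and (2a); all requirements coming from later joins are vacuous by the second fact, since at a join indexed by $h>j$ the left factor has more vertices than any eigenvalue of $L(X_{m_j})$ shifted up by the intervening joins. I expect the main obstacle to be exactly this last point: reconciling the quantities $\alpha_h,\beta_h$ of the length-$2k$ sequence with the shorter-sequence quantities from the induction hypothesis, and verifying --- for the localized and the structural eigenvalues alike, in both the odd-$j$ and even-$j$ branches --- that the accumulated requirements from all intermediate joins are automatically satisfied, so that no hypothesis beyond (1a)/(2a) (or the $O_2$ cases) is needed.
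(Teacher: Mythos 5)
Your overall strategy -- peel off the outermost join, apply Theorem~\ref{sc}(1) there, and transport strong cospectrality together with the signed supports through the disjoint union -- is essentially the argument the paper itself invokes (its proof is a one-line appeal to Proposition~\ref{ijgsupp} and ``the same argument as in the proof of Theorem~\ref{sc}''), and your bookkeeping in the base case and in the three branches $j\le 2k-2$, $j=2k-1$, $j=2k$ does reproduce the sets in (1a)--(2b).

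The step that does not hold in the paper's stated generality is your ``second fact.'' The bound ``every Laplacian eigenvalue of a graph on $N$ vertices is at most $N$'' is true for simple \emph{unweighted} graphs but false for the weighted graphs the paper allows (e.g.\ $K_2$ with edge weight $2$ has Laplacian eigenvalue $4$), and you use it precisely to declare the collision conditions at all joins after the first one vacuous. Unwinding your own recursion, the join with $X_{m_h}$ (even $h>j+1$) imposes, via Theorem~\ref{sc}(1a), the condition $\alpha_{h-1}\notin\sigma_{uv}^-(L(Z_{h-1}))$, which translates back to $\alpha_j+\gamma_h\notin\sigma_{uv}^-(L(X_{m_j}))$ for odd $j$ (resp.\ $\delta_h\notin\sigma_{uv}^-(L(X_{m_j}))$ for even $j$), with $\gamma_h,\delta_h$ as in (\ref{gam})--(\ref{del}); these are genuine extra constraints once weights are allowed. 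Concretely, take $X_{m_1}=K_2$ with edge weight $2$ (so $\sigma_{uv}^-(L(X_{m_1}))=\{4\}$), $m_2=1$, $m_3=2$ and any $X_{m_4}$: condition (1a) holds since $\alpha_1=2\notin\{4\}$, yet $\sigma_{uv}^-(L(Z_3))=\{5\}$ and $\alpha_3=5$, so Theorem~\ref{sc}(1) kills strong cospectrality at the last join. So in the weighted setting your proof cannot close the stated equivalence -- and indeed the theorem as written needs either the additional hypotheses $\alpha_j+\gamma_h\notin\sigma_{uv}^-(L(X_{m_j}))$ (resp.\ $\delta_h\notin\sigma_{uv}^-(L(X_{m_j}))$) for all relevant even $h$, or the restriction to unweighted parts, under which your eigenvalue bound is valid and your induction is complete. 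The paper's sketch shares this blind spot, so the flaw is one of stated generality rather than of your approach; just make the unweightedness assumption (or the extra no-collision conditions) explicit at the point where you invoke the eigenvalue bound, and phrase the ``union fact'' for a common component of $X_{m_j}$ itself so that disconnected $X_{m_j}$ (e.g.\ $O_2$ with $j\ge 3$) is handled cleanly.
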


\begin{theorem}
\label{ijgsc1}
Let $X=\Gamma(X_{m_1},\ldots,X_{m_{2k+1}})$ and consider $\alpha_h$, $\beta_h$ and $\mathcal{R}$ in Proposition \ref{ijgsupp}(2). Two vertices $u$ and $v$ of $X_{m_j}$ are Laplacian strongly cospectral in $X$ if and only if one of the following conditions hold.
\begin{enumerate}
\item $j=1$, vertices $u$ and $v$ are Laplacian strongly cospectral in $X_{m_1}$ and $m_1+m_2\notin\sigma_{uv}^-(L(X_{m_1}))$. In this case, $\sigma_{uv}^+(L)=\{ \lambda+\beta_1:\lambda\in\sigma_{uv}^+(L(X_{m_1}))\backslash\{0\}\}\cup \{\alpha_{h+1}+\beta_h:1\leq h\leq 2k+1\ \text{is odd}\}\cup \mathcal{R}$
and $\sigma_{uv}^-(L)=\{\mu+\beta_1:\mu\in\sigma_{uv}^-(L(X_{m_1}))\}$, where $\mathcal{R}$ is given in Proposition \ref{ijgsupp}(2a).
\item $j\geq 3$ is odd and either:
\begin{enumerate}
\item Vertices $u$ and $v$ are Laplacian strongly cospectral in $X_{m_j}$ and $m_j\notin\sigma_{uv}^-(L(X_{m_j}))$. In this case, $\sigma_{uv}^+(L)=\{\lambda+\alpha_{j-1}+\beta_j:\lambda\in\sigma_{uv}^+(L(X_{m_j}))\backslash\{0\}\}\cup \{\alpha_h+\beta_h:j\leq h\leq 2k+1\ \text{is odd}\}\cup \mathcal{R}$ and $\sigma_{uv}^-(L)=\{\mu+\alpha_{j-1}+\beta_j:\mu\in\sigma_{uv}^-(L(X_{m_j}))\}$, where $\mathcal{R}$ is given in Proposition \ref{ijgsupp}(2b).
\item $X_{m_j}=O_2$ with $\sigma_{uv}^+(L)=\{\alpha_h+\beta_h:j\leq h\leq 2k\ \text{is odd}\}\cup  \{0\}$ and $\sigma_{uv}^-(L)=\{\alpha_{j-1}+\beta_j\}$.
\end{enumerate}
\item $j$ is even, vertices $u$ and $v$ are Laplacian strongly cospectral in $X_{m_j}$, and $\alpha_j\notin \sigma_{uv}^-(L(X_{m_j}))$. In this case,
$\sigma_{uv}^+(L)=\{\lambda+\beta_{j-1}:\lambda\in\sigma_{uv}^+(L(X_{m_j}))\backslash\{0\}\}\cup \{\alpha_h+\beta_h:j\leq h\leq 2k+1\ \text{is odd}\}\cup \mathcal{R}$ and $\sigma_{uv}^-(L)=\{\mu+\beta_{j-1}:\mu\in\sigma_{uv}^-(L(X_{m_j}))\}$, where $\mathcal{R}$ is given in Proposition \ref{ijgsupp}(2c).
\end{enumerate}
\end{theorem}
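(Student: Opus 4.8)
The plan is to obtain Theorem \ref{ijgsc1} from Theorem \ref{sc} by induction on $k$, peeling off the outermost join $\vee X_{m_{2k+1}}$ together with the preceding union $\cup X_{m_{2k}}$; equivalently one may write down the eigenbasis explicitly as in the proof of Theorem \ref{sc}, using Proposition \ref{ijgsupp}(2). Write $X=W\vee X_{m_{2k+1}}$ with $W=\Gamma(X_{m_1},\ldots,X_{m_{2k-1}})\cup X_{m_{2k}}$. Since $\Gamma(X_{m_1},\ldots,X_{m_{2k-1}})$ is connected (its final operation is a join) while $X_{m_{2k}}$ is nonempty, $W$ is a disjoint union of at least two nonempty graphs, hence disconnected, so Theorem \ref{sc}(1) applies with $W$ in the role of ``$X$'' there; the only degenerate possibility $W=O_2$ forces $u,v\in V(X_{m_{2k+1}})$ and is handled by a direct application of Theorem \ref{sc}. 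This reduces Laplacian strong cospectrality of $u,v$ in $X$ to strong cospectrality of $u,v$ in $W$ together with $\alpha_{2k}=|V(W)|\notin\sigma_{uv}^-(L(W))$, and expresses $\sigma_{uv}^{\pm}(L)$ via $\sigma_{uv}^{\pm}(L(W))$ and the $\mathcal R$ of Theorem \ref{sc}(1a).

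Next I would propagate this downward. Disjoint union extends eigenvectors by zero, so it preserves eigenvalue supports, strong cospectrality, and the sets $\sigma_{uv}^{\pm}$; hence strong cospectrality of $u,v$ in $W$ is equivalent to strong cospectrality in whichever of $\Gamma(X_{m_1},\ldots,X_{m_{2k-1}})$ or $X_{m_{2k}}$ contains $u,v$, with the same $\sigma_{uv}^-$. If $u,v\in V(X_{m_{2k}})$ (the case $j=2k$) one more application of Theorem \ref{sc} finishes the argument; otherwise $u,v$ lie in $\Gamma(X_{m_1},\ldots,X_{m_{2k-1}})$, an iterated join graph of type (ii) with $2k-1$ factors, and the induction hypothesis applies. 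Throughout, the eigenvalue shifts accumulate precisely into the running sums $\alpha_h,\beta_h$ of Proposition \ref{ijgsupp}(2): each join shifts the ``internal'' eigenvalues coming from $L(X_{m_j})$ and injects one new eigenvector that is constant on the block $V(X_{m_j})$ — hence automatically of ``$+$'' type for $u,v$ — which is exactly why the set $\{\alpha_h+\beta_h\}$ lands entirely in $\sigma_{uv}^+(L)$ while $\sigma_{uv}^-(L)$ receives only the shifted elements of $\sigma_{uv}^-(L(X_{m_j}))$.

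The three cases $j=1$, $j\ge 3$ odd, and $j$ even record \emph{where $X_{m_j}$ first enters a join}. For $j\ge 3$ odd the first join involving $X_{m_j}$ is $\Gamma(X_{m_1},\ldots,X_{m_{j-1}})\vee X_{m_j}$, so applying Theorem \ref{sc}(1) with $X_{m_j}$ itself as ``$X$'' yields the condition $m_j\notin\sigma_{uv}^-(L(X_{m_j}))$ and the $\mathcal R$ of Proposition \ref{ijgsupp}(2b); the subcase $j=2k+1$ is the base of the peeling above. For $j$ even, $X_{m_j}$ is first added by a union and only absorbed into a join at step $j+1$, namely $\Gamma(X_{m_1},\ldots,X_{m_j})\vee X_{m_{j+1}}$, whose disconnected left factor plays the role of ``$X$''; since union does not enlarge $\sigma_{uv}^-$, the resulting condition is $\alpha_j=|V(\Gamma(X_{m_1},\ldots,X_{m_j}))|\notin\sigma_{uv}^-(L(X_{m_j}))$. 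For $j=1$ the first join is $(X_{m_1}\cup X_{m_2})\vee X_{m_3}$, giving $m_1+m_2\notin\sigma_{uv}^-(L(X_{m_1}))$. The $O_2$ subcases are exactly those in which $X_{m_j}$ supplies no internal eigenvectors and instead the vector $\textbf{e}_u-\textbf{e}_v$ survives as an eigenvector of $L(X)$ for a shifted ``$n$''-type eigenvalue, forcing $\sigma_{uv}^-(L)$ to be a single shifted value; these are dispatched verbatim as in the final paragraph of the proof of Theorem \ref{sc}.

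I expect the main obstacle to be bookkeeping rather than conceptual: one must keep two interleaved recursions synchronized — one tracking the conditions, one tracking $\sigma_{uv}^{\pm}(L)$ — while alternately peeling unions (which leave the supports untouched but drop part of the eventual $\sigma_{uv}^+$) and joins (which uniformly shift everything, add one $\sigma_{uv}^+$-element, and, when the current ``$X$'' is disconnected, a further $\mathcal R$-element), so that the accumulated shifts match $\alpha_h$ and $\beta_h$ exactly in each of the three positional regimes. A secondary point requiring care is to verify that every coincidence of an eigenvalue of $L(X)$ among eigenvectors of different types is already detected by the single ``$\notin\sigma_{uv}^-$'' condition applied at that first join, so that no sign conflict is overlooked; in the intended unweighted (threshold-graph) setting this is immediate, since there every nonzero eigenvalue of $L(X_{m_j})$ is at most $m_j$, while the structural eigenvalues $\alpha_h+\beta_h$ are no smaller than (shift)$+m_j$, meeting the internal ones only through the already-monitored value $m_j$.
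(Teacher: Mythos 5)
Your route---peel the construction one operation at a time, apply Theorem \ref{sc}(1) at each join (with the factor containing $u,v$, or the accumulated left factor, in the role of ``$X$''), and use that disjoint unions extend eigenvectors by zero and hence preserve $\sigma_{uv}^{\pm}$---is exactly the argument the paper intends; its proof is literally ``Proposition \ref{ijgsupp} and the same argument as in the proof of Theorem \ref{sc}.'' So the approach itself is not in question.

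The gap is the step you flag and then dismiss in your last paragraph. Your own recursion does not terminate with the single condition at the first join involving $X_{m_j}$: every subsequent join contributes the condition that the order of the current left factor avoid $\sigma_{uv}^-$ of that left factor, which after unwinding the shifts reads, for odd $j\geq 3$, $\mu\neq m_j$, $\mu\neq m_j+m_{j+1}$, $\mu\neq m_j+m_{j+1}+m_{j+3}$, and so on, for all $\mu\in\sigma_{uv}^-(L(X_{m_j}))$ (analogously $\mu\neq\phi_h$ in the $j=1$ regime and $\mu\neq\alpha_j,\ \alpha_j+m_{j+2},\ldots$ in the even regime). Your justification that these all collapse to the first condition rests on ``every nonzero eigenvalue of $L(X_{m_j})$ is at most $m_j$,'' which is true only for unweighted factors, whereas the theorem and the section's definition of an iterated join graph allow weighted $X_{m_i}$, whose Laplacian eigenvalues can exceed the vertex count. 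Concretely, take $k=2$, $X_{m_1}=X_{m_2}=X_{m_4}=X_{m_5}=K_1$ and $X_{m_3}$ a $K_2$ with edge weight $\frac{3}{2}$: then $u,v$ are strongly cospectral in $X_{m_3}$ with $\sigma_{uv}^-(L(X_{m_3}))=\{3\}$ and $m_3=2\notin\{3\}$, so condition (2a) holds; yet in $X$ the eigenvalue $6$ carries both the lifted vector $\textbf{e}_u-\textbf{e}_v$ and the vector equal to $1$ on the five vertices of the left factor of the last join and $-5$ on the remaining vertex, so $u$ and $v$ are not strongly cospectral in $X$. Hence, as a proof of the statement for weighted factors, the final collapse is a genuine gap (one the stated theorem itself inherits); to close it you must either carry along the full list of non-collision conditions your recursion actually produces, or restrict the factors to be unweighted (the threshold-graph setting), where every sum beyond $m_j$ strictly exceeds the spectral radius of $L(X_{m_j})$ and the collapse you assert is legitimate.
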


\begin{remark}
From Theorems \ref{ijgsc} and \ref{ijgsc1}, a pair of strongly cospectral vertices in an iterated join graph belong to the same $X_{m_j}$. Consequently, if $u$ and $v$ belong to $X_{m_j}$ and $X_{m_{\ell}}$ respectively, then PST cannot occur between them.
\end{remark}

Our next endeavour is to characterize PST in iterated join graphs, starting with those of the form $\Gamma(X_{m_1},\ldots,X_{m_{2k}})$. We separate the cases when $j$ is odd and $j$ is even. The following result follows from combining Corollary \ref{joinperpreserve}, Theorem \ref{ijgsc}, the same argument in the proof of Theorem \ref{pstL}, and the fact that
\begin{equation}
\label{gam}
\alpha_h+\beta_h=\alpha_j+\gamma_h+\beta_{j-1}
\end{equation}
for a fixed odd $j$ and for all even $j\leq h\leq 2k$ in Theorem \ref{ijgsc}(1) with $\gamma_h=0$ whenever $h=j+1$ and $\gamma_h=m_{j+2}+m_{j+4}++m_{j+6}\ldots+m_{h-1}$ otherwise, and
\begin{equation}
\label{del}
\alpha_h+\beta_h=\alpha_{j-1}+\delta_h+\beta_{j}
\end{equation}
for a fixed even $j$ and for all even $j\leq h\leq 2k$ in Theorem \ref{ijgsc}(2) with $\delta_h=m_{j}+m_{j+1}+m_{j+3}+\ldots+m_{h-1}$.

\begin{theorem}
\label{pstLijg}
Let $X=\Gamma(X_{m_1},\ldots,X_{m_{2k}})$, $\phi(L(X),t)\in\mathbb{Z}[t]$ and $m_j\geq 2$ for some odd $j$. Consider $\alpha_h$ and $\beta_h$ in Proposition \ref{ijgsupp}(1) and $\gamma_h$ in (\ref{gam}). Two vertices $u$ and $v$ in $X_{m_j}$ admit Laplacian perfect state transfer in $X$ if and only if all of the following conditions hold.
\begin{enumerate}
\item Either (i) $u$ and $v$ are Laplacian strongly cospectral in $X_{m_j}$ and $\alpha_j\notin\sigma_{uv}^-(L(X_{m_j}))$ or (ii) $j=1$ and $X_{m_1}=O_2$.
\item The eigenvalues in $\sigma_u(L(X_{m_j}))$ are all integers.
\item One of the following conditions hold.
\begin{enumerate}
\item $X_{m_j}$ is connected and one of the following conditions hold for all $\lambda\in\sigma_{uv}^+(L(X_{m_j}))\backslash\{0\}$, $\mu,\theta\in\sigma_{uv}^-(L(X_{m_j}))$ and even $j\leq h\leq 2k$.
\begin{enumerate}
\item $\nu_2(\beta_{j-1})>\nu_2(\mu)=\nu_2(\theta)$, $\nu_2(\lambda)>\nu_2(\mu)$ and $ \nu_2(\alpha_j+\gamma_h)>\nu_2(\mu)$.
\item $\nu_2(\mu)>\nu_2(\beta_{j-1})=\nu_2(\lambda)= \nu_2(\alpha_j+\gamma_h)$.
\item $\nu_2(\beta_{j-1})=\nu_2(\mu)=\nu_2(\lambda)= \nu_2(\alpha_j+\gamma_h)$, $\nu_2\left(\frac{\lambda+\beta_{j-1}}{2^{\nu_2(\beta_{j-1})}}\right)>\nu_2\left(\frac{\mu+\beta_{j-1}}{2^{\nu_2(\beta_{j-1})}}\right)=\nu_2\left(\frac{\theta+\beta_{j-1}}{2^{\nu_2(\beta_{j-1})}}\right)$ and $\nu_2\left(\frac{\alpha_h+\beta_h}{2^{\nu_2(\beta_{j-1})}}\right)>\nu_2\left(\frac{\mu+\beta_{j-1}}{2^{\nu_2(\beta_{j-1})}}\right).$
\end{enumerate}
\item $X_{m_j}$ is disconnected and (ai) above holds.
\item $j=1$, $X_{m_1}=O_2$ and $\nu_2(2+\gamma_h+\beta_0)>\nu_2(\beta_0)$ for each even $1\leq h\leq 2k$.
\end{enumerate}
\end{enumerate}
Further, $\tau_{X\vee Y}=\frac{\pi}{g}$, where $g=\operatorname{gcd}\left(\mathcal{T}\right)$, where $\mathcal{T}=\sigma_u(L(X))$ is given in Proposition \ref{ijgsupp}(1).
\end{theorem}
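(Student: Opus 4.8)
The plan is to read off the three numbered conditions as the instances, for $M = L(X)$, of the periodicity, strong cospectrality, and $2$-adic interlacing requirements of Coutinho's characterization of PST (Theorem \ref{pstC}), using Proposition \ref{ijgsupp}(1) for $\sigma_u(L(X))$ and Theorem \ref{ijgsc}(1) for $\sigma_{uv}^\pm(L)$. First, PST between $u$ and $v$ forces $u$ to be periodic in $X$; since $L(X)$ is positive semidefinite with $0\in\sigma_u(L(X))$ and $\phi(L(X),t)\in\mathbb{Z}[t]$, Lemma \ref{per} (the quadratic-integer alternative being incompatible with positive semidefiniteness, exactly as in the proof of Theorem \ref{Xper}) yields $\sigma_u(L(X))\subseteq\mathbb{Z}$, and because every $\alpha_h$ and $\beta_h$ is an integer, Proposition \ref{ijgsupp}(1a) makes this equivalent to condition (2); in particular we may take $\Delta=1$ in Theorem \ref{pstC}. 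This is the analogue for iterated joins of Corollary \ref{joinperpreserve}. Next, Theorem \ref{ijgsc}(1) (the case $j$ odd) says precisely that $u$ and $v$ are strongly cospectral in $X$ if and only if condition (1) holds, and records $\sigma_{uv}^+(L)$ and $\sigma_{uv}^-(L)$ explicitly in each sub-case.

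It remains to rewrite Theorem \ref{pstC}(3) --- that $\nu_2(\lambda-\eta)>\nu_2(\lambda-\mu)=\nu_2(\lambda-\theta)$ for all $\lambda,\eta\in\sigma_{uv}^+(L)$ and $\mu,\theta\in\sigma_{uv}^-(L)$ --- in terms of $\sigma_{uv}^\pm(L(X_{m_j}))$. The key device is identity (\ref{gam}): expressing each extra positive eigenvalue $\alpha_h+\beta_h$ ($j\le h\le 2k$ even) as $(\alpha_j+\gamma_h)+\beta_{j-1}$ gives every element of $\sigma_{uv}^+(L)\setminus\{0\}$ and every element of $\sigma_{uv}^-(L)$ the common additive shift $\beta_{j-1}$, so that all differences occurring in Theorem \ref{pstC}(3) become $2$-adic valuations of differences among $0$, $\beta_{j-1}$, the $\lambda\in\sigma_{uv}^+(L(X_{m_j}))\setminus\{0\}$, the $\alpha_j+\gamma_h$, and the $\mu\in\sigma_{uv}^-(L(X_{m_j}))$. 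This is the very configuration handled in the proof of Theorem \ref{pstL}, with $\beta_{j-1}$ now playing the role of $n$, the single extra eigenvalue $m$ there replaced by the family $\{\alpha_j+\gamma_h\}$, and $\sigma_{uv}^\pm(L(X))$ replaced by $\sigma_{uv}^\pm(L(X_{m_j}))$. Carrying out the same three-way case analysis on $\nu_2(\beta_{j-1})$ versus the common value of $\nu_2(\mu+\beta_{j-1})$ produces cases (3ai), (3aii), (3aiii) when $X_{m_j}$ is connected. When $X_{m_j}$ is disconnected, Proposition \ref{ijgsupp}(1a) places $\beta_{j-1}$ in $\sigma_u(L(X))$ and strong cospectrality puts it in $\sigma_{uv}^+(L)$; comparing $0$ and $\beta_{j-1}$ with the $\mu+\beta_{j-1}$ then forces $\nu_2(\beta_{j-1})>\nu_2(\mu)$, collapsing the analysis to (3ai) and giving (3b). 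When $j=1$ and $X_{m_1}=O_2$, Theorem \ref{ijgsc}(1b) makes the support small enough that Theorem \ref{pstC}(3) reduces to $\nu_2(2+\gamma_h+\beta_0)>\nu_2(\beta_0)$ for each even $h$, which is (3c). Finally, $0\in\sigma_{uv}^+(L)$ lets us take $\lambda_0=0$ in Theorem \ref{pstC}, whence $\tau=\pi/g$ with $g=\operatorname{gcd}(\sigma_u(L(X)))$, i.e.\ $\mathcal{T}=\sigma_u(L(X))$ as given by Proposition \ref{ijgsupp}(1).

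I expect the main obstacle to be the bookkeeping in the last step: one must track carefully which eigenvalues of $X$ land in $\sigma_{uv}^+(L)$ and which in $\sigma_{uv}^-(L)$ --- in particular the presence or absence of $\beta_{j-1}$ according to whether $X_{m_j}$ is connected, and the status of $0$ and of the $\alpha_h+\beta_h$ --- and then verify that the sub-cases coming out of the case analysis coincide verbatim with (3ai)--(3aiii). This entails small arithmetic checks of the kind implicit in the proof of Theorem \ref{pstL}, e.g.\ that equal $2$-adic valuations among the $\mu$'s automatically give strictly larger valuation for $\mu-\theta$, and that the ``second order'' valuations in (3aiii) are the right translation of $\nu_2(\mu+\beta_{j-1})$ after dividing out $2^{\nu_2(\beta_{j-1})}$. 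No genuinely new idea is needed beyond Theorem \ref{pstL}; the statement is its image under the substitution prescribed by (\ref{gam}).
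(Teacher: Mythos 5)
Your proposal is correct and follows essentially the same route as the paper, which proves this theorem precisely by combining Corollary \ref{joinperpreserve}, Theorem \ref{ijgsc}, the identity (\ref{gam}) and ``the same argument as in the proof of Theorem \ref{pstL}'' with $\beta_{j-1}$ in the role of $n$ and the family $\{\alpha_j+\gamma_h\}$ in the role of $m$. You merely spell out the details (periodicity via Lemma \ref{per} with $0\in\sigma_u(L(X))$, the three-way $\nu_2$ case analysis, the disconnected and $O_2$ cases, and $\lambda_0=0$ for the PST time) that the paper leaves implicit.
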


\begin{theorem}
\label{pstLijg1}
Let $X=\Gamma(X_{m_1},\ldots,X_{m_{2k}})$, $\phi(L(X),t)\in\mathbb{Z}[t]$ and $m_j\geq 2$ for some even $j$. Consider $\alpha_h$ and $\beta_h$ in Proposition \ref{ijgsupp}(1) and $\delta_h$ in (\ref{del}) Two vertices $u$ and $v$ in $X_{m_j}$ admit Laplacian perfect state transfer in $X$ if and only if all of the following conditions hold.
\begin{enumerate}
\item Either (i) $u$ and $v$ are Laplacian strongly cospectral in $X_{m_j}$ and $m_j\notin\sigma_{uv}^-(L(X_{m_j}))$ or (ii) $X_{m_j}=O_2$.
\item The eigenvalues in $\sigma_u(L(X_{m_j}))$ are all integers.
\item One of the following conditions hold.
\begin{enumerate}
\item $X_{m_j}$ is connected and one of the following conditions hold for all $\lambda\in\sigma_{uv}^+(L(X_{m_j}))\backslash\{0\}$, $\mu,\theta\in\sigma_{uv}^-(L(X_{m_j}))$ and even $j\leq h\leq 2k$.
\begin{enumerate}
\item $\nu_2(\alpha_{j-1}+\beta_j)>\nu_2(\mu)=\nu_2(\theta)$, $\nu_2(\lambda)>\nu_2(\mu)$ and $ \nu_2(\delta_h)>\nu_2(\mu)$.
\item $\nu_2(\mu)>\nu_2(\alpha_{j-1}+\beta_j)=\nu_2(\lambda)= \nu_2(\delta_h)$.
\item $\nu_2(\alpha_{j-1}+\beta_j)=\nu_2(\mu)=\nu_2(\lambda)=\nu_2(\delta_h)$, $\nu_2\left(\frac{\lambda+\alpha_{j-1}+\beta_j}{2^{\nu_2(\alpha_{j-1}+\beta_j)}}\right)>\nu_2\left(\frac{\mu+\alpha_{j-1}+\beta_j}{2^{\nu_2(\alpha_{j-1}+\beta_j)}}\right)=\nu_2\left(\frac{\theta+\alpha_{j-1}+\beta_j}{2^{\nu_2(\alpha_{j-1}+\beta_j)}}\right)$ and $\nu_2\left(\frac{\alpha_h+\beta_h}{2^{\nu_2(\alpha_{j-1}+\beta_j)}}\right)>\nu_2\left(\frac{\mu+\alpha_{j-1}+\beta_j}{2^{\nu_2(\alpha_{j-1}+\beta_j)}}\right).$
\end{enumerate}
\item $X_{m_j}$ is disconnected and (ai) above holds.
\item $X_{m_j}=O_2$ and $\nu_2(\alpha_{j-1}+\delta_h+\beta_j)>\nu_2(\alpha_{j-1}+\beta_j)$ for each even $j\leq h\leq 2k$.
\end{enumerate}
%m_3+m_5+\ldots+m_{h-1}
\end{enumerate}
Further, $\tau_{X\vee Y}=\frac{\pi}{g}$, where $g=\operatorname{gcd}\left(\mathcal{T}\right)$, where $\mathcal{T}=\sigma_u(L(X))$ is given in Proposition \ref{ijgsupp}(1).
\end{theorem}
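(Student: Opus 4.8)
The plan is to proceed exactly as for the odd-index companion Theorem~\ref{pstLijg}, which is itself the iterated-join analogue of Theorem~\ref{pstL}: one combines the strong cospectrality description for even $j$ (Theorem~\ref{ijgsc}(2)), the periodicity/integrality reduction (Corollary~\ref{joinperpreserve} together with Proposition~\ref{ijgsupp}(1b)), Coutinho's PST criterion (Theorem~\ref{pstC}), and the bookkeeping identity (\ref{del}).

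First I would show that conditions (1)--(2) are precisely strong cospectrality and periodicity of $u,v$ in $X$. Theorem~\ref{ijgsc}(2) states that $u$ and $v$ are Laplacian strongly cospectral in $X$ iff (1i) or (1ii) holds, and it supplies the explicit sign partition $\sigma_{uv}^{\pm}(L)$. For periodicity, note that by Proposition~\ref{ijgsupp}(1b) each element of $\sigma_u(L(X))$ is either a rational-integer shift $\lambda+\alpha_{j-1}+\beta_j$ of some $\lambda\in\sigma_u(L(X_{m_j}))$, or one of the positive integers $\alpha_h+\beta_h$, or (when $X_{m_j}$ is disconnected) the integer $\alpha_{j-1}+\beta_j$. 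Since $\phi(L(X),t)\in\mathbb{Z}[t]$ and $0\in\sigma_u(L(X))$, Lemma~\ref{per} excludes the quadratic-integer alternative, so $\sigma_u(L(X))\subseteq\mathbb{Z}$ iff $\sigma_u(L(X_{m_j}))\subseteq\mathbb{Z}$, i.e.\ condition~(2). Consequently Theorem~\ref{pstC} (with $\Delta=1$) reduces PST between $u$ and $v$ in $X$ to conditions (1), (2) and the $2$-adic valuation condition of Theorem~\ref{pstC}(3) applied to $\sigma_{uv}^{\pm}(L)$.

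Next I would unpack that valuation condition. Using (\ref{del}) to rewrite $\alpha_h+\beta_h=\alpha_{j-1}+\delta_h+\beta_j$, every element of $\sigma_{uv}^-(L)$ has the form $\mu+(\alpha_{j-1}+\beta_j)$ with $\mu\in\sigma_{uv}^-(L(X_{m_j}))$, while $\sigma_{uv}^+(L)$ consists of the shifts $\lambda+(\alpha_{j-1}+\beta_j)$ for $0\neq\lambda\in\sigma_{uv}^+(L(X_{m_j}))$, the shifts $\delta_h+(\alpha_{j-1}+\beta_j)$ for even $j\le h\le 2k$, the value $0$, and -- only when $X_{m_j}$ is disconnected -- the value $\alpha_{j-1}+\beta_j$ itself. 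Since all ``$-$'' eigenvalues carry the common offset $\alpha_{j-1}+\beta_j$, a cross difference has valuation $\nu_2(\xi-\mu)$ with $\xi\in\{\lambda,\ \delta_h,\ -(\alpha_{j-1}+\beta_j),\ 0\}$, and the condition splits into exactly the three regimes of the proof of Theorem~\ref{pstL}(3a): $\nu_2(\alpha_{j-1}+\beta_j)$ strictly larger than the common $\nu_2(\mu)$ (yielding (ai)), strictly smaller (yielding (aii)), or equal, in which case one passes to the next $2$-adic digit $\nu_2\bigl((\,\cdot\,)/2^{\nu_2(\alpha_{j-1}+\beta_j)}\bigr)$ (yielding (aiii)); the auxiliary inequalities on $\nu_2(\lambda)$ and on $\nu_2(\delta_h)$ record, respectively, the ``$+$ versus $+$'' conditions coming from the genuine eigenvalues of $X_{m_j}$ and from the extra eigenvalues $\alpha_h+\beta_h$. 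When $X_{m_j}$ is disconnected, the extra member $\alpha_{j-1}+\beta_j\in\sigma_{uv}^+(L)$ produces the cross difference $-\mu$, which forces all $\nu_2(\mu)$ equal and pins the cross valuation to $\nu_2(\mu)$, collapsing to regime (ai) and giving (3b). When $X_{m_j}=O_2$ there is a single ``$-$'' eigenvalue $\alpha_{j-1}+\beta_j$ and $\sigma_{uv}^+(L)=\{\alpha_h+\beta_h\}\cup\{0\}$, so the valuation condition reduces to $\nu_2\bigl((\alpha_h+\beta_h)-(\alpha_{j-1}+\beta_j)\bigr)=\nu_2(\alpha_{j-1}+\beta_j)<\nu_2(\alpha_h+\beta_h)$ for every even $h$, which by (\ref{del}) is exactly (3c). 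The time formula then comes from Theorem~\ref{pstC} on taking $\lambda_0=0\in\sigma_{uv}^+(L)$, so that $g=\operatorname{gcd}\sigma_u(L(X))=\operatorname{gcd}\mathcal{T}$ with $\mathcal{T}$ as in Proposition~\ref{ijgsupp}(1).

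The main obstacle is the same one as in Theorem~\ref{pstL}, now amplified: the ``extra'' positive eigenvalues form a whole family $\{\alpha_h+\beta_h=\alpha_{j-1}+\delta_h+\beta_j\}_h$ rather than the single value $m+n$, so the ``$+$ versus $+$'' inequalities and, in the borderline regime (aiii), the comparison of next $2$-adic digits must be checked uniformly over all even $h$ in addition to over $\sigma_{uv}^{\pm}(L(X_{m_j}))$. Managing this bookkeeping -- and correctly isolating the degenerate case $X_{m_j}=O_2$, where hypothesis (1i) does not apply -- is where care is needed, but the logical structure is identical to the already-established Theorem~\ref{pstL}(3a), so no new idea is required.
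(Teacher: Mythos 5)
Your proposal is correct and follows essentially the same route the paper intends: the paper only sketches this result as "combine Corollary \ref{joinperpreserve}, Theorem \ref{ijgsc}, the argument of Theorem \ref{pstL}, and identity (\ref{del})," which is exactly what you do — reducing conditions (1)–(2) to strong cospectrality and integrality of the support, then applying Theorem \ref{pstC}(3) to the shifted supports $\sigma_{uv}^{\pm}(L)$ with the common offset $\alpha_{j-1}+\beta_j$ and rerunning the three-regime valuation analysis, with the disconnected and $O_2$ cases handled as in Theorem \ref{pstL}(3b)–(3c). Your treatment of the degenerate case (3c) is also right, since the remaining within-plus comparisons (e.g.\ $\nu_2(\delta_h-\delta_{h'})$) follow automatically from the stated inequality.
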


\begin{remark}
\label{one}
We have the following.
\begin{enumerate}
\item We obtain a characterization for PST between vertices in $X_{m_j}$ in the graph $X=\Gamma(X_{m_1},\ldots,X_{m_{2k+1}})$ whenever $m_j\geq 2$ for some even $j\geq 2$ by replacing `even $j\leq h\leq 2k$' with `odd $j\leq h\leq 2k+1$' in Theorem \ref{pstLijg}.
\item We obtain a characterization for PST  between vertices in $X_{m_j}$ in the graph $X=\Gamma(X_{m_1},\ldots,X_{m_{2k+1}})$ whenever $m_j\geq 2$ for some odd $j\geq 3$ by replacing `even $j\leq h\leq 2k$' with `odd $j\leq h\leq 2k+1$' in Theorem \ref{pstLijg1}.
\end{enumerate}
\end{remark} 

In order to complete our characterization of PST in iterated join graphs, it suffices to determine PST between vertices in $X_{m_1}$ in the graph $X=\Gamma(X_{m_1},\ldots,X_{m_{2k+1}})$ by virtue of Remark \ref{one}. To do this, we combine Corollary \ref{joinperpreserve}, Theorem \ref{ijgsc}, the same argument in the proof of Theorem \ref{pstL}, and the fact that
\begin{equation}
\label{phi}
\alpha_{h+1}+\beta_h=\phi_h+\beta_1
\end{equation}
for all odd $1\leq h\leq 2k+1$ in Theorem \ref{ijgsc1}(1)
for a fixed odd $j$ with $\phi_h=m_1+m_2+m_4+m_6+\ldots+m_{h-1}$.

\begin{theorem}
\label{pstLijg2}
Let $X=\Gamma(X_{m_1},\ldots,X_{m_{2k+1}})$, $m_1\geq 2$ and $\phi(L(X),t)\in\mathbb{Z}[t]$. Consider $\alpha_h$ and $\beta_h$ in Proposition \ref{ijgsupp}(2) and $\phi_h$ in (\ref{phi}). Two vertices $u$ and $v$ in $X_{m_1}$ admit Laplacian perfect state transfer in $X$ if and only if all of the following conditions hold.
\begin{enumerate}
\item Vertices $u$ and $v$ are Laplacian strongly cospectral in $X_{m_1}$ and $m_1+m_2\notin\sigma_{uv}^-(L(X_{m_1}))$.
\item The eigenvalues in $\sigma_u(L(X_{m_1}))$ are all integers.
\item One of the following conditions hold.
\begin{enumerate}
\item $X_{m_1}$ is connected and one of the following conditions hold for all $\lambda\in\sigma_{uv}^+(L(X_{m_1}))\backslash\{0\}$, $\mu,\theta\in\sigma_{uv}^-(L(X_{m_1}))$ and odd $1\leq h\leq 2k+1$.
\begin{enumerate}
\item $\nu_2(\beta_1)>\nu_2(\mu)=\nu_2(\theta)$, $\nu_2(\lambda)>\nu_2(\mu)$ and $\nu_2(\phi_h)>\nu_2(\mu)$.
\item $\nu_2(\mu)>\nu_2(\beta_1)=\nu_2(\lambda)= \nu_2(\phi_h)$.
\item $\nu_2(\beta_1)=\nu_2(\mu)=\nu_2(\lambda)= \nu_2(\phi_h)$, $\nu_2\left(\frac{\lambda+\beta_1}{2^{\nu_2(\beta_1)}}\right)>\nu_2\left(\frac{\mu+\beta_1}{2^{\nu_2(\beta_1)}}\right)=\nu_2\left(\frac{\theta+\beta_1}{2^{\nu_2(\beta_1)}}\right)$ and $\nu_2\left(\frac{\alpha_h+\beta_h}{2^{\nu_2(\beta_1)}}\right)>\nu_2\left(\frac{\mu+\beta_1}{2^{\nu_2(\beta_1)}}\right).$
\end{enumerate}
\item $X_{m_1}$ is disconnected and (ai) above holds.
\end{enumerate}
In both cases, $u$ and $v$ belong to the same component in $X_{m_1}$.
\end{enumerate}
Further, $\tau_{X\vee Y}=\frac{\pi}{g}$, where $g=\operatorname{gcd}\left(\mathcal{T}\right)$, where $\mathcal{T}=\sigma_u(L(X))$ is given in Proposition \ref{ijgsupp}(2).
\end{theorem}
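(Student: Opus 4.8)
The plan is to read off the statement from Coutinho's perfect state transfer criterion, Theorem~\ref{pstC}, applied to $M=L(X)$, using Proposition~\ref{ijgsupp}(2a) and Theorem~\ref{ijgsc1}(1) to supply the eigenvalue support and its $\pm$-decomposition; the whole argument runs parallel to the proof of Theorem~\ref{pstL}. First I would dispose of condition (1) of Theorem~\ref{pstC}. Since $\phi(L(X),t)\in\mathbb{Z}[t]$, the elements of $\sigma_u(L(X))$ are algebraic integers closed under conjugation, and $L(X)$ is positive semidefinite with $0\in\sigma_u(L(X))$; hence Lemma~\ref{per} forces the integral alternative (the quadratic alternative would produce a negative conjugate eigenvalue), so $\Delta=1$ and condition (1) of Theorem~\ref{pstC} reduces to $\sigma_u(L(X))\subseteq\mathbb{Z}$. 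By Proposition~\ref{ijgsupp}(2a), $\sigma_u(L(X))$ consists of $0$, the integers $\alpha_{h+1}+\beta_h$, and integer translates of the elements of $\sigma_u(L(X_{m_1}))$, so $\sigma_u(L(X))\subseteq\mathbb{Z}$ if and only if $\sigma_u(L(X_{m_1}))\subseteq\mathbb{Z}$; this is condition (2) of the statement (compare Corollary~\ref{joinperpreserve}).

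Next, condition (2) of Theorem~\ref{pstC} asks that $u$ and $v$ be Laplacian strongly cospectral in $X$, which by Theorem~\ref{ijgsc1}(1) happens exactly when $u,v$ are strongly cospectral in $X_{m_1}$ with $m_1+m_2\notin\sigma_{uv}^-(L(X_{m_1}))$ — this is condition (1) of the statement, and it forces $u,v$ into a single component of $X_{m_1}$ — while at the same time Theorem~\ref{ijgsc1}(1) gives $\sigma_{uv}^-(L)=\{\mu+\beta_1:\mu\in\sigma_{uv}^-(L(X_{m_1}))\}$ and $\sigma_{uv}^+(L)=\{\lambda+\beta_1:0\neq\lambda\in\sigma_{uv}^+(L(X_{m_1}))\}\cup\{\alpha_{h+1}+\beta_h:h\text{ odd}\}\cup\{0,\beta_1\}$. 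It remains to translate condition (3) of Theorem~\ref{pstC}. Fixing $0\in\sigma_{uv}^+(L)$ as the reference eigenvalue and using $\Delta=1$, condition (3) is equivalent to: $\nu_2(\xi)$ takes a single value $c$ over all $\xi\in\sigma_{uv}^-(L)$, while $\nu_2(\zeta)>c$ for every nonzero $\zeta\in\sigma_{uv}^+(L)$. Now I would apply identity~(\ref{phi}), $\alpha_{h+1}+\beta_h=\phi_h+\beta_1$, which exposes $\beta_1$ as a common summand of every element of $\sigma_{uv}^-(L)$ and of every nonzero element of $\sigma_{uv}^+(L)$ (namely $\lambda+\beta_1$, $\beta_1=0+\beta_1$, and $\phi_h+\beta_1$), the only exception being the reference $0$. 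Thus $\beta_1$ takes over the role played by $n$ in the proof of Theorem~\ref{pstL}, and the family $\{\phi_h\}$ the role of $m$; condition (3) becomes $\nu_2(\mu+\beta_1)=\nu_2(\theta+\beta_1)$, $\nu_2(\lambda+\beta_1)>\nu_2(\mu+\beta_1)$, $\nu_2(\beta_1)>\nu_2(\mu+\beta_1)$ and $\nu_2(\phi_h+\beta_1)>\nu_2(\mu+\beta_1)$ for all relevant $\lambda,\mu,\theta,h$. Running the same three-way case split as in the proof of Theorem~\ref{pstL}(3a), according as $\nu_2(\mu)$ is less than, greater than, or equal to $\nu_2(\beta_1)$, produces precisely alternatives (i), (ii), (iii) of condition (3a) when $X_{m_1}$ is connected; when $X_{m_1}$ is disconnected an extra eigenvalue enters $\sigma_{uv}^+(L)$, exactly as in Theorem~\ref{pstL}(3b), and its presence forces $\nu_2(\mu)<\nu_2(\beta_1)$, leaving only alternative (i) — this is (3b). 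The minimum-time statement is then immediate from Theorem~\ref{pstC}: with $\lambda_0=0$ and $\Delta=1$, $g=\operatorname{gcd}\{0-\lambda:\lambda\in\sigma_u(L(X))\}=\operatorname{gcd}(\mathcal{T})$ with $\mathcal{T}=\sigma_u(L(X))$ as in Proposition~\ref{ijgsupp}(2).

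The genuinely delicate step is the last one: one must confirm that, after invoking~(\ref{phi}), the lone element of $\sigma_{uv}^+(L)$ missing the summand $\beta_1$ is $0$, so that taking $0$ as the reference in Theorem~\ref{pstC}(3) yields the clean reformulation above, and then one must push the $2$-adic valuation case analysis through while simultaneously tracking the several ``$m$-type'' eigenvalues $\phi_h$ (one per odd $h$), each of which contributes a constraint to every alternative. A secondary point needing care is the disconnected case, where one has to pin down the additional eigenvalue forced into $\sigma_{uv}^+(L)$ and check, as in Theorem~\ref{pstL}(3b), that it is incompatible with alternatives (ii) and (iii). Beyond these bookkeeping matters no new idea is required: the argument is the one already used for Theorem~\ref{pstL}, transported through Proposition~\ref{ijgsupp}(2a), Theorem~\ref{ijgsc1}(1) and the identity~(\ref{phi}).
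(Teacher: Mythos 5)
Your overall route is the paper's own: Theorem \ref{pstC} fed with Proposition \ref{ijgsupp}(2a), Theorem \ref{ijgsc1}(1) and the identity (\ref{phi}), then the case analysis of Theorem \ref{pstL}. The handling of conditions (1) and (2) of Theorem \ref{pstC} is fine. The gap is exactly at the step you yourself flag as delicate, and as written it cannot be completed. By Theorem \ref{ijgsc1}(1) (equivalently Proposition \ref{ijgsupp}(2a)), the set $\mathcal{R}=\{0,\beta_1\}$ is contained in $\sigma_{uv}^+(L)$ \emph{unconditionally}: the join that produces the eigenvalue $\beta_1$ is applied to the disconnected graph $X_{m_1}\cup X_{m_2}$, so $\beta_1$ lies in $\sigma_{uv}^+(L)$ whether or not $X_{m_1}$ is connected. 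Your own reformulation of Theorem \ref{pstC}(3) (taking $0$ as reference) therefore contains the constraint $\nu_2(\beta_1)>\nu_2(\mu+\beta_1)$ for every $\mu\in\sigma_{uv}^-(L(X_{m_1}))$. But this constraint is incompatible with alternatives (3aii) and (3aiii): in (3aii) one has $\nu_2(\mu)>\nu_2(\beta_1)$, hence $\nu_2(\mu+\beta_1)=\nu_2(\beta_1)$, and in (3aiii) one has $\nu_2(\mu)=\nu_2(\beta_1)$, hence $\nu_2(\mu+\beta_1)>\nu_2(\beta_1)$; in both cases $\nu_2(\beta_1)>\nu_2(\mu+\beta_1)$ fails. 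So the three-way case split cannot ``produce precisely alternatives (i), (ii), (iii)'' in the connected case, contrary to your claim; from the data you quote, only the type-(i) alternative survives, regardless of the connectivity of $X_{m_1}$.

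Relatedly, your justification of (3b) --- that ``when $X_{m_1}$ is disconnected an extra eigenvalue enters $\sigma_{uv}^+(L)$, exactly as in Theorem \ref{pstL}(3b)'' --- has no basis in the results you cite: the sets $\sigma_{uv}^{\pm}(L)$ given by Theorem \ref{ijgsc1}(1) do not change with the connectivity of $X_{m_1}$, so the connected/disconnected dichotomy of the statement is never actually derived. In Theorem \ref{pstL} the dichotomy has a genuine source ($n\in\mathcal{R}$ only when $X$ is disconnected), but that source is absent here because the left operand of the first join is $X_{m_1}\cup X_{m_2}$, which is always disconnected; transporting the argument by analogy, as you do, skips precisely this point. (A further issue you would hit if you pushed the computation through honestly: for $k\geq 2$ the eigenvalues $\beta_3,\beta_5,\ldots,\beta_{2k-1}$ arising from the intermediate joins also lie in $\sigma_{uv}^{+}(L)$ but are not of the form ``something $+\,\beta_1$'', so they are not covered by the reformulation via (\ref{phi}) at all.) In short, the final equivalence with the stated conditions (3a)/(3b) is asserted rather than proved, and the assertion conflicts with your own intermediate formula.
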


\begin{remark}
If $X_{m_j}\neq K_2$ is unweighted, then we may drop the condition $m_1+m_2\in\sigma_{uv}^-(L(X_{m_1}))$ in Theorem \ref{ijgsc1}(1), the condition $\alpha_j\notin\sigma_{uv}^-(L(X_{m_j}))$ in Theorems \ref{ijgsc}(1a) and \ref{ijgsc1}(3), and the condition $m_j\notin\sigma_{uv}^-(L(X_{m_j}))$ in Theorems \ref{ijgsc}(2a) and \ref{ijgsc1}(2a). The same applies to Theorems \ref{pstLijg}, \ref{pstLijg1} and \ref{pstLijg2}.
\end{remark}

\begin{remark}
\label{lastrem}
The conditions in Theorem \ref{pstLijg}(3c) are equivalent to $m_1=2$ and
\begin{equation*}
\nu_2(2+\beta_0)>\nu_2(\beta_0),\quad \nu_2(2+m_3+\beta_0)>\nu_2(\beta_0),\quad\ldots \quad \nu_2(2+m_3+m_5+\ldots+m_{2k-1}+\beta_0)>\nu_2(\beta_0).
\end{equation*}
The first inequality above is equivalent to $\beta_0\equiv 2$ (mod 4), and the latter ones yield $m_j\equiv 0$ (mod 4) for all odd $j\geq 3$. Since $X=Z_1\vee X_{m_{2k}}$ for some disconnected graph $Z_1$, applying Theorem \ref{pstL}(3ab) yields $\nu_2(m_{2k})>\nu_2(\beta_0)$ and
$\nu_2(m_1+m_2+m_4+\ldots+m_{2k-2})>\nu_2(\beta_0)$. Now, Corollary \ref{pstLc4} implies that PST also occurs between $u$ and $v$ in $Z_2$, where $Z_2$ is the graph such that $Z_1=Z_2\cup X_{m_{2k-1}}$. The same argument applied $k-2$ times yields $\nu_2(m_j)>\nu_2(\beta_0)$ for all even $j\geq 3$ and PST between $u$ and $v$ in $Z_{m_{k-2}}=O_{2}\vee X_{m_2}$. As $\beta_0\equiv 2$ (mod 4), the former condition is equivalent to $m_j\equiv 0$ (mod 4) for all even $j\geq 3$, while the latter is equivalent to $m_2\equiv 2$ (mod 4) by virtue of Theorem \ref{pstL}(3c). These considerations imply that Theorem \ref{pstLijg}(3c) holds if and only if $m_1=2$, $m_2\equiv 2$ (mod 4) and $m_j\equiv 0$ (mod 4) for all $j\geq 3$. One can also check that Theorem \ref{pstLijg2} holds for $X_{m_1}=K_2$ if and only if $m_1=2$, $m_2\equiv 2$ (mod 4) and $m_j\equiv 0$ (mod 4) for all $j\geq 3$. In both cases, the minimum PST time is $\frac{\pi}{2}$.
\end{remark}

\begin{corollary}
Let $X$ be a connected threshold graph. Then Laplacian perfect state transfer occurs in $X$ if and only if $m_1=2$, $m_2\equiv 2$ (mod 4) and $m_j\equiv 2$ (mod 4) for all $j\geq 3$. In this case, Laplacian perfect state transfer occurs between vertices $u$ and $v$ in $X_{m_1}\in\{O_{2},K_{2}\}$ with minimum PST time is $\frac{\pi}{2}$.
\end{corollary}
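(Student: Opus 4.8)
The plan is to deduce this corollary from the iterated-join machinery already developed, principally Theorems~\ref{pstLijg}, \ref{pstLijg1}, \ref{pstLijg2} and Remark~\ref{lastrem}. First I would recall (see~\cite{Severini}) that every connected threshold graph $X$ on at least two vertices is an iterated join graph of one of the two shapes $\Gamma(O_{m_1},K_{m_2},\ldots,K_{m_{2k}})$ or $\Gamma(K_{m_1},O_{m_2},\ldots,K_{m_{2k+1}})$; in particular $X_{m_1}\in\{O_{m_1},K_{m_1}\}$ and every block $X_{m_j}$ is either $O_{m_j}$ or $K_{m_j}$. By the remark following Theorem~\ref{ijgsc1}, any pair of vertices admitting perfect state transfer in $X$ must be strongly cospectral and hence must lie inside a single block $X_{m_j}$, so it suffices to decide block by block whether two vertices of $X_{m_j}$ can enjoy PST in $X$ (the handful of smallest threshold graphs being checked directly against these representations).

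The core of the argument --- and the step I expect to be the most delicate --- is to show that the only block that can host PST is the first one, and that this forces $X_{m_1}\in\{O_2,K_2\}$, i.e.\ $m_1=2$. For $m_j\geq 3$ neither $O_{m_j}$ nor $K_{m_j}$ has a pair of strongly cospectral vertices (for $K_{m_j}$ the projection onto the eigenvalue $m_j$ separates $\mathbf e_u$ from $\pm\mathbf e_v$ once $m_j\geq 3$, and for $O_{m_j}$ the projection onto $0$ does), so Theorems~\ref{ijgsc}--\ref{ijgsc1} leave no room for PST there. An $O_2$ block occurring at an index $j\neq 1$ is likewise excluded, because its two vertices are isolated inside $O_2$ and so are \emph{not} strongly cospectral in $O_2$ itself, while the exceptional $O_2$-clauses of Theorems~\ref{ijgsc} and \ref{ijgsc1} are available only when $j=1$. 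Finally a $K_2$ block at an index $j\geq 2$ fails the hypothesis $m_j\notin\sigma_{uv}^-(L(X_{m_j}))$ appearing in Theorems~\ref{ijgsc}(1a)/(2a) and \ref{ijgsc1}(2a)/(3), since $\sigma_{uv}^-(L(K_2))=\{2\}\ni m_j$. Assembling these exclusions, PST in a connected threshold graph can occur only between the two vertices of the first block, and that block is $O_2$ (first shape) or $K_2$ (second shape), giving $m_1=2$.

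With $m_1=2$ in hand I would conclude by quoting the PST characterizations for the first block directly. In the first shape $X_{m_1}=O_2$ and Theorem~\ref{pstLijg}(1ii) together with condition~(3c) applies; in the second shape $X_{m_1}=K_2$ and Theorem~\ref{pstLijg2} in the case $X_{m_1}=K_2$ applies. In both cases the remaining $\nu_2$-conditions on the $m_j$ are exactly those unwound in Remark~\ref{lastrem} (which uses Theorem~\ref{pstL}(3) and Corollary~\ref{pstLc4} to peel off one block at a time), and they collapse to the congruences stated in the corollary; necessity and sufficiency come together since the cited theorems are biconditional. The minimum PST time $\frac{\pi}{2}$ is then read off from the formula $\tau=\pi/g$ with $g=\operatorname{gcd}(\mathcal{T})$ and $\mathcal{T}=\sigma_u(L(X))$ given in Theorems~\ref{pstLijg} and~\ref{pstLijg2}, exactly as recorded at the end of Remark~\ref{lastrem}. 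Throughout, the one thing to watch is the bookkeeping across the two shapes of the representation and the ``$j=1$ versus $j\geq 2$'' asymmetry flagged in Remark~\ref{one}.
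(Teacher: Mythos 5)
Your proposal is correct and follows essentially the same route as the paper: reduce PST to strong cospectrality within a single block, rule out every block except the first (forcing $X_{m_1}\in\{O_2,K_2\}$, so $m_1=2$), and then invoke Remark \ref{lastrem} together with the iterated-join PST theorems to obtain the congruences and the minimum time $\frac{\pi}{2}$. The only cosmetic difference is that you quote Theorems \ref{ijgsc}--\ref{ijgsc1} for the block exclusions where the paper re-runs the inductive peeling via Theorem \ref{sc}(1), which amounts to the same argument.
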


\begin{proof}
By assumption, $X\in\{\Gamma(O_{m_1},K_{m_1},\ldots,K_{m_{2k}}),\Gamma(K_{m_1},O_{m_1},\ldots,K_{m_{2k+1}})\}$. First, suppose $X=\Gamma(O_{m_1},K_{m_1},\ldots,K_{m_{2k}})$. For two vertices $u$ and $v$ to be strongly cospectral in $X$, Theorem \ref{sc}(3) implies that they both belong to $O_{m_j}$ if $j$ is odd or $K_{m_j}$ if $j$ is even. Hence, we have three cases.
\begin{itemize}
\item Let $j=1$. Then vertices $u$ and $v$ of $O_{m_1}$ are strongly cospectral in $X$ if and only if $m_1=2$.
\item Suppose $j\geq 3$ is odd so that vertices $u$ and $v$ both belong to $O_{m_j}$. Then we can write $X=(((Y \cup O_{m_j})\vee K_{m_{j+1}})\cup \ldots)\vee K_{m_{2k}}$ for some graph $Y$. Now, Theorem \ref{sc}(1) and an inductive argument implies that either (i) $u$ and $v$ are strongly cospectral in $(Y \cup X_{m_j})\vee X_{m_{j+1}}$ and $m_j\notin\sigma_{uv}^-(L(X_j))$ or (ii) $Y \cup X_{m_j}=O_2$. Since $Y$ has at least one vertex, the latter case does not hold. Hence, it must be that $u$ and $v$ are strongly cospectral in $(Y \cup O_{m_j})\vee K_{m_{j+1}}$. But this implies that $u$ and $v$ are strongly cospectral in $O_{m_j}$, which cannot happen since by Theorem \ref{sc}, the only graph with isolated vertices that admit strong cospectrality in the join is $O_2$, which is not equal to $Y \cup O_{m_j}$. 
\item Suppose $j\geq 3$ is even so that $u$ and $v$ belong to $K_{m_j}$. Then the same argument as the previous case implies that $u$ and $v$ are strongly cospectral in $Y\vee K_{m_{j}}$ and $m_j\notin\sigma_{uv}^-(L(X_j))$. But this implies that $u$ and $v$ are strongly cospectral in $K_{m_j}$, i.e., $m_j=2\in\sigma_{uv}^-(L(X_j))$, which is a contradiction.
\end{itemize}
Thus, in $X=\Gamma(O_{m_1},K_{m_1},\ldots,K_{m_{2k}})$, only the vertices of $O_{m_1}$ admit strong cospectrality in $X$, in which case $m_1=2$. The same holds for $X=\Gamma(O_{m_1},K_{m_1},\ldots,K_{m_{2k}})$. Cmbining this with Remark \ref{lastrem} completes the proof. 
\end{proof}

\section{Bounds}\label{secBounds}

From the previous sections, PST between $u$ and $w$ in $X$ need not guarantee PST between $u$ and $v$ in $X\vee Y$. The same can be said about periodicity of a vertex. Thus, we ask, if $u,v\in V(X)$ then how far can $|U_M(X\vee Y,t)_{u,v}|$ be from $|U_M(X,t)_{u,v}|$ as $t$ ranges over $\mathbb{R}$? We answer this question by providing an upper bound for the absolute value of $|U_M(X\vee Y,t)_{u,v}|-|U_M(X,t)_{u,v}|$. To do this, we define
\begin{equation}
\label{alpha}
\alpha_M(t)=\begin{cases}
   \dfrac{me^{-itn}+ne^{itm}-(m+n)}{m(m+n)},& \text{if } M=L\\
    \dfrac{e^{it\lambda^+}(k-\lambda^-)}{m\sqrt{D}}-\dfrac{e^{it\lambda^-}(k-\lambda^+)}{m\sqrt{D}}-\dfrac{e^{it k}}{m} ,& \text{if } M=A.
\end{cases}
\end{equation}
and
\begin{equation}
\label{beta}
T_M=\begin{cases}
   \{2j\pi/g:j\in\mathbb{Z}\},& \text{if } M=L\\
   \{t\in\mathbb{R}:e^{itk}=e^{it\lambda^+}=e^{it\lambda^-}\},& \text{if } M=A,
\end{cases}
\end{equation}
where $g=\operatorname{gcd}(m,n)$ and $\lambda^{\pm}$ are given in Lemma \ref{aqjoin}. If the context is clear, then we simply write as $\alpha_M(t)$ and $T_M$ as $\alpha(t)$ and $T$, respectively. If we add that $k,\ell\in\mathbb{Z}$ and $\Delta$ in Lemma \ref{aqjoin} is a perfect square, then we may write $T_A=\{2j\pi/h:j\in\mathbb{Z}\}$, where $h=\operatorname{gcd}(\lambda^{+}-k,\lambda^{-}-k)$. 

Combining (\ref{notoffL}) and (\ref{notoffA}) with (\ref{alpha}) yields the following lemma.

\begin{lemma}
\label{lemb}
We have  $\alpha_M(t)=0$ if and only if $t\in T_M$. Moreover, for any $u,v\in V(X)$ and for all $t$,
\begin{equation*}
U_L(X\vee Y,t)_{u,v}-e^{itn}U_L(X,t)_{u,v}=e^{itn}\alpha_L(t)\quad \text{and} \quad U_A(X\vee Y,t)_{u,v}-U_A(X,t)_{u,v}=\alpha_A(t).
\end{equation*}
\end{lemma}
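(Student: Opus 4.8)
The plan is to separate the two displayed identities from the characterization $\alpha_M(t)=0\iff t\in T_M$, since the identities are purely formal while the characterization carries the real content. For the adjacency identity there is nothing to prove: subtracting $U_A(X,t)_{u,v}$ from both sides of (\ref{notoffA}) leaves exactly the expression defining $\alpha_A(t)$ in (\ref{alpha}). For the Laplacian identity, subtracting $e^{itn}U_L(X,t)_{u,v}$ from both sides of (\ref{notoffL}) yields $\frac{1}{m+n}+\frac{ne^{it(m+n)}}{m(m+n)}-\frac{e^{itn}}{m}$, and pulling $e^{itn}$ out of the numerator $m+ne^{it(m+n)}-(m+n)e^{itn}$ shows this equals $e^{itn}\alpha_L(t)$. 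This also disposes of the ``if'' direction of the first assertion by direct substitution: if $t\in T_L$ then $g\mid m$ and $g\mid n$ force $e^{itm}=e^{itn}=1$, so the numerator of $\alpha_L(t)$ vanishes; if $t\in T_A$ then $e^{it\lambda^+}=e^{it\lambda^-}=e^{itk}$, and the three terms of $\alpha_A(t)$ collapse to $\frac{e^{itk}}{m}\bigl(\frac{\lambda^+-\lambda^-}{\sqrt{D}}-1\bigr)=0$ since $\lambda^+-\lambda^-=\sqrt{D}$.

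The substance is the ``only if'' direction, which I would prove by a convexity argument on the unit circle. For $M=L$, $\alpha_L(t)=0$ rewrites as $\frac{m}{m+n}e^{-itn}+\frac{n}{m+n}e^{itm}=1$; taking real parts gives $\frac{m}{m+n}\cos(tn)+\frac{n}{m+n}\cos(tm)=1$, a convex combination with strictly positive weights of two numbers each at most $1$, so both cosines equal $1$, i.e. $tn,tm\in 2\pi\mathbb{Z}$. Writing $g=\operatorname{gcd}(m,n)=xm+yn$ by Bézout then gives $tg\in 2\pi\mathbb{Z}$, hence $t\in T_L$. For $M=A$, I would first record that $D=(k-\ell)^2+4mn>(k-\ell)^2$ (as $m,n\ge 1$), so $\sqrt{D}>|k-\ell|$, which squeezes $\lambda^-<k<\lambda^+$; consequently $p:=\frac{k-\lambda^-}{\sqrt{D}}$ and $q:=\frac{\lambda^+-k}{\sqrt{D}}$ are strictly positive with $p+q=\frac{\lambda^+-\lambda^-}{\sqrt{D}}=1$. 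Multiplying $\alpha_A(t)=0$ by $me^{-itk}$ turns it into $p\,e^{it(\lambda^+-k)}+q\,e^{it(\lambda^--k)}=1$, and the identical real-part/convexity argument forces $e^{it(\lambda^+-k)}=e^{it(\lambda^--k)}=1$, that is $e^{it\lambda^+}=e^{it\lambda^-}=e^{itk}$, so $t\in T_A$.

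The hard part — indeed the only non-mechanical point — is this convexity step, together with the verification in the adjacency case that $p,q$ are genuinely positive, which rests on the strict inequality $\lambda^-<k<\lambda^+$ coming from $mn>0$ (a Perron--Frobenius-type squeeze). Everything else is bookkeeping with Bézout's identity and the definitions (\ref{alpha})--(\ref{beta}); in particular the same Bézout argument recovers the alternate description $T_A=\{2j\pi/h:j\in\mathbb{Z}\}$ with $h=\operatorname{gcd}(\lambda^+-k,\lambda^--k)$ whenever $k,\ell$ and $\sqrt{D}$ are integers, since then membership in $T_A$ is precisely $t(\lambda^+-k),t(\lambda^--k)\in 2\pi\mathbb{Z}$.
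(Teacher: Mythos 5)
Your proposal is correct and follows essentially the same route as the paper, which states the lemma without a written proof as an immediate consequence of combining the transition-matrix formulas (\ref{notoffL})--(\ref{notoffA}) with the definition (\ref{alpha}); your mechanical subtractions reproduce exactly that derivation of the two identities. The convex-combination (equality-case) argument you give for the characterization $\alpha_M(t)=0\iff t\in T_M$, including the verification that $\lambda^-<k<\lambda^+$ so the weights are positive and the B\'ezout step for $T_L$, is precisely the reasoning the authors treat as immediate and later use implicitly in the proofs of Theorems \ref{boundL} and \ref{boundA}, so you have simply supplied the omitted details.
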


\subsection{Laplacian case}

\begin{theorem}
\label{boundL}
For all $u,v\in V(X)$ and for all $t$
\begin{equation}
\label{notoffL1}
\big|\ U_L(X\vee Y,t)_{u,v}-e^{itn}U_L(X,t)_{u,v}\ \big|\leq 2/m
\end{equation}
with equality if and only if $\nu_2(m)=\nu_2(n)$, in which case equality holds in (\ref{notoffL1}) at time $\tau=j\pi/g$, where $j$ is any odd integer and $g=\operatorname{gcd}(m,n)$. Moreover, if equality holds in (\ref{notoffL1}), then
\begin{equation}
\label{notoffL2}
|\ U_L(X\vee Y,\tau)_{u,v}-e^{i\tau n}U_L(X,\tau)_{u,v}\ |=U_L(X\vee Y,\tau)_{u,v}+U_L(X,\tau)_{u,v}=2/m.
\end{equation}

\end{theorem}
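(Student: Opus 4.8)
```latex
The plan is to exploit Lemma~\ref{lemb}, which already identifies the left-hand side of (\ref{notoffL1}) with $|e^{itn}\alpha_L(t)| = |\alpha_L(t)|$, so the whole problem reduces to bounding $|\alpha_L(t)|$ and analyzing its equality case. Writing out $\alpha_L(t)$ from (\ref{alpha}), I would set $a=me^{-itn}$, $b=ne^{itm}$, so that $\alpha_L(t) = \frac{a+b-(m+n)}{m(m+n)}$, with $|a|=m$, $|b|=n$. The numerator is $a+b$ minus the real number $m+n = |a|+|b|$. Since $|a+b|\le |a|+|b| = m+n$, the triangle inequality gives $|a+b-(m+n)|\le |a+b| + (m+n) \le 2(m+n)$, hence $|\alpha_L(t)|\le \frac{2(m+n)}{m(m+n)} = 2/m$. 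This establishes (\ref{notoffL1}).

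Next I would pin down equality. Equality in $|a+b-(m+n)|\le |a+b|+(m+n)$ forces $a+b$ to be a nonpositive real multiple of $\ldots$ — more precisely, $a+b-(m+n)$ and $-(m+n)$ must point the same direction, i.e.\ $a+b$ must be real and $\le 0$; combined with the second inequality $|a+b| = m+n$ being attained, we need $a+b$ to equal $-(m+n)$ exactly, i.e.\ $a=-m$ and $b=-n$ (as $|a|=m,|b|=n$ and their sum has modulus $m+n$, they must be positively proportional, and then each equals its negative modulus). So equality holds iff $e^{-itn} = -1$ and $e^{itm}=-1$, i.e.\ $tn \equiv \pi$ and $tm\equiv \pi \pmod{2\pi}$. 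Writing $g=\gcd(m,n)$, $m=gm'$, $n=gn'$ with $\gcd(m',n')=1$, a standard elementary number-theory argument (solving the simultaneous congruences $tgm'\equiv \pi$, $tgn'\equiv \pi$) shows such $t$ exists iff $m'$ and $n'$ have the same parity, and since $\gcd(m',n')=1$ that means both are odd, which is exactly $\nu_2(m)=\nu_2(n)$. In that case $t = j\pi/g$ with $j$ odd works: then $tn = jn'\pi$ is an odd multiple of $\pi$, likewise $tm$, so $e^{-itn}=e^{itm}=-1$ as required. This gives the equality characterization and the stated time $\tau = j\pi/g$.

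Finally, for (\ref{notoffL2}), at such $\tau$ we have $\tau n = jn'\pi$ with $jn'$ odd, so $e^{i\tau n} = -1$. Then $U_L(X\vee Y,\tau)_{u,v} - e^{i\tau n}U_L(X,\tau)_{u,v} = U_L(X\vee Y,\tau)_{u,v} + U_L(X,\tau)_{u,v}$, giving the middle equality in (\ref{notoffL2}) directly from Lemma~\ref{lemb}'s identity (this is purely formal, no extra content). The value $2/m$ is the equality case just established. I should note one subtlety: to write the displayed chain (\ref{notoffL2}) as an equality of \emph{complex} numbers (not just moduli), I would observe that $e^{i\tau n}\alpha_L(\tau) = -\alpha_L(\tau)$ and $\alpha_L(\tau) = \frac{-2(m+n)}{m(m+n)} = -2/m$, so $e^{i\tau n}\alpha_L(\tau) = 2/m$ is a positive real; then $U_L(X\vee Y,\tau)_{u,v} + U_L(X,\tau)_{u,v} = U_L(X\vee Y,\tau)_{u,v} - e^{i\tau n}U_L(X,\tau)_{u,v} = e^{i\tau n}\alpha_L(\tau) = 2/m$, and its modulus is $2/m$.

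The main obstacle — really the only non-routine point — is the parity/number-theory step showing that the simultaneous congruences $tm \equiv \pi$, $tn\equiv \pi \pmod{2\pi}$ are solvable precisely when $\nu_2(m)=\nu_2(n)$. Everything else is an application of the triangle inequality and its equality condition for two complex numbers of fixed moduli, plus the bookkeeping already packaged in Lemma~\ref{lemb}. I would be careful to state the equality condition for $|z_1+z_2|=|z_1|+|z_2|$ (the two summands must be nonnegative real multiples of a common unit vector) explicitly, since the argument uses it twice in slightly different guises.
```
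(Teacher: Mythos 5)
Your proposal is correct and takes essentially the same route as the paper: reduce via Lemma \ref{lemb} to bounding $|\alpha_L(t)|$, observe that equality forces $e^{i\tau m}=e^{i\tau n}=-1$, which holds for some $\tau$ precisely when $\nu_2(m)=\nu_2(n)$ with $\tau=j\pi/g$ for odd $j$, and then (\ref{notoffL2}) follows since $e^{i\tau n}\alpha_L(\tau)=2/m$. The paper's proof simply compresses the triangle-inequality equality analysis and the parity argument that you spell out in detail.
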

\begin{proof}
Lemma \ref{lemb} yields (\ref{notoffL1}) with equality if and only if $e^{i\tau m}=e^{i\tau n}=-1$ for some $\tau>0$. Equivalently, $\nu_2(m)=\nu_2(n)$, in which case $\tau=j\pi/g$ for some odd $j$. From this, (\ref{notoffL2}) is immediate. 
\end{proof}

\begin{corollary}
\label{boundL1}
For all $u,v\in V(X)$ and for all $t$,
\begin{equation}
\label{notoffL3}
\bigg|\ |U_L(X\vee Y,t)_{u,v}|-|U_L(X,t)_{u,v}|\ \bigg|\leq 2/m
\end{equation}
with equality if and only if $\nu_2(m)=\nu_2(n)$, $U_L(X,\tau)_{u,v}\in\mathbb{R}$ and either $U_L(X,\tau)_{u,v}\leq 0$ or $U_L(X,\tau)_{u,v}\geq 2/m$, in which case equality holds in (\ref{notoffL3}) at time $\tau$ in Theorem \ref{boundL}.
\end{corollary}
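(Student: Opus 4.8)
The plan is to deduce Corollary \ref{boundL1} from Theorem \ref{boundL} together with the elementary triangle-inequality bound $\bigl|\,|a|-|b|\,\bigr|\leq |a-b|$. First I would observe that for any complex numbers $a,b$ we have $\bigl|\,|a|-|b|\,\bigr|\leq |a-b|$, and apply this with $a=U_L(X\vee Y,t)_{u,v}$ and $b=e^{itn}U_L(X,t)_{u,v}$; since $|e^{itn}|=1$, we have $|b|=|U_L(X,t)_{u,v}|$, so the left-hand side of (\ref{notoffL3}) equals $\bigl|\,|U_L(X\vee Y,t)_{u,v}|-|U_L(X,t)_{u,v}|\,\bigr|$ and is bounded by $\bigl|\,U_L(X\vee Y,t)_{u,v}-e^{itn}U_L(X,t)_{u,v}\,\bigr|$, which Theorem \ref{boundL} bounds by $2/m$. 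This immediately gives (\ref{notoffL3}).

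The substantive part is the equality analysis, and this is where I expect the main obstacle. Equality in (\ref{notoffL3}) forces equality in \emph{both} inequalities used. Equality in $\bigl|\,|a|-|b|\,\bigr|\leq |a-b|$ holds exactly when $a$ and $b$ are non-negative real multiples of a common unit complex number, i.e.\ when $a=e^{i\psi}s$ and $b=e^{i\psi}t'$ with $s,t'\geq 0$ (or one of them zero); equivalently $\overline{a}b\geq 0$. Equality in (\ref{notoffL1}) forces, by Theorem \ref{boundL}, that $\nu_2(m)=\nu_2(n)$ and $\tau=j\pi/g$ with $j$ odd, and moreover (\ref{notoffL2}) tells us that at such $\tau$ both $U_L(X\vee Y,\tau)_{u,v}$ and $U_L(X,\tau)_{u,v}$ are already real (indeed their sum equals $2/m$ and equals the modulus of their difference). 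So at a candidate equality time $\tau$ we have $U_L(X\vee Y,\tau)_{u,v}=e^{i\tau n}\alpha_L(\tau)+e^{i\tau n}U_L(X,\tau)_{u,v}$; since $e^{i\tau n}=-1$ and (by Lemma \ref{lemb} and $\nu_2(m)=\nu_2(n)$) $e^{i\tau n}\alpha_L(\tau)=2/m$, we get $U_L(X\vee Y,\tau)_{u,v}=2/m-U_L(X,\tau)_{u,v}$, which is real precisely when $U_L(X,\tau)_{u,v}\in\mathbb{R}$.

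Given $U_L(X,\tau)_{u,v}=:x\in\mathbb{R}$, the quantity inside the absolute value on the left of (\ref{notoffL3}) becomes $\bigl|\,|2/m-x|-|x|\,\bigr|$, and I would simply check that this equals $2/m$ exactly when $x\leq 0$ or $x\geq 2/m$ (while for $0<x<2/m$ it equals $|2/m-2x|<2/m$). This case check, combined with the necessity of $\nu_2(m)=\nu_2(n)$ and $\tau=j\pi/g$ from Theorem \ref{boundL}, yields the stated characterization: equality holds in (\ref{notoffL3}) iff $\nu_2(m)=\nu_2(n)$, $U_L(X,\tau)_{u,v}\in\mathbb{R}$, and either $U_L(X,\tau)_{u,v}\leq 0$ or $U_L(X,\tau)_{u,v}\geq 2/m$, at the time $\tau$ of Theorem \ref{boundL}. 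The only delicate point to phrase carefully is that $\nu_2(m)=\nu_2(n)$ is \emph{necessary} for equality: if $\nu_2(m)\neq\nu_2(n)$ then Theorem \ref{boundL} gives a strict inequality in (\ref{notoffL1}) for all $t$, hence strict inequality in (\ref{notoffL3}) for all $t$ as well, so no equality time exists.
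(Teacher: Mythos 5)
Your derivation of the inequality (\ref{notoffL3}) is exactly the paper's: apply $\bigl|\,|a|-|b|\,\bigr|\leq|a-b|$ with $a=U_L(X\vee Y,t)_{u,v}$ and $b=e^{itn}U_L(X,t)_{u,v}$ and invoke Theorem \ref{boundL}. The equality analysis, however, contains a genuine flaw. You assert that (\ref{notoffL2}) ``tells us that at such $\tau$ both $U_L(X\vee Y,\tau)_{u,v}$ and $U_L(X,\tau)_{u,v}$ are already real (indeed their sum equals $2/m$\dots).'' That inference is false: since $e^{i\tau n}=-1$, equation (\ref{notoffL2}) only says that the \emph{sum} $U_L(X\vee Y,\tau)_{u,v}+U_L(X,\tau)_{u,v}$ equals the real number $2/m$, and a sum can be real while both summands are non-real (e.g.\ $z$ and $2/m-z$ with $\operatorname{Im}z\neq 0$). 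Indeed, for weighted $X$ with irrational Laplacian eigenvalues there is no reason for $U_L(X,\tau)_{u,v}$ to be real at $\tau=j\pi/g$ --- which is precisely why ``$U_L(X,\tau)_{u,v}\in\mathbb{R}$'' appears as a separate condition in the statement you are proving. Because your final case check (``Given $U_L(X,\tau)_{u,v}=:x\in\mathbb{R}$\dots'') is carried out only for real $x$, the necessity of the realness condition --- i.e.\ that equality in (\ref{notoffL3}) \emph{forces} $U_L(X,\tau)_{u,v}\in\mathbb{R}$ --- is never correctly established, so the ``only if'' direction is incomplete.

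The gap is easily repaired with a tool you already state but never use: the equality case of the reverse triangle inequality, $\bigl|\,|a|-|b|\,\bigr|=|a-b|$ iff $\overline{a}\tinyspace b\geq 0$. At $\tau$ you correctly have $a=U_L(X\vee Y,\tau)_{u,v}=\tfrac{2}{m}-z$ and $b=-z$ with $z:=U_L(X,\tau)_{u,v}$; writing $z=x+iy$ one computes
\begin{equation*}
\overline{\left(\tfrac{2}{m}-z\right)}\,(-z)=\left(x^2+y^2-\tfrac{2x}{m}\right)-\tfrac{2y}{m}\,i,
\end{equation*}
which is a nonnegative real number precisely when $y=0$ and $x(x-\tfrac{2}{m})\geq 0$, i.e.\ $z\in\mathbb{R}$ with $z\leq 0$ or $z\geq \tfrac{2}{m}$. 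Equivalently, one can argue directly (as the paper implicitly does) that $\bigl|\,|\tfrac{2}{m}-z|-|z|\,\bigr|=\tfrac{2}{m}$ forces equality in one of the two triangle inequalities $|\tfrac{2}{m}-z|\leq\tfrac{2}{m}+|z|$ or $|z|\leq|z-\tfrac{2}{m}|+\tfrac{2}{m}$, whose equality cases are $-z\geq 0$ and $z-\tfrac{2}{m}\geq 0$ respectively. With either repair, your argument coincides with the paper's proof; as written, though, the realness claim rests on an invalid deduction from (\ref{notoffL2}).
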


\begin{proof}
Applying the triangle inequality to (\ref{notoffL1}) yields (\ref{notoffL3}). Using (\ref{notoffL2}), we get equality in (\ref{notoffL3}) if and only if $\big|\ |U_L(X\vee Y,\tau)_{u,v}|-|U_L(X,\tau)_{u,v}|\ \big|=\big|\ \big|\frac{2}{m}-U_L(X,\tau)_{u,v}\big |-|U_L(X,\tau)_{u,v}|\ \big|=\frac{2}{m}$. Result follows.
\end{proof}

\subsection{Adjacency case}

The following is an analogue of Theorem \ref{boundL} for the adjacency case.
\begin{theorem}
\label{boundA}
For all $u,v\in V(X)$ and for all $t$,
\begin{equation}
\label{notoffA1}
\big|\ U_A(X\vee Y,t)_{u,v}-U_A(X,t)_{u,v}\ \big|\leq 2/m
\end{equation}
with equality if and only if there is a time $\tau>0$ such that $e^{i\tau\lambda^+}=e^{i\tau\lambda^-}=-e^{i\tau k}$, in which case $
\alpha_A(t)=-2e^{i\tau k}/m$. If we add that $k,\ell\in\mathbb{Z}$ and $\Delta$ in Lemma \ref{aqjoin} is a perfect square, then the latter condition yields $\{k,\lambda^{\pm}\}\subseteq\mathbb{Z}$ and $\nu_2(\lambda^+-k)=\nu_2(\lambda^--k)$, in which case $\tau=j\pi/h$, where $h=\operatorname{gcd}(\lambda^+-k,\lambda^--k)$ and $j$ is any odd.
\end{theorem}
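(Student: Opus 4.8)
The plan is to deduce everything from Lemma~\ref{lemb} together with the elementary observation that $a:=\frac{k-\lambda^-}{\sqrt D}$ and $b:=\frac{\lambda^+-k}{\sqrt D}$ are positive reals summing to $1$. First I would record, using $\lambda^{\pm}=\tfrac12(k+\ell\pm\sqrt D)$ and $D=(k-\ell)^2+4mn$, the identities $k-\lambda^-=\tfrac12(\sqrt D+k-\ell)$ and $\lambda^+-k=\tfrac12(\sqrt D-k+\ell)$; since $4mn>0$ gives $\sqrt D>|k-\ell|$, both numerators are positive, so $a,b>0$, and $a+b=\frac{(k-\lambda^-)+(\lambda^+-k)}{\sqrt D}=\frac{\lambda^+-\lambda^-}{\sqrt D}=1$. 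Feeding this into the definition~(\ref{alpha}) of $\alpha_A$, one gets $m\,\alpha_A(t)=a\,e^{it\lambda^+}+b\,e^{it\lambda^-}-e^{itk}$, while Lemma~\ref{lemb} identifies $U_A(X\vee Y,t)_{u,v}-U_A(X,t)_{u,v}$ with $\alpha_A(t)$.

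The inequality~(\ref{notoffA1}) is then immediate from the triangle inequality: $|m\,\alpha_A(t)|\le a\,|e^{it\lambda^+}|+b\,|e^{it\lambda^-}|+|e^{itk}|=a+b+1=2$. For the equality clause I would invoke the equality case of the triangle inequality for the three \emph{nonzero} summands $ae^{it\lambda^+}$, $be^{it\lambda^-}$, $-e^{itk}$ (nonzero because $a,b>0$): it forces all three to be the same positive multiple of one unit $w$, i.e.\ $e^{it\lambda^+}=e^{it\lambda^-}=-e^{itk}=w$. In that situation $m\,\alpha_A(t)=(a+b)w-e^{itk}=w-(-w)=2w=-2e^{itk}$, so $\alpha_A(t)=-2e^{itk}/m$ and $|\alpha_A(t)|=2/m$. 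Since $\alpha_A(0)=a+b-1=0$, any time realizing equality is positive, so the bound $2/m$ is attained precisely when there is a $\tau>0$ with $e^{i\tau\lambda^+}=e^{i\tau\lambda^-}=-e^{i\tau k}$.

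Finally, under the extra hypotheses $k,\ell\in\mathbb Z$ and $D$ a perfect square, write $D=s^2$, $s\in\mathbb Z$; from $D=(k-\ell)^2+4mn\equiv(k+\ell)^2\equiv k+\ell\pmod 2$ we get $s\equiv k+\ell\pmod 2$, hence $k+\ell\pm s$ is even and $\lambda^{\pm}=\tfrac12(k+\ell\pm s)\in\mathbb Z$, so $\{k,\lambda^{\pm}\}\subseteq\mathbb Z$. Then $e^{i\tau\lambda^+}=e^{i\tau\lambda^-}=-e^{i\tau k}$ becomes $e^{i\tau(\lambda^+-k)}=e^{i\tau(\lambda^--k)}=-1$, where $\lambda^+-k$ and $\lambda^--k$ are nonzero integers (positive and negative, respectively, as shown above). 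Applying the same $2$-adic valuation argument underlying Theorem~\ref{pstC}---a real $\tau>0$ with $e^{i\tau p}=e^{i\tau q}=-1$ exists iff $\nu_2(p)=\nu_2(q)$, and then the solution set is $\{j\pi/\gcd(p,q):j\text{ odd}\}$---yields $\nu_2(\lambda^+-k)=\nu_2(\lambda^--k)$ and $\tau=j\pi/h$ with $h=\gcd(\lambda^+-k,\lambda^--k)$, $j$ odd. I expect no substantive obstacle here: the only things demanding care are the sign/positivity bookkeeping for $a,b$, stating the three-term equality condition correctly (all three summands codirectional), and the short parity check making $\lambda^{\pm}$ integral.
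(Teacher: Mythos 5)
Your proof is correct and lands on exactly the equality condition and the $2$-adic clause of Theorem \ref{boundA}, but it differs from the paper's argument in two respects worth noting. For the bound, the paper splits $\alpha_A(t)$ into the two $\lambda^{\pm}$-terms plus $-e^{itk}/m$, computes the modulus of the first pair exactly as $\frac{1}{m\sqrt D}\sqrt{D-4mn\sin^2(t\sqrt D/2)}$ (see (\ref{complex})--(\ref{bound1})), bounds it by $1/m$ using $(k-\ell)^2-D=-4mn<0$, and then applies the triangle inequality once more, tracing equality through the two stages to $e^{i\tau\lambda^+}=e^{i\tau\lambda^-}=-e^{i\tau k}$. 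You instead note that $m\,\alpha_A(t)=ae^{it\lambda^+}+be^{it\lambda^-}-e^{itk}$ with $a,b>0$ and $a+b=1$, and invoke the three-term triangle inequality together with its standard equality case (all nonzero summands codirectional), which produces the same condition in one step, avoids the trigonometric computation, and makes the value $\alpha_A(\tau)=-2e^{i\tau k}/m$ immediate --- a cleaner decomposition. For the final clause, the paper gets $\{k,\lambda^{\pm}\}\subseteq\mathbb{Z}$ by observing that this set satisfies the ratio condition and invoking Lemma \ref{per}, whereas your direct parity argument ($\sqrt D\equiv k+\ell\pmod 2$, so $k+\ell\pm\sqrt D$ is even) is more elementary and equally valid; both proofs then coincide in the $\nu_2$/gcd analysis of $e^{i\tau(\lambda^{\pm}-k)}=-1$, using that $\lambda^{+}-k>0$ and $\lambda^{-}-k<0$ are nonzero integers. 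One cosmetic remark: where you say ``any time realizing equality is positive,'' a negative time could in principle realize equality too, but conjugating the exponentials converts it into a positive one, so the existence statement (and hence the theorem's ``if and only if'') is unaffected.
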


\begin{proof}
Since $\lambda^{\pm}=\frac{1}{2}(k+\ell\pm\sqrt{D})$, one checks that
\begin{equation}
\label{complex}
\begin{split}
\frac{e^{it\lambda^+}(k-\lambda^-)}{m\sqrt{D}}-\frac{e^{it\lambda^-}(k-\lambda^+)}{m\sqrt{D}}&=\frac{e^{it\lambda^+}(k-\ell+\sqrt{D})}{2m\sqrt{D}}-\frac{e^{it\lambda^-}(k-\ell-\sqrt{D})}{2m\sqrt{D}}\\
&=\frac{1}{m\sqrt{D}}\left[\sqrt{D}\cos(t\sqrt{D}/2)+i(k-\ell)\sin(t\sqrt{D}/2)\right].
\end{split}
\end{equation}
Consequently, the following equation holds for all $t$
\begin{equation}
\label{bound1}
\begin{split}
\left|\frac{e^{it\lambda^+}(k-\lambda^-)}{m\sqrt{D}}-\frac{e^{it\lambda^-}(k-\lambda^+)}{m\sqrt{D}}\right|
%&= \left|\frac{e^{it(k+\sqrt{D})/2}(k+\sqrt{D})}{2m\sqrt{D}}-\frac{e^{it(k-\sqrt{D})/2}(k-\sqrt{D})}{2m\sqrt{D}}\right|\\
%&=\frac{|e^{itk}|}{2m\sqrt{D}} \left|k(e^{it\sqrt{D}/2}-e^{-i\tau\sqrt{D}/2})+\sqrt{D}(e^{it\sqrt{D}/2}+e^{-i\tau\sqrt{D}/2})\right|\\
%&=\frac{1}{m\sqrt{D}} \left|\sqrt{D}\cos(t\sqrt{D}/2)+ik\sin(t\sqrt{D}/2)\right|\\
&=\frac{1}{m\sqrt{D}}\sqrt{D+((k-\ell)^2-D)\sin^2(t\sqrt{D}/2)}.
\end{split}
\end{equation}
Since $mn>0$ and $D=(k-\ell)^2+4mn$, we get that $(k-\ell)^2-D=-4mn<0$. Hence, (\ref{bound1}) gives us $\left|\frac{e^{it\lambda^+}(k-\lambda^-)}{m\sqrt{D}}-\frac{e^{it\lambda^-}(k-\lambda^+)}{m\sqrt{D}}\right|\leq\frac{1}{m}$. Combining this with (\ref{notoffA}) yields
\begin{center}
$\big|U_A(X\vee Y,t)_{u,v}-U_A(X,t)_{u,v}\big|\leq \left|\frac{e^{it\lambda^+}(k-\lambda^-)}{m\sqrt{D}}-\frac{e^{it\lambda^-}(k-\lambda^+)}{m\sqrt{D}}\right|+\frac{1}{m}\leq \frac{2}{m}$
\end{center}
which establishes (\ref{notoffA1}). Using the first equality in (\ref{complex}), we can write (\ref{notoffA}) as
\begin{center}
$U_A(X\vee Y,t)_{u,v}-U_A(X,t)_{u,v}=\frac{e^{it\lambda^+}(k-\ell+\sqrt{D})}{2m\sqrt{D}}-\frac{e^{it\lambda^-}(k-\ell-\sqrt{D})}{2m\sqrt{D}}-\frac{e^{it k}}{m}$.
\end{center}
Thus, equality holds in (\ref{notoffA1}) if and only if there is a time $\tau$ such that
\begin{equation}
\label{exps}
e^{i\tau\lambda^+}=e^{i\tau\lambda^-}=-e^{i\tau k},
\end{equation}
in which case, $U_A(X\vee Y,t)_{u,v}-U_A(X,t)_{u,v}=-\frac{2e^{itk}}{m}$. This implies that $\{k,\lambda^{\pm}\}$ satisfies the ratio condition. Thus, if $k,\ell\in\mathbb{Z}$ and $D$ is a perfect square, then Lemma \ref{per} yields $\{k,\lambda^{\pm}\}\subseteq\mathbb{Z}$. Hence, (\ref{exps}) holds if and only if $\nu_2(\lambda^+-k)=\nu_2(\lambda^--k)$, in which case $\tau=j\pi/g$ for any odd integer $j$. 
\end{proof}

\begin{corollary}
\label{boundA1}
For all $u,v\in V(X)$ and for all $t$,
\begin{equation}
\label{notoffA2}
\begin{split}
\bigg|\ |U_A(X\vee Y,t)_{u,v}|-|U_A(X,t)_{u,v}|\ \bigg|&\leq 2/m
\end{split}
\end{equation}
with equality if and only if the bound in (\ref{notoffA1}) holds at time $\tau$ in Theorem \ref{boundA}, $U_A(X,\tau)_{u,v}=|U_A(X,\tau)_{u,v}|e^{i\tau k}$ and $|U_A(X,\tau)_{u,v}|\geq 2/m$.
\end{corollary}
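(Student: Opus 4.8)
The plan is to derive Corollary~\ref{boundA1} from Theorem~\ref{boundA} along the same lines by which Corollary~\ref{boundL1} is derived from Theorem~\ref{boundL}, with the phase $e^{i\tau k}$ here playing the role that $e^{i\tau n}$ plays there. For the inequality~(\ref{notoffA2}) I would apply the reverse triangle inequality $\big|\,|z|-|w|\,\big|\le|z-w|$ with $z=U_A(X\vee Y,t)_{u,v}$ and $w=U_A(X,t)_{u,v}$ and combine it with the bound~(\ref{notoffA1}) already established in Theorem~\ref{boundA}.

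For the equality part, I would first observe that equality in~(\ref{notoffA2}) forces equality, at a common time $\tau$, both in the reverse triangle inequality and in~(\ref{notoffA1}). By Theorem~\ref{boundA}, equality in~(\ref{notoffA1}) occurs exactly when $e^{i\tau\lambda^+}=e^{i\tau\lambda^-}=-e^{i\tau k}$ (equivalently, under the integrality and perfect-square hypotheses of the second half of Theorem~\ref{boundA}, at $\tau=j\pi/h$ with $j$ odd and $h=\gcd(\lambda^+-k,\lambda^--k)$), and there Lemma~\ref{lemb} and Theorem~\ref{boundA} give $U_A(X\vee Y,\tau)_{u,v}-U_A(X,\tau)_{u,v}=\alpha_A(\tau)=-\tfrac{2}{m}e^{i\tau k}$, hence $U_A(X\vee Y,\tau)_{u,v}=U_A(X,\tau)_{u,v}-\tfrac{2}{m}e^{i\tau k}$.

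I would then substitute this identity into $\big|\,|U_A(X\vee Y,\tau)_{u,v}|-|U_A(X,\tau)_{u,v}|\,\big|$ and use the fact that $\big|\,|z|-|w|\,\big|=|z-w|$ precisely when $z\overline w$ is a nonnegative real. Writing $w=U_A(X,\tau)_{u,v}$ and $z=w-\tfrac{2}{m}e^{i\tau k}$, one computes $z\overline w=|w|^2-\tfrac{2}{m}e^{i\tau k}\overline w$, and analysing when this is a nonnegative real yields the stated normalization $U_A(X,\tau)_{u,v}=|U_A(X,\tau)_{u,v}|e^{i\tau k}$ together with $|U_A(X,\tau)_{u,v}|\ge 2/m$; the converse is obtained by plugging these conditions back into the identity and computing directly.

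The only genuinely nontrivial point --- and the expected main obstacle --- is keeping track of the phase $e^{i\tau k}$ in this equality analysis: one must combine the equality case of the reverse triangle inequality with the explicit value $\alpha_A(\tau)=-\tfrac{2}{m}e^{i\tau k}$ from Theorem~\ref{boundA} and verify that the resulting constraint on $U_A(X,\tau)_{u,v}$ is exactly the normalization in the statement. The rest is a verbatim adaptation of the proof of Corollary~\ref{boundL1}.
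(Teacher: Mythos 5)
Your outline follows the paper's own (very terse) proof: the bound is the reverse triangle inequality applied to (\ref{notoffA1}), and the equality case is reduced, via Theorem \ref{boundA} and $\alpha_A(\tau)=-\tfrac{2}{m}e^{i\tau k}$, to deciding when $\bigl|\,|U_A(X,\tau)_{u,v}-\tfrac{2}{m}e^{i\tau k}|-|U_A(X,\tau)_{u,v}|\,\bigr|=\tfrac{2}{m}$; the paper declares this last step immediate, while you spell it out through the criterion that $\bigl|\,|z|-|w|\,\bigr|=|z-w|$ exactly when $z\overline{w}$ is a nonnegative real. That criterion is correct, and your observation that equality in (\ref{notoffA2}) forces simultaneous equality in the reverse triangle inequality and in (\ref{notoffA1}) is also right.

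The one step that does not go through as claimed is the assertion that the nonnegativity analysis ``yields the stated normalization together with $|U_A(X,\tau)_{u,v}|\ge 2/m$.'' Writing $w=U_A(X,\tau)_{u,v}=re^{i\theta}$, the condition that $z\overline{w}=|w|^2-\tfrac{2}{m}e^{i\tau k}\overline{w}$ be a nonnegative real holds in three cases: $r=0$; $\theta\equiv\tau k$ with $r\ge 2/m$; and $\theta\equiv\tau k+\pi$ with any $r>0$. So equality in (\ref{notoffA2}) also occurs when $U_A(X,\tau)_{u,v}$ is a nonpositive real multiple of $e^{i\tau k}$ (including $0$) --- precisely the analogue of the alternative ``$U_L(X,\tau)_{u,v}\le 0$'' retained in Corollary \ref{boundL1}, and precisely what happens in Example \ref{exaa}, where $F(\tau/2)_{u,u}=2/m$ although $U_A(X,\tau/2)_{u,u}=0$. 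Hence your ``only if'' direction, as written, does not follow: either these additional cases must be included in the equality characterization, or you must explain why they are excluded (they are not). This imprecision is shared by the paper's statement and two-line proof, but your proposal inherits it at exactly the point you identified as the main obstacle, so it needs to be fixed rather than asserted.
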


\begin{proof}
Again applying triangle inequality to (\ref{notoffA1}) yields (\ref{notoffA2}). Making use of Theorem \ref{boundA}, we get equality in (\ref{notoffA2}) if and only if $\big|\ |U_A(X,\tau)_{u,v}-2e^{i\tau k}/m\ |-|U_A(X,\tau)_{u,v}|\ \big|=2/m$. Thus, our result is immediate.
\end{proof}

Combining Corollary \ref{boundL1} and \ref{boundA1} yields the following result.

\begin{corollary}
\label{mimic}
Let $M\in\{A,L\}$. For all $u,v\in V(X)$ and for all $t$,
\begin{equation*}
\bigg|\ |U_M(X\vee Y,t)_{u,v}|-|U_M(X,t)_{u,v}|\ \bigg|\rightarrow 0\quad \text{as $m\rightarrow\infty$}.
\end{equation*}
\end{corollary}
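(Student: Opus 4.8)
The plan is to deduce Corollary~\ref{mimic} immediately from Corollaries~\ref{boundL1} and~\ref{boundA1}. Indeed, for $M=L$, Corollary~\ref{boundL1} gives
\begin{equation*}
\bigg|\ |U_L(X\vee Y,t)_{u,v}|-|U_L(X,t)_{u,v}|\ \bigg|\leq \frac{2}{m}
\end{equation*}
for all $t$ and all $u,v\in V(X)$, and for $M=A$, Corollary~\ref{boundA1} gives the same inequality with $U_L$ replaced by $U_A$. Since $m=|V(X)|$ is fixed once $X$ is fixed and does not depend on $t$, $u$, or $v$, letting $m\to\infty$ (that is, considering a sequence of graphs $X$ with $|V(X)|\to\infty$, each joined to an arbitrary admissible $Y$) forces the right-hand side $2/m$ to tend to $0$. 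As the left-hand side is nonnegative, the squeeze theorem yields the claimed limit.

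First I would state that the bound $2/m$ is uniform in $t$, $u$, and $v$: this is precisely the content of the two cited corollaries, so no new computation is needed. Then I would observe that $2/m\to 0$ as $m\to\infty$ and invoke $0\leq \big|\,|U_M(X\vee Y,t)_{u,v}|-|U_M(X,t)_{u,v}|\,\big|\leq 2/m$ to conclude. The only subtlety worth a sentence is interpreting ``$m\to\infty$'': here one means a family of graphs $X^{(1)},X^{(2)},\ldots$ with $|V(X^{(i)})|\to\infty$, together with any choice of graphs $Y^{(i)}$ satisfying the standing assumptions (simple when $M=L$; regular when $M=A$), and the bound holds for every member of the family with the same constant $2/|V(X^{(i)})|$.

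There is essentially no obstacle here; the work was already done in establishing Corollaries~\ref{boundL1} and~\ref{boundA1}, and this final corollary is just the asymptotic reformulation. A reasonable proof text is:

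\begin{proof}
Fix $M\in\{A,L\}$. By Corollaries~\ref{boundL1} and~\ref{boundA1}, for all $u,v\in V(X)$ and all $t$ we have
\begin{equation*}
0\leq \bigg|\ |U_M(X\vee Y,t)_{u,v}|-|U_M(X,t)_{u,v}|\ \bigg|\leq \frac{2}{m},
\end{equation*}
where $m=|V(X)|$. The bound $2/m$ is independent of $t$, $u$, $v$, and $Y$, and tends to $0$ as $m\rightarrow\infty$. The conclusion follows.
\end{proof}
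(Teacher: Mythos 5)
Your proposal is correct and is exactly the paper's argument: the paper derives Corollary \ref{mimic} by combining the uniform bounds $2/m$ from Corollaries \ref{boundL1} and \ref{boundA1} and letting $m\rightarrow\infty$. Your added remark about interpreting the limit as a family of graphs $X^{(i)}$ with $|V(X^{(i)})|\rightarrow\infty$ is a reasonable clarification but does not change the route.
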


From Corollary \ref{mimic}, we find that for vertices in $X$, 
the state transfer properties  of the quantum walk on $X\vee Y$ tend to mimic those of the quantum walk on $X,$ as we increase the number of vertices of $X$.

\subsection{Tightness} 

Define $F(t)_{u,v}:=|U_M(X\vee Y,t)_{u,v}|-|U_M(X,t)_{u,v}|$. From Lemma \ref{lemb}, $F(t)_{u,v}=\big|U_M(X,t)_{u,v}+\alpha_M(t)\big|-|U_M(X,t)_{u,v}|$.
for any $u,v\in V(X)$. We now illustrate that the values of $F(t)_{u,v}$ can be zero, positive or negative.

\begin{lemma}
\label{aa}
 The following are immediate from (\ref{beta}).
\begin{enumerate}
\item If $\tau\in T_M$ then $\alpha_M(\tau)=0$, and so $F(\tau)_{u,v}=0$.
\item If $U_M(X,\tau)_{u,v}=0$ for some $\tau\in\mathbb{R}\backslash T_M$, then $F(\tau)_{u,v}=|\alpha_M(\tau)|>0$. If $X$ is disconnected and $u$ and $v$ are in different connected components, then $U_M(X,t)_{u,v}=0$ for all $t\in\mathbb{R}$, and so $F(t)_{u,v}>0$ for all $t\in\mathbb{R}\backslash T_M$.
\item Let $|U_M(X,\tau)_{u,v}|=1$ for some $\tau\in\mathbb{R}\backslash T_M$. Since $\alpha_M(t)\neq 0$ and $|U_M(X\vee Y,t)_{u,v}|\leq 1$, we get $F(\tau)_{u,v}<0$. If $u$ is isolated in $X$, then $U_M(X,t)_{u,u}=1$ for all $t$, and so $F(t)_{u,u}< 0$ for all $t\in \mathbb{R}\backslash T_M$.
\end{enumerate}
\end{lemma}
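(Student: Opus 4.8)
The final statement to prove is Lemma \ref{aa}, which collects three ``immediate'' consequences of the definitions in \eqref{alpha}--\eqref{beta} and of Lemma \ref{lemb}. The plan is to treat each part in turn, in each case invoking the identity $F(t)_{u,v}=\big|U_M(X,t)_{u,v}+\alpha_M(t)\big|-|U_M(X,t)_{u,v}|$ that comes straight out of Lemma \ref{lemb}, together with the characterization $\alpha_M(\tau)=0\iff \tau\in T_M$ already recorded there.

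\medskip

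First I would do part (1): if $\tau\in T_M$ then Lemma \ref{lemb} gives $\alpha_M(\tau)=0$, so $F(\tau)_{u,v}=\big|U_M(X,\tau)_{u,v}+0\big|-|U_M(X,\tau)_{u,v}|=0$ for every pair $u,v\in V(X)$. Next, part (2): if $\tau\in\mathbb{R}\setminus T_M$ and $U_M(X,\tau)_{u,v}=0$, then $F(\tau)_{u,v}=|\alpha_M(\tau)|-0=|\alpha_M(\tau)|$, and since $\tau\notin T_M$ we have $\alpha_M(\tau)\neq 0$, so $F(\tau)_{u,v}=|\alpha_M(\tau)|>0$. For the ``disconnected'' refinement, if $u$ and $v$ lie in distinct connected components of $X$ then $(E_j)_{u,v}=0$ for every spectral idempotent $E_j$ of $M(X)$ (the eigenprojections are block-diagonal with respect to the components), hence $U_M(X,t)_{u,v}=\sum_j e^{it\lambda_j}(E_j)_{u,v}=0$ for all $t$; applying the previous sentence with this $\tau$ arbitrary in $\mathbb{R}\setminus T_M$ gives $F(t)_{u,v}>0$ there. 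Finally, part (3): if $|U_M(X,\tau)_{u,v}|=1$ for some $\tau\in\mathbb{R}\setminus T_M$, then $U_M(X,\tau)_{u,v}$ is a point of the unit circle and $U_M(X\vee Y,\tau)_{u,v}=U_M(X,\tau)_{u,v}+\alpha_M(\tau)$ with $\alpha_M(\tau)\neq 0$; since the transition matrix $U_M(X\vee Y,\tau)$ is unitary we have $|U_M(X\vee Y,\tau)_{u,v}|\le 1$, and equality would force $\alpha_M(\tau)$ to be a nonzero vector whose sum with a unit vector is again a unit vector — impossible unless the result has smaller modulus, so in fact $|U_M(X\vee Y,\tau)_{u,v}|<1=|U_M(X,\tau)_{u,v}|$, i.e.\ $F(\tau)_{u,v}<0$. (More carefully: if $|z|=1$ and $|z+w|=1$ with $w\neq 0$, then $|z+w|^2=1+2\operatorname{Re}(\bar z w)+|w|^2=1$ forces $2\operatorname{Re}(\bar z w)=-|w|^2<0$, which is consistent, so one cannot conclude strict inequality from unitarity of $X\vee Y$ alone; instead one argues that $|U_M(X\vee Y,\tau)_{u,v}|=1$ would make $u,v$ admit PST in $X\vee Y$ at time $\tau$, whence row $u$ of $U_M(X\vee Y,\tau)$ is supported only on $v$, which combined with \eqref{offL} or \eqref{offA} being nonzero off the block or with the explicit form of $\alpha_M$ yields a contradiction — I would spell out whichever of these is cleanest.) For the isolated-vertex case, $u$ isolated in $X$ means $\{e_u\}$ spans an eigenspace of $M(X)$ (eigenvalue $0$ for $L$ or for $A$), so $U_M(X,t)_{u,u}=1$ for all $t$, and the preceding argument applies with $v=u$ at every $\tau\in\mathbb{R}\setminus T_M$.

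\medskip

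The main obstacle is the strictness in part (3): deducing $F(\tau)_{u,v}<0$ rather than merely $\le 0$. Unitarity of $U_M(X\vee Y,\tau)$ alone only gives $|U_M(X\vee Y,\tau)_{u,v}|\le 1$, and the triangle inequality with $\alpha_M(\tau)\neq 0$ does not by itself rule out equality. The clean resolution is to note that $|U_M(X\vee Y,\tau)_{u,v}|=1$ would mean PST occurs from $u$ to $v$ in $X\vee Y$ at time $\tau$; then unitarity forces $U_M(X\vee Y,\tau)e_u=\gamma e_v$ for a unit $\gamma$, so in particular $U_M(X\vee Y,\tau)_{u,w}=0$ for the vertices $w\in V(Y)$. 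But \eqref{offL} (resp.\ \eqref{offA}) shows $U_M(X\vee Y,\tau)_{u,w}=\frac{1}{m+n}(1-e^{i\tau(m+n)})$ (resp.\ $\frac{1}{\sqrt D}(e^{i\tau\lambda^+}-e^{i\tau\lambda^-})$), and one checks that this vanishing together with $|U_M(X,\tau)_{u,v}|=1$ forces $\tau\in T_M$ — a contradiction. I would carry out this last verification explicitly in each of the two cases $M=L$ and $M=A$, using the respective definition of $T_M$ from \eqref{beta}; it is short but is the one place where a genuine (if small) argument, rather than substitution, is needed.
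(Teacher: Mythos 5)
Parts (1) and (2) of your argument are correct and coincide with the paper's (which treats the whole lemma as immediate). For part (3) you rightly observe that the paper's one-line justification, $\alpha_M(\tau)\neq 0$ together with $|U_M(X\vee Y,\tau)_{u,v}|\leq 1$, only gives $F(\tau)_{u,v}\leq 0$, and that strictness needs the unitarity of $U_M(X\vee Y,\tau)$; assuming $|U_M(X\vee Y,\tau)_{u,v}|=1$ and using that row $u$ is then supported on $v$ alone is the right repair. However, the verification you commit to at the end does not go through: it is not true that vanishing of the off-block entries \eqref{offL} (resp.\ \eqref{offA}) together with $|U_M(X,\tau)_{u,v}|=1$ forces $\tau\in T_M$. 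For $M=L$ the off-block entries vanish exactly when $e^{i\tau(m+n)}=1$, whereas $\tau\in T_L$ requires $e^{i\tau m}=e^{i\tau n}=1$; concretely, for $X=K_2$, $Y=O_2$ and $\tau=\pi/2$ one has $e^{i\tau(m+n)}=1$ and $|U_L(K_2,\pi/2)_{u,v}|=1$, yet $\tau\notin T_L=\pi\mathbb{Z}$ and $\alpha_L(\pi/2)=\tfrac12\left(e^{i\pi}-1\right)\neq 0$. So no contradiction can be extracted from the off-block column alone; the same issue arises for $M=A$, where off-block vanishing only gives $e^{i\tau\lambda^+}=e^{i\tau\lambda^-}$ and says nothing about $e^{i\tau k}$.

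The contradiction must instead come from the in-block entries, the alternative you mention only in passing: since $|U_M(X,\tau)_{u,v}|=1$ and $U_M(X,\tau)$ is unitary, $U_M(X,\tau)_{u,v'}=0$ for every $v'\in V(X)\setminus\{v\}$, so Lemma \ref{lemb} gives $U_M(X\vee Y,\tau)_{u,v'}=e^{i\tau n}\alpha_M(\tau)$ (resp.\ $\alpha_M(\tau)$), which is nonzero for $\tau\notin T_M$; but if $|U_M(X\vee Y,\tau)_{u,v}|=1$, these entries would all have to vanish. This closes the gap whenever $m\geq 2$ (when $u\neq v$ one may take $v'=u$), and the hypothesis $m\geq 2$ is genuinely needed, though left implicit in the lemma: for $X=O_1$ the apex $u$ of $O_1\vee Y$ satisfies $|U_L(O_1\vee Y,\tau)_{u,u}|=1$ at $\tau=2\pi/(n+1)\notin T_L=2\pi\mathbb{Z}$ while $\alpha_L(\tau)\neq 0$, so $F(\tau)_{u,u}=0$ there. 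That example also shows the off-block entries by themselves can never produce the desired contradiction, so in your write-up you should carry out the in-block version (or the full row-norm computation) rather than the off-block one.
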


We now use Lemma \ref{aa} to show that under mild conditions on $m$ and $n$, if $X$ has PST between $u$ and $v$, then there are choices of $t$ for which $|F(t)_{u,v}|=2/m$. Thus, the bounds in Corollaries \ref{boundL1} and \ref{boundA1} are tight.

\begin{example}
\label{exa}
Suppose $X$ exhibits Laplacian PST between $u$ and $v$ at $\pi/2$. Let $\tau=r\pi$ for any odd $r$ so that $U_L(X,\tau/2)_{u,v}=1$ and $U_L(X,\tau/2)_{u,u}=0$. If $\nu_2(m)=\nu_2(n)=1$, then we can write $T_L=\{j\pi/g':j\in\mathbb{Z}\}$, where $g'=g/2$ is odd. Hence, $\tau/2\in\mathbb{R}\backslash T_L$. In this case, $\alpha_L(\tau/2)=-\frac{2}{m}$, and so $F(\tau/2)_{u,u}=-F(\tau/2)_{u,v}=\frac{2}{m}$.
%\begin{enumerate}
%\item Let $m,n$ be odd. Then we can write $T_L=\{2j\pi/g:j\in\mathbb{Z}\}$, where $g$ is odd, and $\tau\in\mathbb{R}\backslash T_L$. Since $\alpha_L(\tau)=-\frac{2}{m}$, we get $F(\tau)_{u,v}=-F(\tau)_{u,u}=\frac{2}{m}$.
%\item Let $\nu_2(m)=\nu_2(n)=1$. Then we can write $T_L=\{j\pi/g':j\in\mathbb{Z}\}$, where $g'=g/2$ is odd. Hence, $\tau/2\in\mathbb{R}\backslash T_L$. In this case, $\alpha_L(\tau/2)=-\frac{2}{m}$, and so $F(\tau/2)_{u,u}=-F(\tau/2)_{u,v}=\frac{2}{m}$.
%\end{enumerate}
\end{example}

%For the adjacency case, we have the following example that is similar in spirit to Example \ref{exa}.

\begin{example}
\label{exaa}
Suppose $X$ admits adjacency PST between $u$ and $v$ at $\pi/2$. Let $k,\ell\in\mathbb{Z}$ and $\Delta$ be a perfect square so that, $k,\lambda^{+}$ and $\lambda^{-}$ are integers. Let $\tau=r\pi$ for any odd $r$ so that $U_A(X,\tau/2)_{u,v}=e^{i\tau k/2}$ and $U_A(X,\tau/2)_{u,u}=0$. If $\nu_2(\lambda^+-k)=1$ and $\nu_2(\lambda^--k)=1$, then we can write $T_A=\{j\pi/h':j\in\mathbb{Z}\}$ where $h'=h/2$ is odd. Thus, $\tau/2\in\mathbb{R}\backslash T_A$, and so $\alpha_A(\tau/2)=-\frac{2e^{i\tau k/2}}{m}$, which yields $F(\tau/2)_{u,u}=-F(\tau/2)_{u,v}=\frac{2}{m}$.
\end{example}

One checks that disconnected double cones with PST at $\frac{\pi}{2}$ satisfy the conditions in Examples \ref{exa} and  \ref{exaa}, and so they provide infinite families of graphs whereby the upper bound in Corollaries \ref{boundL1} and \ref{boundA1} is met. %Moreover, the time at which the upper bound is attained is consistent with Corollaries \ref{boundL1} and \ref{boundA1}.

%\begin{enumerate}
%\item Suppose $\lambda^{+}-k$ and $\lambda^{-}-k$ are odd. Then $T_A=\{2j\pi/h:j\in\mathbb{Z}\}$ where $h=\operatorname{gcd}(\lambda^{+}-k,\lambda^{-}-k)$ is odd and $\tau\in\mathbb{R}\backslash T_M$. Note that $\alpha_A(\tau)=-\frac{2e^{i\tau k}}{m}$ so that $F(\tau)_{u,u}=-\frac{2}{m}$.
%
%\item Suppose $\nu_2(\lambda^+-k)=1$ and $\nu_2(\lambda^--k)=1$. Then we can write $T_A=\{j\pi/h':j\in\mathbb{Z}\}$ where $h'=h/2$ is odd. Thus, $\tau/2\in\mathbb{R}\backslash T_A$, and so
%$e^{i\tau k/2}=e^{i\tau \lambda^-/2}=-e^{i\tau\lambda^+/2}$. Hence, $\alpha_A(\tau/2)=\frac{2e^{i\tau\lambda^+/2}}{m}$, which yields $F(\tau/2)_{u,u}=-F(\tau/2)_{u,v}=\frac{2}{m}$.
%
%\end{enumerate}

\section{Future work}
\label{secFw}

In this paper, we investigated strong cospectrality, periodicity and perfect state transfer in join graphs, and characterized the conditions in which these properties are preserved or induced in a join. In order to inspire further systematic study of quantum walks on join graphs, we present some open questions.

If $X$ is a weighted join graph on an odd number of vertices and $\phi(L(X),t)\in\mathbb{Z}[t]$, then Corollary \ref{pstLc3} implies that Laplacian PST does not occur in $X$. This is also known to be true for unweighted graphs on $n\in\{3,5\}$ vertices, which are not necessarily joins. This leads us to conjecture the following.

\begin{conjecture}
If $X$ is a simple connected unweighted graph on an odd number of vertices, then Laplacian perfect state transfer does not occur in $X$.
\end{conjecture}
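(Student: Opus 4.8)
The plan is to argue by contradiction: suppose $X$ is a simple connected unweighted graph on $n\geq 3$ vertices with $n$ odd, and that Laplacian perfect state transfer occurs between vertices $u\neq v$. Since $X$ is unweighted, $\phi(L(X),t)\in\mathbb Z[t]$, so Theorem \ref{pstC} applies with $M=L(X)$. First I would fix the spectral shape. As $X$ is connected, $E_0=\tfrac1n\textbf{J}$, hence $(E_0)_{uu}=\tfrac1n\neq 0$ and $0\in\sigma_{u}(L(X))$. If alternative (ii) of Lemma \ref{per} held for $S=\sigma_u(L(X))$, then writing $0=\tfrac12(a+b_0\sqrt{\Delta})$ with $\Delta>1$ square-free forces $a=b_0=0$, so every element of $\sigma_u(L(X))$ is $\tfrac12 b_j\sqrt{\Delta}$; but then $-\tfrac12 b_j\sqrt{\Delta}$, being a Galois conjugate of an eigenvalue of the positive semidefinite integer matrix $L(X)$, is also an eigenvalue, forcing $b_j=0$ for all $j$, i.e.\ $\sigma_u(L(X))=\{0\}$ --- impossible for $n\geq 2$. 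Hence $\sigma_u(L(X))\subseteq\mathbb Z$ and $\Delta=1$ in Theorem \ref{pstC}.

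Next I would read off the $2$-adic profile. Strong cospectrality gives $0\in\sigma_{uv}^+(L(X))$ and $\sigma_{uv}^-(L(X))\neq\emptyset$; and if $\sigma_{uv}^+(L(X))=\{0\}$ then $\textbf{e}_u+\textbf{e}_v=2E_0\textbf{e}_u=\tfrac2n\textbf{1}$, impossible for $n\geq 3$, so $\sigma_{uv}^+(L(X))\setminus\{0\}\neq\emptyset$. Putting $\lambda=0$ in Theorem \ref{pstC}(3) (with $\Delta=1$) shows all $\mu\in\sigma_{uv}^-(L(X))$ share one value $c:=\nu_2(\mu)$, while every nonzero $\lambda\in\sigma_{uv}^+(L(X))$ has $\nu_2(\lambda)\geq c+1$. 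Set $f(t)=\prod_{\mu\in\sigma_{uv}^-(L(X))}(t-\mu)$ and $h(t)=t\prod_{\lambda\in\sigma_{uv}^+(L(X))\setminus\{0\}}(t-\lambda)$; since $E_\lambda\textbf{e}_u=E_\lambda\textbf{e}_v$ for $\lambda\in\sigma_{uv}^+(L(X))$, one checks $f(L(X))(\textbf{e}_u-\textbf{e}_v)=\textbf{0}$ and $h(L(X))(\textbf{e}_u+\textbf{e}_v)=\textbf{0}$, both over $\mathbb Z$ because $L(X)$, $f$, $h$ have integer entries/coefficients.

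The case $c=0$ is then immediate. Modulo $2$ we have $f(t)\equiv(t+1)^{s}$ and $h(t)\equiv t^{q}$, where $s=|\sigma_{uv}^-(L(X))|$ and $q=|\sigma_{uv}^+(L(X))|\geq 2$. Since $\textbf{e}_u-\textbf{e}_v\equiv\textbf{e}_u+\textbf{e}_v\pmod 2$, reducing the two identities yields $(L(X)+I)^{s}(\textbf{e}_u+\textbf{e}_v)\equiv\textbf{0}$ and $L(X)^{q}(\textbf{e}_u+\textbf{e}_v)\equiv\textbf{0}$ over $\mathbb F_2$; as $\gcd\!\big(t^{q},(t+1)^{s}\big)=1$ in $\mathbb F_2[t]$, a B\'ezout combination of these operators applied to $\textbf{e}_u+\textbf{e}_v$ gives $\textbf{e}_u+\textbf{e}_v\equiv\textbf{0}\pmod 2$, contradicting $u\neq v$. (This in fact excludes the ``$c=0$'' profile for every $n\geq 3$, consistent with all known Laplacian PST pairs having $c\geq 1$.)

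The remaining case $c\geq 1$ is where I expect the real difficulty, and the only place where $n$ being odd should be essential: now every eigenvalue in $\sigma_u(L(X))$ is even, so both $f$ and $h$ reduce to powers of $t$ modulo $2$ and the trick above collapses; refining $f(L(X))(\textbf{e}_u-\textbf{e}_v)=\textbf{0}$ and $h(L(X))(\textbf{e}_u+\textbf{e}_v)=\textbf{0}$ modulo $2^{c+1}$ gives $(L(X)-2^{c}I)^{s}(\textbf{e}_u-\textbf{e}_v)\equiv\textbf{0}$ and $L(X)^{q}(\textbf{e}_u+\textbf{e}_v)\equiv\textbf{0}\pmod{2^{c+1}}$, which no longer form a coprime pair over $\mathbb Z/2^{c+1}$. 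The line I would push instead combines: (a) the residue identity $\big((tI-L(X))^{-1}\big)_{uu}=W_u(t)/m_u(t)$, where $m_u(t)=\prod_{\lambda\in\sigma_u(L(X))}(t-\lambda)$, $W_u\in\mathbb Z[t]$ is monic of degree $r-1$ with $r=|\sigma_u(L(X))|$ and $W_u(\lambda)=(E_\lambda)_{uu}\,m_u'(\lambda)\neq 0$ on $\sigma_u(L(X))$, so that evaluating at $0$ yields $n\,W_u(0)=(-1)^{r-1}\!\!\prod_{0\neq\lambda\in\sigma_u(L(X))}\!\!\lambda$ (cross-checkable via the matrix--tree theorem through $\det\big(L(X)\setminus u\big)$ and $\phi'(L(X),0)=(-1)^{n-1}n\,\kappa(X)$, $\kappa(X)$ the number of spanning trees); and (b) the normalization $\sum_{\mu\in\sigma_{uv}^-(L(X))}(E_\mu)_{uu}=\tfrac12$, which, being $2$-adically non-integral, forces $\nu_2\big(W_u(\mu)\big)<\nu_2\big(m_u'(\mu)\big)$ for some $\mu$. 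Because $\nu_2(n)=0$ when $n$ is odd, these relations pass $2$-adic information between $c$, the multiplicities of $L(X)$, and $\kappa(X)$ without $n$ absorbing it, and the target is to show they cannot hold simultaneously. Completing this step --- or, equivalently, proving directly that a connected graph on an odd number of vertices admits no vertex $u$ with $\sigma_u(L(X))\subseteq 2\mathbb Z$ and a strongly cospectral partner --- is the crux that I do not yet see how to finish.
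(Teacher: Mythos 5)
There is a genuine gap here, and in fact there has to be: the statement you are proving is presented in the paper as a \emph{conjecture} (an open problem in Section \ref{secFw}), so the paper contains no proof to compare against, and your own writeup explicitly stops short of one. Your completed steps are sound: the reduction to $\sigma_u(L(X))\subseteq\mathbb{Z}$ via Lemma \ref{per} and $0\in\sigma_u(L(X))$ is exactly the standard argument the paper itself uses (cf.\ the proof of Corollary \ref{joinperpreserve}); the $2$-adic profile you extract from Theorem \ref{pstC}(3) with $\lambda=0$ is correct; and the exclusion of the case $c=0$ by reducing $f(L(X))(\mathbf{e}_u-\mathbf{e}_v)=\mathbf{0}$ and $h(L(X))(\mathbf{e}_u+\mathbf{e}_v)=\mathbf{0}$ modulo $2$ and invoking coprimality of $t^{q}$ and $(t+1)^{s}$ in $\mathbb{F}_2[t]$ is a clean and, as far as the paper goes, new observation.

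The gap is the case $c\geq 1$, i.e.\ when every eigenvalue in $\sigma_u(L(X))$ is even, which you acknowledge you cannot close. This is not a minor remaining case: it is the whole conjecture. Note that your $c=0$ argument never uses that $|V(X)|$ is odd (it rules out that profile for every $n\geq 3$, consistent with known PST pairs such as the antipodal vertices of $C_4$ having $c=1$), so the only place the hypothesis of the conjecture can act is precisely the part you have not done. The ideas you sketch --- the residue identity for $\left((tI-L(X))^{-1}\right)_{u,u}$, the matrix--tree evaluation $n\,W_u(0)=(-1)^{r-1}\prod_{0\neq\lambda\in\sigma_u(L(X))}\lambda$, and the normalization $\sum_{\mu\in\sigma_{uv}^-(L(X))}(E_\mu)_{u,u}=\tfrac12$ --- are individually correct, but you give no argument that they are jointly contradictory when $n$ is odd and all of $\sigma_u(L(X))$ is even; for instance, the matrix--tree relation constrains the product of \emph{all} nonzero Laplacian eigenvalues with multiplicity, not just those in $\sigma_u(L(X))$, so oddness of $n\,\kappa(X)$ does not directly forbid an all-even support. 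Until the reformulated target you state at the end (no connected graph on an odd number of vertices has a vertex $u$ with $\sigma_u(L(X))\subseteq 2\mathbb{Z}$ and a strongly cospectral partner satisfying Theorem \ref{pstC}(3)) is actually proved, the conjecture remains open and your text should be presented as partial progress, not a proof.
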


We note however that if $X$ is a weighted join graph with an odd number of vertices, then it is possible to get PST in $X$ if we drop the condition that $\phi(L(X),t)\in\mathbb{Z}[t]$. Indeed, if $X$ is a weighted $K_2$ with vertices $u$ and $v$ joined by an edge of weight $\frac{a}{b}>0$ where $a>0$ is odd and $b\equiv 0$ (mod 4), then $\phi(L(X),t)\notin\mathbb{Z}[t]$, but one checks that $X\vee O_1$ is a weighted $K_3$ with PST between $u$ and $v$ at time $\frac{2\pi}{3}$.

Next, observe that $X$ is an induced subgraph of $X\vee Y$. In Corollaries \ref{boundL1} and \ref{boundA1}, we showed for $M\in\{A,L\}$ that $\big||U_M(X\vee Y,t)_{u,v}|-|U_{M}(X,t)_{u,v}|\big|\leq f(|V(X)|)$ for all $t$, where $f(|V(X)|)=2/|V(X)|$, and this inequality is tight for infinite families. In general, if $X$ is a subgraph of $Z$ induced by $W\subseteq V(Z)$ with $|W|\geq 2$, then for $u,v\in V(X)$, we ask: what is an upper bound for $\big||U_M(Z,t)_{u,v}|-|U_{M}(X,t)_{u,v}|\big|$ in terms of $|W|$? In particular, if $X$ is either complete or empty, what is $\sup_{t>0}\big||U_M(Z,t)_{u,v}|-|U_{M}(X,t)_{u,v}|\big|$? In line with Corollary \ref{mimic}, we further ask: for which induced subgraphs $X$ of $Z$ does $\big||U_M(Z,t)_{u,v}|-|U_{M}(X,t)_{u,v}|\big|\rightarrow 0$ for all $t$ as $|V(X)|\rightarrow\infty$?

If $X=K_m$ and $Y$ is a graph on $m$ vertices, then we have proof for $M\in\{A,L\}$ that $|U_M(X\vee Y,t)_{u,v}|-|U_{M}(X,t)_{u,v}|=0$ if and only if $t$ belongs to some countable set. Hence, we pose the question, are there families of graphs such that for some set $T$ with positive measure, $|U_M(X\vee Y,t)_{u,v}|-|U_{M}(X,t)_{u,v}|=0$ for all $t\in T$?

Finally, one of the surprising results in this paper is that the join operation can induce PST in the resulting graph under mild conditions. Owing to the fact that adjacency PST is rare in unweighted graphs \cite[Corollary 10.2]{Godsil2012a}, it would be interesting to determine other graph operations that induce PST in the resulting graph. In \cite[Theorem 12]{bhattacharjya2023quantum}, the authors show that the blow-up operation exhibits this property for some families of graphs (such as the blow-up of two copies of $K_{1,n}$). However, under Cartesian products, PST between vertices $(u,v)$ and $(w,x)$ with $(u,v)\neq (w,x)$ requires that either (i) $u$ and $w$, and $v$ and $x$ admit PST in the underlying graphs (ii) $u$ and $w$ admit PST and $v=x$ is periodic in the underlying graphs or (iii) $u=w$ is periodic and $v$ and $w$ admit PST in the underlying graphs. Hence, the Cartesian product cannot induce PST in the resulting graph.

\section*{Acknowledgements}
H. Monterde is supported by the University of Manitoba Faculty of Science and Faculty of Graduate Studies. S. Kirkland is supported by NSERC Discovery Grant RGPIN-2019-05408.

\bibliographystyle{alpha}
\bibliography{mybibfile}
\end{document}